\documentclass[12pt]{amsart}

\usepackage[all]{xy}
\usepackage{amsaddr}
\usepackage{amssymb}
\usepackage{color}
\usepackage{mathdots}
\usepackage{hyperref}

\usepackage{titlesec}
\titleformat{name=\section}{\normalfont\Large\bfseries}{\thesection}{1em}{}

\setcounter{MaxMatrixCols}{80}

\setlength{\oddsidemargin}{2mm}
\setlength{\topmargin}{-10mm}
\setlength{\textwidth}{160mm}
\setlength{\textheight}{240mm}
\setlength{\evensidemargin}{2mm}
\setlength{\footskip}{10mm}

\theoremstyle{definition}
\newtheorem{theorem}{Theorem}[section]
\newtheorem{proposition}[theorem]{Proposition}
\newtheorem{lemma}[theorem]{Lemma}
\newtheorem{corollary}[theorem]{Corollary}

\newtheorem{definition}[theorem]{Definition}
\newtheorem{example}[theorem]{Example}
\newtheorem{remark}[theorem]{Remark}
\newtheorem{notation}[theorem]{Notation}

\newtheorem{conjecture}[theorem]{Conjecture}

\makeatletter
\@addtoreset{equation}{section}
\makeatother

\usepackage[top=30mm,bottom=30mm,left=25mm,right=25mm]{geometry}


\begin{document}

\title[Exact calculation of degrees for lattice equations: a singularity approach]{Exact calculation of degrees for lattice equations: a singularity approach}

\author[T. Mase]{Takafumi Mase}

\address{Graduate School of Mathematical Sciences, 
the University of Tokyo, 3-8-1 Komaba, Meguro-ku, Tokyo 153-8914, Japan.}

\keywords{lattice equation; integrability test; degree growth; algebraic entropy; singularity confinement}

\subjclass[2020]{39A36, 39A14, 37K10}


\begin{abstract}
	The theory of degree growth and algebraic entropy plays a crucial role in the field of discrete integrable systems.
	However, a general method for calculating degree growth for lattice equations (partial difference equations) is not yet known.
	Here we propose a method to rigorously compute the exact degree of each iterate for lattice equations.
	Halburd's method, which is a novel approach to computing the exact degree of each iterate for mappings (recurrence relations, typically from ordinary difference equations) from the singularity structure, forms the basis of our idea.
	The strategy is to extend this method to lattice equations.
	First, we illustrate, without rigorous details, how to calculate degrees for lattice equations using the lattice version of Halburd's method, and outline the issues that must be resolved to make the method rigorous.
	We then provide a framework to ensure that all calculations are accurate and rigorous.
	We further address how to detect the singularity structure in lattice equations.
	Our method is not only accurate and rigorous but can also be easily applied to a wide range of lattice equations.
\end{abstract}


\begingroup
\def\uppercasenonmath#1{} 
\let\MakeUppercase\relax 

\maketitle

\endgroup


\section{Introduction}\label{section:introduction}

It has been more than thirty years since the study of integrability tests in discrete systems started.
Integrability tests are criteria that predict whether or not a given equation is integrable.

The first integrability test for discrete systems in history is singularity confinement \cite{singularity_confinement}.
It was proposed as a discrete analogue of the so-called Painlev\'{e} property.
Roughly speaking, a discrete equation enters a singularity if it loses information on initial values, and an equation is said to pass the singularity confinement test if the lost information is recovered after a finite number of iterations.
However, it is widely known today that singularity confinement is not perfect as an integrability test.
Not only the so-called Hietarinta-Viallet equation \cite{hietarinta_viallet} but a large number of non-integrable systems that pass the singularity confinement test have been discovered and constructed \cite{bedford_kim,redemption,redeeming}.
It is also known that most linearizable equations possess non-confined singularities \cite{sc_linearizable}.

The most accurate test for discrete integrability today is algebraic entropy \cite{entropy}.
The algebraic entropy of a discrete equation (defined by a rational function) is defined as follows.
Each iterate of an equation, i.e., each term in the sequence defined by the equation, can be expressed as a rational function in the initial variables and we focus on its degree.
Algebraic entropy measures how fast the degree sequence grows as the independent variable tends to infinity.
Roughly speaking, the algebraic entropy of an equation is $0$ (resp.\ positive) if the degrees grow polynomially (resp.\ exponentially).
If the algebraic entropy of an equation is $0$, then the zero algebraic entropy criterion deems the equation integrable.
The integrability test by algebraic entropy shows high empirical accuracy.

As seen above, the algebraic entropy is defined by degree growth.
However, it is very difficult in general to compute explicitly the degrees for a given equation.
Therefore, developing a method to calculate degree growth is an important problem in the field of discrete integrable systems.

In the case of mappings of rank $2$ (i.e., $2$-dimensional recurrences on a $1$-dimensional lattice, most commonly associated with ordinary difference equations), a general theory of degree growth has been established thanks to algebro-geometric methods \cite{gizatullin,cantat,diller_favre,takenawa1}.
Here, ``rank'' refers to the number of independent constants in the general solution (degrees of freedom), as commonly used in the theory of ordinary differential and difference equations.
Calculating degree growth for a rank-$2$ mapping is not difficult if it has good singularity structure, such as to pass the singularity confinement test.
For example, if a mapping of rank $2$ passes the singularity confinement test, one can construct an algebraic surface called a space of initial conditions, and the algebraic entropy coincides with the logarithm of the maximum eigenvalue of the linear action on the Picard group.
In the case of higher-rank mappings, however, no general classification theory is known yet.
How to construct a space of initial conditions and how to calculate degree growth are now being studied \cite{bedford_kim_higher,viallet_2015,carstea_takenawa_4d,alonso_suris_wei1,alonso_suris_wei2}.

Recently, a ground-breaking idea to calculate degree growth in the case of mappings has been proposed by Halburd \cite{halburd}.
The idea is to directly calculate the exact degree sequence from singularity patterns.
Using this strategy, Halburd calculated the degree sequences for several rank-$2$ mappings from the singularity patterns.
We will review Halburd's method in Example~\ref{example:first_example_1d}.

It would be natural to think that, now that we have a general classification theory of degree growth in the case of rank-$2$ mappings, the next class of equations we should consider is higher-order mappings, not lattice equations.
One of the reasons why lattice equations are much harder to analyze in general than higher-rank mappings is that an initial value problem for a lattice equation has infinitely many initial variables.
However, as will be shown in the main part of this paper, our theory is based on this infinite nature.
Unfortunately, it is not expected that our theory can be applied to the case of mappings (see the discussion in \textsection\ref{section:conclusion}).

As for degree growth in the case of lattice equations (partial difference equations), a few previous studies are known, such as \cite{tremblay,lattice_algebraic_entropy,lattice_factorization,gubbiotti_scimiterna_levi,investigation,tran1,tran2,domain,face_centered_quad,lattice_system_entropy,hietarinta_3x3}.
For now, however, even rigorously calculating the degree growth for a concrete lattice equation is not easy.
In the case of lattice equations, a choice of domain (initial value problem) is essentially important when considering degree growth.
One might think that whether the degree growth of a lattice equation is exponential or polynomial does not depend on the choice of domain.
According to \cite{domain}, however, if considered on a pathological domain, an integrable equation such as the discrete KdV equation has exponential degree growth.
The conclusion of \cite{domain} is that a domain on which we consider an initial value problem must satisfy some conditions, which we will describe in Definition~\ref{definition:domain}, and that it is a good strategy to consider the growth of individual degrees (degrees with respect to an arbitrary single initial variable) instead of total degrees (degrees with respect to all initial variables).
For each iterate and every initial variable $z$, we have
\[
	(\text{individual degree for $z$}) \le (\text{total degree}) \le \sum_w (\text{individual degree for $w$}),
\]
where the sum on the right runs over all initial variables $w$.
Under the assumption that the number of the initial variables involved in an iterate increases at most polynomially, the total degree grows at most polynomially if and only if each individual degree does.
Therefore, developing a method to calculate individual degrees is an important problem in the study of integrability tests for lattice equations.

In this paper, we give a method to calculate the exact degree sequences for lattice equations by extending Halburd's method to lattice equations.
We will not only calculate the degrees for concrete equations but also create a general framework to guarantee without much computation that our degree calculations are all accurate and rigorous.

Let us review the basic idea and strategy of Halburd's method on the following example.
It should be noted in advance, however, that while the calculations in the example are all correct, some discussion is not rigorous because we omit some arguments.

\begin{example}\label{example:first_example_1d}
	Let us calculate the degree sequence of the $3$-point mapping
	\[
		x_n = \frac{(x_{n - 1} + 3 a) x_{n - 2} - 2 a x_{n - 1}}{x_{n - 1} - 3 a},
	\]
	where $a \ne 0$ is a constant.
	Note that a $3$-point mapping is a $2$-dimensional ordinary difference equation of the form $x_n = F(x_{n - 1}, x_{n - 2})$.
	This equation was first introduced in \cite{miura_transformations} and the geometric analysis was performed in \cite{phd}.

	This equation has two singularity patterns
	\begin{itemize}
		\item
		$(3 a, \infty, a, \infty, - a, \infty, - 3 a)$,

		\item 
		$(- 3 a, - a, a, 3 a)$,

	\end{itemize}
	which are both confining and open (not cyclic).

	First, we review the definition of singularity confinement on the first pattern.
	Suppose that $(x_{n - 1}, x_n) = (u, 3 a)$ for generic $u$.
	Note that the pair $(x_{n - 1}, x_n)$ has one degree of freedom since $u$ is generic.
	The next iterate is $x_{n + 1} = \infty$, and the pair $(x_n, x_{n + 1}) = (3 a, \infty)$ has no degree of freedom.
	That is, the information on the initial value $u$ is lost, which is the definition of the equation entering a singularity.
	To calculate further, we introduce an infinitesimal parameter $\varepsilon$ and start with $(x_{n - 1}, x_n) = (u, 3 a + \varepsilon)$ instead.
	Then, one obtains
	\begin{align}\label{equation:singularity_calculation}
		x_{n + 1} &= O \left( \varepsilon^{- 1} \right), \notag \\
		x_{n + 2} &= a + O \left( \varepsilon \right), \notag \\
		x_{n + 3} &= O \left( \varepsilon^{- 1} \right), \notag \\
		x_{n + 4} &= - a + O \left( \varepsilon \right), \\
		x_{n + 5} &= O \left( \varepsilon^{- 1} \right), \notag \\
		x_{n + 6} &= - 3 a + O \left( \varepsilon \right), \notag \\
		x_{n + 7} &= - u + O \left( \varepsilon \right), \notag
	\end{align}
	where the exponent of each $O \left( \varepsilon^{\ell} \right)$ above coincides with the leading order of the remaining part.
	The important point here is that $x_{n + 7}$ is regular, in the sense that it depends on the initial value $u$, and that the pair $(x_{n + 6}, x_{n + 7}) = (- 3 a, - u)$ possesses one degree of freedom.
	That is, the lost degree of freedom is recovered in a finite number of steps.
	This is the definition of singularity confinement, and we say that this pattern is confining.
	On the other hand, a singularity pattern is non-confining if the lost degree of freedom is not recovered.
	We say an equation is confining (or an equation passes the singularity confinement test) if all singularity patterns of the equation are confining.
	
	Even in the context of singularity confinement, what the term ``singularity'' exactly denotes is not very clear.
	In this paper, for a $3$-point mapping, we use this term to denote an iterate of a singularity pattern.
	More precisely, a singularity $x_m = b$ consists of the following data:
	\begin{itemize}
		\item
		the value $b$,

		\item
		the position $m$, i.e., the value of the independent variable,

		\item
		the multiplicity, which we will define in the next paragraph.

	\end{itemize}
	When we are interested only in the value of a singularity, we use the term ``singular value.''
	For example, the singularity pattern $(3 a, \infty, a, \infty, - a, \infty, - 3 a)$ consists of $7$ singularities but has $5$ singular values: $\pm 3 a, \pm a, \infty$.
	We will define a singularity for lattice equations in the main part of the paper (Definition~\ref{definition:constant_singularity}).

	Let us review the definition of the multiplicity of a singularity.
	Consider a singularity $x_m = b$ for $b \in \mathbb{P}^1(\mathbb{C}) = \mathbb{C} \cup \{ \infty \}$.
	If $b \in \mathbb{C}$, the expansion of $x_m$ around $\varepsilon = 0$ is
	\[
		x_m = b + f(u) \varepsilon^r + O \left( \varepsilon^{r + 1} \right)
	\]
	for some $f(u) \ne 0$ and $r > 0$.
	On the other hand, if $b = \infty$, then the expansion is
	\[
		x_m = f(u) \varepsilon^{- r} + O \left( \varepsilon^{- r + 1} \right)
	\]
	for some $f(u) \ne 0$ and $r > 0$.
	In both cases, we define the multiplicity of the singularity $x_m = b$ as $r$.
	Note that this definition coincides with the definition of the multiplicity of $x_m = b$ (as a rational function in $\varepsilon$) at $\varepsilon = 0$ (see Definition~\ref{definition:rational_function_multiplicity}).
	For example, it follows from the calculation from $x_{n + 1}$ to $x_{n + 7}$ that the multiplicities of the singularities of the pattern $(3 a, \infty, a, \infty, - a, \infty, - 3 a)$ are all $1$.
	
	Note that the first pattern does not continue with the second one since the next iterate after $(3 a, \infty, a, \infty, - a, \infty, - 3 a)$ is regular, i.e., it depends on $u$ (see the calculation of $x_{n + 7}$ in \eqref{equation:singularity_calculation}).
	It is known (but does not directly follow from the calculation \eqref{equation:singularity_calculation}) that the pattern $(3 a, \infty, a, \infty, - a, \infty, - 3 a)$ has no other singularities \cite{phd}.
	In particular, this pattern is not cyclic, in the sense that the pair $(g(u), 3 a)$ with non-constant function $g(u)$ does not appear afterward.
	A confining singularity pattern is defined to be open if it has this non-cyclic property.

	It is known that the singularity pattern $(- 3 a, - a, a, 3 a)$ is also confining and open \cite{phd}.
	Calculating the Jacobian of the map $(x_{n - 1}, x_n) \mapsto (x_n, x_{n + 1})$, one obtains that the equation has no other singularity pattern.
	Therefore, this equation passes the singularity confinement test.

	To compute the degree sequence by Halburd's method, we need to know when the equation takes singular values.
	Searching for the possibility of the equation taking the values $\pm a$, $\pm 3 a$, $\infty$, one obtains that the equation has the following periodic patterns of period $2$ and $5$, respectively:
	\begin{itemize}
		\item
		$(\cdots, \text{reg}, \infty, \text{reg}, \infty, \cdots)$,

		\item
		$(\cdots, \text{reg}, a, \text{reg}, - a, \text{reg}, \text{reg}, a, \text{reg}, - a, \text{reg}, \cdots)$,

	\end{itemize}
	where ``$\text{reg}$'' denotes some regular value.
	It should be stressed that these patterns are \emph{not} singularity patterns since the equation does not lose the dependence on the initial value.
	For example, we obtain the first pattern by starting from the initial value $(x_{n - 1}, x_n) = \left( u, \infty \right)$ with generic $u$.
	Iterating the recurrence, we have
	\[
		x_{n + 1} = \infty, \quad
		x_{n + 2} = f_2(u), \quad
		x_{n + 3} = \infty, \quad
		x_{n + 4} = f_4(u), \quad \cdots,
	\]
	where each $f_j(u)$ is a non-constant function.
	Observe that in this example, $f_j(u)$ coincides with $u$, though this is not essential.
	What is important here is that each pair $(x_m, x_{m + 1})$ has one degree of freedom and that the dependence on the initial value $u$ is preserved throughout.
	Therefore, this pattern is not a singularity pattern.
	In the second pattern, each pair contains at least one regular value.
	Therefore, this pattern is not a singularity pattern, either.

	A direct calculation shows that these two periodic patterns extend to both sides.
	Geometrically speaking, each pattern corresponds to the motion of curves
	\[
		\cdots \to \{ y = \infty \} \to \{ x = \infty \} \to \{ y = \infty \} \to \cdots
	\]
	and
	\[
		\cdots \to \{ y = a \} \to \{ x = a \} \to \{ y = - a \} \to \{ x = - a \} \to \left\{ y = - \frac{3 a (x + a)}{x - 3 a} \right\} \to \{ y = a \} \to \cdots,
	\]
	respectively, where $(x_n, x_{n + 1}) = (x, y) \in \mathbb{P}^1(\mathbb{C}) \times \mathbb{P}^1(\mathbb{C})$.
	To obtain these patterns, we must take initial values that represent one of the above curves, such as $(x_0, x_1) = (\text{reg}, \infty)$.

	Let us fix an initial condition as
	\begin{itemize}
		\item
		$x_0$ and $x_1$ are the initial values,

		\item 
		$x_0$ is a generic $\mathbb{C}$-value,

		\item
		$x_1 = z$ is an initial variable,

	\end{itemize}
	and think of $x_n$ as a rational function in $z$, i.e., $x_n = x_n(z) \in \mathbb{C}(z)$.
	Our goal is to calculate $\deg_z x_n(z)$, the degree with respect to $z$.

	The key idea of Halburd's method is to calculate $\deg_z x_n(z)$ as the number of preimages of $x_n = \text{(constant)}$ counted with multiplicity for various singular values.
	It is well known that the degree of a rational function $f \colon \mathbb{P}^1(\mathbb{C}) \to \mathbb{P}^1(\mathbb{C})$, which is defined as the maximum of the degrees of its numerator and denominator when written in lowest terms, coincides with the number of the preimages $f^{- 1}(\alpha)$ counted with multiplicity for any $\alpha \in \mathbb{P}^1(\mathbb{C})$ (Proposition~\ref{proposition:degree_preimages}).
	This property is based on the fact that any degree-$m$ polynomial over an algebraically closed field has exactly $m$ roots, counted with multiplicities.
	
	First, let us verify all the possibilities where $\pm 3 a$ or $\infty$ appears as a value.
	Since the term ``spontaneously'' plays a key role in what follows but lacks a precise definition in the literature, we clarify its meaning here:
	for $z^{*}, \alpha \in \mathbb{P}^1(\mathbb{C})$, we say that $x_n(z^{*})$ spontaneously becomes $\alpha$ if $x_0(z^{*}), \ldots, x_{n - 1}(z^{*})$ are generic and $x_n(z^{*}) = \alpha$.
	This interpretation is adopted only for the present discussion.
	
	According to the singularity analysis above, we know the following:
	\begin{itemize}
		\item
		If $x_n(z)$ spontaneously becomes $3 a$ for $z = z^{*} \in \mathbb{P}^1(\mathbb{C})$, then we have $x_{n + 1}(z^{*}) = \infty$, $x_{n + 3}(z^{*}) = \infty$, $x_{n + 5}(z^{*}) = \infty$, $x_{n + 6}(z^{*}) = - 3 a$.
		That is, if $x_{n - 1}(z^{*})$ is generic and $x_n(z^{*} + \varepsilon) = 3 a + C \varepsilon^{\ell} + O \left( \varepsilon^{\ell + 1} \right)$
		where $\varepsilon$ is infinitesimal, $C \ne 0$ and $\ell \in \mathbb{Z}_{> 0}$,
		then we have $x_{n + 1}(z^{*} + \varepsilon) \sim \varepsilon^{- \ell}$, $x_{n + 3}(z^{*} + \varepsilon) \sim \varepsilon^{- \ell}$, $x_{n + 5}(z^{*} + \varepsilon) \sim \varepsilon^{- \ell}$ and $x_{n + 6}(z^{*} + \varepsilon) + 3 a \sim \varepsilon^{\ell}$.
		Starting with $z = 3 a$ is included in this case, too.

		\item
		If $x_n$ spontaneously becomes $- 3 a$ at $z = z^{*}$, then we have $x_{n + 3} = 3 a$.

		\item
		If $z = \infty$, then we have $x_{2 n + 1} = \infty$ for $n \in \mathbb{Z}_{\ge 0}$.

		\item
		The periodic pattern of period $5$ does not generate $\pm 3 a$ or $\infty$ as a value.

	\end{itemize}
	We use the notation
	\[
		( x_n = 3 a, \infty, a, \infty, - a, \infty, - 3 a)
	\]
	to denote the singularity pattern $(3 a, \infty, a, \infty, - a, \infty, - 3 a)$ starting at index $n$.
	Let $Z^+_n$ be the number of $z^{*} \in \mathbb{P}^1(\mathbb{C})$ that gives the pattern
	$( x_n = 3 a, \infty, a, \infty, - a, \infty, - 3 a)$
	counted with multiplicity, i.e.,
	\[
		Z^+_n = \# \left\{ z^{*} \in \mathbb{P}^1(\mathbb{C}) \mid \text{$z = z^{*}$ generate this pattern} \right\}
		\quad (\text{with multiplicity}).
	\]
	Here, if the first singularity of a pattern is $x_n(z^{*}) = \alpha \in \mathbb{P}^1(\mathbb{C})$, then we define the multiplicity of the pattern as the multiplicity of $x_n = \alpha$ at $z = z^{*}$.
	For example, the multiplicity of the pattern $( x_n = 3 a, \infty, a, \infty, - a, \infty, - 3 a)$ is the exponent $\ell$ in the above list.
	We set $Z^+_n = 0$ ($n \le 0$) for convenience.
	Then, using the number of preimages of $\infty$, we obtain
	\[
		\deg_z x_n(z) = Z^+_{n - 1} + Z^+_{n - 3} + Z^+_{n - 5} + \phi^{(2)}_{n + 1}
		\quad (n \ge 0),
	\]
	where $\phi^{(2)}_{2 m} = 1$ and $\phi^{(2)}_{2 m + 1} = 0$ for $m \in \mathbb{Z}$.
	For example, the term $Z^+_{n - 1}$ corresponds to the number of $z^{*}$ that gives the pattern $( x_{n - 1} = 3 a, \infty, a, \infty, - a, \infty, - 3 a)$
	and the term $\phi^{(2)}_{n + 1}$ corresponds to the number of the periodic pattern $(x_0 = \text{reg}, z = \infty, \text{reg}, \infty, \text{reg}, \infty, \cdots)$.

	In a similar way to the above, using the number of preimages of $\pm 3 a$, one finds that
	\begin{align*}
		\deg_z x_n(z) &= Z^+_n + Z^-_{n - 3} & (\text{$\#$ of preimages of $3 a$}) \\
		&= Z^+_{n - 6} + Z^-_n & (\text{$\#$ of preimages of $- 3 a$}),
	\end{align*}
	where $Z^-_n$ is the number of $z^{*}$ that gives the pattern $( x_n = - 3 a, - a, a, 3 a)$ counted with multiplicity (we set $Z^-_n = 0$ for $n \le 0$).

	Since the degree is expressed in three ways, we have the following relations:
	\[
		Z^+_{n - 1} + Z^+_{n - 3} + Z^+_{n - 5} + \phi^{(2)}_{n + 1}
		= Z^+_n + Z^-_{n - 3}
		= Z^+_{n - 6} + Z^-_n,
	\]
	which can be written as
	\begin{align*}
		Z^+_n &= Z^+_{n - 1} + Z^+_{n - 3} - Z^-_{n - 3} + Z^+_{n - 5} + \phi^{(2)}_{n + 1}, \\
		Z^-_n &= Z^+_{n - 1} + Z^+_{n - 3} + Z^+_{n - 5} - Z^+_{n - 6} + \phi^{(2)}_{n + 1}.
	\end{align*}
	This system of linear equations with respect to $Z^{\pm}_n$ can be thought of as defining the evolution of $Z^{\pm}_n$ with the initial values
	\[
		Z^+_{- 4} = Z^+_{- 3} = Z^+_{- 2} = Z^+_{- 1} = Z^+_0 = 0, \quad
		Z^+_1 = 1, \quad
		Z^-_{- 1} = Z^-_0 = 0, \quad
		Z^-_1 = 1.
	\]
	Solving the system, one obtains
	\[
		\begin{matrix}
			n \colon & 0,& 1,& 2,& 3,& 4,& 5,& 6,& 7,& 8,& 9,& 10,& 11,& 12,& 13,& 14,& 15,& 16,& \cdots, \\
			Z^+_n \colon & 0,& 1,& 1,& 2,& 2,& 3,& 4,& 5,& 6,& 7,& 8,& 10,& 11,& 13,& 14,& 16,& 18,& \cdots, \\
			Z^-_n \colon & 0,& 1,& 1,& 2,& 3,& 4,& 6,& 7,& 9,& 11,& 13,& 16,& 18,& 21,& 24,& 27,& 31,& \cdots, \\
			\deg_z x_n(z) \colon & 0,& 1,& 1,& 2,& 3,& 4,& 6,& 8,& 10,& 13,& 15,& 19,& 22,& 26,& 30,& 34,& 39,& \cdots,
		\end{matrix}
	\]
	where the degrees are calculated by $\deg_z x_n(z) = Z^+_n + Z^-_{n - 3}$.

	Since the equations for $Z^{\pm}_n$ are linear, it is easy to study how fast $Z^{\pm}_n$ and $\deg_z x_n(z)$ grow.
	For example, eliminating $Z^-_n$, one obtains for $n \ge 3$
	\[
		Z^+_n - Z^+_{n - 1} - Z^+_{n - 3} + Z^+_{n - 4} - Z^+_{n - 5} + Z^+_{n - 6} + Z^+_{n - 8} - Z^+_{n - 9} = \phi^{(2)}_{n + 1} - \phi^{(2)}_{n}.
	\]
	Since the characteristic polynomial of the left-hand side decomposes into the product of cyclotomic polynomials, $Z^{\pm}_n$ and $\deg_z x_n(z)$ grow polynomially.

	One can show in the same way that the degree sequence with respect to the other initial variable, $x_0$, grows polynomially as well.
	Therefore, the algebraic entropy of this equation is $0$, and the integrability criterion by degree growth deems the equation integrable.
\end{example}

According to this example, Halburd's method consists of the following steps.
\begin{enumerate}
	\item
	Perform singularity analysis.
	One must calculate not only singularities but also non-singular periodic patterns.

	\item
	Let $z$ be one of the initial variables and let the others be generic numerical values.

	\item
	Choose some singular values $\{ \alpha_1, \ldots, \alpha_J \} \subset \mathbb{P}^1(\mathbb{C})$ appropriately from the singularity patterns.

	\item
	Introduce a new number sequence for each open singularity pattern that contains at least one of $\alpha_1, \ldots, \alpha_J$.
	For example, we use $Z_n(\beta_i)$ to denote the number of $z^{*} \in \mathbb{P}^1(\mathbb{C})$ with multiplicity that generates the singularity pattern starting with $x_n = \beta_i$.
	
	\item\label{enumerate:degree_relation}
	Using the singularity analysis, express by $Z_n(\beta_i)$ the number of preimages of $x_n(z) = \alpha_j$ for each $j$, which all coincides with $\deg x_n(z)$.
	If $\alpha_j$ appears in a periodic pattern, periodic terms are also necessary.
	One obtains an expression of $\deg_z x_n(z)$ for each $j = 1, \ldots, J$.
	We shall call these expressions \emph{degree relations} in this paper.

	\item\label{enumerate:linear_equations}
	Combining degree relations, obtain a system of linear recurrences for $Z_n(\beta_i)$, which can be thought of as defining the evolution for $Z_n(\beta_i)$ if the number of degree relations is sufficiently many.

	\item
	Solving this system, obtain $Z_n(\beta_i)$ and $\deg_z x_n(z)$.

\end{enumerate}

When one uses Halburd's method as a tool to easily calculate degree growth, there are two major problems.

The first problem lies in step~\ref{enumerate:linear_equations}.
As pointed out in \cite[Section~5]{express1}, there are cases where an equation has many singularity patterns but the degree relations do not determine an evolution because the number of linear equations is not sufficient.
It is known empirically that such a problem is likely to occur if singularity patterns are short.
However, the specific conditions for this phenomenon are not yet known.

The second problem, which lies in step~\ref{enumerate:degree_relation}, is more serious.
In this step, we must count the number of $z^{*} \in \mathbb{P}^1(\mathbb{C})$ with multiplicity such that $x_n(z^{*})$ becomes $\alpha_j$.
The discussion in Example~\ref{example:first_example_1d} might look rigorous but, as noted in advance, we did not guarantee that our counting of preimages is exact.
As seen before, when one tries to calculate the singularity pattern starting with $x_n = \beta_i$, the other initial value $x_{n - 1}$ is taken generic (i.e., a variable).
In Halburd's method, however, if $z = z^{*}$ generates $x_n(z^{*}) = \beta_i$ for some $z^{*} \in \mathbb{P}^1(\mathbb{C})$, then $x_{n - 1}(z^{*})$ also belongs to $\mathbb{P}^1(\mathbb{C})$.
Therefore, one must show that $x_{n + 1}(z^{*}), x_{n + 2}(z^{*}), \cdots$ actually follow the singularity pattern.
Furthermore, we must confirm that except for the open and periodic patterns we study, there is no possibility of $x_n(z^{*}) = \alpha_j$.
It is important to note, however, that this does not mean that Halburd's original calculations are not exact or rigorous.
In \cite{halburd}, Halburd thoroughly analyzed each equation to overcome this problem.
For example, in order to ensure the discussion in Example~\ref{example:first_example_1d} is rigorous in this way, we must at least ascertain that the singularity pattern $(3 a, \infty, a, \infty, - a, \infty, - 3 a)$ appears if $x_n(z^{*}) = 3 a$ and $x_{n - 1}(z^{*}) \notin \{ \pm a, \pm 3 a, \infty \}$.
Such a calculation is not very hard in the case of rank-$2$ mappings since each singularity starts with an easy condition, such as $x_n(z^{*})$ being some special value and $x_{n - 1}(z^{*})$ being generic.
However, it is much more difficult in the case of higher-rank equations or lattice equations because the degrees of freedom are large.

To resolve the second problem, some strategies are already known.
The first strategy is, as stated above, to thoroughly analyze an equation as Halburd did.
The second strategy is to abandon rigor and use the method as a convenient tool for predicting degree growth.
The express method was invented for this purpose and applied to many mappings \cite{express1,express_unconfined}.
Roughly speaking, the express method is a strategy to ignore the influence of periodic patterns and predict the degree growth.
The third strategy is to use some other theory to guarantee that the degrees (or the degree growth) calculated by Halburd's method (or by the express method) are exact.
For example, in the case of confining $3$-point mappings, the theory of spaces of initial conditions was used to show that the degree growth computed by the express method is exact \cite{express2}.

In addition to these two, there are some minor problems.
For example, the number of $z^{*}$ that generates a pattern counted with multiplicity should be defined rigorously in general cases.
Moreover, it is not very easy in general to confirm that a singularity pattern is open, which is not a problem of the method but of singularity analysis.


In this paper, we construct an easy but rigorous method to compute exact individual degrees for lattice equations.
For simplicity, in this paper, we focus on quad equations
\begin{equation}\label{equation:quad_equation}
	x_{t, n} = \Phi(x_{t - 1, n}, x_{t, n - 1}, x_{t - 1, n - 1}),
\end{equation}
where
\[
	\Phi = \Phi(B, C, D) \in \mathbb{C}(B, C, D)
\]
and $\Phi$ depends on all of $B$, $C$ and $D$.
Throughout this paper, we always think of the equation as defining the evolution in the northeast direction.
We will introduce some conditions on $\Phi$ (see Definitions~\ref{definition:d_factor_condition} and \ref{definition:basic_pattern_condition}) but do not assume other nonessential conditions.
For example, we do not assume that the equation can be solved in another direction.

The paper is organized as follows.
In \textsection\ref{section:introduction_calculation}, we introduce the lattice version of Halburd's method without rigorous discussion.
We will discuss what problems must be solved to make the method rigorous.
The main part of this paper is \textsection\ref{section:main} and \textsection\ref{section:proof}.
We will state our main theorems in \textsection\ref{section:main} and give proofs in \textsection\ref{section:proof}.
Since all we need to rigorously calculate degrees are stated in \textsection\ref{section:main}, a reader who is not interested in proofs can skip \textsection\ref{section:proof}.
In \textsection\ref{section:examples}, we analyze several examples to see how to calculate exact degrees with our theory.
For a reader who is not familiar with singularity analysis, we discuss how to detect and calculate singularity patterns, too.
In Appendix~\ref{appendix:domain}, we prove Theorem~\ref{theorem:domain_choice}, which states that when considering the degrees for a quad equation, there are only four essential choices of domain.
In Appendix~\ref{appendix:substitution_and_derivatives}, we recall some definitions and basic properties of substitution and derivative, which play an important role in \textsection\ref{section:proof}.
Appendix~\ref{appendix:algebraic_lemmas} consists of some algebraic lemmas, which we use in \textsection\ref{section:proof}.
In Appendix~\ref{appendix:divisors}, we recall the definition of divisors on $\mathbb{P}^1(K)$ and introduce some unusual notations.

\begin{notation}\label{notation:all}
	We summarize the notation that will be used in the main part of the paper.
	For clarity, we separate it into three groups.
	Notation that appears only in the introduction, is not important for understanding the main results, or occurs only within proofs is explained in the text and not included here.
	\begin{enumerate}
		\item
		Fundamental concepts and notation (excluding very basic symbols):
		\begin{itemize}
			\item
			$\operatorname{trdeg}_{K} K(y_1, \ldots, y_N)$:
			if some of $y_1, \ldots, y_N$ are $\infty$ when considering the transcendental degree of this type of field extension,
			we will think of $\infty$ as if it were an element of $K$.

			\item
			$\operatorname{div} (f = \alpha \mid T)$,
			$\operatorname{div} \left( f = \alpha \mid \text{(condition)} \right)$:
			a divisor on $\mathbb{P}^1(K)$.
			(Definition~\ref{definition:divisor})

			\item
			$T_1 \sqcup T_2$, $\bigsqcup_{\lambda} T_{\lambda}$:
			the direct sum of sets (disjoint union).

		\end{itemize}

		\item
		Notation for the setting (main symbols only):
		\begin{itemize}
			\item
			$\Phi = \Phi(B, C, D)$:
			a rational function that defines a quad equation \eqref{equation:quad_equation}.

			\item
			$H \subset \mathbb{Z}^2$:
			a domain on which we consider an equation.
			(Definition~\ref{definition:domain})

			\item
			$H_0 \subset H$:
			the initial boundary of a domain $H$.
			(Definition~\ref{definition:domain})

			\item
			$\le$:
			the product order on $\mathbb{Z}^2$, i.e., $(t, n) \le (s, m)$ if and only if $t \le s$ and $n \le m$.

			\item
			$z = x_{t_0, n_0}$:
			one of the initial variables, with respect to which we consider the degree.

			\item
			$\mathbb{K}$:
			the field of rational functions in the initial variables other than $z$.
			(Definition~\ref{definition:field_k})

		\end{itemize}

		\item
		Concepts and terminology used in this paper (key concepts only):
		\begin{itemize}
			\item
			$\text{reg}$:
			a regular value, i.e.\ a value that depends on some initial value.

			\item
			Degree Relation:
			the expression of the degree of an iterate as the number of preimages of a $\mathbb{P}^1(\mathbb{C})$-value.

			\item
			$\mathbb{Z}^2_{\le (t, n)} = \{ (t', n') \in \mathbb{Z}^2 \mid t' \le t; \ n' \le n \}$,
			$H_{\le (t, n)} = H \cap \mathbb{Z}^2_{\le (t, n)}$:
			the past light cone emanating from $(t, n)$.

			\item
			$\mathbb{Z}^2_{\ge (t, n)} = \{ (t', n') \in \mathbb{Z}^2 \mid t' \ge t; \ n' \ge n \}$,
			$H_{\ge (t, n)} = H \cap \mathbb{Z}^2_{\ge (t, n)}$:
			the future light cone emanating from $(t, n)$.

			\item
			Constant Singularity, Constant Singularity Pattern, Singular Value:
			Definition~\ref{definition:constant_singularity}.

			\item
			Movable, Fixed, Confining, and Solitary Patterns:
			Definition~\ref{definition:constant_singularity}.

			\item
			Basic Pattern:
			Definition~\ref{definition:basic_pattern}.

			\item
			$H^{\text{basic}}$:
			the domain corresponding to basic patterns.
			(Definition~\ref{definition:basic_pattern})

			\item
			$\partial$-factor condition:
			Definition~\ref{definition:d_factor_condition}.

			\item
			Basic Pattern Condition:
			Definition~\ref{definition:basic_pattern_condition}.

			\item
			First Singularity, Starting Value:
			Definition~\ref{definition:first_singularity}.

			\item
			Spontaneous Occurrence of $x_{t, n}(z^{*}) = \alpha$:
			Definition~\ref{definition:spontaneous_occurrence_rigorous}.

			\item
			$Z_{s, m}(\beta_i)$:
			the number of spontaneous occurrences of $x_{s, m}(z) = \beta_i$ with multiplicity.
			(Definition~\ref{definition:degree_divisor})

		\end{itemize}

	\end{enumerate}
\end{notation}

\section{Lattice version of Halburd's method}\label{section:introduction_calculation}

In this section, we see on an example how to extend Halburd's method to lattice equations.
It should be noted in advance, however, that while the calculations in the example are correct in the end (Example~\ref{example:first_example_2d_rigorous}), some discussion is not rigorous, as Example~\ref{example:first_example_1d}.
After the example, we will discuss what problems exist and how to fix them.

\begin{example}\label{example:first_example_2d}
	Let us consider Hirota's discrete KdV (Korteweg-de Vries) equation \cite{hirota_dkdv}:
	\[
		x_{t, n} = x_{t - 1, n - 1} - \frac{1}{x_{t - 1, n}} + \frac{1}{x_{t, n - 1}}.
	\]
	This equation is known to have the following confining singularity pattern $\text{I}_{t, n}$ \cite{singularity_confinement}:
	\[
		\text{I}_{t, n} \colon \quad
		\begin{matrix}
			x_{t, n + 1} = \infty & x_{t + 1, n + 1} = 0 \\
			x_{t, n} = 0 & x_{t + 1, n} = \infty
		\end{matrix}.
	\]
	That is, starting with the initial values
	\begin{equation}\label{equation:example2_domain_sc}
		\begin{matrix}
			{\color{red} \vdots} \\
			{\color{red} x_{t - 1, n + 2}} \\
			{\color{red} x_{t - 1, n + 1}} \\
			{\color{red} x_{t - 1, n}} & {\color{red} x_{t, n} = \varepsilon} \\
			& {\color{red} x_{t, n - 1}} & {\color{red} x_{t + 1, n - 1}} & {\color{red} x_{t + 2, n - 1}} & {\color{red} \cdots}
		\end{matrix}
	\end{equation}
	where $x_{t, n} = \varepsilon$ is infinitesimal and the others are initial variables, one obtains as $\varepsilon \to 0$
	\begin{equation}\label{equation:example2_sc_meaning}
		\begin{matrix}
			{\color{red} x_{t - 1, n + 2}} & \text{reg} & \text{reg} & \text{reg} \\
			{\color{red} x_{t - 1, n + 1}} & \infty & 0 & \text{reg} \\
			{\color{red} x_{t - 1, n}} & {\color{red} 0} & \infty & \text{reg} \\
			& {\color{red} x_{t, n - 1}} & {\color{red} x_{t + 1, n - 1}} & {\color{red} x_{t + 2, n - 1}}
		\end{matrix},
	\end{equation}
	where ``$\text{reg}$'' means that the leading order with respect to $\varepsilon$ is $0$ and the leading coefficient depends on some other initial variable.
	
	Let us fix an initial value problem as
	\[
		\begin{matrix}
			{\color{red} \vdots} & & & & & \iddots \\
			{\color{red} x_{0, 4}} & x_{1, 4} & x_{2, 4} & x_{3, 4} & x_{4, 4} \\
			{\color{red} x_{0, 3}} & x_{1, 3} & x_{2, 3} & x_{3, 3} & x_{4, 3} \\
			{\color{red} x_{0, 2}} & x_{1, 2} & x_{2, 2} & x_{3, 2} & x_{4, 2} \\
			{\color{red} x_{0, 1}} & {\color{red} z} & x_{2, 1} & x_{3, 1} & x_{4, 1} \\
			& {\color{red} x_{1, 0}} & {\color{red} x_{2, 0}} & {\color{red} x_{3, 0}} & {\color{red} x_{4, 0}} & {\color{red} \cdots}
		\end{matrix}.
	\]
	Each $x_{t, n}$ is expressed as a rational function in the initial values $x_{1, 1}$, $x_{s, 0}$ ($s \ge 1$) and $x_{0, m}$ ($m \ge 0$) and we focus on the degrees with respect to the initial variable $z = x_{1, 1}$.
	As in Example~\ref{example:first_example_1d}, we suppose that the other initial values are generic and think of each $x_{t, n}$ as a rational function in $z$ as $x_{t, n} = x_{t, n}(z) \in \mathbb{C}(z)$.
	Our basic strategy is, by substituting various values $z^{*} \in \mathbb{P}^1(\mathbb{C})$ for $z$ as in the case of mappings, to verify all the possibilities for $x_{t, n}(z)$ to become $0$ or $\infty$.

	As in Example~\ref{example:first_example_1d}, we consider the ``spontaneous occurrence'' of $x_{t, n}(z^{*}) = \alpha$.
	Although its mathematical definition will be given in Definition~\ref{definition:spontaneous_occurrence_rigorous}, we provide here a temporary interpretation:
	we say that $x_{t, n}(z^{*})$ spontaneously becomes $\alpha$ if $x_{t, n}(z^{*}) = \alpha$ and $x_{s, m}(z^{*})$ is generic for all $(s, m) \le (t, n)$ with $(s, m) \ne (t, n)$.

	According to the above pattern, if $z^{*} \in \mathbb{P}^1(\mathbb{C})$ makes $x_{t, n}(z^{*})$ spontaneously become $0$ for $t, n \ge 1$, then we have $x_{t + 1, n}(z^{*}) = \infty$, $x_{t, n + 1}(z^{*}) = \infty$ and $x_{t + 1, n + 1}(z^{*}) = 0$.
	In addition to the pattern $I_{t, n}$, the equation has the following pattern $\text{II}$ where $\infty$ appears as a value:
	\[
		\text{II} \colon \quad
		\begin{matrix}
			{\color{red} \vdots} & & & & & \iddots \\
			{\color{red} \bullet} & \circ & \circ  & \circ  & \infty \\
			{\color{red} \bullet} & \circ  & \circ  & \infty & \circ  \\
			{\color{red} \bullet} & \circ  & \infty & \circ  & \circ  \\
			{\color{red} \bullet} & {\color{red} \infty} & \circ  & \circ  & \circ \\
			& {\color{red} \bullet} & {\color{red} \bullet} & {\color{red} \bullet} & {\color{red} \bullet} & {\color{red} \cdots}
		\end{matrix}.
	\]
	While the starting point of $\text{I}_{t, n}$ varies depending on $(t, n)$ (or $z^{*}$), the pattern $\text{II}$ is fixed since it corresponds to $z^{*} = \infty$.
	Note that if one starts with $z^{*} = 0$, the pattern coincides with $\text{I}_{1, 1}$.

	It is known that, in our settings, $\text{I}_{t, n}$ and $\text{II}$ are the only possibilities where $\infty$ and $0$ appear as a value, whereas in other settings more possibilities may occur (see \cite{sc_kdv1,sc_kdv2}).
	How to prove this, which we will discuss later, is one of the main topics in this paper.

	Let $Z_{t, n}$ be the number of $z^{*} \in \mathbb{P}^1(\mathbb{C})$ counted with multiplicity that generates the pattern $\text{I}_{t, n}$ ($Z_{t, n} := 0$ for $n \le 0$ or $t \le 0$).
	For example, we have $Z_{1, 1} = 1$ since only starting with $z^{*} = 0$ generates the pattern $\text{I}_{1, 1}$.
	Then, as in Example~\ref{example:first_example_1d}, we have the following two expressions of degrees for $t, n \ge 0$, $(t, n) \ne (0, 0)$:
	\begin{align*}
		\deg_z x_{t, n}(z)
		&= Z_{t, n} + Z_{t - 1, n - 1} & (\text{$\#$ of preimages of $0$}) \\
		&= Z_{t - 1, n} + Z_{t, n - 1} + \delta_{t, n} & (\text{$\#$ of preimages of $\infty$}),
	\end{align*}
	where $\delta_{t, n}$ denotes Kronecker's delta.
	In the above relations,
	\begin{itemize}
		\item 
		$Z_{t, n}$, $Z_{t - 1, n - 1}$, $Z_{t - 1, n}$ and $Z_{t, n - 1}$ correspond to the patterns $\text{I}_{t, n}$, $\text{I}_{t - 1, n - 1}$, $\text{I}_{t - 1, n}$ and $\text{I}_{t, n - 1}$, respectively,

		\item
		$\delta_{t, n}$ corresponds to $\text{II}$.

	\end{itemize}
	From these degree relations, one obtains
	\[
		Z_{t, n} + Z_{t - 1, n - 1} = Z_{t - 1, n} + Z_{t, n - 1} + \delta_{t, n}
	\]
	for $t, n \ge 0$, $(t, n) \ne (0, 0)$,
	which can be thought of as defining the evolution of $Z_{t, n}$.
	Using this, $Z_{t, n}$ and $\deg_z x_{t, n}(z)$, which both grows polynomially, can be computed as
	\[
		Z_{t, n} \colon \quad
		\begin{matrix}
			{\color{red} 0} & 1 & 2 & 3 & 4 & 5 & 6 & 7 & 8 & 9 & 10 \\
			{\color{red} 0} & 1 & 2 & 3 & 4 & 5 & 6 & 7 & 8 & 9 & 9 \\
			{\color{red} 0} & 1 & 2 & 3 & 4 & 5 & 6 & 7 & 8 & 8 & 8 \\
			{\color{red} 0} & 1 & 2 & 3 & 4 & 5 & 6 & 7 & 7 & 7 & 7 \\
			{\color{red} 0} & 1 & 2 & 3 & 4 & 5 & 6 & 6 & 6 & 6 & 6 \\
			{\color{red} 0} & 1 & 2 & 3 & 4 & 5 & 5 & 5 & 5 & 5 & 5 \\
			{\color{red} 0} & 1 & 2 & 3 & 4 & 4 & 4 & 4 & 4 & 4 & 4 \\
			{\color{red} 0} & 1 & 2 & 3 & 3 & 3 & 3 & 3 & 3 & 3 & 3 \\
			{\color{red} 0} & 1 & 2 & 2 & 2 & 2 & 2 & 2 & 2 & 2 & 2 \\
			{\color{red} 0} & {\color{red} 1} & 1 & 1 & 1 & 1 & 1 & 1 & 1 & 1 & 1 \\
			& {\color{red} 0} & {\color{red} 0} & {\color{red} 0} & {\color{red} 0} & {\color{red} 0} & {\color{red} 0} & {\color{red} 0} & {\color{red} 0} & {\color{red} 0} & {\color{red} 0} \\
		\end{matrix}
	\]
	and
	\[
		\deg_z x_{t, n}(z) \colon \quad
		\begin{matrix}
			{\color{red} 0} & 1 & 3 & 5 & 7 & 9 & 11 & 13 & 15 & 17 & 19 \\
			{\color{red} 0} & 1 & 3 & 5 & 7 & 9 & 11 & 13 & 15 & 17 & 17 \\
			{\color{red} 0} & 1 & 3 & 5 & 7 & 9 & 11 & 13 & 15 & 15 & 15 \\
			{\color{red} 0} & 1 & 3 & 5 & 7 & 9 & 11 & 13 & 13 & 13 & 13 \\
			{\color{red} 0} & 1 & 3 & 5 & 7 & 9 & 11 & 11 & 11 & 11 & 11 \\
			{\color{red} 0} & 1 & 3 & 5 & 7 & 9 & 9 & 9 & 9 & 9 & 9 \\
			{\color{red} 0} & 1 & 3 & 5 & 7 & 7 & 7 & 7 & 7 & 7 & 7 \\
			{\color{red} 0} & 1 & 3 & 5 & 5 & 5 & 5 & 5 & 5 & 5 & 5 \\
			{\color{red} 0} & 1 & 3 & 3 & 3 & 3 & 3 & 3 & 3 & 3 & 3 \\
			{\color{red} 0} & {\color{red} 1} & 1 & 1 & 1 & 1 & 1 & 1 & 1 & 1 & 1 \\
			& {\color{red} 0} & {\color{red} 0} & {\color{red} 0} & {\color{red} 0} & {\color{red} 0} & {\color{red} 0} & {\color{red} 0} & {\color{red} 0} & {\color{red} 0} & {\color{red} 0}
		\end{matrix}.
	\]
	The degrees calculated here are in perfect agreement with those calculated in \cite{domain}, and the exact solutions for $Z_{t, n}$ and $\deg_z x_{t, n}(z)$ are
	\[
		Z_{t, n} = \min(t, n), \quad
		\deg_z x_{t, n}(z) = 2 \min(t, n) - 1.
	\]
\end{example}

Even in the case of lattice equations, the basic procedure of the method is almost the same as in the case of mappings.
Therefore, the problems we have are similar, too.

The first problem is the definition of singularities in the case of lattice equations.
Although singularity confinement for lattice equations was applied in the original paper \cite{singularity_confinement}, the search for its rigorous definition is still going on \cite{sc_kdv1,sc_kdv2,sc_mkdv2}.
Moreover, roughly speaking, all the singularities in our settings should have codimension one since all the initial values other than $z$ are taken generic.
Therefore, we must start by rigorously defining what is a singularity we consider in this paper.

The second problem is how an equation enters a singularity.
For example, when we calculated the singularity pattern starting with $x_{t, n} = 0$ in the singularity confinement calculation, the initial values are taken as \eqref{equation:example2_domain_sc}.
However, we must confirm that the pattern starting with a spontaneous occurrence of $x_{t, n}(z^{*}) = 0$ in fact coincides with $\text{I}_{t, n}$, which is calculated by singularity confinement.
The definition of the spontaneous occurrence of $x_{t, n}(z^{*}) = 0$ and its multiplicity should be clarified, too.

The third problem is the global structure of singularity patterns.
Singularity confinement is a local property.
For instance, when we calculated the singularity pattern starting with $x_{t, n} = 0$ in the example, we did not discuss the outside region of \eqref{equation:example2_sc_meaning}.
Therefore, singularity confinement does not guarantee that for any $z = z^{*} \in \mathbb{P}^1(\mathbb{C})$, an equation does not have any strange pattern.
For example, we must confirm that one singularity pattern does not have two or more singularity blocks (see Figures~\ref{figure:inexistent_pattern_1} and \ref{figure:inexistent_pattern_2}).
Here, a singularity block denotes a maximal subset of the singular part in which any two points can be joined by a finite sequence of steps of Euclidean length $1$ or $\sqrt{2}$.
This problem is slightly similar to how to distinguish open and periodic patterns, which we mentioned as a minor problem in the case of mappings.
In the case of lattice equations, however, this is not a minor problem anymore because there are too many degrees of freedom in initial values.
Moreover, we must show that each pattern has only one starting point (see Figure~\ref{figure:inexistent_pattern_3}).
Even in the case of Hirota's discrete KdV equation, many types of complicated singularity patterns have been reported \cite{sc_kdv1,sc_kdv2}.
Therefore, we must verify that assigning a value to $z$, while keeping all other initial values generic, does not lead to such complicated singularity patterns.

\begin{figure}
	\begin{center}
		\begin{picture}(240, 240)

			\multiput(40, 220)(30, 0){7}{\circle{5}}
			\multiput(40, 190)(30, 0){1}{\circle{5}}
			\multiput(40, 160)(30, 0){1}{\circle{5}}
			\multiput(40, 130)(30, 0){7}{\circle{5}}
			\multiput(40, 100)(30, 0){4}{\circle{5}}
			\multiput(40, 70)(30, 0){4}{\circle{5}}
			\multiput(40, 40)(30, 0){7}{\circle{5}}

			\multiput(130, 190)(30, 0){4}{\circle{5}}
			\multiput(130, 160)(30, 0){4}{\circle{5}}
			\multiput(220, 100)(30, 0){1}{\circle{5}}
			\multiput(220, 70)(30, 0){1}{\circle{5}}

			\multiput(65, 187)(30, 0){2}{$\times$}
			\multiput(95, 157)(30, 0){1}{$\times$}
			\put(65, 157){$\beta_1$}
			\put(60, 150){\framebox(50, 50)}

			\multiput(155, 97)(30, 0){2}{$\times$}
			\multiput(185, 67)(30, 0){1}{$\times$}
			\put(155, 67){$\beta_2$}
			\put(150, 60){\framebox(50, 50)}

			\color{red}
			\multiput(10, 10)(0, 30){8}{\circle*{5}}
			\multiput(70, 10)(30, 0){6}{\circle*{5}}

			\put(35, 7){$z^{*}$}

		\end{picture}
	\end{center}
	\caption{
		Example of an undesirable singularity pattern, where a single value $z^{*}$ generates two or more independent singularity blocks.
		Here, we call two singularity blocks independent if they cannot be compared under the product order on $\mathbb{Z}^2$.
		Such a phenomenon can never occur in the case of mappings, since all points are totally ordered on $\mathbb{Z}$.
	}
	\label{figure:inexistent_pattern_1}
\end{figure}

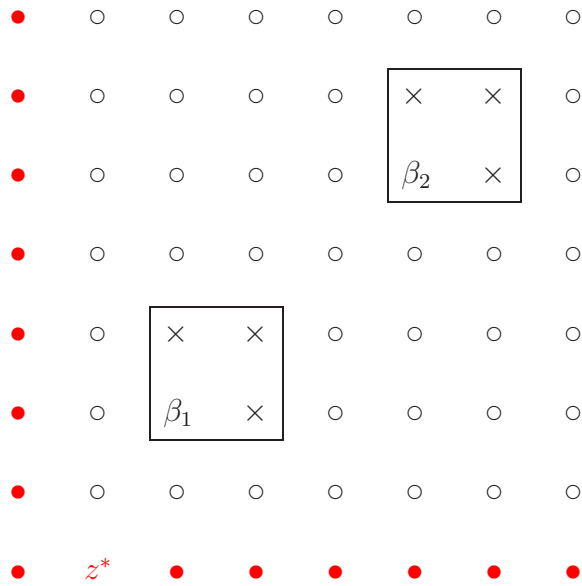
\begin{figure}
	\begin{center}
		\begin{picture}(240, 240)

			\multiput(40, 220)(30, 0){7}{\circle{5}}
			\multiput(40, 190)(30, 0){4}{\circle{5}}
			\multiput(40, 160)(30, 0){4}{\circle{5}}
			\multiput(40, 130)(30, 0){7}{\circle{5}}
			\multiput(40, 100)(30, 0){1}{\circle{5}}
			\multiput(40, 70)(30, 0){1}{\circle{5}}
			\multiput(40, 40)(30, 0){7}{\circle{5}}

			\multiput(220, 190)(30, 0){1}{\circle{5}}
			\multiput(220, 160)(30, 0){1}{\circle{5}}
			\multiput(130, 100)(30, 0){4}{\circle{5}}
			\multiput(130, 70)(30, 0){4}{\circle{5}}

			\multiput(65, 97)(30, 0){2}{$\times$}
			\multiput(95, 67)(30, 0){1}{$\times$}
			\put(65, 67){$\beta_1$}
			\put(60, 60){\framebox(50, 50)}

			\multiput(155, 187)(30, 0){2}{$\times$}
			\multiput(185, 157)(30, 0){1}{$\times$}
			\put(155, 157){$\beta_2$}
			\put(150, 150){\framebox(50, 50)}

			\color{red}
			\multiput(10, 10)(0, 30){8}{\circle*{5}}
			\multiput(70, 10)(30, 0){6}{\circle*{5}}

			\put(35, 7){$z^{*}$}

		\end{picture}
	\end{center}
	\caption{
		Similar to Figure~\ref{figure:inexistent_pattern_1}, but two singularity blocks are not independent, as after a singularity block is confined, another block appears.
		Such a phenomenon is common in the case of mappings since two or more patterns are sometimes combined and generate this type of pattern.
		A cyclic pattern of a mapping falls into this category, too.
	}
	\label{figure:inexistent_pattern_2}
\end{figure}

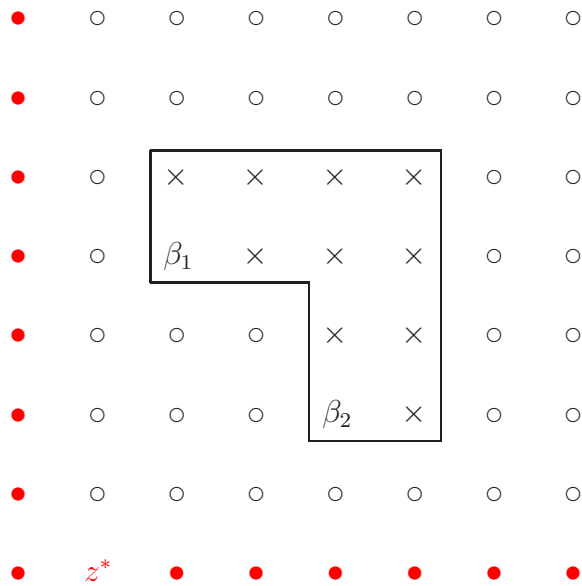
\begin{figure}
	\begin{center}
		\begin{picture}(240, 240)

			\multiput(40, 220)(30, 0){7}{\circle{5}}
			\multiput(40, 190)(30, 0){7}{\circle{5}}
			\multiput(40, 160)(30, 0){1}{\circle{5}}
			\multiput(40, 130)(30, 0){1}{\circle{5}}
			\multiput(40, 100)(30, 0){3}{\circle{5}}
			\multiput(40, 70)(30, 0){3}{\circle{5}}
			\multiput(40, 40)(30, 0){7}{\circle{5}}

			\multiput(190, 160)(30, 0){2}{\circle{5}}
			\multiput(190, 130)(30, 0){2}{\circle{5}}
			\multiput(190, 100)(30, 0){2}{\circle{5}}
			\multiput(190, 70)(30, 0){2}{\circle{5}}

			\multiput(65, 157)(30, 0){4}{$\times$}
			\multiput(95, 127)(30, 0){3}{$\times$}
			\put(65, 127){$\beta_1$}

			\multiput(125, 97)(30, 0){2}{$\times$}
			\multiput(155, 67)(30, 0){1}{$\times$}
			\put(125, 67){$\beta_2$}
			
			\put(60, 120){\line(1, 0){60}}
			\put(60, 120){\line(0, 1){50}}
			\put(60, 170){\line(1, 0){110}}
			\put(170, 170){\line(0, -1){110}}
			\put(170, 60){\line(-1, 0){50}}
			\put(120, 60){\line(0, 1){60}}

			\color{red}
			\multiput(10, 10)(0, 30){8}{\circle*{5}}
			\multiput(70, 10)(30, 0){6}{\circle*{5}}

			\put(35, 7){$z^{*}$}

		\end{picture}
	\end{center}
	\caption{
		Similar to Figure~\ref{figure:inexistent_pattern_1}, but one singularity block has two or more starting points.
		Such a phenomenon can never occur in the case of mappings.
	}
	\label{figure:inexistent_pattern_3}
\end{figure}

The fourth problem is, as mentioned in the above example, how to confirm that there is no possibility for $x_{t, n}(z)$ to become $0$ or $\infty$ except for the patterns $\text{I}_{t, n}$ and $\text{II}$.
For example, we must at least check that a spontaneous occurrence of $x_{t, n}(z^{*}) = \beta \notin \{ 0, \infty \}$ does not generate $0$ or $\infty$ as a value.
As illustrated in Figure~\ref{figure:inexistent_pattern_2}, a pattern starting with $\beta_1 \notin \{ 0, \infty \}$ and later involving $\beta_2 = 0$ would be problematic.
We must ensure that such a situation does not occur.
This is the most difficult part of our theory and we must develop some techniques to resolve this problem.

The only remedy known today for these problems would be the coprimeness.
This approach was developed as an algebraic reinterpretation of singularity confinement and can deal with global singularity structure \cite{coprimeness}.
For example, in the case of Hirota's discrete KdV equation, one can show that each iterate $x_{t, n}$ decomposes into
\[
	x_{t, n} = \rho_{t, n} \frac{\tau_{t, n} \tau_{t - 1, n - 1}}{\tau_{t - 1, n} \tau_{t, n - 1}},
\]
where $\tau_{t, n}$ is an irreducible polynomial with no monomial factor in the initial variables, $\rho_{t, n}$ is a monic Laurent monomial, and $\tau_{t, n}$ and $\tau_{s, m}$ are coprime unless $(t, n) = (s, m)$.
This decomposition, which corresponds to the pattern $\text{I}_{t, n}$, tells us everything we need to know about singularities.
In this illustrative example, $x_{t, n}(z)$ is considered to have a singularity at $z = z^{*}$ if $x_{t, n}(z^{*}) \in \{ 0, \infty \}$;
this convention applies only here, and the general definition will appear in Definition~\ref{definition:constant_singularity}.
For $(t, n) \ne (1, 1)$, the spontaneous occurrence of $x_{t, n}(z^{*}) = 0$ can be defined as $\tau_{t, n}(z^{*}) = 0$ ($z^{*} \ne 0, \infty$) and thus $Z_{t, n}$ is defined as $\deg_z \tau_{t, n}(z)$.
Since each $\tau_{s, m}(z)$ is coprime to $\tau_{t, n}(z)$ unless $(s, m) = (t, n)$, such $z^{*}$ does not generate any $0$ or $\infty$ other than the pattern $\text{I}_{t, n}$.
The patterns $\text{I}_{1, 1}$ and $\text{II}$ correspond to $z = 0$ and $\infty$, respectively.
Therefore, the decomposition guarantees that there is no possibility for $x_{t, n}(z^{*})$ to become $0$ or $\infty$ except for the patterns $\text{I}_{t, n}$ and $\text{II}$.

From the above discussion, it can be seen that showing the coprimeness for an equation immediately justifies the calculation by Halburd's method.
As of now, however, the idea of using the theory of coprimeness as a tool to calculate degrees for concrete equations has crucial flaws.

First, it requires too much computation to prove that an equation satisfies the coprimeness.
Apparently, the coprimeness property is related to Laurentification, which is a technique to transform an equation into one with the Laurent property by transforming dependent variables \cite{hone2007,coprimeness}.
Although some researchers (the present author included) invented several techniques to show the coprimeness, no general method is known as of now.
Therefore, in previous studies, some part of the discussion almost always relies on the concrete form of each equation \cite{coprimeness,exhv,factorize,exkdv}.
In addition, as seen in the above example, the theory of coprimeness is most compatible with singularity analysis when the singular values are $0$ and $\infty$.
If singularities take other values, we need multiple tau functions to express the original dependent variables, which makes Laurentification much more complicated \cite{hamad_kamp,laurentification_qrt,extended_laurent}.

In this paper, we do not use the theory of coprimeness to justify our calculations.
Our approach is based on the fact that an initial value problem for a lattice equation has infinitely many initial variables.
For example, consider the same domain as in Example~\ref{example:first_example_2d} and suppose that $x_{t, n}(z^{*}) = \alpha$ is a singularity.
Then, on which initial variables does $z^{*}$ depend?
Clearly, $z^{*}$ does not depend on $x_{t + 1, 0}$ or $x_{0, n + 1}$ since $x_{t, n}(z)$ is independent of these initial variables.
This independence, which is not valid in the case of mappings, enables us to calculate all singularity patterns from basic computations such as \eqref{equation:example2_sc_meaning}.

From here on, to clarify which $z^{*}$ depends on which initial variables, we stop substituting generic values for the initial variables other than $z$.
The rigorous definition is given in Definition~\ref{definition:field_k}.

\section{How to calculate degrees for lattice equations}\label{section:main}

In this section, we state our main theorems (Theorems~\ref{theorem:first_singularity}, \ref{theorem:main_theorem}, \ref{theorem:main_degree} and \ref{theorem:degree_solvable}) and give some propositions, which is helpful to detect the singularity structure of an equation.
All proofs will be given in the next section.

First, we define the past and future light cones.

\begin{definition}[Past and Future Light Cones]
	\begin{enumerate}
		\item
		For $(t, n) \in \mathbb{Z}^2$, we define the past light cone emanating from $(t, n)$ as
		\[
			\mathbb{Z}^2_{\le (t, n)} = \{ (t', n') \in \mathbb{Z}^2 \mid t' \le t; \ n' \le n \}.
		\]

		\item
		For $(t, n) \in \mathbb{Z}^2$, we define the future light cone emanating from $(t, n)$ as
		\[
			\mathbb{Z}^2_{\ge (t, n)} = \{ (t', n') \in \mathbb{Z}^2 \mid t' \ge t; \ n' \ge n \}.
		\]

	\end{enumerate}
\end{definition}

Next, we define the conditions on a domain we consider in this paper.

\begin{definition}\label{definition:domain}
	\begin{enumerate}
		\item 
		A nonempty set $H \subset \mathbb{Z}^2$ is called a domain if the following two conditions are satisfied:
		\begin{itemize}
			\item
			For any $(t, n) \in H$, the intersection between the corresponding past light cone and the domain $H$, i.e.,
			\[
				H_{\le (t, n)} := \mathbb{Z}^2_{\le (t, n)} \cap H = \{ (t', n') \in H \mid t' \le t; \ n' \le n \},
			\]
			is a finite set.

			\item
			For any $(t, n) \in H$, the corresponding future light cone
			\[
				\mathbb{Z}^2_{\ge (t, n)} = \{ (t', n') \in \mathbb{Z}^2 \mid t' \ge t; \ n' \ge n \}
			\]
			is contained in $H$.
			In particular, $H_{\ge (t, n)} := \mathbb{Z}^2_{\ge (t, n)} \cap H$ coincides with $\mathbb{Z}^2_{\ge (t, n)}$.

		\end{itemize}

		\item
		For a domain $H$, we define its initial boundary $H_0 \subset H$ as
		\[
			H_0 = \{ (t, n) \in H \mid (t - 1, n - 1) \notin H \}.
		\]

	\end{enumerate}
\end{definition}

\begin{remark}
	One can check that $(t, n) \in H$ does not belong to the initial boundary if and only if $(t - 1, n)$, $(t, n - 1)$ and $(t - 1, n - 1)$ all belong to $H$.
\end{remark}

The reason why we impose the above conditions when considering degree growth is discussed in \cite{domain}.
Domains satisfying these conditions were termed \emph{light-cone regular domains} in \cite{laurent_domain}.

From here on, we fix a quad equation \eqref{equation:quad_equation}, a domain $H \subset \mathbb{Z}$ and one of the initial variables $z = x_{t_0, n_0}$.
We think of each $x_{t, n}$ as a rational function in $z$, i.e., $x_{t, n} = x_{t, n}(z)$ and consider the degree with respect to $z$.
Note that among infinitely many choices of $H$ (a domain) and $z$ (one of the initial variables), only four of them are essential, in the sense that for any choice of $H$ and $z$, the degree computation can be reduced to one of these four cases.
We will leave such discussion in Appendix~\ref{appendix:domain} (Theorem~\ref{theorem:domain_choice}) as it is only of secondary importance in this paper.

\begin{definition}[$\mathbb{K}$]\label{definition:field_k}
	We define a field $\mathbb{K}$ as
	\[
		\mathbb{K} = \mathbb{C} \left( x_{t, n} \mid (t, n) \in H_0; \ x_{t, n} \ne z \right).
	\]
	That is, $\mathbb{K}$ is the field of rational functions in the initial variables other than $z$.
\end{definition}

In the examples in \textsection\ref{section:introduction} and \textsection\ref{section:introduction_calculation}, the initial variables other than $z$ are all taken generic.
In our strategy, however, we do not use this approach.
We think of each initial value other than $z$ just as a variable.
Therefore, $x_{t, n}(z)$ is an element of $\mathbb{K}(z)$.
Instead of taking generic initial values, we will substitute a $\mathbb{P}^1 \left( \overline{\mathbb{K}} \right)$-value $z^{*}$ for $z$, where $\overline{\mathbb{K}}$ is the algebraic closure of $\mathbb{K}$.

As seen in the rest of the paper, this approach makes all definitions and discussion much clearer.
A key idea in this paper is to clarify on which initial variables a value $z^{*}$ we substitute for $z$ depends.
Some definitions, lemmas and propositions entirely rely on this approach.
The definition of a singularity we use in this paper, which is given below, is such an example.

\begin{definition}[Constant Singularity, Constant Singularity Pattern]\label{definition:constant_singularity}
	For $(t, n) \in H$ and $z^{*} \in \mathbb{P}^1 \left( \overline{\mathbb{K}} \right)$, we say that $x_{t, n}(z)$ has a \emph{constant singularity} at $z = z^{*}$ if $x_{t, n}(z^{*}) \in \mathbb{P}^1(\mathbb{C})$.
	The \emph{multiplicity} of a constant singularity $x_{t, n}(z^{*}) = x^{*}$ ($\in \mathbb{P}^1(\mathbb{C})$) is the multiplicity of $x_{t, n}(z) = x^{*}$ at $z = z^{*}$ as a rational function in $z$.
	We define the multiplicity of $x_{t, n}(z^{*}) = x^{*}$ as $0$ if $x_{t, n}(z^{*}) \ne x^{*}$ (including the case where $x_{t, n}(z)$ does not have a constant singularity at $z = z^{*}$).
	Here, a constant singularity $x_{t, n}(z^{*}) = x^{*}$ consists of its position $(t, n)$, the corresponding value $x^{*}$, and its multiplicity.
	As in Example~\ref{example:first_example_1d}, when we are interested only in the value of a constant singularity, we use the term ``singular value.''

	For $z^{*} \in \mathbb{P}^1 \left( \overline{\mathbb{K}} \right)$, the \emph{constant singularity pattern} corresponding to $z = z^{*}$ consists of the following data:
	\begin{itemize}
		\item 
		the set of the points $(t, n) \in H$ where $x_{t, n}(z)$ has a constant singularity at $z = z^{*}$,

		\item
		their values, i.e., $x_{t, n}(z^{*}) \in \mathbb{P}^1(\mathbb{C})$ for each point $(t, n) \in H$ where $x_{t, n}(z^{*})$ has a constant singularity,

		\item
		their multiplicities.

	\end{itemize}
	However, we do not use the term ``constant singularity pattern'' if it contains no constant singularity.

	We say that a constant singularity pattern corresponding to $z = z^{*}$ is \emph{fixed} (resp.\ \emph{movable}) if $z^{*} \in \mathbb{P}^1(\mathbb{C})$ (resp.\ $z^{*} \in \mathbb{P}^1 \left( \overline{\mathbb{K}} \right) \setminus \mathbb{P}^1(\mathbb{C})$).
	A constant singularity pattern is said to be \emph{confining} (resp.\ \emph{non-confining}) if it contains only finitely many (resp.\ infinitely many) constant singularities.
	We say that a constant singularity pattern is \emph{solitary} if it consists of only one constant singularity.
\end{definition}

Since all the singularity patterns we consider in the remaining part of this paper are constant singularity patterns, we sometimes use a singularity pattern (or simply a pattern) to denote a constant singularity pattern.
We will discuss non-constant singularity patterns in \textsection\ref{section:conclusion}.

\begin{remark}
	In the context of singularity confinement, a singularity is defined by loss of information on the initial values, as seen in Example~\ref{example:first_example_1d}.
	However, the definition of a constant singularity does not involve any loss of information on the initial values, although such a loss may occur in significant examples.
	If we were to require such a loss as part of the definition, it would be necessary to define what loss of information means for lattice equations.
	In fact, at the stage where the first singularity appears, no loss of information on the initial values seems to occur.
	This observation will later be confirmed in the proof of Theorem~\ref{theorem:main_theorem}, although the theorem itself serves a different purpose.
	As this paper does not aim to establish an ideal definition of singularity confinement for lattice equations, we will not pursue this discussion further.
\end{remark}

\begin{definition}[Basic Pattern]\label{definition:basic_pattern}
	Let
	\[
		H^{\text{basic}}
		= \mathbb{Z}^2_{\ge (- 1, - 1)} \setminus \{ (- 1, - 1) \}
	\]
	and consider the equation
	\[
		x^{\text{basic}}_{t, n} = \Phi \left( x^{\text{basic}}_{t - 1, n}, x^{\text{basic}}_{t, n - 1}, x^{\text{basic}}_{t - 1, n - 1} \right)
	\]
	on $H^{\text{basic}}$
	(see Figure~\ref{figure:basic_domain}).
	For each $(t, n) \in H^{\text{basic}}$, think of $x^{\text{basic}}_{t, n}$ as a rational function in $x^{\text{basic}}_{0, 0}$,
	i.e., $x^{\text{basic}}_{t, n} = x^{\text{basic}}_{t, n}(x^{\text{basic}}_{0, 0})$.
	For $x^{*} \in \mathbb{P}^1(\mathbb{C})$, the \emph{basic pattern} corresponding to $x^{\text{basic}}_{0, 0} = x^{*}$ is the fixed constant singularity pattern corresponding to $x^{\text{basic}}_{0, 0} = x^{*}$ on $H^{\text{basic}}$.
\end{definition}

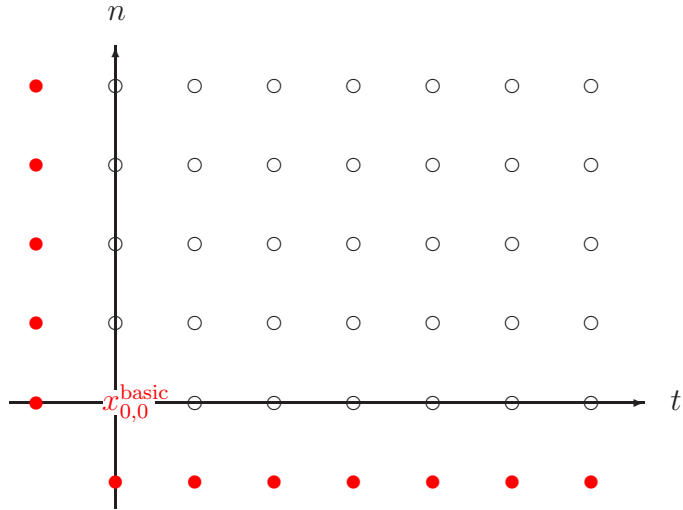
\begin{figure}
	\begin{center}
		\begin{picture}(240, 200)

			\put(55, 40){\vector(1, 0){185}}
			\put(40, 45){\vector(0, 1){130}}

			\put(0, 40){\line(1, 0){35}}
			\put(40, 0){\line(0, 1){35}}

			\multiput(40, 160)(30, 0){7}{\circle{5}}
			\multiput(40, 130)(30, 0){7}{\circle{5}}
			\multiput(40, 100)(30, 0){7}{\circle{5}}
			\multiput(40, 70)(30, 0){7}{\circle{5}}
			\multiput(70, 40)(30, 0){6}{\circle{5}}

			\put(250, 37){$t$}
			\put(37, 185){$n$}
			
			\color{red}
			\multiput(10, 40)(0, 30){5}{\circle*{5}}
			\multiput(40, 10)(30, 0){7}{\circle*{5}}

			\put(35, 37){$x^{\text{basic}}_{0, 0}$}

		\end{picture}
	\end{center}
	\caption{Domain on which we consider basic patterns.}
	\label{figure:basic_domain}
\end{figure}

\begin{remark}
	Note that for generic $x^{*} \in \mathbb{P}^1(\mathbb{C})$, the corresponding basic pattern is solitary.
	That is, there are only finitely many values of $x^{*}$ that produce a non-solitary pattern.
	This fact will be justified later (see Proposition~\ref{proposition:solitary}).
\end{remark}

Unfortunately, our strategy is not always valid.
We introduce two conditions on a quad equation.

\begin{definition}[$\partial$-factor Condition]\label{definition:d_factor_condition}
	We say that a quad equation $\Phi$ satisfies the \emph{$\partial$-factor condition} if it has the following two properties:
	\begin{itemize}
		\item
		Consider the factorizations of the numerators of $\partial_B \Phi$ and $\partial_B \frac{1}{\Phi}$, where $\partial_X = \frac{\partial}{\partial X}$ denotes the (partial) derivative with respect to a variable $X$.
		Let $F = F(B, C, D) \in \mathbb{C}[B, C, D]$ be an arbitrary irreducible polynomial occurring in one of these factorizations.
		Then, $F$ satisfies one of the following:
		\begin{itemize}
			\item
			$\partial_B F \ne 0$,
	
			\item
			$F = C - \alpha$ or $F = D - \alpha$ for some $\alpha \in \mathbb{C}$.
	
		\end{itemize}
		That is, the numerators of $\partial_B \Phi$ and $\partial_B \frac{1}{\Phi}$ do not have an irreducible factor that depends on both $C$ and $D$ but not on $B$.

		\item
		$\Phi$ satisfies the above condition with $B$ and $C$ interchanged.

	\end{itemize}
\end{definition}

\begin{definition}
	Let $x^{*} \in \mathbb{P}^1(\mathbb{C})$ and let
	\[
		\Phi(B, C, D) = \frac{\Phi_1(B, C, D)}{\Phi_2(B, C, D)}
	\]
	where $\Phi_1, \Phi_2 \in \mathbb{C}[B, C, D]$ are coprime.
	If $x^{*} \ne \infty$ (resp.\ $x^{*} = \infty$), an irreducible factor of the relation $\Phi(B, C, D) = x^{*}$ is an irreducible factor of the polynomial
	\[
		\Phi_1(B, C, D) - x^{*} \Phi_2(B, C, D)
	\]
	(resp.\ of the polynomial $\Phi_2(B, C, D)$).
\end{definition}

\begin{definition}[Basic Pattern Condition]\label{definition:basic_pattern_condition}
	Let $x^{*} \in \mathbb{P}^1(\mathbb{C})$ and consider the relation
	\[
		\Phi(B, C, D) = x^{*}.
	\]
	We say that the equation satisfies the \emph{basic pattern condition} for $x^{*}$ if any irreducible factor $F = F(B, C, D) \in \mathbb{C}[B, C, D]$ of the above relation satisfies one of the following:
	\begin{itemize}
		\item
		$\partial_D F \ne 0$,

		\item
		$F = B - \alpha$ or $F = C - \alpha$ for some $\alpha \in \mathbb{C}$.

	\end{itemize}
	That is, the equation does not satisfy the basic pattern condition for $x^{*}$ if the relation $\Phi(B, C, D) = x^{*}$ has an irreducible factor that contains both $B$ and $C$ but not $D$.
\end{definition}

The name ``basic pattern condition'' is derived from Theorem~\ref{theorem:main_theorem}.
If an equation does not satisfy this condition, a movable pattern does not necessarily coincide with the corresponding basic pattern (see Example~\ref{example:counterexample_2}).

\begin{remark}
	If the polynomial $\Phi_1(B, C, D) - x^{*} \Phi_2(B, C, D)$ is irreducible, then $x^{*}$ satisfies the basic pattern condition.
	In many practical cases, this irreducibility holds for generic $x^{*} \in \mathbb{C}$ since $\Phi_1$ and $\Phi_2$ are coprime.
	For instance, if $\deg_D \Phi(B, C, D) = 1$ (equivalent to the usual reversibility condition in discrete integrable systems, i.e., that the equation can be solved in the opposite direction by a rational function), then $\Phi_1(B, C, D) - x^{*} \Phi_2(B, C, D)$ is irreducible for generic $x^{*}$, as it is of degree $1$ in $D$.
	Similarly, if $\deg_B \Phi(B, C, D) = 1$ or $\deg_C \Phi(B, C, D) = 1$, the polynomial is irreducible for generic $x^{*} \in \mathbb{C}$.

	However, there are some cases where $\Phi_1(B, C, D) - x^{*} \Phi_2(B, C, D)$ is reducible for generic $x^{*}$.

	The first case occurs when 
	\[
		\operatorname{trdeg}_{\mathbb{C}} \mathbb{C}\left( \Phi_1(B, C, D), \Phi_2(B, C, D) \right) = 1,
	\]
	with $\operatorname{trdeg}_{K_1} K_2$ denoting the transcendental degree of a field extension $K_1 \subset K_2$.
	In this situation, there exists a polynomial $Y = Y(B, C, D)$ such that $\Phi_1$ and $\Phi_2$ can be regarded as polynomials in $Y$ only: $\Phi_1 = \Phi_1(Y)$, $\Phi_2 = \Phi_2(Y)$.
	If either $\Phi_1(Y)$ or $\Phi_2(Y)$ has degree greater than $1$, then the polynomial $\Phi_1(Y) - x^{*} \Phi_2(Y)$ factorizes for generic $x^{*}$.
	However, any irreducible factorization must take the following form:
	\[
		\Phi_1(Y) - x^{*} \Phi_2(Y) = a (Y - b_1) \cdots (Y - b_m).
	\]
	Since $\Phi$ depends on each of $B$, $C$, and $D$, so does $Y$.
	Therefore, the basic pattern condition still holds for generic $x^{*}$ in this case.

	The second case occurs when there exist $\Psi_1, \Psi_2 \in \mathbb{C}[B, C, D]$ and $p \in \mathbb{Z}_{\ge 2}$ such that
	\[
		\Phi_1 = \Psi^p_1, \quad
		\Phi_2 = \Psi^p_2.
	\]
	In this situation, the polynomial $\Phi_1(B, C, D) - x^{*} \Phi_2(B, C, D)$ trivially factorizes as
	\[
		\Phi_1 - x^{*} \Phi_2
		= \Psi^p_1 - x^{*} \Psi^p_2
		= \prod^{p - 1}_{j = 0} \left( \Psi_1 - \zeta^j_p \sqrt[p]{x^{*}} \Psi_2 \right),
	\]
	where $\zeta_p$ denotes a primitive $p$-th root of unity.
	Although each factor may further decompose, the analysis reduces to the case where $\Phi_1$ and $\Phi_2$ are replaced by $\Psi_1$ and $\Psi_2$, i.e., after removing the $p$-th power.
	Therefore, the basic pattern condition can be verified by considering the situation without such a $p$-th power structure.
\end{remark}

These observations lead us to the following conjecture.

\begin{conjecture}
	For any quad equation, there exist at most finitely many values in $\mathbb{P}^1(\mathbb{C})$ for which the basic pattern condition fails.
\end{conjecture}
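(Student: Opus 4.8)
The plan is to reduce the failure of the basic pattern condition to divisibility by a \emph{single fixed} polynomial attached to $\Phi$, so that the a priori unbounded supply of candidate factors collapses to a finite list. By definition the condition fails at a value $x^{*}$ precisely when the relation $\Phi(B,C,D)=x^{*}$ has an irreducible factor $F\in\mathbb{C}[B,C]$ (i.e.\ with $\partial_D F=0$) that depends on both $B$ and $C$; since over $\mathbb{C}$ an irreducible univariate polynomial is linear, a factor depending on only one of $B,C$ is necessarily of the allowed form $B-\alpha$ or $C-\alpha$, so the genuinely bivariate factors are the only obstruction. The value $x^{*}=\infty$ is a single point, and there the relevant factors are those of the fixed polynomial $\Phi_2$, so it contributes nothing to finiteness; I therefore concentrate on $x^{*}\in\mathbb{C}$, where the failure means $F\mid(\Phi_1-x^{*}\Phi_2)$.

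The key step is to differentiate in $D$. Writing $\Phi_1-x^{*}\Phi_2=F\,Q$ and using $\partial_D F=0$, an application of $\partial_D$ gives $F\mid(\partial_D\Phi_1-x^{*}\partial_D\Phi_2)$ as well. Eliminating $x^{*}$ between the two divisibilities through the combination
\[
	\Phi_2\bigl(\partial_D\Phi_1-x^{*}\partial_D\Phi_2\bigr)-\partial_D\Phi_2\bigl(\Phi_1-x^{*}\Phi_2\bigr)=\Phi_2\,\partial_D\Phi_1-\Phi_1\,\partial_D\Phi_2=:N
\]
shows $F\mid N$, where $N$ is the numerator $\operatorname{numer}(\partial_D\Phi)$. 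Since $\Phi$ depends on $D$, we have $N\neq 0$, so $N$ is a fixed nonzero polynomial with only finitely many irreducible factors, and in particular only finitely many that depend on both $B$ and $C$. Thus every bivariate bad factor $F$, for every finite $x^{*}$, must already appear in this one fixed finite list.

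It then remains to check that each such $F$ can account for at most one $x^{*}$. Coprimeness of $\Phi_1$ and $\Phi_2$ forces $F\nmid\Phi_2$ (otherwise $F$ would also divide $\Phi_1$), and if $F$ divided both $\Phi_1-x^{*}\Phi_2$ and $\Phi_1-y^{*}\Phi_2$ it would divide $(y^{*}-x^{*})\Phi_2$, whence $x^{*}=y^{*}$. Hence the bad finite values $x^{*}$ inject into the finite set of bivariate irreducible factors of $N$, and appending the single possible value $\infty$ keeps the total finite, proving the conjecture. The main obstacle I anticipate is conceptual rather than computational: one must spot that the differentiation trick, made available exactly by $\partial_D F=0$, confines every potential bad factor to the single fixed polynomial $N=\operatorname{numer}(\partial_D\Phi)$; once that observation is in place the counting is immediate, and I expect the remaining verifications (the coprimeness argument and the one-value-per-factor bound) to be routine.
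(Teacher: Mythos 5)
The paper offers no proof of this statement: it is left as an open conjecture, supported only by the remark preceding it, which verifies the condition in special situations (generic $x^{*}$ when some partial degree is $1$, the case $\operatorname{trdeg}_{\mathbb{C}} \mathbb{C}(\Phi_1, \Phi_2) = 1$, and the common-$p$-th-power case). Your argument is therefore not a rederivation but an actual proof, and after checking it carefully I believe it is correct. Failure at a finite $x^{*}$ means exactly (as the paper's own gloss on Definition~\ref{definition:basic_pattern_condition} confirms) that some irreducible $F \in \mathbb{C}[B, C]$ depending on both $B$ and $C$ divides $\Phi_1 - x^{*} \Phi_2$; writing $\Phi_1 - x^{*} \Phi_2 = F Q$ and using $\partial_D F = 0$ gives $F \mid \partial_D \Phi_1 - x^{*} \partial_D \Phi_2$, and the elimination
\[
	\Phi_2 \left( \partial_D \Phi_1 - x^{*} \partial_D \Phi_2 \right) - \partial_D \Phi_2 \left( \Phi_1 - x^{*} \Phi_2 \right) = \Phi_2 \, \partial_D \Phi_1 - \Phi_1 \, \partial_D \Phi_2 =: N
\]
is valid, with $N \ne 0$ precisely because $\Phi$ is assumed throughout the paper to depend on $D$ (so $\partial_D \Phi = N / \Phi^2_2 \ne 0$); your injectivity step is also sound, since $F \mid (y^{*} - x^{*}) \Phi_2$ with $F$ non-constant would force $F \mid \Phi_2$ and then $F \mid \Phi_1$, contradicting the coprimeness built into the paper's definition of the relation $\Phi(B, C, D) = x^{*}$. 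Two cosmetic points: $N$ is the numerator of $\partial_D \Phi$ only up to a factor of $\gcd \left( N, \Phi^2_2 \right)$, but this is harmless because $F \nmid \Phi_2$ shows $F$ survives into the reduced numerator as well; and your one-line dismissal of univariate factors is right, since an irreducible element of $\mathbb{C}[B, C, D]$ involving only $B$ must be linear. What your approach buys beyond the paper's remark is uniformity and effectivity: it reduces the conjecture to the finite list of irreducible factors of $\operatorname{numer}(\partial_D \Phi)$ that contain $B$ and $C$ but not $D$, which is exactly the $D$-analogue of the factor condition the paper imposes on $\partial_B \Phi$ and $\partial_C \Phi$ in Definition~\ref{definition:d_factor_condition} — so the set of bad values can in practice be read off from the same derivative computations the paper already performs, as in its examples.
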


Let us state our main theorems in this paper.
The first theorem guarantees that each constant singularity pattern has only one starting point.

\begin{theorem}\label{theorem:first_singularity}
	Suppose that a quad equation $\Phi$ satisfies the $\partial$-factor condition (Definition~\ref{definition:d_factor_condition}).
	Let $H \subset \mathbb{Z}^2$ be a domain and let $z$ be one of the initial variables.
	Let $z^{*} \in \mathbb{P}^1 \left( \overline{\mathbb{K}} \right)$ and consider the constant singularity pattern corresponding to $z = z^{*}$.
	Then, the set
	\[
		\{ (t', n') \in H \mid x_{t', n'}(z^{*}) \text{ is a constant singularity} \}
	\]
	has a minimum element with respect to the product order $\le$ on $\mathbb{Z}^2$.
	In particular, the equation has no undesirable pattern as in Figures~\ref{figure:inexistent_pattern_1} and \ref{figure:inexistent_pattern_3}.
\end{theorem}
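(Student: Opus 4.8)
The plan is to reduce the statement to the uniqueness of a minimal element, and then to exploit the light-cone structure together with the $\partial$-factor condition to rule out two incomparable sources.

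First I would record a purely order-theoretic reduction. Write $S = \{(t',n') \in H \mid x_{t',n'}(z^{*}) \text{ is a constant singularity}\}$; by the definition of a constant singularity pattern this set is nonempty. Because $H$ is a domain (Definition~\ref{definition:domain}), the intersection $H_{\le(t',n')}$ is finite for every $(t',n')$, so $S \cap \mathbb{Z}^2_{\le(t',n')}$ is finite for each $(t',n') \in S$; hence $S$ has minimal elements. Moreover, if $S$ has a \emph{unique} minimal element $m$, then $m$ is automatically the minimum: for any $s \in S$ the finite nonempty set $S \cap \mathbb{Z}^2_{\le s}$ has a minimal element, which is also minimal in $S$ and therefore equals $m$, so $m \le s$. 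Thus everything reduces to showing that $S$ has at most one minimal element, which is precisely the assertion that the patterns of Figures~\ref{figure:inexistent_pattern_1} and \ref{figure:inexistent_pattern_3} cannot occur.

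Next I would characterize a minimal element $(t,n) \in S$: every point of $H$ strictly below $(t,n)$ is non-singular, so in particular the parent values $b = x_{t-1,n}(z^{*})$, $c = x_{t,n-1}(z^{*})$ and $d = x_{t-1,n-1}(z^{*})$ (for those parents lying in $H$) are non-constant, while $\Phi(b,c,d) = x_{t,n}(z^{*}) \in \mathbb{P}^1(\mathbb{C})$. I would then track transcendence degrees along past light cones: the initial variables other than $z$ occurring in $H_{\le(t,n)}$ split, relative to the west cone $H_{\le(t-1,n)}$ and the south cone $H_{\le(t,n-1)}$, into those shared (already present in $H_{\le(t-1,n-1)}$) and those exclusive to one side. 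Since the two exclusive families are disjoint sets of independent variables, $b$ and $c$ are algebraically independent over the field generated by the shared data together with $z$, something one can measure by $\operatorname{trdeg}_{\mathbb{C}}$ of the field generated by the light-cone values. A first singularity at $(t,n)$ is exactly an event at which adjoining $x_{t,n}(z^{*})$ fails to increase this transcendence degree and in fact forces the value into $\mathbb{C}$.

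The engine, and the step I expect to be the main obstacle, is to convert the relation $\Phi(b,c,d) = x^{*}$ with $b,c,d$ non-constant into a factorization statement constrained by the $\partial$-factor condition (Definition~\ref{definition:d_factor_condition}), using the substitution and derivative calculus of Appendix~\ref{appendix:substitution_and_derivatives} and the algebraic lemmas of Appendix~\ref{appendix:algebraic_lemmas}. The condition controls precisely the irreducible factors of the numerators of $\partial_B \Phi$ and $\partial_C \Phi$: it forbids a factor that depends on $C$ and $D$ but not $B$ (and symmetrically), which is what would allow $\Phi$ to collapse to a constant through a degeneration confined to the south and southwest parents alone. Using this, I would show that a first singularity forces a genuine coupling across one lattice step, so that the defining relation of the singularity can be pushed one step into the past; if two incomparable minimal elements $(t_1,n_1)$ and $(t_2,n_2)$ existed, pushing each relation back through the overlap of their light cones would then produce a singular point strictly below one of them, contradicting minimality. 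The delicate point is to verify that the exceptional factors permitted by the $\partial$-factor condition (namely $C-\alpha$, $D-\alpha$ and their counterparts) correspond only to singularities \emph{inherited} from a strictly earlier point, never to a new independent source; establishing this propagation cleanly is where the real work lies.
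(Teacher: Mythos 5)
Your order-theoretic reduction is fine and matches what the paper leaves implicit: minimal elements of $S$ exist because each $H_{\le (t',n')}$ is finite, and a unique minimal element is automatically the minimum. The gap is in your ``engine.'' The paper's proof does not push the singularity relation backward at all; it propagates \emph{non}-singularity sideways. Fix one minimal element $(t,n)$. By minimality, $x_{t,n-1}(z^*), x_{t,n-2}(z^*), \cdots$ are not constant singularities, and since a constant singularity at $(t,n)$ forces $z^* \in \mathbb{P}^1\left(\overline{\mathbb{K}_{t,n}}\right)$ (Lemma~\ref{lemma:trdeg_two}), $z^*$ cannot depend on the fresh boundary variables $w = x_{s, m-m'}$ attached to columns with $s > t$. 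Proposition~\ref{proposition:key_lemma} — whose proof is where the $\partial$-factor condition, the exceptional factors $C-\alpha$, $D-\alpha$, and Lemma~\ref{lemma:polynomial_nonvanishing} actually enter — then shows that every value in the column $s = t+1$ at rows below $n$ depends on $w$ and hence is not a constant singularity; iterating sweeps out the whole quadrant $\{s > t,\ m < n\}$, and interchanging the axes handles $\{s < t,\ m > n\}$. Any second minimal element would be incomparable with $(t,n)$ and so would lie in one of these two quadrants, a contradiction. Your proposal never asserts that these incomparable quadrants are singularity-free, which is the actual content of the theorem's proof.

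Your substitute mechanism — ``pushing each relation back through the overlap of their light cones would then produce a singular point strictly below one of them'' — cannot work as stated. At a genuine first singularity nothing strictly in its past is singular by definition, and first singularities do exist (every movable pattern in \textsection\ref{section:examples} has one); so any argument converting the relation $\Phi(b,c,d) = x^*$ at a minimal element into a singularity strictly below it would prove too much, ruling out all first singularities. You give no account of what the presence of a \emph{second} incomparable minimal element contributes that would enable the backward push in that case only, and you concede that establishing the propagation cleanly ``is where the real work lies.'' That unproved propagation step is precisely Lemma~\ref{lemma:key_lemma} (equivalently Proposition~\ref{proposition:key_lemma}), used in the forward direction; without it your proposal does not close, even though you correctly identified several supporting ingredients (the disjoint exclusive families of initial variables, and the role of the exceptional factors in the $\partial$-factor condition).
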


Each basic pattern has only one starting point.
A pattern with two or more starting points, such as shown in Figures~\ref{figure:inexistent_pattern_1} or \ref{figure:inexistent_pattern_3}, is undesirable in our study because it does not correspond to a basic pattern.

\begin{definition}[First Singularity, Starting Value]\label{definition:first_singularity}
	If $(t, n)$ is the minimum element of the set in Theorem~\ref{theorem:first_singularity}, we say that $x_{t, n}(z^{*})$ is the \emph{first singularity} (or first constant singularity) of the pattern.
	We call the value of the first singularity of a pattern the \emph{starting value} of the pattern.
\end{definition}

The second theorem ascertains that the global structure of a movable pattern coincides with that of the corresponding basic pattern.

\begin{theorem}\label{theorem:main_theorem}
	Suppose that a quad equation $\Phi$ satisfies the $\partial$-factor condition.
	Let $H \subset \mathbb{Z}^2$ be a domain and let $z$ be one of the initial variables.
	Let $z^{*} \in \mathbb{P}^1 \left( \overline{\mathbb{K}} \right)$ and consider the constant singularity pattern corresponding to $z = z^{*}$.
	Let $x_{t, n}(z^{*}) = x^{*} \in \mathbb{P}^1(\mathbb{C})$ be the first constant singularity of this pattern and let $r$ be the multiplicity of $x_{t, n}(z^{*}) = x^{*}$.
	Suppose that the equation satisfies the basic pattern condition for $x^{*}$ (Definition~\ref{definition:basic_pattern_condition}).
	Then, in the region $H_{\ge (t, n)}$, the constant singularity pattern corresponding to $z = z^{*}$ coincides with the $(t, n)$-translation of the basic pattern corresponding to $x^{\text{basic}}_{0, 0} = x^{*}$ with all the multiplicities multiplied by $r$.
	That is, if $x^{\text{basic}}_{s, m}(x^{*}) = \alpha$ is a constant singularity of multiplicity $\ell$ on $H^{\text{basic}}$, then $x_{t + s, n + m}(z^{*}) = \alpha$ is a constant singularity of multiplicity $r \ell$ on $H$.
\end{theorem}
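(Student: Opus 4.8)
The plan is to reduce the global statement to a local comparison between two power-series computations driven by the same map $\Phi$, differing only in that the basic pattern feeds $\Phi$ with generic \emph{constants} while the movable pattern feeds it with generic elements of $\overline{\mathbb{K}}$. First I would translate so that the first constant singularity sits at $(t,n)$; by Theorem~\ref{theorem:first_singularity} the whole singularity set is then contained in the future light cone $H_{\ge(t,n)}$. Consequently the northeast evolution on the quarter plane $H_{\ge(t,n)}$ is determined by the corner value $x_{t,n}(z^{*})=x^{*}$ together with the data on the row immediately below, $x_{t+s,n-1}(z^{*})$ $(s\ge 0)$, and the column immediately to the left, $x_{t-1,n+m}(z^{*})$ $(m\ge 0)$. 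By the minimality of $(t,n)$, none of these boundary entries is a constant singularity: each lies in $\overline{\mathbb{K}}\setminus\mathbb{C}$ and is in particular finite and transcendental over $\mathbb{C}$. Under the translation $(t,n)\mapsto(0,0)$ this is \emph{exactly} the initial data defining the basic pattern of Definition~\ref{definition:basic_pattern} (corner $x^{\text{basic}}_{0,0}=x^{*}$, everything else generic), the only discrepancy being constants versus $\overline{\mathbb{K}}$-values.

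Next I would fix a local parameter $\varepsilon$ at the place $z=z^{*}$ of $\mathbb{P}^1(\overline{\mathbb{K}})$ and expand each iterate as a Laurent series in $\varepsilon$ with coefficients in $\overline{\mathbb{K}}$, reading off the value $x_{t+s,n+m}(z^{*})$ as the leading (constant) term and its multiplicity as the order, as in Definition~\ref{definition:constant_singularity}. On the basic-pattern side I would set $\eta=x^{\text{basic}}_{0,0}-x^{*}$ (replacing $x-x^{*}$ by $1/x$ when $x^{*}=\infty$) and expand each $x^{\text{basic}}_{s,m}$ as a series in $\eta$ over the field generated by the generic boundary constants, which I take to be independent indeterminates so as to make them maximally generic. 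The core of the proof is then a single substitution-and-specialization homomorphism: send $\eta\mapsto x_{t,n}(z)-x^{*}$, a series of order exactly $r$ in $\varepsilon$ by hypothesis, and specialize the indeterminate boundary constants to the actual movable boundary values $x_{t+s,n-1}(z^{*})$ and $x_{t-1,n+m}(z^{*})$. Because this map is a ring homomorphism commuting with $\Phi$, a straightforward induction on $(s,m)$ (ordered by $s+m$) shows that it carries the basic-pattern expansion of $x^{\text{basic}}_{s,m}$ to the movable-pattern expansion of $x_{t+s,n+m}(z)$; the inductive step is just one application of the quad equation, and the substitution and derivative formalism recalled in Appendix~\ref{appendix:substitution_and_derivatives} makes this precise.

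With the coupling in hand, the bookkeeping of values and multiplicities is mechanical. If $x^{\text{basic}}_{s,m}=\alpha+c\,\eta^{\ell}+\cdots$ with $c\ne 0$ (and the analogous expansion at $\infty$), then substituting $\eta=b\,\varepsilon^{r}+\cdots$ with $b\ne 0$ gives $x_{t+s,n+m}=\alpha+cb^{\ell}\varepsilon^{r\ell}+\cdots$, so the value $\alpha$ is preserved and the multiplicity is multiplied by $r$, exactly as claimed; the corner itself ($\ell=1$) is the base case and already exhibits the scaling $r\ell=r$. Conversely, a point is regular in one pattern precisely when the corresponding constant term is transcendental, hence regular in the other, which yields the claimed \emph{coincidence} of the two patterns rather than a mere inclusion. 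The two hypotheses enter exactly here: the $\partial$-factor condition (Definition~\ref{definition:d_factor_condition}) guarantees that the generic boundary inputs genuinely influence each iterate to leading order, so that the leading coefficients $c$ are nonzero rational functions of the boundary constants that survive the specialization to the $\overline{\mathbb{K}}$-values; and the basic pattern condition for $x^{*}$ (Definition~\ref{definition:basic_pattern_condition}) rules out an irreducible factor of $\Phi=x^{*}$ depending on $B$ and $C$ but not on $D$, which is precisely the degeneracy that would let the movable pattern diverge from the basic pattern at the first propagation step (cf.\ Example~\ref{example:counterexample_2}).

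The main obstacle, as this outline already signals, is the nonvanishing of leading coefficients under the simultaneous operations of substituting $\eta$ by an order-$r$ series and specializing the generic constants to transcendental $\overline{\mathbb{K}}$-values. I must certify that each relevant leading coefficient is a nonzero element of the constant field \emph{before} specialization and remains nonzero \emph{after} it, and that no spurious constant term is created at a site where the basic pattern is regular; it is exactly to control these cancellations that the $\partial$-factor and basic pattern conditions were introduced, and the algebraic lemmas of Appendix~\ref{appendix:algebraic_lemmas} should supply the needed nonvanishing and genericity-preservation statements. The remaining care is purely bookkeeping for the case $x^{*}=\infty$, handled by passing to $1/\Phi$ (whose relevant derivative is controlled by the second half of the $\partial$-factor condition) and using the multiplicity convention at $\infty$ from Definition~\ref{definition:constant_singularity}. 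Once these nonvanishing statements are secured, the inductive propagation and the multiplicity count are formal.
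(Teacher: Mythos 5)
Your outline follows the paper's architecture in broad strokes: express the movable iterates through the basic-pattern rational function evaluated at the corner series and the boundary data, and transfer the leading-order expansion through a specialization (in the paper this is Lemma~\ref{lemma:main_lemma} combined with Lemma~\ref{lemma:expansion_lemma}). However, there is a genuine gap at exactly the step you first assert and later only gesture at. You claim that because no boundary entry $x_{t+s, n-1}(z^{*})$, $x_{t-1, n+m}(z^{*})$ is a constant singularity, the boundary data is ``exactly'' the generic initial data of the basic pattern. Individual transcendence of each entry over $\mathbb{C}$ is far weaker than what the transfer requires, namely \emph{joint} algebraic independence over $\mathbb{C}$ of the whole row and column: Lemma~\ref{lemma:expansion_lemma} takes that joint independence as a hypothesis --- it is precisely what guarantees that the nonzero leading coefficient $F_{\ell}(Y_1, \ldots, Y_N)$ survives the specialization --- and the lemmas of Appendix~\ref{appendix:algebraic_lemmas}, to which you defer the ``nonvanishing and genericity-preservation statements,'' nowhere produce it; they consume it. Establishing this independence is the substantive content of the paper's proof. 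It is done in Lemma~\ref{lemma:algebraic_independence_hard} for the two entries adjacent to the corner (here the basic pattern condition enters: by Corollary~\ref{corollary:not_zero_over_zero} the triple $x_{t-1,n}(z^{*})$, $x_{t,n-1}(z^{*})$, $x_{t-1,n-1}(z^{*})$ has transcendence degree exactly $2$, and the condition forces the irreducible relation $F = 0$ cutting out the first singularity to satisfy $\partial_D F \ne 0$, making $x_{t-1,n-1}(z^{*})$ the algebraically dependent one), and in Lemma~\ref{lemma:algebraic_independence_fresh} for the farther entries, each of which depends on a fresh initial variable by Proposition~\ref{proposition:key_lemma} --- which is where the $\partial$-factor condition actually does its work, not, as you suggest, in making ``boundary inputs influence each iterate to leading order.'' That your gap is real and not a formality is shown by Example~\ref{example:counterexample_2}: for the discrete Liouville equation the first singularity arises from the relation $x_{t-1,n}(z^{*})\, x_{t,n-1}(z^{*}) + 1 = 0$, so both adjacent boundary entries are individually transcendental over $\mathbb{C}$ yet algebraically dependent, and the movable pattern (solitary) genuinely differs from the basic pattern (non-confining).

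A secondary weakness is your inductive propagation through single applications of $\Phi$: substitution in this setting is a ring homomorphism only away from poles, and well-definedness (no $\frac{0}{0}$) is guaranteed only in codimension one (Lemma~\ref{lemma:substitution_trdeg_codimension_one} requires transcendence degree at least $N - 1$), so each inductive step would itself need the very independence statements you have not proved. The paper sidesteps this by composing once: it writes $x^{\text{basic}}_{s,m} = \alpha + F(X; Y_1, \ldots, Y_N)$ as a single rational function of the corner and the finitely many boundary variables in the past light cone, and applies Lemma~\ref{lemma:expansion_lemma} in one shot. Your multiplicity bookkeeping (value $\alpha$ preserved, order scaled from $\ell$ to $r\ell$) is correct and matches the paper's, but it is conditional on the independence input; as written, the proposal proves the theorem only modulo its hardest part.
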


The third theorem guarantees that given all the fixed and basic patterns we need, the degree counting in Halburd's method is exact.

\begin{theorem}\label{theorem:main_degree}
	Suppose that a quad equation $\Phi$ satisfies the $\partial$-factor condition.
	Let $H \subset \mathbb{Z}^2$ be a domain and let $z$ be one of the initial variables.
	\begin{itemize}
		\item
		Let
		$S = \{ \alpha_1, \ldots, \alpha_J \} \subset \mathbb{P}^1(\mathbb{C})$,
		$S \ne \emptyset$.
		We focus on these singular values when calculating degrees.

		\item
		Assume that we know all the singularity patterns that contain a singular value $\alpha_j$, and assume in addition that the set of starting values of these patterns is finite.
		Let $\beta_1, \ldots, \beta_I \in \mathbb{P}^1(\mathbb{C})$
		(resp.\ $\gamma_1, \ldots, \gamma_L \in \mathbb{P}^1(\mathbb{C})$)
		be the starting values of the movable (resp.\ fixed) patterns we consider here.
	
		\item
		Suppose that the equation satisfies the basic pattern condition for $\beta_1, \ldots, \beta_I$.

		\item
		Consider the basic pattern corresponding to $x^{\text{basic}}_{0, 0} = \beta_i$ on $H^{\text{basic}}$.
		Let
		\[
			\operatorname{mult}^{\text{basic}, \beta_i}_{s, m}(\alpha_j)
		\]
		be the multiplicity of $x^{\text{basic}}_{s, m}(\beta_i) = \alpha_j$ in this pattern.
		The multiplicity $\operatorname{mult}^{\text{basic}, \beta_i}_{s, m}(\alpha_j)$ is considered to be $0$ if $x^{\text{basic}}_{s, m}(\beta_i) \ne \alpha_j$ or $(s, m) \notin H^{\text{basic}}$.
	
		\item
		Consider the fixed pattern corresponding to $z = \gamma_{\ell}$ on $H$.
		Let
		\[
			\operatorname{mult}^{\gamma_{\ell}}_{t, n}(\alpha_j)
		\]
		be the multiplicity of $x_{t, n}(\gamma_{\ell}) = \alpha_j$ in this pattern.
		The multiplicity is considered to be $0$ if $x_{t, n}(\gamma_{\ell}) \ne \alpha_j$.
	
	\end{itemize}
	Then, for each $j$, we have the following degree relation:
	\[
		\deg_z x_{t, n}(z)
		= \sum^I_{i = 1} \sum_{(s, m) \in H} \operatorname{mult}^{\text{basic}, \beta_i}_{t - s, n - m}(\alpha_j) Z_{s, m}(\beta_i)
		+ \sum^L_{\ell = 1} \operatorname{mult}^{\gamma_{\ell}}_{t, n}(\alpha_j),
	\]
	where $Z_{s, m}(\beta_i)$ is ``the number of spontaneous occurrences of $x_{s, m}(z) = \beta_i$ with multiplicity,''
	whose rigorous definition is given in Definition~\ref{definition:degree_divisor}.
	Since the left-hand side is independent of $j$, so is the right-hand side.
\end{theorem}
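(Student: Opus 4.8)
The plan is to evaluate $\deg_z x_{t,n}(z)$ by Halburd's preimage‑counting principle and then to organize the preimages according to the \emph{first singularity} of the constant singularity pattern to which each belongs. By Proposition~\ref{proposition:degree_preimages}, for any fixed $\alpha_j \in S$ we have
\[
	\deg_z x_{t,n}(z) = \sum_{z^{*} \in \mathbb{P}^1(\overline{\mathbb{K}})} \operatorname{mult}_{z^{*}}\bigl( x_{t,n}(z) = \alpha_j \bigr),
\]
where $\operatorname{mult}_{z^{*}}\bigl(x_{t,n}(z) = \alpha_j\bigr)$ is the multiplicity of the constant singularity in the sense of Definition~\ref{definition:constant_singularity}, and only finitely many terms are nonzero. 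Every $z^{*}$ giving a nonzero term makes $x_{t,n}(z^{*}) = \alpha_j \in \mathbb{P}^1(\mathbb{C})$ a constant singularity, so it lies on the constant singularity pattern corresponding to $z = z^{*}$, a pattern in which the singular value $\alpha_j$ occurs. By hypothesis every such pattern is among those we know, so its starting value is one of $\beta_1, \ldots, \beta_I$ (when movable) or $\gamma_1, \ldots, \gamma_L$ (when fixed).

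Next I would partition the contributing $z^{*}$ according to their first singularity, which exists and is unique by Theorem~\ref{theorem:first_singularity}. Since $(t,n)$ is itself a singularity position, the minimal position $(s,m)$ of the singular set satisfies $(s,m) \le (t,n)$; hence only the finitely many $(s,m) \in H_{\le (t,n)}$ can occur, so the resulting sum is genuinely finite. For a movable $z^{*} \in \mathbb{P}^1(\overline{\mathbb{K}}) \setminus \mathbb{P}^1(\mathbb{C})$ whose first singularity is $x_{s,m}(z^{*}) = \beta_i$ with multiplicity $r$, Theorem~\ref{theorem:main_theorem} applies (the basic pattern condition holds for each $\beta_i$) and identifies the whole pattern on $H_{\ge (s,m)}$ with the $(s,m)$‑translate of the basic pattern for $x^{\text{basic}}_{0,0} = \beta_i$, with all multiplicities scaled by $r$. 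In particular
\[
	\operatorname{mult}_{z^{*}}\bigl( x_{t,n}(z) = \alpha_j \bigr) = r \cdot \operatorname{mult}^{\text{basic}, \beta_i}_{t - s, n - m}(\alpha_j).
\]

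The key step is then to sum over the $z^{*}$ sharing a given first‑singularity position $(s,m)$ and starting value $\beta_i$: the factor $\operatorname{mult}^{\text{basic}, \beta_i}_{t - s, n - m}(\alpha_j)$ depends only on $(s,m)$ and $\beta_i$ and may be pulled out, while the remaining sum of the first‑singularity multiplicities $r$ is exactly $Z_{s,m}(\beta_i)$ by the definition of the number of spontaneous occurrences with multiplicity (Definitions~\ref{definition:spontaneous_occurrence_rigorous} and \ref{definition:degree_divisor}); here ``spontaneous occurrence at $(s,m)$'' is precisely the condition that $(s,m)$ be the first singularity. This produces the movable contribution $\sum_{i} \sum_{(s,m) \in H} \operatorname{mult}^{\text{basic}, \beta_i}_{t - s, n - m}(\alpha_j)\, Z_{s,m}(\beta_i)$. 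Each fixed pattern, on the other hand, is produced by the single value $z^{*} = \gamma_{\ell} \in \mathbb{P}^1(\mathbb{C})$ (for which the first singularity is forced to sit at the position of $z$, with value $\gamma_{\ell}$), and it contributes the multiplicity $\operatorname{mult}^{\gamma_{\ell}}_{t,n}(\alpha_j)$, giving the second sum. Adding the two accounts for every contributing $z^{*}$ exactly once, and the independence of the left‑hand side from $j$ transfers to the right.

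I expect the main obstacle to be not any single deep step but the rigorous bookkeeping: verifying that the grouping of preimages by first singularity is a genuine partition (no preimage omitted, none double‑counted), which rests on the uniqueness in Theorem~\ref{theorem:first_singularity}; that the preimage multiplicity at $(t,n)$ factors cleanly as $r$ times the basic multiplicity; and, above all, that the aggregate of first‑singularity multiplicities coincides with the formal divisor‑theoretic definition of $Z_{s,m}(\beta_i)$ in Definition~\ref{definition:degree_divisor}. Some extra care is also needed at $\alpha_j = \infty$, where the multiplicity is the order of the pole of $x_{t,n}(z)$ at $z = z^{*}$, and in confirming that the hypothesis on the basic pattern condition is exactly what Theorem~\ref{theorem:main_theorem} requires for each $\beta_i$.
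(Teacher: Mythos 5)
Your proposal is correct and is essentially the paper's own proof: the paper packages your pointwise grouping of preimages as the divisor identities of Lemma~\ref{lemma:divisor_equality}, where your partition by first singularity is exactly the partition $\mathbb{P}^1\left(\overline{\mathbb{K}}\right) = \mathbb{P}^1(\mathbb{C}) \sqcup \bigsqcup_{(s,m)} \Delta_{s,m}$ (Proposition~\ref{proposition:p1_decomposition} and Lemma~\ref{lemma:divisor_meaning}), your multiplicity factorization $r \cdot \operatorname{mult}^{\text{basic},\beta_i}_{t-s,n-m}(\alpha_j)$ is the first equality there via Theorem~\ref{theorem:main_theorem}, and the final step of taking degrees with $\deg D_{s,m}(\beta_i) = Z_{s,m}(\beta_i)$ and $\deg D(\gamma_\ell) = 1$ is identical. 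The ``bookkeeping'' concerns you raise are precisely what the divisor formalism of Definition~\ref{definition:degree_divisor} resolves, so no gap remains.
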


\begin{remark}
	Although $Z_{s, m}(\beta_i)$ is used in Theorems~\ref{theorem:main_degree} and \ref{theorem:degree_solvable}, its definition is not given in this section.
	To calculate degrees with our theorems, however, the definition of $Z_{s, m}(\beta_i)$ is not relevant.
	What is important here is the fact that, for each $s, m, i$, $Z_{s, m}(\beta_i)$ is a nonnegative integer defined in some rigorous way.
\end{remark}

\begin{remark}
	If $t - s < 0$ or $n - m < 0$, then $\operatorname{mult}^{\text{basic}, \beta_i}_{t - s, n - m}(\alpha_j)$ is always $0$.
	Therefore, the $\sum_{(s, m) \in H}$ in the degree relation can be replaced with $\sum_{(s, m) \in H_{\le (t, n)}}$, which is always a finite sum.
\end{remark}

The fourth theorem tells us whether or not we can calculate degrees from degree relations.

\begin{theorem}\label{theorem:degree_solvable}
	Under the same assumptions and notations as in Theorem~\ref{theorem:main_degree}, we have the following:
	\begin{enumerate}
		\item
		If $I \ge J$, i.e., the number of singular values is equal to or greater than that of starting values, then it is impossible to compute degrees only from the degree relations.

		\item
		If $\{ \beta_1, \ldots, \beta_I \} \subset S$ and $I < J$, then the degree relations determine all the degrees.
		That is, by solving a system of linear equations, one can compute degrees only from the degree relations.

	\end{enumerate}
\end{theorem}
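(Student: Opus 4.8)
The plan is to read the $J$ degree relations of Theorem~\ref{theorem:main_degree} as a single constant-coefficient linear recurrence in the unknowns $Z_{s,m}(\beta_i)$ and to decide, point by point, whether it is determined. First I would record the shape of the leading coefficients. By Definition~\ref{definition:basic_pattern} the function $x^{\text{basic}}_{0,0}$ is literally the free variable, so at $x^{\text{basic}}_{0,0}=\beta_i$ its only value is $\beta_i$ itself, with multiplicity $1$; hence
\[
	\operatorname{mult}^{\text{basic},\beta_i}_{0,0}(\alpha_j) = \begin{cases} 1 & \beta_i = \alpha_j, \\ 0 & \beta_i \neq \alpha_j. \end{cases}
\]
Moreover, the only constant singularity in the past of the basic $\beta_i$-pattern is the one at $(0,0)$ (the lower boundary points of $H^{\text{basic}}$ are free variables, hence not constant singularities), so by Theorem~\ref{theorem:first_singularity} the first singularity sits at $(0,0)$ and $\operatorname{mult}^{\text{basic},\beta_i}_{a,b}(\alpha_j)=0$ unless $(a,b)\ge(0,0)$. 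Consequently the sum $\sum_{(s,m)\in H}\operatorname{mult}^{\text{basic},\beta_i}_{t-s,n-m}(\alpha_j)Z_{s,m}(\beta_i)$ runs only over $(s,m)\le(t,n)$, the term $(s,m)=(t,n)$ contributes exactly $\operatorname{mult}^{\text{basic},\beta_i}_{0,0}(\alpha_j)\,Z_{t,n}(\beta_i)$, and every other term involves $Z_{s,m}(\beta_i)$ with $(s,m)$ strictly earlier. Writing $v_{t,n}=(\deg_z x_{t,n}(z),\,Z_{t,n}(\beta_1),\dots,Z_{t,n}(\beta_I))$ for the block of new unknowns at $(t,n)$, the $J$ relations become a fixed system $A\,v_{t,n}=b_{t,n}$, where $b_{t,n}$ depends only on strictly earlier data and $A=[\mathbf 1\mid -M]$ is the $J\times(I+1)$ matrix whose first column is all ones and whose remaining $(j,i)$ entry is $1$ if $\beta_i=\alpha_j$ and $0$ otherwise. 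Since every $H_{\le(t,n)}$ is finite, the whole problem reduces to the solvability of this one matrix equation.

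For part~(2) I would show that $A$ has full column rank $I+1$. Under the hypotheses the $\beta_i$ are distinct and lie in $S$, so each column of $M$ is a standard basis vector $e_{j(i)}$ with $j(1),\dots,j(I)$ pairwise distinct; these $I$ columns are therefore independent. The column $\mathbf 1$ lies in their span only if every coordinate $1,\dots,J$ occurs among $j(1),\dots,j(I)$, which is impossible because $I<J$ leaves at least one coordinate uncovered. Hence $\mathbf 1,e_{j(1)},\dots,e_{j(I)}$ are independent and $A$ has full column rank. Processing the points of $H$ along a linear extension of $\le$, the finiteness of each past cone lets me assume inductively that $b_{t,n}$ is already determined; the true degrees and the true $Z_{s,m}(\beta_i)$ provide one solution, so the system is consistent, and full column rank makes that solution unique. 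This determines $v_{t,n}$, and in particular all degrees, proving part~(2).

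For part~(1), $I\ge J$ forces $I+1>J\ge\operatorname{rank}A$, so $\ker A\neq 0$: at no step do the $J$ relations pin down the $I+1$ new unknowns. Equivalently, after eliminating $\deg_z x_{t,n}(z)$ by equating the relations one is left with only $J-1<I$ constraints on the $I$ current unknowns $Z_{t,n}(\beta_i)$, so the recurrence cannot define an evolution. This is exactly the deficiency described in \cite[Section~5]{express1}, and is the sense in which the degrees cannot be extracted from the relations alone.

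The hard part, and the step I would spend the most care on, is upgrading part~(1) from ``the $Z$-recurrence is underdetermined'' to ``the degree sequence itself is genuinely ambiguous,'' i.e.\ exhibiting two solutions of the full system with different $\deg_z x_{t,n}(z)$. The natural device is to pass to generating functions: encode the $Z_{s,m}(\beta_i)$ and the basic-pattern multiplicities as formal series in two variables, so that the homogeneous consistency relations become $\mathcal M\,\hat\zeta=0$ over the fraction field $\mathcal F$ of $\mathbb C[[X,Y]]$, with $\mathcal M$ the $(J-1)\times I$ matrix of differences of multiplicity-generating functions. Its solution space has $\mathcal F$-dimension at least $I-(J-1)\ge 1$, so a nonzero consistent perturbation $\hat\zeta$ exists, and the induced degree perturbation $\hat d=\sum_i \hat M^{(i),1}\hat\zeta^{(i)}$ vanishes precisely when the larger system $\mathcal M_0\,\hat\zeta=0$ holds for the full $J\times I$ matrix $\mathcal M_0$; hence whenever the columns of $\mathcal M_0$ are $\mathcal F$-independent (automatic when all $\beta_i\in S$, since their constant terms are distinct basis vectors) one gets $\hat d\neq 0$ and genuinely different degrees. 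The obstacle is the residual degenerate case in which a chosen $\alpha_j$ occurs in no movable pattern, or two basic patterns have proportional multiplicity profiles; there the column independence can fail, and the statement must be read in the weaker recurrence-theoretic sense above. Pinning down exactly which non-degeneracy is implicitly assumed is, I expect, the crux of making part~(1) fully rigorous.
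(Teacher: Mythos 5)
Your proposal is correct and takes essentially the same route as the paper: your full-column-rank argument for part~(2) is exactly the paper's observation that $\operatorname{mult}^{\text{basic}, \alpha_i}_{0, 0}(\alpha_j) = \delta_{i, j}$ makes the relation between $\alpha_j$ ($j < J$, after the same reduction to $J = I + 1$ with $\alpha_i = \beta_i$) and the unused value $\alpha_J$ a triangular evolution for $Z_{t, n}(\alpha_j)$ along the partial order, and your kernel argument for part~(1) is the paper's count of $r I$ unknowns against $r (J - 1)$ equations in each finite past cone $H_{\le (t, n)} \setminus H_0$. The ``hard part'' you agonize over at the end is not a gap relative to the paper: the paper proves part~(1) only in your weaker recurrence-theoretic sense --- it explicitly declares that it treats the $Z_{s, m}(\beta_i)$ as unknown variables and asks whether the linear system has a unique solution, concluding non-uniqueness from under-determination (with the closing remark that existence is automatic since the $Z$'s come from divisors) --- so no upgrade to genuine ambiguity of the degree sequence itself is attempted there, and your generating-function sketch, while an interesting strengthening, is not needed to match the paper's proof.
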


\begin{remark}
	If the starting value of a movable pattern is $\beta$, then the pattern contains $\beta$ as a value.
	Therefore, $\alpha \in S \setminus \{ \beta_1, \ldots, \beta_I \}$ implies that the equation does not have a movable pattern starting with $\alpha$.
	In short, to calculate degrees in our method, an equation needs to have a $\mathbb{P}^1(\mathbb{C})$-value that does not appear as the starting value of any movable pattern.
\end{remark}

The following propositions, which are taken from \textsection\ref{section:proof}, are often useful in analyzing the singularity structure of a concrete equation.

\begin{proposition}\label{proposition:key_lemma}
	Suppose that $\Phi$ satisfies the $\partial$-factor condition.
	Let $(t, n) \in H$ and suppose that $z^{*} \in \mathbb{P}^1 \left( \overline{\mathbb{K}} \right)$ generates a constant singularity at $(t, n)$.
	If $(s, m) \in H$ satisfies $s > t$ and
	\[
		x_{s - 1, m}(z^{*}), x_{s - 1, m - 1}(z^{*}), x_{s - 1, m - 2}(z^{*}), \cdots
	\]
	are not constant singularities, then
	\[
		x_{s, m}(z^{*}), x_{s, m - 1}(z^{*}), x_{s, m - 2}(z^{*}), \cdots
	\]
	all depend on the initial variable $w = x_{s, m - m'}$ where $m'$ is the least nonnegative integer satisfying $(s, m - m') \in H_0$.
	In particular, they are not constant singularities, either (see Figure~\ref{figure:proposition_key_lemma}).

	The statement with the $t$- and $n$-axes interchanged holds, too.
\end{proposition}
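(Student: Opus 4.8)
The plan is to reduce the whole statement to a computation with rational functions of the \emph{single} variable $w$, the point being that $z^{*}$ cannot depend on $w$. Write $\mathbb{K} = \mathbb{K}_0(w)$, where $\mathbb{K}_0$ is the subfield generated by all initial variables other than $z$ and $w$, and let $F = \overline{\mathbb{K}_0}$ denote the algebraic closure of $\mathbb{K}_0$ inside $\overline{\mathbb{K}}$. The first step is to observe that $x_{t, n}(z)$ only involves the initial variables lying in the past light cone $H_{\le (t, n)}$; since $s > t$ forces $w = x_{s, m - m'} \notin H_{\le (t, n)}$, the coefficients of $x_{t, n}(z)$, viewed as a rational function of $z$, all lie in $\mathbb{K}_0$. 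Because $z^{*}$ generates a constant singularity at $(t, n)$, it is a root of $x_{t, n}(z) = c$ for some $c \in \mathbb{P}^1(\mathbb{C})$, hence algebraic over $\mathbb{K}_0$, so $z^{*} \in F$. In particular $z^{*}$ does not depend on $w$.

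The second step uses this to place all of column $s - 1$ in $F$. Each $a_j := x_{s - 1, j}(z^{*})$ is a rational expression in $z^{*}$ whose coefficients are again free of $w$ (now because $s - 1 < s$), so $a_j \in F$; by hypothesis $a_j$ is not a constant singularity, so in fact $a_j \in F \setminus \mathbb{C}$. Writing $y_j := x_{s, j}(z^{*})$, the bottom entry is $y_{m - m'} = w$, and for each $n$ with $m - m' < n \le m$ the point $(s, n)$ lies off $H_0$, so the equation gives $y_n = \psi_n(y_{n - 1})$ with $\psi_n(C) := \Phi(a_n, C, a_{n - 1}) \in F(C)$. Thus every $y_n$ is a rational function of the one variable $w$ over $F$, and ``$y_n$ depends on $w$'' means exactly $\deg_w y_n \ge 1$ (which in turn forces $y_n \notin \mathbb{P}^1(\mathbb{C})$, since $\mathbb{C} \subset F$, i.e.\ $y_n$ is not a constant singularity). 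As the degree of a rational self-map of $\mathbb{P}^1$ is multiplicative under composition, an induction from $\deg_w y_{m - m'} = 1$ reduces the proposition to showing that each $\psi_n$ is non-constant.

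The third step --- the technical heart --- is to prove $\psi_n$ non-constant using the $\partial$-factor condition. Were $\psi_n$ constant, then either it takes a finite value, so that $\partial_C \Phi(a_n, C, a_{n - 1}) \equiv 0$ in $C$, or $\psi_n \equiv \infty$, so that $\partial_C \frac{1}{\Phi}(a_n, C, a_{n - 1}) \equiv 0$ in $C$; either way some irreducible factor $F_0$ of the relevant numerator satisfies $F_0(a_n, C, a_{n - 1}) \equiv 0$ in $C$. The $\partial$-factor condition with $B$ and $C$ interchanged leaves three cases for $F_0$. If $F_0 = B - \alpha$ or $F_0 = D - \alpha$, identical vanishing forces $a_n = \alpha$ or $a_{n - 1} = \alpha$, contradicting $a_n, a_{n - 1} \notin \mathbb{C}$. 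If $\partial_C F_0 \ne 0$, write $F_0 = \sum_k p_k(B, D) C^k$ with $\deg_C F_0 \ge 1$; identical vanishing forces $p_k(a_n, a_{n - 1}) = 0$ for all $k$. But the common zero locus in $\mathbb{A}^2_{B, D}$ of the $C$-coefficients of a $C$-dependent irreducible polynomial is finite --- otherwise it contains a curve $W$, whence $\{ F_0 = 0 \} = W \times \mathbb{A}^1_C$ and $F_0$ is independent of $C$, a contradiction. That locus therefore consists of finitely many points with coordinates in $\mathbb{C}$, so $p_k(a_n, a_{n - 1}) = 0$ for all $k$ again gives $a_n \in \mathbb{C}$, a contradiction. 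The statement with the $t$- and $n$-axes interchanged follows verbatim from the symmetry of the $\partial$-factor condition in $B$ and $C$.

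I expect the third step to be the main obstacle: it is the only place where the $\partial$-factor condition and the non-constancy of the neighbouring column are both genuinely used, and the case $\partial_C F_0 \ne 0$ in particular requires the geometric finiteness input above. By contrast, the enabling observation $z^{*} \in F$ of the first step --- a direct consequence of the light-cone structure and the presence of infinitely many initial variables --- is what makes the clean one-variable reduction of the second step possible.
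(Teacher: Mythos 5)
Your proposal is correct, and its skeleton is the same as the paper's (isolate the fresh initial variable $w$, note $z^{*}$ cannot depend on it, climb column $s$, and use the $\partial$-factor condition with the non-singularity of column $s-1$ to rule out degeneration), but the technical implementation is genuinely different. The paper proves this proposition by combining Lemma~\ref{lemma:trdeg_two} (a constant singularity at $(t,n)$ forces $z^{*} \in \mathbb{P}^1\left(\overline{\mathbb{K}_{t,n}}\right)$, hence independence of $w$) with Lemma~\ref{lemma:key_lemma}, which treats one lattice step at a time via the chain rule $\partial_w x_{s,m}(z,w) = \partial_C\Phi(\cdots)\,\partial_w x_{s,m-1}(z,w)$, tracks both $\partial_w f$ and $\partial_w \frac{1}{f}$ using the substitution--derivative formalism of Appendix~\ref{appendix:substitution_and_derivatives} (needed because values may be $\infty$), and disposes of the $\partial_C F \ne 0$ case by Lemma~\ref{lemma:polynomial_nonvanishing}, i.e.\ the relation-ideal generator of Lemma~\ref{lemma:function_field} together with the transcendence of the middle argument over the outer two. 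You instead absorb all initial variables except $z$ and $w$ into the algebraically closed base field $F = \overline{\mathbb{K}_0}$, so the column becomes a composition of one-variable maps $\psi_n(C) = \Phi(a_n, C, a_{n-1}) \in F(C)$, and you replace the derivative bookkeeping by nonconstancy plus multiplicativity of degree under composition; your substitute for Lemma~\ref{lemma:polynomial_nonvanishing} --- identical vanishing in $C$ of an irreducible $C$-dependent factor forces $(a_n, a_{n-1})$ into the finite common zero locus of the $C$-coefficients, whose points all have coordinates in $\mathbb{C}$ (any curve in that locus would yield a common factor of the coefficients dividing $F_0$, contradicting irreducibility) --- is an elementary dimension argument that in fact needs only $a_n \notin \mathbb{C}$, slightly less than the paper's transcendence-degree hypothesis. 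What your route buys is the complete avoidance of the $\partial_w$/$\partial_w\frac{1}{f}$ calculus; what the paper's route buys is a single-step lemma stated for any $z^{*} \in \mathbb{P}^1\left(\overline{\mathbb{K}_{t,n}}\right)$ that is reused verbatim elsewhere (e.g.\ in Lemma~\ref{lemma:algebraic_independence_fresh} and the fixed-pattern computations of Example~\ref{example:nonintegrable_confining}), though your argument supports the same generalization, since after your first step only $z^{*} \in \mathbb{P}^1(F)$ is used.

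Two points deserve a sentence each to be fully rigorous. First, ``$z^{*}$ is a root of $x_{t,n}(z) = c$, hence algebraic over $\mathbb{K}_0$'' presupposes that $x_{t,n}(z)$ is nonconstant in $z$: if it were constant in $z$, a constant singularity at $(t,n)$ would force $x_{t,n}$ to be identically a $\mathbb{C}$-constant, which is excluded by the algebraic independence of the three neighbours of an interior point --- the same background fact the paper itself invokes without proof inside Lemma~\ref{lemma:trdeg_two} --- so you are at the paper's level of rigor here, but the case split should be stated. Second, your identities $y_n = \psi_n(y_{n-1})$, and the interchanges of substitution with $\Phi$ and with $\partial_C$ in the third step, tacitly require that no $\frac{0}{0}$ arises; this is exactly what Lemma~\ref{lemma:substitution_trdeg_codimension_one} provides, since $a_n \notin \mathbb{C}$ together with the freeness of $C$ (respectively the $w$-dependence of $y_{n-1}$) gives transcendence degree at least $2$ for the argument triple, and it should be cited explicitly.
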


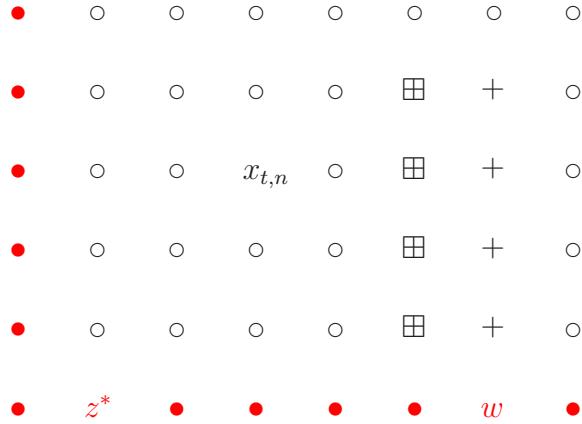
\begin{figure}
	\begin{center}
		\begin{picture}(240, 200)

			\multiput(40, 160)(30, 0){7}{\circle{5}}
			\multiput(40, 130)(30, 0){4}{\circle{5}}
			\multiput(40, 100)(30, 0){2}{\circle{5}}
			\multiput(40, 70)(30, 0){4}{\circle{5}}
			\multiput(40, 40)(30, 0){4}{\circle{5}}

			\multiput(130, 100)(30, 0){1}{\circle{5}}

			\put(95, 97){$x_{t, n}$}

			\multiput(155, 37)(0, 30){4}{$\boxplus$}
			\multiput(185, 38)(0, 30){4}{$+$}
			\multiput(220, 40)(0, 30){4}{\circle{5}}

			\color{red}
			\multiput(10, 10)(0, 30){6}{\circle*{5}}
			\multiput(70, 10)(30, 0){4}{\circle*{5}}
			\multiput(220, 10)(30, 0){1}{\circle*{5}}

			\put(35, 7){$z^{*}$}

			\put(185, 7){$w$}

		\end{picture}
	\end{center}
	\caption{
		Situation in Proposition~\ref{proposition:key_lemma}.
		If the points marked with ``$\boxplus$'' are not constant singularities, then those marked with ``$+$'' depend on $w$ and are not constant singularities, either.
		This procedure can easily be repeated and thus none of the points east to but not higher than the $\boxplus$-wall is a constant singularity. 
		To prove this proposition, it is essential that the points marked with ``$+$'' lie outside the past light cone emanating from $(t, n)$.
	}
	\label{figure:proposition_key_lemma}
\end{figure}

\begin{proposition}\label{proposition:confining}
	Suppose that $\Phi$ satisfies the $\partial$-factor condition.
	Let $x_{t, n}(z^{*})$ be the first constant singularity of a singularity pattern and let $(s, m) \in H_{\ge (t, n)}$.
	If
	\[
		x_{t, m + 1}(z^{*}), x_{t + 1, m + 1}(z^{*}), \cdots, x_{s, m + 1}(z^{*}), x_{s + 1, m + 1}(z^{*}), x_{s + 1, m}(z^{*}), \ldots, x_{s + 1, n + 1}(z^{*}), x_{s + 1, n}(z^{*})
	\]
	are not constant singularities, then all the constant singularities of this pattern lie in the region
	\[
		H_{\ge (t, n)} \cap H_{\le (s, m)}
	\]
	(see Figure~\ref{figure:confining_pattern}).
	In particular, this constant singularity pattern is confining.
	Therefore, a quad equation with the $\partial$-factor condition has no undesirable pattern as in Figure~\ref{figure:inexistent_pattern_2}.
\end{proposition}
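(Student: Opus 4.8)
The plan is to deduce confinement entirely from the first-singularity property (Theorem~\ref{theorem:first_singularity}) together with repeated applications of the key lemma (Proposition~\ref{proposition:key_lemma}) in its two directions, after using the first-singularity property to complete the walls that the hypothesis only supplies along finite segments. Write $x_{t, n}(z^{*}) = x^{*}$ for the first constant singularity of the pattern. By Theorem~\ref{theorem:first_singularity}, $(t, n)$ is the minimum of the singular set with respect to $\le$, so every constant singularity $(t', n')$ of this pattern satisfies $(t', n') \ge (t, n)$; equivalently, every point of $H$ with $t' < t$ or $n' < n$ is non-singular. Setting $R = H_{\ge (t, n)} \cap H_{\le (s, m)}$, it then suffices to show that no point of $H$ lying outside $R$ --- that is, with $t' \ge s + 1$ or $n' \ge m + 1$ --- carries a constant singularity.

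First I would propagate the horizontal part of the wall upward. The hypothesis gives that $x_{t, m + 1}(z^{*}), \ldots, x_{s, m + 1}(z^{*})$ and the corner $x_{s + 1, m + 1}(z^{*})$ are non-singular; adjoining the points $(t', m + 1)$ with $t' < t$, which are non-singular by the previous paragraph, shows that the entire row at height $m + 1$, from $t$-coordinate $s + 1$ leftward, is free of constant singularities as far as it extends in $H$. Applying the $t$--$n$-interchanged form of Proposition~\ref{proposition:key_lemma} to this wall (note $m + 1 > m \ge n$) and iterating one row at a time, I obtain that every point of $H$ with $t' \le s + 1$ and $n' \ge m + 1$ is non-singular.

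Next I would complete the vertical wall at $t$-coordinate $s + 1$ and propagate it eastward. Indeed, the full column $\{ (s + 1, n') \in H \}$ is now non-singular: the heights $n' < n$ by the first-singularity property, the heights $n \le n' \le m$ by the given right segment of the wall, the height $n' = m + 1$ by the corner, and the heights $n' > m + 1$ by the upward propagation just carried out. Feeding this complete semi-infinite wall into Proposition~\ref{proposition:key_lemma} (note $s + 1 > t$) and iterating one column at a time to the east shows that every point of $H$ with $t' \ge s + 1$ is non-singular. Together with the north region and the first-singularity property, this leaves only the points of $R$ as possible carriers of constant singularities, which is exactly the assertion; confinement and the absence of patterns as in Figure~\ref{figure:inexistent_pattern_2} then follow, since any second, disconnected singularity block would necessarily lie outside $R$.

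The step I expect to require the most care is the verification that the walls fed into Proposition~\ref{proposition:key_lemma} really are non-singular all the way down to the initial boundary $H_0$ (resp.\ all the way to the left), rather than only along the finite segments furnished by the hypothesis: the key lemma presupposes a full semi-infinite column (resp.\ row), and it is precisely the first-singularity property that fills in the missing lower (resp.\ leftmost) portion. A related subtlety is the order of the two propagations: the vertical wall at $s + 1$ cannot be completed until the upward propagation has been performed, so the two directions of the key lemma must be applied in the correct sequence rather than independently, and the completion of column $s + 1$ at all heights is what allows the eastward propagation to reach the far northeast corner $\{ t' \ge s + 2,\ n' \ge m + 2 \}$.
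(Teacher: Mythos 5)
Your proof is correct and takes essentially the same route as the paper's: Theorem~\ref{theorem:first_singularity} fills in the missing portions of the walls so that the finite segments in the hypothesis become the semi-infinite walls required by Proposition~\ref{proposition:key_lemma}, and iterated applications of that lemma in its two orientations then sweep out the whole complement of $H_{\ge (t, n)} \cap H_{\le (s, m)}$. The only difference is the (immaterial) order of the two sweeps --- the paper propagates eastward first along downward semi-infinite columns and then northward along the completed row at height $m + 1$, while you go north-west first and then east --- and you correctly identify the one genuine subtlety, namely that each application of the key lemma needs a wall completed all the way down (resp.\ left), which is exactly what the first-singularity property provides.
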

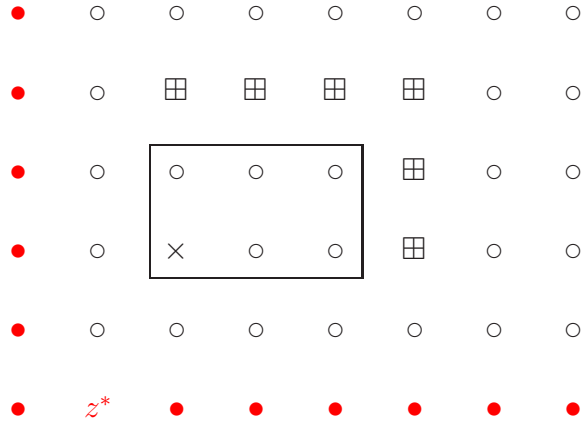
\begin{figure}
	\begin{center}
		\begin{picture}(240, 200)

			\multiput(70, 160)(30, 0){6}{\circle{5}}
			\multiput(190, 130)(30, 0){2}{\circle{5}}
			\multiput(190, 100)(30, 0){2}{\circle{5}}
			\multiput(190, 70)(30, 0){2}{\circle{5}}
			\multiput(70, 40)(30, 0){6}{\circle{5}}
			\multiput(40, 40)(0, 30){5}{\circle{5}}

			\multiput(65, 127)(30, 0){4}{$\boxplus$}
			\multiput(155, 67)(0, 30){2}{$\boxplus$}
			
			\put(65, 67){$\times$}
			\multiput(100, 70)(30, 0){2}{\circle{5}}
			\multiput(70, 100)(30, 0){3}{\circle{5}}
			\put(60, 60){\framebox(80, 50)}

			\color{red}
			\multiput(10, 10)(0, 30){6}{\circle*{5}}
			\multiput(70, 10)(30, 0){6}{\circle*{5}}

			\put(35, 7){$z^{*}$}

		\end{picture}
	\end{center}
	\caption{
		Situation in Proposition~\ref{proposition:confining}.
		The point marked with ``$\times$'' is the first constant singularity.
		If none of the points marked with ``$\boxplus$'' has a constant singularity, then all the constant singularities of this pattern lie in the boxed region.
		If we consider a basic pattern, the assumption of the proposition coincides with what we check when we perform singularity confinement as in Example~\ref{example:first_example_2d}.
	}
	\label{figure:confining_pattern}
\end{figure}

The following proposition allows us to find all the candidates for the starting values that can generate a non-solitary pattern.

\begin{proposition}\label{proposition:solitary}
	Suppose that the equation satisfies the $\partial$-factor condition and let $x^{*} \in \mathbb{P}^1(\mathbb{C})$.
	If none of 
	\[
		\Phi(x^{*}, C, D), \quad
		\Phi(B, x^{*}, D), \quad
		\Phi(B, C, x^{*})
	\]
	belongs to $\mathbb{P}^1(\mathbb{C})$, then any singularity pattern starting from the value $x^{*}$ is solitary.
\end{proposition}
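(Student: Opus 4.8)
The plan is to locate the unique first constant singularity of the pattern and then show that it has no constant singularity immediately to its northeast; solitariness will then follow from the confinement criterion. By Theorem~\ref{theorem:first_singularity} the set of constant singularities of the pattern corresponding to $z=z^{*}$ has a minimum element $(t,n)$, with $x_{t,n}(z^{*})=x^{*}$, and every constant singularity of the pattern lies in $\mathbb{Z}^2_{\ge (t,n)}$. Applying Proposition~\ref{proposition:confining} with $(s,m)=(t,n)$ collapses the confining region $H_{\ge(t,n)}\cap H_{\le(t,n)}$ to the single point $(t,n)$, so the pattern is solitary as soon as the three northeast neighbours $x_{t+1,n}(z^{*})$, $x_{t,n+1}(z^{*})$ and $x_{t+1,n+1}(z^{*})$ are not constant singularities. (If one of these lies on the initial boundary $H_0$, it is a free initial variable, hence transcendental over $\mathbb{C}$ and automatically not a constant singularity, so there is nothing to check there.) I would treat $x_{t+1,n}$ and $x_{t,n+1}$ first, as they are interchanged by the $t\leftrightarrow n$ symmetry, and $x_{t+1,n+1}$ last, since it uses the other two.

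Each neighbour is a value of $\Phi$ in which $x^{*}$ occupies exactly one slot and the other two slots carry pattern entries below $(t,n)$; for instance
\[
	x_{t+1,n}(z^{*})=\Phi\bigl(x^{*},\,x_{t+1,n-1}(z^{*}),\,x_{t,n-1}(z^{*})\bigr),
\]
whose relevant restriction is $\Phi(x^{*},C,D)$, non-constant by hypothesis; similarly $x_{t,n+1}$ yields $\Phi(B,x^{*},D)$ and $x_{t+1,n+1}$ yields $\Phi(B,C,x^{*})$, matching the three non-constancy assumptions. The structural input is that the two non-$x^{*}$ entries are separated by a fresh initial variable: writing $c=x_{t+1,n-1}(z^{*})$ and $d=x_{t,n-1}(z^{*})$, Proposition~\ref{proposition:key_lemma} (whose hypotheses hold because the column-$t$ entries below row $n$ are non-singular by minimality) shows that $c$ depends on a column-$(t+1)$ variable $w$, whereas $d$, living in column $t$, does not. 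Moreover $x_{t,n}(z^{*})=x^{*}$ exhibits $z^{*}$ as a root of $x_{t,n}(z)=x^{*}$, hence algebraic over the field generated by the initial variables in $H_{\le(t,n)}$; consequently the unique derivation $\partial_w$ on $\overline{\mathbb{K}}$ extending $\partial/\partial w$ annihilates $z^{*}$, and therefore $\partial_w x^{*}=\partial_w d=0$ while $\partial_w c\ne 0$.

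With this in place the computation is short. By the chain rule,
\[
	\partial_w\bigl[\Phi(x^{*},c,d)\bigr]=\partial_C\Phi(x^{*},c,d)\cdot\partial_w c ,
\]
so the left side is nonzero---whence $\Phi(x^{*},c,d)\notin\mathbb{C}$---as soon as $\partial_C\Phi(x^{*},c,d)\ne 0$; the value $\infty$ is excluded in the same way after replacing $\Phi$ by $1/\Phi$. I then split on whether $\partial_C\Phi(x^{*},C,D)$ vanishes identically. If it does, then $\Phi(x^{*},C,D)$ is a non-constant rational function of $D$ alone, and evaluating it at the transcendental element $d$ keeps it transcendental over $\mathbb{C}$, so we are done. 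The substantive case is that $\partial_C\Phi(x^{*},C,D)\not\equiv 0$ yet its numerator vanishes at $(x^{*},c,d)$.

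The hard part is exactly this last case, and it is where the $\partial$-factor condition (Definition~\ref{definition:d_factor_condition}, in its $B\leftrightarrow C$ form) earns its name. Some irreducible factor $F$ of the numerator of $\partial_C\Phi$ satisfies $F(x^{*},c,d)=0$, and by the condition $F$ is $B-\alpha$, or $D-\alpha$, or has $\partial_C F\ne 0$. A factor $D-\alpha$ would force $d=\alpha\in\mathbb{C}$, contradicting that $d$ is transcendental; a factor $B-\alpha$ can vanish only if $\alpha=x^{*}$, which makes $\partial_C\Phi(x^{*},C,D)$ vanish identically and returns us to the settled case. In the remaining case $\partial_C F\ne 0$, the polynomial $F(x^{*},C,d)$ is not identically zero (otherwise $F$ would be divisible by $B-x^{*}$, forcing $F\sim B-x^{*}$ and contradicting $\partial_C F\ne 0$), so $c$ is algebraic over $\mathbb{C}(d)$; but then $\partial_w d=0$ forces $\partial_w c=0$, contradicting $\partial_w c\ne 0$. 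This contradiction settles $x_{t+1,n}$; the case of $x_{t,n+1}$ is identical after interchanging the two lattice directions and using $\partial_B\Phi$, and for $x_{t+1,n+1}$ one uses that $x_{t,n+1}(z^{*})$ and $x_{t+1,n}(z^{*})$ are already known to be transcendental and are separated by a fresh variable seen by only one of them, applying whichever of $\partial_B\Phi$, $\partial_C\Phi$ is not identically zero. The points at $\infty$ and the poles are handled throughout by passing to $1/\Phi$ and by the substitution/derivative formalism of the appendices. The essential obstacle, and the whole reason the $\partial$-factor condition is needed, is precisely to rule out an accidental algebraic relation among the entries that could make $\Phi$ constant even though $\Phi(x^{*},\cdot,\cdot)$ is not.
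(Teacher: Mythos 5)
Your skeleton is exactly the paper's: locate the first singularity $(t,n)$ via Theorem~\ref{theorem:first_singularity}, reduce solitariness to the three northeast neighbours via Proposition~\ref{proposition:confining} with $(s,m)=(t,n)$, and treat $x_{t+1,n}$ and $x_{t,n+1}$ symmetrically before $x_{t+1,n+1}$. Where you diverge is the execution of the middle step. The paper shows the two non-$x^{*}$ arguments are algebraically independent (fresh-variable dependence from Proposition~\ref{proposition:key_lemma} on one side, transcendence of $z^{*}$ over $\mathbb{K}_{t-1,n}$ via Proposition~\ref{proposition:p1_decomposition} on the other) and then argues from non-constancy of $\Phi(x^{*},C,D)$; you instead run a derivation argument --- $z^{*}$ is algebraic over $\mathbb{K}_{t,n}$, so the extension of $\partial/\partial w$ to $\overline{\mathbb{K}}$ kills $z^{*}$, and the chain rule plus a pointwise re-run of the $\partial$-factor case analysis (in effect re-proving Lemma~\ref{lemma:key_lemma} at the singular corner, where that lemma itself cannot be cited because the westward neighbour \emph{is} singular). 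This is legitimate, and it buys something the paper's wording does not: your use of $\partial_w z^{*}=0$ treats fixed and movable patterns uniformly, whereas Proposition~\ref{proposition:p1_decomposition} is stated only for movable patterns. One presentational repair is needed: the evaluation $x_{t+1,n}(z^{*})=\Phi(x^{*},c,d)$ and your chain-rule identity require the substitution to be well-defined, i.e.\ $\operatorname{trdeg}_{\mathbb{C}}\mathbb{C}(c,d)=2$ (Lemma~\ref{lemma:substitution_trdeg_codimension_one}); this does follow from $\partial_w c\ne 0$ and $\partial_w d=0$, but you should extract it \emph{before} invoking those identities rather than letting independence emerge only as the closing contradiction.

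The genuine gap is at $x_{t+1,n+1}$. Your premise that the two already-treated neighbours are ``separated by a fresh variable seen by only one of them'' fails exactly in the doubly degenerate case where both $\partial_C\Phi(x^{*},C,D)\equiv 0$ and $\partial_B\Phi(B,x^{*},D)\equiv 0$, which the $\partial$-factor condition permits, since the factors $B-x^{*}$ and $C-x^{*}$ are explicitly allowed. Concretely, take $\Phi(B,C,D)=D+BC$ with $x^{*}=0$: all hypotheses of the proposition hold ($\Phi(0,C,D)=D$, $\Phi(B,0,D)=D$, $\Phi(B,C,0)=BC$, none constant, and the $\partial$-factor condition is satisfied because the numerators of the relevant derivatives are $B$ and $C$). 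But then $x_{t+1,n}(z^{*})=x_{t,n-1}(z^{*})$ and $x_{t,n+1}(z^{*})=x_{t-1,n}(z^{*})$, both of which lie in $\overline{\mathbb{K}_{t,n}}$, so \emph{every} derivation attached to a fresh initial variable annihilates both of them, and neither of your chain-rule computations can produce a nonzero output --- note that the obstruction is not the factor analysis (here $\partial_C\Phi(b,c,x^{*})=b\ne 0$) but the nonexistence of a separating $w$, so ``applying whichever of $\partial_B\Phi$, $\partial_C\Phi$ is not identically zero'' does not rescue the step. The paper closes this case by a different mechanism: it shows the two neighbours depend on \emph{distinct} initial variables (row versus column witnesses, possibly old ones) and are therefore algebraically independent over $\mathbb{C}$, after which $\Phi(B,C,x^{*})\notin\mathbb{P}^1(\mathbb{C})$ alone forces $x_{t+1,n+1}(z^{*})$ to be transcendental, with no derivative analysis at all. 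In the example one can see the rescue explicitly: the singularity relation gives $x_{t+1,n+1}(z^{*})=x_{t-1,n}(z^{*})\,x_{t,n-1}(z^{*})=x^{*}-x_{t-1,n-1}(z^{*})$, which is transcendental by minimality of $(t,n)$. To complete your proof you must replace the fresh-variable separation at the last step by an independence argument of this type.
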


\begin{remark}
	Given a quad equation $\Phi$, one can find all $x^{*}$ that could generate non-solitary patterns by computing derivatives of $\Phi$ (and of $\frac{1}{\Phi}$ if necessary).
\end{remark}

The following proposition, used in conjunction with Proposition~\ref{proposition:solitary}, enables us to find all the candidates for the $\mathbb{P}^1(\mathbb{C})$-values that do not appear as the starting value of any movable pattern.

\begin{proposition}\label{proposition:find_hidden_value}
	Let $\alpha \in \mathbb{P}^1(\mathbb{C})$ and suppose that $\alpha$ does not appear as the starting value of any movable pattern.
	Let $(t, n) \in H \setminus H_0$.
	If $x_{t, n}(z)$ depends on $z$, then the value $\alpha$ must appear at $(t, n)$ in some non-solitary pattern.
\end{proposition}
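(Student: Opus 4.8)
The plan is to exhibit a single value $z^{*}$ with $x_{t,n}(z^{*})=\alpha$ and to prove that the constant singularity pattern it generates is necessarily non-solitary, arguing by contradiction.

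First I would produce such a $z^{*}$. Because $x_{t,n}(z)$ depends on $z$, it is a non-constant element of $\mathbb{K}(z)$ and hence a non-constant rational map $\mathbb{P}^{1}\left(\overline{\mathbb{K}}\right)\to\mathbb{P}^{1}\left(\overline{\mathbb{K}}\right)$; over the algebraically closed field $\overline{\mathbb{K}}$ such a map is surjective, its fibre over any point having cardinality $\deg_{z}x_{t,n}(z)\ge 1$ counted with multiplicity (the analogue over $\overline{\mathbb{K}}$ of Proposition~\ref{proposition:degree_preimages}). Taking the fibre over $\alpha\in\mathbb{P}^{1}(\mathbb{C})\subseteq\mathbb{P}^{1}\left(\overline{\mathbb{K}}\right)$ yields some $z^{*}\in\mathbb{P}^{1}\left(\overline{\mathbb{K}}\right)$ with $x_{t,n}(z^{*})=\alpha$. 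By Definition~\ref{definition:constant_singularity}, the point $(t,n)$ is then a constant singularity of value $\alpha$ in the pattern $P$ attached to $z=z^{*}$, so it only remains to show that $P$ is not solitary.

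Suppose for contradiction that $P$ is solitary. Then $(t,n)$ is its unique constant singularity, hence its first singularity, and the starting value of $P$ equals $\alpha$. I would split into the two cases of Definition~\ref{definition:constant_singularity}. If $P$ is movable, i.e.\ $z^{*}\in\mathbb{P}^{1}\left(\overline{\mathbb{K}}\right)\setminus\mathbb{P}^{1}(\mathbb{C})$, then $\alpha$ is the starting value of a movable pattern, directly contradicting the hypothesis on $\alpha$. If $P$ is fixed, i.e.\ $z^{*}\in\mathbb{P}^{1}(\mathbb{C})$, then I would use that the initial variable $z$ sits at $(t_{0},n_{0})\in H_{0}$: since $x_{t_{0},n_{0}}(z)=z$ we get $x_{t_{0},n_{0}}(z^{*})=z^{*}\in\mathbb{P}^{1}(\mathbb{C})$, so $(t_{0},n_{0})$ is also a constant singularity of $P$. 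Solitariness forces $(t_{0},n_{0})=(t,n)$, whence $(t,n)\in H_{0}$, contradicting $(t,n)\in H\setminus H_{0}$. Both cases being impossible, $P$ is non-solitary and $\alpha$ appears at $(t,n)$ in it, as required.

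The existence of $z^{*}$ is routine. I expect the only genuine subtlety to be the fixed-pattern case: the key observation is that substituting a numerical value $z^{*}\in\mathbb{P}^{1}(\mathbb{C})$ automatically plants a constant singularity at the boundary point $(t_{0},n_{0})$ where $z$ lives, so a solitary fixed pattern could only be the trivial one supported at $(t_{0},n_{0})$, which the assumption $(t,n)\notin H_{0}$ excludes. Keeping the movable and fixed interpretations of $P$ apart is the main thing to handle carefully; notably the argument rests only on the definitions and does not invoke the $\partial$-factor or basic pattern conditions, though these remain in force throughout the section.
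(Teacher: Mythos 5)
Your proof is correct and takes essentially the same route as the paper's: pick a preimage $z^{*}$ of $\alpha$ (possible since $\deg_z x_{t,n}(z)\ge 1$), split into the movable and fixed cases, and note that in the movable case the hypothesis on $\alpha$ forbids $(t,n)$ from being the starting point, while in the fixed case $x_{t_0,n_0}(z^{*})=z^{*}\in\mathbb{P}^1(\mathbb{C})$ plants a second constant singularity at $(t_0,n_0)\in H_0\ne(t,n)$. Your contradiction framing has the minor virtue of making explicit that a solitary pattern's first singularity is trivially well defined, so your version indeed avoids Theorem~\ref{theorem:first_singularity} and the $\partial$-factor condition, which the paper's direct formulation (``the starting point'') implicitly relies on since that condition is assumed throughout its proofs section.
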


The following proposition determines whether each candidate value obtained by Propositions~~\ref{proposition:solitary} and \ref{proposition:find_hidden_value} appears as the starting value of some movable pattern.

\begin{proposition}\label{proposition:check_hidden_value}
	Suppose that the equation satisfies the $\partial$-factor condition and let $x^{*} \in \mathbb{P}^1(\mathbb{C})$.
	Then, the value $x^{*}$ appears at the starting point of some movable pattern if and only if the relation 
	\[
		\Phi(B, C, D) = x^{*}
	\]
	has an irreducible factor other than
	\[
		B - \alpha, \quad
		C - \alpha, \quad
		D - \alpha \quad
		(\alpha \in \mathbb{C}).
	\]
\end{proposition}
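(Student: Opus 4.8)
The plan is to prove both implications by relating the first constant singularity of a movable pattern to the three ``parent'' values that feed into it through $\Phi$. Throughout, write $\Phi = \Phi_1/\Phi_2$ with $\Phi_1,\Phi_2\in\mathbb{C}[B,C,D]$ coprime, and set $G = \Phi_1 - x^*\Phi_2$ when $x^*\ne\infty$ and $G = \Phi_2$ when $x^*=\infty$, so that the irreducible factors of the relation $\Phi=x^*$ are exactly the irreducible factors of $G$. The whole argument is local around a single interior point, which is what makes it tractable.

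For the forward direction, suppose $x^*$ is the starting value of a movable pattern, corresponding to some $z^*\in\mathbb{P}^1\!\left(\overline{\mathbb{K}}\right)\setminus\mathbb{P}^1(\mathbb{C})$; by Theorem~\ref{theorem:first_singularity} the first constant singularity is well defined, say at $(t,n)$ with $x_{t,n}(z^*)=x^*$. I would first observe that $(t,n)\notin H_0$: on the initial boundary every value $x_{s,m}(z^*)$ equals either the movable value $z^*$ or one of the remaining initial variables, none of which lies in $\mathbb{P}^1(\mathbb{C})$, so no boundary point is a constant singularity. Hence the three parents $(t-1,n)$, $(t,n-1)$, $(t-1,n-1)$ all lie in $H$, and since each is $\le(t,n)$ yet different from the minimal constant singularity $(t,n)$, none of them is a constant singularity. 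Writing $B=x_{t-1,n}(z^*)$, $C=x_{t,n-1}(z^*)$, $D=x_{t-1,n-1}(z^*)$, we therefore have $B,C,D\in\overline{\mathbb{K}}\setminus\mathbb{C}$ (in particular all finite) and $\Phi(B,C,D)=x^*$, i.e.\ $G(B,C,D)=0$. Factoring $G=\prod_i F_i^{e_i}$ over the UFD $\mathbb{C}[B,C,D]$ and evaluating in the field $\overline{\mathbb{K}}$ forces $F_i(B,C,D)=0$ for some $i$; such an $F_i$ cannot be $B-\alpha$, $C-\alpha$ or $D-\alpha$, as that would drag one of $B,C,D$ back into $\mathbb{C}$. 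This yields the desired non-trivial factor.

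For the converse, I would start from a non-trivial irreducible factor $F$ of $G$ and construct a movable first singularity by hand. Since over $\mathbb{C}$ every irreducible univariate polynomial is linear, $F$ depends on at least two of $B,C,D$, and $F\notin\{B-\alpha,\,C-\alpha,\,D-\alpha\}$ guarantees that each variable on which $F$ depends is non-constant along $\{F=0\}$. The key move is to set up a configuration (for instance in the quarter-plane domain) in which $z$ occupies the parent slot corresponding to one variable on which $F$ depends, while the other two parent slots are independent generic initial variables, i.e.\ algebraically independent transcendentals over $\mathbb{C}$. Solving $F=0$ for $z$ in that slot produces $z^*\in\overline{\mathbb{K}}$; because the two frozen parents are generic transcendentals and $F$ is not $(\text{that variable})-\alpha$, no root can lie in $\mathbb{C}$, so $z^*$ is movable. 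By construction the three parents are then non-constant, hence not constant singularities, and by the boundary observation above the target point is the \emph{first} constant singularity of the pattern attached to $z^*$, with value $x_{t,n}(z^*)=\Phi(B,C,D)=x^*$; here one checks, using coprimality of $\Phi_1,\Phi_2$, that along $\{F=0\}$ the denominator does not also vanish, so the value is genuinely $x^*$ rather than an indeterminate $0/0$.

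The main obstacle is the realizability step in the converse: one must ensure that the three parent slots can be made simultaneously non-constant while the chosen slot is driven onto $\{F=0\}$ by a single movable value of $z$, and that the resulting point is genuinely first rather than preceded by an earlier constant singularity. This is precisely where non-triviality of $F$ is essential, since the excluded factors $B-\alpha$, $C-\alpha$, $D-\alpha$ are exactly those that would force a parent to be constant and hence to be an earlier singularity; the boundary analysis together with Theorem~\ref{theorem:first_singularity} then certifies first-ness. The remaining points---verifying that $\Phi(B,C,D)$ equals $x^*$ and not $0/0$, and handling $x^*=\infty$ uniformly through the choice of $G$---are routine once the correct parent slot has been selected.
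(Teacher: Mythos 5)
Your forward direction follows the paper's proof almost verbatim: minimality of the first singularity (Theorem~\ref{theorem:first_singularity}) puts the three parent values in $\overline{\mathbb{K}} \setminus \mathbb{C}$, and evaluating the factorization of $G$ excludes the trivial linear factors. One caveat: you write ``$\Phi(B,C,D)=x^*$, i.e.\ $G(B,C,D)=0$'' as if naive substitution were automatic, whereas the paper justifies this via Corollary~\ref{corollary:not_zero_over_zero} (the first singularity is never a $\frac{0}{0}$, which rests on the transcendence-degree lemmas). Your step is patchable without that corollary — if both $\Phi_1$ and $\Phi_2$ vanished at the parent triple, then $G = \Phi_1 - x^{*}\Phi_2$ (or $G=\Phi_2$ when $x^{*}=\infty$) would vanish anyway — but as written it is asserted, not proved.

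The converse, however, has a genuine gap, and it sits exactly where you yourself flagged ``the main obstacle.'' First, the ambient data $H$ and $z = x_{t_0,n_0}$ are \emph{fixed} at the start of \textsection\ref{section:main}, and the proposition is consumed (in Theorem~\ref{theorem:main_degree}) for that fixed choice; you are therefore not free to ``set up a configuration (for instance in the quarter-plane domain) in which $z$ occupies the parent slot'' of your choosing. Your construction proves existence of a movable pattern only for a cherry-picked domain and variable position in which all three parents of the target are boundary variables and first-ness is trivial; for a general fixed $(H,z)$ the non-$z$ parents are \emph{iterates}, their genericity must be earned through the $\mathbb{K}_{t,n}$/fresh-variable machinery, and $z$ does not sit directly in any slot (the paper composes through $x_{t-1,n}(z) = B^{*}$). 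Second, you never use the $\partial$-factor condition, which is a stated hypothesis: bare irreducibility only gives that $F$ depends on at least two of $B, C, D$, which allows $F = F(B,D)$ or $F = F(C,D)$. The paper instead shows (since such an $F$ would appear as a factor of the numerator of $\partial_B \Phi$ or $\partial_B \frac{1}{\Phi}$, and likewise with $B$ and $C$ interchanged) that $F$ must depend on \emph{both} $B$ and $C$, and this is what drives its construction at an arbitrary fixed $(H,z)$: dependence on $C$ forces a solution $B^{*}$ of $F(B, x_{t,n-1}(w), x_{t-1,n-1}) = 0$ to depend on the fresh initial variable $w$, hence to be transcendental over $\mathbb{K}_{t-1,n}$, so that the resulting $z^{*}$ is simultaneously movable and transcendental over $\mathbb{K}_{t-1,n}$ — the latter being precisely what certifies that none of $x_{t_0,n}(z^{*}), \ldots, x_{t-1,n}(z^{*})$ is a constant singularity, i.e.\ first-ness at $(t,n)$. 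Without the $B$-and-$C$ dependence (say $F = F(B,D)$), $B^{*}$ and hence $z^{*}$ land in $\overline{\mathbb{K}_{t-1,n}}$ and your first-ness argument has no mechanism, so the realizability step you deferred does in fact fail in general.
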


\section{Proofs}\label{section:proof}

In this section, we give proofs of our main theorems.
The lemmas the author thinks might be useful to clarify the singularity structure for a concrete equation are called propositions and are included in \textsection\ref{section:main}.

\begin{lemma}\label{lemma:main_lemma}
	Let $(t, n) \in H$ and suppose that $(t - 1, n)$ and $(t, n - 1)$ both belong to $H$.
	Let $x_{t, n}(z^{*}) = x^{*} \in \mathbb{P}^1(\mathbb{C})$ be a constant singularity of multiplicity $r \in \mathbb{Z}_{> 0}$ and suppose that $x_{t', n'}(z^{*})$ is not a constant singularity for any $(t', n') \in H_{\le (t, n)} \setminus \{ (t, n) \}$.
	If
	\[
		x_{t - 1, n}(z^{*}), x_{t - 1, n + 1}(z^{*}), x_{t - 1, n + 2}(z^{*}), \cdots \quad
	\]
	and
	\[
		x_{t, n - 1}(z^{*}), x_{t + 1, n - 1}(z^{*}), x_{t + 2, n - 1}(z^{*}), \cdots
	\]
	are algebraically independent over $\mathbb{C}$, then the constant singularity pattern corresponding to $z = z^{*}$ coincides with the $(t, n)$-translation of the basic pattern corresponding to $x^{\text{basic}}_{0, 0} = x^{*}$ with all the multiplicities multiplied by $r$.
\end{lemma}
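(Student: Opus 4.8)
The plan is to exhibit both the basic computation on $H^{\text{basic}}$ and the actual computation on $H_{\ge(t,n)}$ as one and the same rational evolution applied to two systems of boundary data, and to encode this by a field homomorphism. To fill the future light cone $H_{\ge(t,n)}$ one needs only the corner value $x_{t,n}$ together with the left wall $\{x_{t-1,n+k}\}_{k\ge 0}$ and the bottom wall $\{x_{t+k,n-1}\}_{k\ge 0}$; after translating to the origin this is the very same pattern of applications of $\Phi$ that fills $H^{\text{basic}}_{\ge(0,0)}$ from $x^{\text{basic}}_{0,0}$ and the analogous basic walls. First I would set $\mathbb{L} = \mathbb{C}\left( \{x^{\text{basic}}_{-1,k}\}_{k\ge 0}, \{x^{\text{basic}}_{k,-1}\}_{k\ge 0}, x^{\text{basic}}_{0,0} \right)$ and define a $\mathbb{C}$-algebra homomorphism $\sigma \colon \mathbb{L} \to \overline{\mathbb{K}}(z)$ sending each basic wall variable to the corresponding actual wall function and $x^{\text{basic}}_{0,0}$ to $x_{t,n}(z)$. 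Since $\sigma$ respects the field operations from which $\Phi$ is built, an induction over $H^{\text{basic}}_{\ge(0,0)}$ yields $\sigma\left( x^{\text{basic}}_{s,m} \right) = x_{t+s,n+m}(z)$ for every $(s,m)$.

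For $\sigma$ to be a well-defined embedding I must verify that the wall functions together with $x_{t,n}(z)$ are algebraically independent over $\mathbb{C}$. The wall functions are independent because their values at $z=z^*$ are, by hypothesis, independent: any $\mathbb{C}$-relation among the functions, evaluated at $z^*$, would relate the values. Transcendence of $x_{t,n}(z)$ over the field generated by the walls is the only extra point, and it follows from a short factor-theorem argument: a hypothetical relation $P\left( x_{t,n}(z), \text{walls} \right) = 0$ of minimal degree in its first slot, evaluated at $z=z^*$, becomes a $\mathbb{C}$-polynomial in the independent wall values with $x^* \in \mathbb{C}$ in the first slot, hence vanishes identically, so $(X - x^*)$ divides $P$; cancelling the nonzero factor $x_{t,n}(z) - x^*$ lowers the degree and contradicts minimality. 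With $\sigma$ an isomorphism onto its image, I would then compare the place $x^{\text{basic}}_{0,0} = x^*$ of $\mathbb{L}$ with the place $z = z^*$ of $\overline{\mathbb{K}}(z)$: $\sigma$ carries every wall (a unit at the first place, since its value is finite and nonzero) to a unit at the second, and carries the uniformizer $x^{\text{basic}}_{0,0} - x^*$ to $x_{t,n}(z) - x^*$, whose order at $z^*$ is exactly $r$. Expanding an arbitrary element as a Laurent series in $x^{\text{basic}}_{0,0} - x^*$ over the wall field then gives $v_{z^*}\left( \sigma(\cdot) \right) = r\, v_{x^*}(\cdot)$, with no cancellation across different powers, and the induced residue map fixes $\mathbb{C}$. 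Reading off residues shows $x_{t+s,n+m}(z^*) = x^{\text{basic}}_{s,m}(x^*)$ — so the constant singularities occupy exactly the translated positions with the same values — while the valuation scaling multiplies each multiplicity by $r$, as claimed.

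It then remains to confirm that the pattern of $z = z^*$ carries no constant singularity outside $H_{\ge(t,n)}$, which is what makes it literally coincide with the translated basic pattern. On $H_{\le(t,n)} \setminus \{(t,n)\}$ this is precisely the first-singularity hypothesis. The part I expect to be the main obstacle is the two regions of points incomparable to $(t,n)$ (to the north-west and to the south-east): such points may still depend on $z$, yet the local isomorphism above says nothing about them. I would handle these by propagating non-singularity outward from the walls, arguing as in the corner-transcendence step that a $\mathbb{C}$-value at such a point would force a nontrivial $\mathbb{C}$-relation among the algebraically independent wall values and the data feeding them; making this argument correct and uniform over an arbitrary (light-cone-regular) domain $H$ is where the real care is needed. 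The value $x^* = \infty$ is accommodated throughout by working with $1/\Phi$ and $1/x_{t,n}$, which alters only the bookkeeping.
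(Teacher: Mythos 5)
Your proposal is correct in substance and reaches the paper's conclusion, but by a different packaging. The paper argues iterate by iterate: for each $(s, m) \in H^{\text{basic}} \setminus H^{\text{basic}}_0$ it writes $x^{\text{basic}}_{s, m} = \alpha + F(X; Y_1, \ldots, Y_N)$ in the corner variable and the finitely many wall variables feeding $(s, m)$, expands $F(x^{*} + \varepsilon; Y_1, \ldots, Y_N)$ to leading order $\ell$, and invokes Lemma~\ref{lemma:expansion_lemma} to get leading order $r \ell$ after substituting $x_{t, n}(z^{*} + \varepsilon)$ and the actual wall functions; the engine is that a nonzero rational function neither vanishes nor blows up at algebraically independent arguments. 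Your homomorphism $\sigma$ together with the identity $v_{z^{*}} \circ \sigma = r\, v_{x^{*}}$ and the residue comparison is a structural repackaging of exactly this mechanism: to make your ``Laurent series, no cancellation'' step rigorous you should factor $f = (X - x^{*})^{\ell} u$ with $u$ a unit of the local ring at $X = x^{*}$ (avoiding a genuinely infinite series), and the nonvanishing of the residue of $\sigma(u)$, namely $c_{\ell}$ evaluated at the wall values, is again the independence mechanism (cf.\ Lemma~\ref{lemma:polynomial_nonvanishing}). Your route requires one ingredient the paper avoids --- transcendence of $x_{t, n}(z)$ over the wall-function field, needed for $\sigma$ to be well defined --- and your factor-theorem descent for it is correct; the paper sidesteps this entirely by treating $X, Y_1, \ldots, Y_N$ as formal variables and substituting only at the end. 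What each buys: the paper's version is shorter and purely local, while yours handles all iterates uniformly at once and cleanly separates positions and values (residues) from multiplicities (valuations).

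On the point you flag as the main obstacle: the regions of $H$ incomparable to $(t, n)$ are not handled by the paper's proof of this lemma either --- that proof treats only the translated interior points, the corner, and the walls, and when the lemma is applied, the coincidence is asserted only on $H_{\ge (t, n)}$ (see the statement of Theorem~\ref{theorem:main_theorem}), the incomparable regions being disposed of separately by Theorem~\ref{theorem:first_singularity} and Proposition~\ref{proposition:key_lemma}. Moreover, your proposed fix of propagating non-singularity outward from the walls cannot work at the level of this lemma: that propagation is precisely Lemma~\ref{lemma:key_lemma}, which requires the $\partial$-factor condition, an assumption the present lemma does not make (\textsection\ref{section:proof} imposes it only from Lemma~\ref{lemma:key_lemma} onward), and Example~\ref{example:counterexample_1} shows the propagation genuinely fails without it. So the lemma's conclusion should be read as coincidence on the translated copy of $H^{\text{basic}}$; outside that region nothing can be, or needs to be, proved here, and your proof of the matching on that region is sound.
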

\begin{proof}
	As in Figure~\ref{figure:proposition_main_proposition}, the key idea to the proof is to apply Lemma~\ref{lemma:expansion_lemma}.

	Let $(s, m) \in H^{\text{basic}} \setminus H^{\text{basic}}_0$ and let $\ell$ be the multiplicity of the constant singularity $x^{\text{basic}}_{s, m}(x^{*}) = \alpha$ of the basic pattern (if it is not a constant singularity, we think of $\ell$ as $0$).
	We show that the multiplicity of the constant singularity $x_{t + s, n + m}(z^{*}) = \alpha$ is $r \ell$.
	We omit the proof for the case $x^{*} = \infty$ or $\alpha = \infty$, since in this case only the notation becomes complicated.

	Let
	\[
		H^{\text{basic}}_{\le (s, m)} \cap H^{\text{basic}}_0 = \{ (0, 0), (s_1, m_1), \ldots, (s_N, m_N) \}
	\]
	and let
	\[
		X = x^{\text{basic}}_{0, 0}, Y_1 = x^{\text{basic}}_{s_1, m_1}, \ldots, Y_N = x^{\text{basic}}_{s_N, m_N}.
	\]
	Since $x^{\text{basic}}_{s, m}$ is a rational function in
	$X, Y_1, \ldots, Y_N$ over $\mathbb{C}$,
	there exists
	\[
		F = F(X; Y_1, \ldots, Y_N) \in \mathbb{C}(X; Y_1, \ldots, Y_N)
	\]
	such that
	\[
		x^{\text{basic}}_{s, m} = \alpha + F(X; Y_1, \ldots, Y_N).
	\]
	Let $\varepsilon$ be an infinitesimal parameter.
	Since $x^{\text{basic}}_{s, m}(x^{*}) = \alpha$ is a constant singularity of multiplicity $\ell$,
	the expansion of $F(x^{*} + \varepsilon; Y_1, \ldots, Y_N)$ is
	\[
		F(x^{*} + \varepsilon; Y_1, \ldots, Y_N) = F_{\ell}(Y_1, \ldots, Y_N) \varepsilon^{\ell} + O \left( \varepsilon^{\ell + 1} \right),
	\]
	where $F_{\ell}(Y_1, \ldots, Y_N)$ is a non-zero rational function.

	Using $F$, we can express $x_{t + s, n + m}(z)$ by
	$x_{t, n}(z), x_{t + s_1, n + m_1}(z), \ldots, x_{t + s_N, n + m_N}(z)$
	as
	\[
		x_{t + s, n + m}(z) = \alpha + F(x_{t, n}(z); x_{t + s_1, n + m_1}(z), \ldots, x_{t + s_N, n + m_N}(z)).
	\]
	Since
	\[
		x_{t, n}(z^{*} + \varepsilon) = x^{*} + a \varepsilon^r + O \left( \varepsilon^{r + 1} \right)
	\]
	for some $a \in \overline{\mathbb{K}} \setminus \{ 0 \}$
	and $x_{t + s_1, n + m_1}(z^{*}), \ldots, x_{t + s_N, n + m_N}(z^{*})$ are algebraically independent over $\mathbb{C}$,
	it follows from Lemma~\ref{lemma:expansion_lemma} that
	\[
		F(x_{t, n}(z^{*} + \varepsilon); x_{t + s_1, n + m_1}(z^{*} + \varepsilon), \ldots, x_{t + s_N, n + m_N}(z^{*} + \varepsilon)) = b \varepsilon^{r \ell} + O \left( \varepsilon^{r \ell + 1} \right)
	\]
	for some $b \in \overline{\mathbb{K}} \setminus \{ 0 \}$, which completes the proof.
\end{proof}

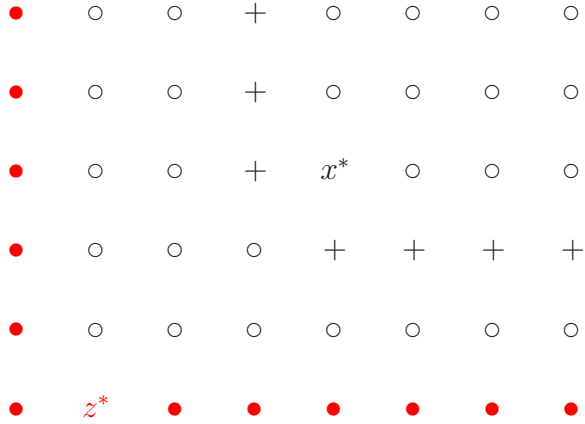
\begin{figure}
	\begin{center}
		\begin{picture}(240, 200)

			\multiput(40, 160)(30, 0){2}{\circle{5}}
			\multiput(40, 130)(30, 0){2}{\circle{5}}
			\multiput(40, 100)(30, 0){2}{\circle{5}}
			\multiput(40, 70)(30, 0){3}{\circle{5}}
			\multiput(40, 40)(30, 0){7}{\circle{5}}

			\multiput(130, 160)(30, 0){4}{\circle{5}}
			\multiput(130, 130)(30, 0){4}{\circle{5}}
			\multiput(160, 100)(30, 0){3}{\circle{5}}

			\multiput(96, 97)(0, 30){3}{$+$}
			\multiput(126, 67)(30, 0){4}{$+$}

			\put(125, 97){$x^{*}$}

			\color{red}
			\multiput(10, 10)(0, 30){6}{\circle*{5}}
			\multiput(70, 10)(30, 0){6}{\circle*{5}}

			\put(35, 7){$z^{*}$}

		\end{picture}
	\end{center}
	\caption{Situation in Lemma~\ref{lemma:main_lemma}.
	The points marked with ``$+$'' can be thought of as if they were initial variables for a basic pattern since those points are algebraically independent over $\mathbb{C}$.}
	\label{figure:proposition_main_proposition}
\end{figure}

\begin{definition}[$\mathbb{K}_{t, n}$]\label{definition:k_tn}
	For $(t, n) \in H$, we define a subfield $\mathbb{K}_{t, n} \subset \mathbb{K}$ as
	\[
		\mathbb{K}_{t, n} = \mathbb{C} \left( x_{t', n'} \mid (t', n') \in H_0; \ (t', n') \le (t, n); \ x_{t', n'} \ne z \right)
	\]
	(see Figure~\ref{figure:definition_ktn}).
\end{definition}

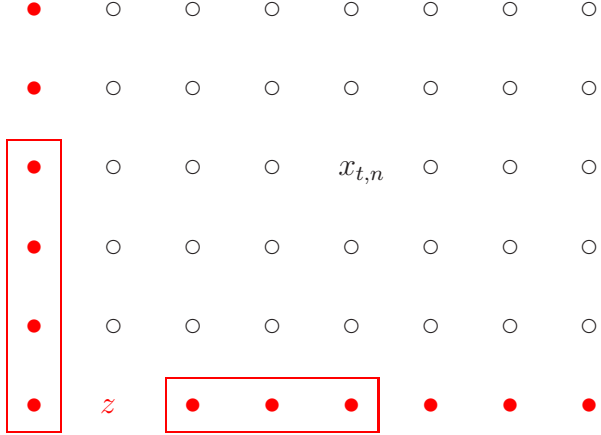
\begin{figure}
	\begin{center}
		\begin{picture}(240, 200)

			\multiput(40, 160)(30, 0){7}{\circle{5}}
			\multiput(40, 130)(30, 0){7}{\circle{5}}
			\multiput(40, 100)(30, 0){3}{\circle{5}}
			\multiput(40, 70)(30, 0){7}{\circle{5}}
			\multiput(40, 40)(30, 0){7}{\circle{5}}

			\multiput(160, 100)(30, 0){3}{\circle{5}}

			\put(125, 97){$x_{t, n}$}

			\color{red}
			\multiput(10, 10)(0, 30){6}{\circle*{5}}
			\multiput(70, 10)(30, 0){6}{\circle*{5}}

			\put(35, 7){$z$}

			\put(0, 0){\framebox(20, 110)}
			\put(60, 0){\framebox(80, 20)}

		\end{picture}
	\end{center}
	\caption{
		Situation in Definition~\ref{definition:k_tn}.
		The initial variables in the boxed regions, which lie in the past light cone emanating from $(t, n)$ but are different from $z$, belong to $\mathbb{K}_{t, n}$.
	}
	\label{figure:definition_ktn}
\end{figure}

The following lemma immediately follows from the definition of $\mathbb{K}_{t, n}$.

\begin{lemma}
	\begin{enumerate}
		\item
		$\mathbb{K} = \bigcup_{(t, n) \in H} \mathbb{K}_{t, n}$,
		$\overline{\mathbb{K}} = \bigcup_{(t, n) \in H} \overline{\mathbb{K}_{t, n}}$.

		\item
		If $(t, n) \le (s, m)$, then
		$\mathbb{K}_{t, n} \subset \mathbb{K}_{s, m}$
		and
		$\overline{\mathbb{K}_{t, n}} \subset \overline{\mathbb{K}_{s, m}}$.

		\item
		$x_{t, n} (z) \in \mathbb{K}_{t, n} (z)$.

	\end{enumerate}
\end{lemma}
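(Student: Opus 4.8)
The plan is to treat the three parts separately, noting that only part (3) requires genuine input from the dynamics, while parts (1) and (2) are formal consequences of the definitions of $\mathbb{K}$ and $\mathbb{K}_{t, n}$ together with elementary facts about the product order on $\mathbb{Z}^2$.

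For part (2) I would argue purely by transitivity of $\le$. If $(t, n) \le (s, m)$ and $(t', n') \in H_0$ satisfies $(t', n') \le (t, n)$, then $(t', n') \le (s, m)$, so every generator of $\mathbb{K}_{t, n}$ is among the generators of $\mathbb{K}_{s, m}$; hence $\mathbb{K}_{t, n} \subseteq \mathbb{K}_{s, m}$. The inclusion of algebraic closures is then automatic: any element of $\overline{\mathbb{K}}$ that is algebraic over $\mathbb{K}_{t, n}$ satisfies a polynomial with coefficients in $\mathbb{K}_{t, n} \subseteq \mathbb{K}_{s, m}$, so it is algebraic over $\mathbb{K}_{s, m}$, i.e.\ it lies in $\overline{\mathbb{K}_{s, m}}$ (all closures being taken inside the fixed $\overline{\mathbb{K}}$).

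For part (1) the point I would emphasize is that a union of subfields need not be a field, so I first observe that the family $\{ \mathbb{K}_{t, n} \}_{(t, n) \in H}$ is directed: given $(t, n), (s, m) \in H$, the componentwise maximum $(\max(t, s), \max(n, m))$ lies in the future light cone of $(t, n)$, hence in $H$ by the domain condition, and it dominates both points, so by part (2) it furnishes a common upper field. Consequently $\bigcup_{(t, n)} \mathbb{K}_{t, n}$ is genuinely a field. The inclusion $\supseteq$ of the first identity is clear since each $\mathbb{K}_{t, n}$ is generated by a subset of the generators of $\mathbb{K}$; for $\subseteq$, an arbitrary element of $\mathbb{K}$ involves only finitely many initial variables $x_{t_i, n_i}$, and choosing $(s, m)$ to dominate all the $(t_i, n_i)$ (again using that $H$ contains the relevant future light cone) places the element in $\mathbb{K}_{s, m}$. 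The identity for $\overline{\mathbb{K}}$ follows the same pattern: the right-hand side is a directed union of fields, hence a field, and it is algebraically closed because any $\xi \in \overline{\mathbb{K}}$ satisfies a polynomial whose finitely many coefficients all lie in a single $\mathbb{K}_{s, m}$, so $\xi \in \overline{\mathbb{K}_{s, m}}$.

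The heart of the lemma, and the only part that is not purely formal, is part (3), which I would prove by induction on the (finite) cardinality of $H_{\le (t, n)}$. The base case is $(t, n) \in H_0$, where $x_{t, n}$ is itself an initial variable sitting at a point $\le (t, n)$, and so lies in $\mathbb{K}_{t, n}(z)$ whether or not it equals $z$. For the inductive step, if $(t, n) \notin H_0$ then by the remark following Definition~\ref{definition:domain} the three points $(t - 1, n)$, $(t, n - 1)$, $(t - 1, n - 1)$ all lie in $H$, and each has a strictly smaller past light cone; by the induction hypothesis each of $x_{t - 1, n}$, $x_{t, n - 1}$, $x_{t - 1, n - 1}$ lies in the corresponding $\mathbb{K}_{t', n'}(z)$, and by part (2) these three fields are contained in $\mathbb{K}_{t, n}(z)$. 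Applying $\Phi$ to these rational functions keeps us inside $\mathbb{K}_{t, n}(z)$, completing the induction. The step I expect to require the most care is precisely this reduction, since it is where the northeast evolution of the quad equation and the domain axioms (finiteness of past light cones and the characterization of $H \setminus H_0$) are genuinely used; the remaining manipulations are routine field-theoretic bookkeeping.
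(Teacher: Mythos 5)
Your proof is correct: all three parts check out, including the two points that genuinely need the domain axioms (directedness of the family $\{\mathbb{K}_{t,n}\}$ via the future-light-cone condition in part (1), and the induction on the finite cardinality of $H_{\le (t,n)}$ using the characterization of $H \setminus H_0$ in part (3)). The paper gives no proof at all --- it simply declares the lemma immediate from Definition~\ref{definition:k_tn} --- so your argument is just the natural fleshing-out of what the paper treats as routine, not a different route.
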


\begin{lemma}\label{lemma:trdeg_two}
	Let $z^{*} \in \mathbb{P}^1 \left( \overline{\mathbb{K}} \right)$ and let $(t, n) \in H \setminus H_0$.
	If
	\[
		\operatorname{trdeg}_{\mathbb{C}} \mathbb{C}(x_{t - 1, n}(z^{*}), x_{t, n - 1}(z^{*}), x_{t - 1, n - 1}(z^{*})) \le 2,
	\]
	then
	\[
		(t, n) \ge (t_0, n_0)
	\]
	and
	\[
		z^{*} \in \mathbb{P}^1 \left( \overline{\mathbb{K}_{t, n}} \right).
	\]
	In particular, if $x_{t, n}(z^{*})$ is a constant singularity, then $(t, n) \ge (t_0, n_0)$ and $z^{*} \in \mathbb{P}^1 \left( \overline{\mathbb{K}_{t, n}} \right)$.
\end{lemma}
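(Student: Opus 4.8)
The plan is to reduce the whole statement to a single algebraic-independence fact about the three ``corner'' iterates and then read off both conclusions from a transcendence-degree count. Write $B=x_{t-1,n}(z)$, $C=x_{t,n-1}(z)$, $D=x_{t-1,n-1}(z)$; since $(t,n)\in H\setminus H_0$, the points $(t-1,n),(t,n-1),(t-1,n-1)$ all lie in $H$, so these iterates are defined and, by the definition of $\mathbb{K}_{t,n}$ (Definition~\ref{definition:k_tn}), they lie in $\mathbb{K}_{t,n}(z)$. Let $B^{*},C^{*},D^{*}$ denote the results of substituting $z=z^{*}$. The heart of the proof is the following \emph{Claim}: as elements of $\mathbb{C}\bigl(x_{t',n'}\mid (t',n')\in H_0\cap H_{\le(t,n)}\bigr)$, the iterates $x_{t-1,n},x_{t,n-1},x_{t-1,n-1}$ are algebraically independent over $\mathbb{C}$, i.e.\ $\operatorname{trdeg}_{\mathbb{C}}\mathbb{C}(B,C,D)=3$. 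I first dispose of the ``in particular'' assertion, which uses only the main statement: if $x_{t,n}(z^{*})$ is a constant singularity then $\Phi(B^{*},C^{*},D^{*})=x_{t,n}(z^{*})\in\mathbb{P}^1(\mathbb{C})$, and since $\Phi$ depends on all of $B,C,D$ the relation $\Phi(B,C,D)=\text{const}$ is a nontrivial polynomial relation; hence $B^{*},C^{*},D^{*}$ are algebraically dependent over $\mathbb{C}$, so $\operatorname{trdeg}_{\mathbb{C}}\mathbb{C}(B^{*},C^{*},D^{*})\le 2$ and the main statement applies.

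Granting the Claim, both conclusions follow. For $(t,n)\ge(t_0,n_0)$ I argue by contraposition: if $(t,n)\not\ge(t_0,n_0)$, then $z=x_{t_0,n_0}$ lies outside the past light cone $H_{\le(t,n)}$, so none of $B,C,D$ depends on $z$; thus $B^{*}=B$, $C^{*}=C$, $D^{*}=D$, and the Claim forces $\operatorname{trdeg}_{\mathbb{C}}\mathbb{C}(B^{*},C^{*},D^{*})=3$, contradicting the hypothesis. For $z^{*}\in\mathbb{P}^1\!\left(\overline{\mathbb{K}_{t,n}}\right)$ I may assume $z^{*}\ne\infty$ (otherwise the conclusion is immediate) and suppose for contradiction that $z^{*}$ is transcendental over $\mathbb{K}_{t,n}$. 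Then the substitution $z\mapsto z^{*}$ extends to a $\mathbb{K}_{t,n}$-algebra isomorphism $\mathbb{K}_{t,n}(z)\xrightarrow{\ \sim\ }\mathbb{K}_{t,n}(z^{*})$ of purely transcendental degree-one extensions (no denominator of $B,C,D$ vanishes at the transcendental $z^{*}$); it fixes $\mathbb{C}$ and sends $B,C,D$ to $B^{*},C^{*},D^{*}$, so it preserves transcendence degree and again $\operatorname{trdeg}_{\mathbb{C}}\mathbb{C}(B^{*},C^{*},D^{*})=3$, a contradiction. Hence $z^{*}$ is algebraic over $\mathbb{K}_{t,n}$, that is, $z^{*}\in\mathbb{P}^1\!\left(\overline{\mathbb{K}_{t,n}}\right)$.

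It remains to prove the Claim, which I would establish with the Jacobian criterion for algebraic independence (valid in characteristic zero). Let $\mathcal{V}_1,\mathcal{V}_2,\mathcal{V}_3$ be the initial-boundary points in the past light cones of $(t-1,n),(t,n-1),(t-1,n-1)$; then $\mathcal{V}_3=\mathcal{V}_1\cap\mathcal{V}_2$, while $\mathcal{V}_1\setminus\mathcal{V}_2$ consists of row-$n$ points and $\mathcal{V}_2\setminus\mathcal{V}_1$ of column-$t$ points. Let $(a_0,n)$ be the left-most point of row $n$ in $H_{\le(t-1,n)}$; the future-light-cone axiom (Definition~\ref{definition:domain}) forces $(a_0-1,n-1)\notin H$, whence $(a_0,n)\in H_0$. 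Symmetrically the bottom-most point $(t,b_0)$ of column $t$ in $H_{\le(t,n-1)}$ lies in $H_0$, and some $(c,d)\in\mathcal{V}_3$ is an initial point on which $D$ depends. Because $\Phi$ depends genuinely on its first (resp.\ second) argument, propagating along row $n$ (resp.\ column $t$) writes $B$ (resp.\ $C$) as a composition of non-constant one-variable rational maps in $x_{a_0,n}$ (resp.\ $x_{t,b_0}$), so $B$ depends on $x_{a_0,n}$ and $C$ on $x_{t,b_0}$. These three variables are pairwise distinct (one in row $n$, one in column $t$ with row $\le n-1$, one in $\mathcal{V}_3$), and since $x_{a_0,n}\notin\mathcal{V}_2$, $x_{t,b_0}\notin\mathcal{V}_1$, while $D$ depends only on $\mathcal{V}_3$, the relevant Jacobian is
\[
\frac{\partial(B,C,D)}{\partial\bigl(x_{a_0,n},\,x_{t,b_0},\,x_{c,d}\bigr)}
=\begin{pmatrix} * & 0 & * \\ 0 & * & * \\ 0 & 0 & * \end{pmatrix},
\]
triangular with non-vanishing diagonal entries; its determinant is nonzero, giving $\operatorname{trdeg}_{\mathbb{C}}\mathbb{C}(B,C,D)=3$.

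The main obstacle is exactly this Claim, and within it the delicate point is not the triangular bookkeeping but verifying \emph{genuine}, non-identically-vanishing dependence of the corner iterates on boundary variables lying strictly outside the overlap $\mathcal{V}_3=\mathcal{V}_1\cap\mathcal{V}_2$; this rests on the standing hypothesis that $\Phi$ depends on all three variables, via the fact that a composition of non-constant maps $\mathbb{P}^1\to\mathbb{P}^1$ is non-constant, together with the future-light-cone axiom used to locate the boundary points. The remaining items are routine: values equal to $\infty$ (among $B^{*},C^{*},D^{*}$ or for $z^{*}$) are handled by the convention in Notation~\ref{notation:all} and, where convenient, by replacing an iterate $x$ with $1/(x-c)$ for generic $c\in\mathbb{C}$, a M\"obius change preserving both the generated field and algebraic (in)dependence.
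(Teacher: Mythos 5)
Your two deductions from the Claim are correct, and they track the paper closely: the contraposition giving $(t,n)\ge(t_0,n_0)$ is exactly the paper's argument, and your field-isomorphism route to $z^{*}\in\mathbb{P}^1\left(\overline{\mathbb{K}_{t,n}}\right)$ (a transcendental $z^{*}$ would make $z\mapsto z^{*}$ a $\mathbb{K}_{t,n}$-isomorphism $\mathbb{K}_{t,n}(z)\to\mathbb{K}_{t,n}(z^{*})$, preserving $\operatorname{trdeg}_{\mathbb{C}}$) is a clean variant of the paper's proof, which instead exhibits the explicit nonzero relation $F(x_{t-1,n}(z),x_{t,n-1}(z),x_{t-1,n-1}(z))\in\mathbb{K}_{t,n}(z)$ vanishing at $z^{*}$, after first disposing of the case where one of the substituted values is $\infty$ (your version absorbs that case automatically, since no denominator in $\mathbb{K}_{t,n}[z]$ can vanish at a transcendental $z^{*}$). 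Be aware that the paper simply \emph{asserts} your Claim --- the algebraic independence over $\mathbb{C}$ of the three unsubstituted corner iterates --- as a background fact and spends its proof entirely on the two deductions; also, this lemma precedes the standing $\partial$-factor assumption in \S\ref{section:proof}, so any proof of the Claim must do without that condition.

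The genuine gap is in your proof of the Claim. You take $(a_0,n)$ to be the \emph{left-most} boundary point of row $n$ in the past light cone and assert $\partial x_{t-1,n}/\partial x_{a_0,n}\ne 0$. But a single row can meet $H_0$ more than once: for the light-cone regular domain $H=\mathbb{Z}^2_{\ge(0,1)}\cup\mathbb{Z}^2_{\ge(1,0)}$ both $(0,1)$ and $(1,1)$ lie in $H_0$, and then $x_{s,1}$ for $s\ge 2$ is built from the initial variable $x_{1,1}$ and row $0$ alone, hence is \emph{independent} of the left-most variable $x_{0,1}$; your diagonal Jacobian entry vanishes and the determinant argument collapses (the same happens for your ``bottom-most'' column choice in the transposed domain). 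The correct choice is the \emph{nearest} boundary point, i.e.\ $x_{t-1-t',n}$ with $t'$ the least nonnegative integer such that $(t-1-t',n)\in H_0$ --- precisely the fresh variable the paper uses in Proposition~\ref{proposition:key_lemma} and Lemma~\ref{lemma:trdeg_at_least_two} --- since minimality guarantees every intermediate point of the row lies in $H\setminus H_0$, so the recursion applies at each step. A second, smaller gap: ``each step is non-constant because $\Phi$ depends on its first argument'' is not automatic, because the step map is $x\mapsto\Phi(x,c,d)$ with $c,d$ the \emph{specific} row-$(n-1)$ iterates, and non-constancy fails exactly when $(c,d)$ annihilates the numerator of $\partial_B\Phi$ (this is the degeneration exploited, after specialization, in Example~\ref{example:counterexample_1}, where $\partial_C\Phi=B-D$). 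Ruling this out at the formal level requires knowing that the adjacent pair $(c,d)$ is itself algebraically independent over $\mathbb{C}$, which is a smaller instance of the very Claim you are proving; so the argument must be organized as an induction over the (finite) past cone that simultaneously establishes dependence of each iterate on its nearest row- and column-boundary variables and the independence of adjacent pairs --- the latter follows because one member of the pair involves a fresh variable the other cannot, and it then forces the evaluated numerator of $\partial_B\Phi$, written as a polynomial in $B$ over $\mathbb{C}(c,d)$, to be nonzero with no appeal to the $\partial$-factor condition. With those two repairs your triangular zero pattern, which is verified correctly, does yield the Claim; as a final nicety, state the ``in particular'' step contrapositively (trdeg $3$ implies, via Lemma~\ref{lemma:substitution_trdeg_codimension_one}, that $x_{t,n}(z^{*})=\Phi(B^{*},C^{*},D^{*})$ is well-defined and transcendental), which avoids justifying that substitution commutes with $\Phi$ in possibly degenerate $\frac{0}{0}$ situations.
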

\begin{proof}
	Since $x_{t - 1, n}(z)$, $x_{t, n - 1}(z)$ and $x_{t - 1, n - 1}(z)$ are algebraically independent over $\mathbb{C}$ and
	\[
		\operatorname{trdeg}_{\mathbb{C}} \mathbb{C}(x_{t - 1, n}(z^{*}), x_{t, n - 1}(z^{*}), x_{t - 1, n - 1}(z^{*})) \le 2,
	\]
	at least one of $x_{t - 1, n}(z)$, $x_{t, n - 1}(z)$ and $x_{t - 1, n - 1}(z)$ must depend on $z$.
	Therefore, we have
	\[
		(t, n) \ge (t_0, n_0).
	\]
	
	Let us show that $z^{*} \in \mathbb{P}^1 \left( \overline{\mathbb{K}_{t, n}} \right)$.
	We may assume that $z^{*} \ne \infty$.
	First, we consider the case where some of $x_{t - 1, n}(z^{*})$, $x_{t, n - 1}(z^{*})$ and $x_{t - 1, n - 1}(z^{*})$ is $\infty$.
	Suppose that
	\[
		h \in \{ (t - 1, n), \ (t, n - 1), \ (t - 1, n - 1) \}
	\]
	satisfies $x_h(z^{*}) = \infty$.
	Since $x_h(z) \in \mathbb{K}_{t, n}(z)$, we can express $x_h(z)$ as
	\[
		x_h(z) = \frac{p(z)}{q(z)},
	\]
	where $p(z), q(z) \in \mathbb{K}_{t, n}[z]$ are coprime to each other.
	Since $x_h(z^{*}) = \infty$, $z^{*}$ satisfies $q(z^{*}) = 0$, which implies $z^{*} \in \overline{\mathbb{K}_{t, n}}$.

	Next, we consider the remaining case.
	Since
	\[
		\operatorname{trdeg}_{\mathbb{C}} \mathbb{C}(x_{t - 1, n}(z^{*}), x_{t, n - 1}(z^{*}), x_{t - 1, n - 1}(z^{*})) \le 2,
	\]
	there exists $F = F(B, C, D) \in \mathbb{C}[B, C, D]$, $F \ne 0$, such that
	\begin{equation}\label{equation:z_star_nontrivial}
		F(x_{t - 1, n}(z^{*}), x_{t, n - 1}(z^{*}), x_{t - 1, n - 1}(z^{*})) = 0.
	\end{equation}
	Since $x_{t - 1, n}(z)$, $x_{t, n - 1}(z)$ and $x_{t - 1, n - 1}(z)$ are algebraically independent over $\mathbb{C}$, we have
	\[
		F(x_{t - 1, n}(z), x_{t, n - 1}(z), x_{t - 1, n - 1}(z)) \ne 0.
	\]
	Therefore, the relation \eqref{equation:z_star_nontrivial}, whose coefficients belong to $\mathbb{K}_{t, n}$, is not trivial with respect to $z^{*}$.
	Hence, we have $z^{*} \in \overline{\mathbb{K}_{t, n}}$.
\end{proof}

\begin{definition}[$\Delta_{t, n}$]\label{definition:delta}
	For $(t, n) \in H_{\ge (t_0, n_0)}$, we define a subset $\Delta_{t, n} \subset \mathbb{P}^1 \left( \overline{\mathbb{K}} \right)$ as
	\[
		\Delta_{t, n} = \mathbb{P}^1 \left( \overline{\mathbb{K}_{t, n}} \right) \setminus \left( \mathbb{P}^1(\mathbb{C}) \cup \bigcup_{(s, m) \in \left( H_{\ge (t_0, n_0)} \cap H_{\le (t, n)} \right) \setminus \{ (t, n) \}} \mathbb{P}^1 \left( \overline{\mathbb{K}_{s, m}} \right) \right),
	\]
	where $z = x_{t_0, n_0}$.
	For $(t, n) \in H \setminus H_{\ge (t_0, n_0)}$, we set
	\[
		\Delta_{t, n} = \emptyset
	\]
	for convenience.
\end{definition}

\begin{remark}
	The $\mathbb{P}^1(\mathbb{C})$ on the right-hand side is contained in the $\bigcup$-part unless that is the empty set, which happens only when $(t, n) = (t_0, n_0)$.
\end{remark}

\begin{remark}\label{remark:partition}
	It is clear by definition that $\mathbb{P}^1 \left( \overline{\mathbb{K}} \right)$ is covered as the direct sum of $\mathbb{P}^1(\mathbb{C})$ and $\Delta_{t, n}$:
	\[
		\mathbb{P}^1 \left( \overline{\mathbb{K}} \right) = \mathbb{P}^1(\mathbb{C}) \sqcup \bigsqcup_{(t, n) \in H} \Delta_{t, n},
	\]
	with $\sqcup$ and $\bigsqcup$ denoting direct sum of sets (disjoint union).
	It will be revealed in Proposition~\ref{proposition:p1_decomposition} that this decomposition corresponds to where a constant singularity pattern starts.
\end{remark}

\begin{lemma}
	Let $z^{*} \in \Delta_{t, n}$.
	If $(s, m) \in H \setminus H_0$ satisfies
	\[
		\operatorname{trdeg}_{\mathbb{C}} \mathbb{C} \left( x_{s - 1, m}(z^{*}), x_{s, m - 1}(z^{*}), x_{s - 1, m - 1}(z^{*}) \right) \le 2,
	\]
	then $(t, n) \le (s, m)$.
	In particular, if $x_{s, m}(z)$ has a constant singularity at $z = z^{*} \in \Delta_{t, n}$, then $(t, n) \le (s, m)$.
\end{lemma}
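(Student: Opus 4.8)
The plan is to argue by contraposition on the position of $(s,m)$ relative to $(t,n)$, using the characterization of $z^{*}\in\Delta_{t,n}$ together with the preceding Lemma~\ref{lemma:trdeg_two}. The key observation is that $z^{*}\in\Delta_{t,n}$ means two things: first, $z^{*}\in\mathbb{P}^1\left(\overline{\mathbb{K}_{t,n}}\right)$; and second, $z^{*}\notin\mathbb{P}^1(\mathbb{C})$ and $z^{*}\notin\mathbb{P}^1\left(\overline{\mathbb{K}_{s',m'}}\right)$ for every $(s',m')\in\left(H_{\ge(t_0,n_0)}\cap H_{\le(t,n)}\right)\setminus\{(t,n)\}$. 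So $z^{*}$ is ``genuinely new'' at the position $(t,n)$: it does not already live in the algebraic closure of the field generated by initial variables from any strictly earlier point of the past light cone of $(t,n)$.

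First I would apply Lemma~\ref{lemma:trdeg_two} directly to the point $(s,m)$. The transcendence-degree hypothesis on $x_{s-1,m}(z^{*})$, $x_{s,m-1}(z^{*})$, $x_{s-1,m-1}(z^{*})$ is exactly the hypothesis of that lemma, so it yields $(s,m)\ge(t_0,n_0)$ together with $z^{*}\in\mathbb{P}^1\left(\overline{\mathbb{K}_{s,m}}\right)$. Thus $z^{*}$ lies in $\mathbb{P}^1\left(\overline{\mathbb{K}_{s,m}}\right)$, and I must deduce $(t,n)\le(s,m)$ from this membership and the defining properties of $\Delta_{t,n}$.

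For this I would argue by contradiction: suppose $(t,n)\not\le(s,m)$. Since $z^{*}\in\mathbb{P}^1\left(\overline{\mathbb{K}_{s,m}}\right)$ and also $z^{*}\in\mathbb{P}^1\left(\overline{\mathbb{K}_{t,n}}\right)$, the field $\overline{\mathbb{K}_{u,v}}$ for $(u,v)=(\min(s,t),\min(m,n))$, i.e.\ the meet of $(s,m)$ and $(t,n)$ in the product order, should contain $z^{*}$ as well. The intuition is that $\mathbb{K}_{s,m}$ and $\mathbb{K}_{t,n}$ are generated by initial variables in the respective past light cones, and the only generators common to both lie in the intersection $H_0\cap\mathbb{Z}^2_{\le(s,m)}\cap\mathbb{Z}^2_{\le(t,n)}=H_0\cap\mathbb{Z}^2_{\le(u,v)}$; since $z^{*}$ is algebraic over each of $\mathbb{K}_{s,m}$ and $\mathbb{K}_{t,n}$, an algebraic-independence/intersection argument on the initial variables forces $z^{*}$ to be algebraic over $\mathbb{K}_{u,v}=\mathbb{K}_{\min(s,t),\min(m,n)}$. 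Now $(u,v)\le(t,n)$ strictly (because $(t,n)\not\le(s,m)$ forces the meet to be strictly below $(t,n)$), and one checks $(u,v)\ge(t_0,n_0)$, so $(u,v)$ is an element of $\left(H_{\ge(t_0,n_0)}\cap H_{\le(t,n)}\right)\setminus\{(t,n)\}$. This contradicts $z^{*}\in\Delta_{t,n}$, which explicitly excludes membership in $\mathbb{P}^1\left(\overline{\mathbb{K}_{u,v}}\right)$. Hence $(t,n)\le(s,m)$, and the final sentence of the lemma follows since a constant singularity at $(s,m)$ satisfies the transcendence-degree hypothesis by Lemma~\ref{lemma:trdeg_two}.

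The main obstacle I anticipate is justifying rigorously that membership of $z^{*}$ in both $\overline{\mathbb{K}_{s,m}}$ and $\overline{\mathbb{K}_{t,n}}$ forces membership in $\overline{\mathbb{K}_{u,v}}$ for the meet $(u,v)$. This is the statement that the family $\left\{\mathbb{K}_{t,n}\right\}$ behaves well under intersection, namely $\overline{\mathbb{K}_{s,m}}\cap\overline{\mathbb{K}_{t,n}}=\overline{\mathbb{K}_{u,v}}$ inside $\overline{\mathbb{K}}$. This requires that the initial variables indexed by $H_0$ form an algebraically independent transcendence basis and that the index sets of generators satisfy $\left(H_0\cap\mathbb{Z}^2_{\le(s,m)}\right)\cap\left(H_0\cap\mathbb{Z}^2_{\le(t,n)}\right)=H_0\cap\mathbb{Z}^2_{\le(u,v)}$, which is a purely combinatorial fact about the product order. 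The algebraic step—that for a purely transcendental extension the algebraic closures of subfields generated by independent subsets of a transcendence basis intersect in the closure of the subfield on the intersection—is standard but must be invoked carefully; I would state it as a separate algebraic fact (it may already be available among the algebraic lemmas of Appendix~\ref{appendix:algebraic_lemmas}).
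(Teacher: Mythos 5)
Your proposal is correct, and its first step is identical to the paper's: apply Lemma~\ref{lemma:trdeg_two} at $(s,m)$ to obtain $(s,m)\ge(t_0,n_0)$ and $z^{*}\in\mathbb{P}^1\left(\overline{\mathbb{K}_{s,m}}\right)$. Where you differ is in how this membership is played off against $z^{*}\in\Delta_{t,n}$. The paper argues coordinate-wise: if $t>t_0$, then $(t-1,n)\in\left(H_{\ge(t_0,n_0)}\cap H_{\le(t,n)}\right)\setminus\{(t,n)\}$, so the definition of $\Delta_{t,n}$ gives $z^{*}\in\mathbb{P}^1\left(\overline{\mathbb{K}_{t,n}}\right)\setminus\mathbb{P}^1\left(\overline{\mathbb{K}_{t-1,n}}\right)$, i.e.\ $z^{*}$ depends on some column-$t$ initial variable $x_{t,n'}$; since $\mathbb{K}_{s,m}$ contains no initial variable with first index $\ge t$ when $s<t$, this forces $s\ge t$, and $m\ge n$ follows symmetrically. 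You instead pass to the meet $(u,v)=(\min(s,t),\min(m,n))$ and use the intersection property $\overline{\mathbb{K}_{s,m}}\cap\overline{\mathbb{K}_{t,n}}=\overline{\mathbb{K}_{u,v}}$; your bookkeeping checks out: $(u,v)\ge(t_0,n_0)$ puts $(u,v)$ in $H$ by the future-light-cone condition on domains, and $(t,n)\not\le(s,m)$ makes $(u,v)$ a strictly smaller element of $H_{\ge(t_0,n_0)}\cap H_{\le(t,n)}$, contradicting $z^{*}\in\Delta_{t,n}$. Both routes ultimately rest on the same algebraic kernel, which the paper leaves implicit in the phrase ``depends on the initial variable'' and which you correctly flag as the point needing care: if $V$ is algebraically independent over a field $K$ and $\alpha\in K$ is algebraic over $k(V)$ for a subfield $k\subset K$, then $\alpha$ is algebraic over $k$ (clear denominators in an algebraic relation and compare coefficients of the monomials in $V$). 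This fact is \emph{not} among the lemmas of Appendix~\ref{appendix:algebraic_lemmas}, so you would have to state and prove it; note that only the inclusion $\overline{\mathbb{K}_{s,m}}\cap\overline{\mathbb{K}_{t,n}}\subset\overline{\mathbb{K}_{u,v}}$ is needed, and it follows at once from the kernel by taking $K=\overline{\mathbb{K}_{t,n}}$, $k=\mathbb{K}_{u,v}$, and $V$ the initial variables generating $\mathbb{K}_{s,m}$ but not $\mathbb{K}_{u,v}$, together with the trivial combinatorial identity on the generator index sets. The trade-off: the paper's coordinate-wise argument is shorter and avoids introducing any new lemma; your meet formulation isolates the algebra explicitly and produces a reusable statement (the family $\left\{\overline{\mathbb{K}_{t,n}}\right\}$ is closed under intersection along the product order) that would also streamline nearby arguments in the paper that invoke the same kind of ``dependence on a fresh initial variable'' reasoning.
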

\begin{proof}
	Note that $(t, n) \ge (t_0, n_0)$ since $\Delta_{t, n} \ne \emptyset$.
	Since the transcendental degree is less than $3$, it follows from Lemma~\ref{lemma:trdeg_two} that $(s, m) \ge (t_0, n_0)$ and $z^{*} \in \mathbb{P}^1 \left( \overline{\mathbb{K}_{s, m}} \right)$.

	Let us show that $t \le s$.
	If $t = t_0$, then we have $t = t_0 \le s$.
	If not, then $z^{*}$ belongs to $\mathbb{P}^1 \left( \overline{\mathbb{K}_{t, n}} \right) \setminus \mathbb{P}^1 \left( \overline{\mathbb{K}_{t - 1, n}} \right)$.
	That is, there exists $n'$ such that $z^{*}$ depends on the initial variable $x_{t, n'}$.
	Therefore, it follows from $z^{*} \in \mathbb{P}^1 \left( \overline{\mathbb{K}_{s, m}} \right)$ that $s \ge t$.
	We can show that $n \le m$ in the same way.
\end{proof}

\begin{lemma}\label{lemma:trdeg_at_least_two}
	Let $(t, n) \in H \setminus H_0$.
	If $z^{*} \in \Delta_{t, n}$, then
	\[
		\operatorname{trdeg}_{\mathbb{C}} \mathbb{C} \left( x_{t - 1, n}(z^{*}), x_{t, n - 1}(z^{*}), x_{t - 1, n - 1}(z^{*}) \right) \ge 2.
	\]
	In particular, if $x_{t, n}(z)$ has a constant singularity at $z = z^{*} \in \Delta_{t, n}$, then the above transcendental degree is exactly $2$.
\end{lemma}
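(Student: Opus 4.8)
The plan is to argue by contradiction: assume $\operatorname{trdeg}_{\mathbb C}\mathbb C(x_{t-1,n}(z^{*}),x_{t,n-1}(z^{*}),x_{t-1,n-1}(z^{*}))\le 1$ and derive a contradiction with $z^{*}\in\Delta_{t,n}$. Since this transcendence degree is also $\le 2$, Lemma~\ref{lemma:trdeg_two} applies and gives $z^{*}\in\mathbb P^1(\overline{\mathbb K_{t,n}})$; moreover $z^{*}\notin\mathbb P^1(\mathbb C)$ forces $z^{*}$ to be transcendental over $\mathbb C$, because $\mathbb C$ is algebraically closed. As $(t,n)\in H\setminus H_0$, the three parents $(t-1,n),(t,n-1),(t-1,n-1)$ all lie in $H$, so I may set $E_1=\mathbb K_{t-1,n}$, $E_2=\mathbb K_{t,n-1}$ and $E_0=\mathbb K_{t-1,n-1}$; these satisfy $E_0=E_1\cap E_2$ and $E_1E_2=\mathbb K_{t,n}$, and, being generated by disjoint blocks of the algebraically independent initial variables, $E_1$ and $E_2$ are algebraically independent over $E_0$. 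The three parents satisfy $x_{t-1,n}(z)\in E_1(z)$, $x_{t,n-1}(z)\in E_2(z)$ and $x_{t-1,n-1}(z)\in E_0(z)$ and are algebraically independent over $\mathbb C$. When $t=t_0$ (resp.\ $n=n_0$) the parent $x_{t-1,n}$ (resp.\ $x_{t,n-1}$) does not depend on $z$; I would dispatch these boundary cases first, so that in the main case $t>t_0,\ n>n_0$ the definition of $\Delta_{t,n}$ yields $z^{*}\notin\overline{E_1}$ and $z^{*}\notin\overline{E_2}$.

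The next step records the local consequences of being able to ``solve for $z$''. If $x_{t-1,n}$ depends on $z$, then writing it as $p(z)/q(z)$ with $p,q\in E_1[z]$ shows that $z$ is algebraic over $E_1\bigl(x_{t-1,n}(z)\bigr)$, hence $z^{*}\in\overline{E_1(x_{t-1,n}(z^{*}))}$; combined with $z^{*}\notin\overline{E_1}$ this forces $x_{t-1,n}(z^{*})$ to be transcendental over $E_1$, and symmetrically $x_{t,n-1}(z^{*})$ is transcendental over $E_2$. The lowest parent is easier: $x_{t-1,n-1}(z^{*})\in E_0(z^{*})$ is inter-algebraic with $z^{*}$ over $E_0$. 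If none of the three parents depends on $z$, then each evaluated parent equals the corresponding algebraically independent element of $\mathbb K_{t,n}$ and the transcendence degree is $3$, so I may assume at least one parent is genuinely a function of $z$.

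The heart of the argument is a transcendence (algebraic-matroid) computation whose goal is to exhibit two among $x_{t-1,n}(z^{*}),x_{t,n-1}(z^{*}),x_{t-1,n-1}(z^{*})$ that are algebraically independent over $\mathbb C$, which is equivalent to the desired inequality. First I would observe that $E_1(z^{*})=E_0(z^{*})(U)$ is purely transcendental over $E_0(z^{*})$, where $U$ denotes the block of top-row initial variables generating $E_1$ over $E_0$; since a field is relatively algebraically closed in such an extension, any element of $E_1(z^{*})$ algebraic over $E_0(z^{*})$ already lies in $E_0(z^{*})$. Applying this to $x_{t-1,n}(z^{*})$ gives a dichotomy: either it is transcendental over $E_0(z^{*})$, in which case it is algebraically independent from $x_{t-1,n-1}(z^{*})$ (inter-algebraic with $z^{*}$ over $E_0$) and the two already witness transcendence degree $\ge 2$; or it collapses into $E_0(z^{*})$. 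The symmetric dichotomy holds for $x_{t,n-1}(z^{*})$ with the block $V$ generating $E_2$.

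The hard part, which I expect to be the main obstacle, is the remaining \emph{simultaneous collapse}: both $x_{t-1,n}(z^{*})$ and $x_{t,n-1}(z^{*})$ lie in $E_0(z^{*})$ while being transcendental over $E_1$, respectively $E_2$. I must show this is incompatible with $z^{*}\notin\overline{E_1}\cup\overline{E_2}$ together with the algebraic independence of $E_1$ and $E_2$ over $E_0$. The plan is to use that $z^{*}\notin\overline{E_0(U)}$ and $z^{*}\notin\overline{E_0(V)}$ mean the minimal algebraic relation of $z^{*}$ over $E_0(U,V)$ genuinely involves both blocks $U$ and $V$, and to contradict the collapse by a specialization (valuation) argument that kills one block at a time, reducing to the mapping-type situation in which a parent cannot shed its boundary dependence; the algebraic lemmas of Appendix~\ref{appendix:algebraic_lemmas} are designed for exactly this book-keeping. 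It is precisely this coupling of the two coefficient blocks through the single transcendental $z^{*}$---a feature absent for ordinary mappings---that makes the statement true and forms the crux. Finally, for the ``in particular'' clause, a constant singularity at $(t,n)$ means $\Phi$ of the three parents lies in $\mathbb P^1(\mathbb C)$, i.e.\ one algebraic relation, capping the transcendence degree at $2$; combined with the lower bound just proved, it equals $2$.
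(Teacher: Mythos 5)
There is a genuine gap, and it sits exactly where you flag it: in your framing the ``simultaneous collapse'' case \emph{is} the entire content of the lemma, and for it you offer only a plan (a specialization/valuation argument ``killing one block at a time''), not a proof. Worse, the detour is unnecessary, because you already hold the decisive fact and do not use it. In your main case you correctly extract from the definition of $\Delta_{t, n}$ that $z^{*} \notin \overline{E_1} = \overline{\mathbb{K}_{t - 1, n}}$, i.e.\ $z^{*}$ is transcendental over $\mathbb{K}_{t - 1, n}$. But then there is no need to distribute the three parents over $E_0$, $E_1$, $E_2$ and fight the coupling of the blocks $U$ and $V$ through $z^{*}$: since $E_0 \subset E_1$, \emph{both} left-column parents $x_{t - 1, n}(z)$ and $x_{t - 1, n - 1}(z)$ lie in the single field $\mathbb{K}_{t - 1, n}(z)$, and evaluation $z \mapsto z^{*}$ at an element transcendental over $\mathbb{K}_{t - 1, n}$ is a field isomorphism $\mathbb{K}_{t - 1, n}(z) \to \mathbb{K}_{t - 1, n}(z^{*})$ fixing $\mathbb{C}$. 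Hence the pair $\left( x_{t - 1, n}(z^{*}), x_{t - 1, n - 1}(z^{*}) \right)$ inherits the algebraic independence over $\mathbb{C}$ of $\left( x_{t - 1, n}(z), x_{t - 1, n - 1}(z) \right)$, and $\operatorname{trdeg} \ge 2$ follows with no dichotomy and no collapse case at all. This is precisely the paper's proof: it treats $t = t_0$ separately (there the two left parents do not involve $z$, so evaluation changes nothing), and for $t > t_0$ it notes that $z^{*} \in \Delta_{t, n}$ depends on the initial variable $w = x_{t, n - n'}$ in column $t$, hence is transcendental over $\mathbb{K}_{t - 1, n}$, and concludes as above. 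Your own purely-transcendental claim $E_1(z^{*}) = E_0(z^{*})(U)$ is valid exactly because of this transcendence --- the same fact that finishes the proof directly, had you applied it to the pair $\left( x_{t - 1, n}, x_{t - 1, n - 1} \right)$ instead of splitting the parents across fields.

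Two further cautions. First, this lemma is stated \emph{before} the standing $\partial$-factor assumption in \textsection\ref{section:proof}, so your proposed reduction ``to the mapping-type situation in which a parent cannot shed its boundary dependence'' is dangerous: the tool that delivers such dependence statements is Proposition~\ref{proposition:key_lemma}, which requires the $\partial$-factor condition and is proved \emph{after} this lemma; relying on it here risks circularity, whereas the substitution-isomorphism argument needs no hypothesis on $\Phi$ beyond the background independence of iterates that both you and the paper take as given. Second, your closing ``in particular'' step is fine in substance (if the three evaluated parents were algebraically independent, the substitution into $\Phi$ is well-defined by Lemma~\ref{lemma:substitution_trdeg_codimension_one} and yields a value transcendental over $\mathbb{C}$, so a constant singularity caps the transcendence degree at $2$), and matches how the paper combines this lemma with Lemma~\ref{lemma:trdeg_two}.
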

\begin{proof}
	Note that $(t, n) \ge (t_0, n_0)$ since $\Delta_{t, n} \ne \emptyset$.
	We show that $x_{t - 1, n}(z^{*})$ and $x_{t - 1, n - 1}(z^{*})$ are algebraically independent over $\mathbb{C}$.
	Note that $x_{t - 1, n}(z), x_{t - 1, n - 1}(z) \in \mathbb{K}_{t - 1, n}(z)$ are algebraically independent over $\mathbb{C}$.

	If $t = t_0$, then $x_{t - 1, n}(z)$ and $x_{t - 1, n - 1}(z)$ are independent of $z$.
	Therefore, $x_{t - 1, n}(z^{*})$ and $x_{t - 1, n - 1}(z^{*})$ are algebraically independent over $\mathbb{C}$.

	Let us consider the remaining case, i.e., $t > t_0$.
	It is sufficient to show that $z^{*}$ is transcendental over $\mathbb{K}_{t - 1, n}$.
	Let $n'$ be the least positive integer that satisfies $(t, n - n') \in H_0$ and let $w = x_{t, n - n'}$.
	It follows from $t > t_0$ that $w \ne z$.
	Since $z^{*} \in \Delta_{t, n}$, $z^{*}$ depends on the initial variable $w$.
	Hence, $z^{*}$ is transcendental over $\mathbb{K}_{t - 1, n}$, which completes the proof.
\end{proof}

From here on in this section, we always assume that $\Phi$ satisfies the $\partial$-factor condition (Definition~\ref{definition:d_factor_condition}).

\begin{lemma}\label{lemma:key_lemma}
	Let $(t, n) \in H$ and $z^{*} \in \mathbb{P}^1 \left( \overline{\mathbb{K}_{t, n}} \right)$.
	Suppose that $(s, m) \in H$ satisfies $s > t$ and $s > t_0$ and let $m'$ be the least nonnegative integer satisfying $(s, m - m') \in H_0$.
	Let $w = x_{s, m - m'}$ and think of $x_{s, m}$ and $x_{s, m - 1}$ as rational functions in $z$ and $w$,
	i.e., $x_{s, m} = x_{s, m}(z, w)$ and $x_{s, m - 1} = x_{s, m - 1}(z, w)$.
	If
	\begin{itemize}
		\item 
		$\partial_w x_{s, m - 1}(z^{*}, w) \ne 0$, $\partial_w \frac{1}{x_{s, m - 1}(z^{*}, w)} \ne 0$,

		\item
		$x_{s - 1, m}(z^{*})$ and $x_{s - 1, m - 1}(z^{*})$ are not constant singularities,

	\end{itemize}
	then
	\[
		\partial_w x_{s, m}(z^{*}, w) \ne 0, \quad
		\partial_w \frac{1}{x_{s, m}(z^{*}, w)} \ne 0.
	\]
	In particular, $x_{s, m}(z^{*}, w)$ depends on $w$ and is not a constant singularity.

	The statement with the $t$- and $n$-axes interchanged holds, too.
\end{lemma}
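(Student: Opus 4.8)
The plan is to differentiate the quad equation
\[
	x_{s, m} = \Phi\left(x_{s - 1, m},\, x_{s, m - 1},\, x_{s - 1, m - 1}\right)
\]
with respect to $w$ and keep track of which arguments can carry a $w$-dependence. The two arguments $x_{s - 1, m}$ and $x_{s - 1, m - 1}$ lie in column $s - 1$, whereas $w = x_{s, m - m'}$ lies in column $s$; since the past light cone of a point in column $s - 1$ never reaches column $s$, both of these arguments are independent of $w$. Likewise, because $s > t$ and $s > t_0$, the variable $w$ is not one of the generators of $\mathbb{K}_{t, n}$, so $w$ is transcendental over $\mathbb{K}_{t, n}$ and hence over $\overline{\mathbb{K}_{t, n}}$; consequently $z^{*} \in \mathbb{P}^1\left(\overline{\mathbb{K}_{t, n}}\right)$, and with it $b := x_{s - 1, m}(z^{*})$ and $d := x_{s - 1, m - 1}(z^{*})$, are independent of $w$. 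Writing $C(w) := x_{s, m - 1}(z^{*}, w)$, the chain rule therefore collapses to
\[
	\partial_w x_{s, m}(z^{*}, w) = (\partial_C \Phi)\left(b,\, C(w),\, d\right) \cdot \partial_w x_{s, m - 1}(z^{*}, w).
\]
Since $\partial_w x_{s, m - 1}(z^{*}, w) \ne 0$ by hypothesis, everything reduces to showing that the numerator of $\partial_C \Phi$, evaluated at $(b, C(w), d)$, is not the zero function of $w$.

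Next I would factor that numerator into irreducibles over $\mathbb{C}[B, C, D]$ and invoke the $\partial$-factor condition (Definition~\ref{definition:d_factor_condition}) in its ``$B$ and $C$ interchanged'' form, which governs precisely $\partial_C \Phi$. Each irreducible factor $F$ then either depends on $C$ (that is, $\partial_C F \ne 0$) or has the form $F = B - \alpha$ or $F = D - \alpha$ with $\alpha \in \mathbb{C}$. The latter type is harmless: it evaluates to $b - \alpha$ or $d - \alpha$, and since $x_{s - 1, m}(z^{*})$ and $x_{s - 1, m - 1}(z^{*})$ are \emph{not} constant singularities, $b$ and $d$ lie in $\overline{\mathbb{K}} \setminus \mathbb{C}$ and are transcendental over $\mathbb{C}$, so $b - \alpha \ne 0$ and $d - \alpha \ne 0$. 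This is exactly where the $\partial$-factor condition earns its keep: a factor such as $B - D$, depending on $B$ and $D$ but not on $C$, would evaluate to $b - d$, which could vanish, and the condition is precisely what forbids such factors.

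The main work, and the step I expect to be the obstacle, is the case of an irreducible factor $F = F(B, C, D)$ with $\partial_C F \ne 0$, where I must show that $F(b, C, d)$ is a nonzero polynomial in $C$. Writing $F = \sum_k F_k(B, D)\, C^k$, irreducibility together with $\deg_C F \ge 1$ forces the coefficients $F_k \in \mathbb{C}[B, D]$ to have trivial gcd. Suppose, for contradiction, that $F_k(b, d) = 0$ for every $k$; then all $F_k$ lie in the prime ideal $\mathfrak{p} = \ker\left(\mathbb{C}[B, D] \to \overline{\mathbb{K}},\ B \mapsto b,\ D \mapsto d\right)$. If $\mathfrak{p} = (0)$ then every $F_k = 0$, so $F = 0$, absurd. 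The ideal $\mathfrak{p}$ cannot be maximal, for then $\mathbb{C}[B, D]/\mathfrak{p}$ would be a field algebraic over $\mathbb{C}$, hence equal to $\mathbb{C}$, forcing the transcendental elements $b, d$ into $\mathbb{C}$. Thus $\mathfrak{p}$ has height one and is principal, $\mathfrak{p} = (g)$ with $g$ nonconstant; then $g \mid F_k$ for all $k$, so $g$ divides $\gcd_k F_k = 1$, a contradiction. Hence some $F_k(b, d) \ne 0$ and $F(b, C, d) \not\equiv 0$ in $C$. As $F(b, C, d)$ has only finitely many roots while $C(w)$ is nonconstant in $w$ (again by $\partial_w x_{s, m - 1}(z^{*}, w) \ne 0$), it follows that $F(b, C(w), d) \not\equiv 0$ in $w$. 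Combining all factors yields $(\partial_C \Phi)(b, C(w), d) \ne 0$, hence $\partial_w x_{s, m}(z^{*}, w) \ne 0$.

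Finally, the statement $\partial_w \frac{1}{x_{s, m}(z^{*}, w)} \ne 0$ follows by the identical argument applied to $\partial_C \frac{1}{\Phi}$ in place of $\partial_C \Phi$, since this derivative is covered by the same (``interchanged'') half of the $\partial$-factor condition; and the version with the $t$- and $n$-axes interchanged follows verbatim after swapping the roles of $B$ and $C$, now using the original half of the condition.
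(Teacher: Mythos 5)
Your proof is correct and follows essentially the same route as the paper's: apply the chain rule to reduce everything to the nonvanishing of the numerator of $\partial_C \Phi$ at $\left( x_{s-1,m}(z^{*}),\, x_{s,m-1}(z^{*},w),\, x_{s-1,m-1}(z^{*}) \right)$, then split the irreducible factors according to the (``$B$ and $C$ interchanged'') half of the $\partial$-factor condition, handling $B-\alpha$ and $D-\alpha$ via the non-singularity of the two column-$(s-1)$ entries. The only divergence is that where the paper cites Lemma~\ref{lemma:polynomial_nonvanishing} (using $\operatorname{trdeg}_{\mathbb{C}} \mathbb{C}\left( x_{s-1,m}(z^{*}), x_{s-1,m-1}(z^{*}) \right) \ge 1$ and the transcendence of $x_{s,m-1}(z^{*},w)$ over that field), you re-derive exactly this special case inline through the prime-ideal/gcd argument in $\mathbb{C}[B,D]$ — equivalent content, with only the minor caveat that the nonconstancy of $C(w)$ requires both derivative hypotheses (per the paper's Appendix~B conventions), not just $\partial_w x_{s,m-1}(z^{*},w) \ne 0$ as your parenthetical suggests.
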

\begin{remark}
	The assumption $s > t_0$ is necessary to guarantee that $z \ne w$.
\end{remark}
\begin{proof}[Proof of Lemma~\ref{lemma:key_lemma}]
	We only consider the case $z^{*} \ne \infty$.
	Since $z^{*} \in \overline{\mathbb{K}_{t, n}}$ and $t < s$, $z^{*}$ does not depend on $w = x_{s, m - m'}$.
	Moreover, it follows from $\partial_w x_{s, m - 1}(z^{*}, w) \ne 0$ and $\partial_w \frac{1}{x_{s, m - 1}(z^{*}, w)} \ne 0$ that $x_{s, m - 1}(z^{*}, w)$ is not a constant singularity.

	First, we show that $\partial_w x_{s, m}(z^{*}) \ne 0$.
	Since $x_{s - 1, m}$ and $x_{s - 1, m - 1}$ are both independent of $w$ and
	\[
		x_{s, m}(z, w) = \Phi(x_{s - 1, m}(z), x_{s, m - 1}(z, w), x_{s - 1, m - 1}(z)),
	\]
	we have
	\[
		\partial_w x_{s, m}(z, w) = \partial_C \Phi(x_{s - 1, m}(z), x_{s, m - 1}(z, w), x_{s - 1, m - 1}(z)) \times \partial_w x_{s, m - 1}(z, w).
	\]
	The second factor of the right-hand side does not vanish at $z = z^{*}$ by the assumption.
	Let us show that $\partial_C \Phi(x_{s - 1, m}(z^{*}), x_{s, m - 1}(z^{*}, w), x_{s - 1, m - 1}(z^{*})) \ne 0$.

	Since none of $x_{s - 1, m}(z^{*})$, $x_{s, m - 1}(z^{*}, w)$ and $x_{s - 1, m - 1}(z^{*})$ is $\infty$, the first factor
	\[
		\partial_C \Phi(x_{s - 1, m}(z^{*}), x_{s, m - 1}(z^{*}, w), x_{s - 1, m - 1}(z^{*}))
	\]
	cannot be zero in the form $\frac{1}{\infty}$.
	Therefore, the first factor becomes $0$ only if its numerator vanishes.
	Let $F = F(B, C, D) \in \mathbb{C}(B, C, D)$ be an arbitrary irreducible factor of the numerator of the rational function $\partial_C \Phi (B, C, D)$.
	By the $\partial$-factor condition, $F$ satisfies $F = B - \alpha$, $F = D - \alpha$ (for some $\alpha \in \mathbb{C}$) or $\partial_C F \ne 0$.
	If $F = B - \alpha$ or $F = D - \alpha$, then we have
	\[
		F(x_{s - 1, m}(z^{*}), x_{s, m - 1}(z^{*}, w), x_{s - 1, m - 1}(z^{*})) \ne 0
	\]
	since neither $x_{s - 1, m}(z^{*})$ nor $x_{s - 1, m - 1}(z^{*})$ is a constant singularity.
	Let us consider the case $\partial_C F \ne 0$.
	Since $x_{s - 1, m}(z^{*})$ and $x_{s - 1, m - 1}(z^{*})$ are not constant singularities, we have
	\[
		\operatorname{trdeg}_{\mathbb{C}} \mathbb{C}(x_{s - 1, m}(z^{*}), x_{s - 1, m - 1}(z^{*})) \ge 1.
	\]
	Moreover, $x_{s, m - 1}(z^{*}, w)$ is transcendental over $\mathbb{C}(x_{s - 1, m}(z^{*}), x_{s - 1, m - 1}(z^{*}))$ since $x_{s, m - 1}(z^{*}, w)$ depends on $w$ but the other two do not.
	Therefore, it follows from Lemma~\ref{lemma:polynomial_nonvanishing} that
	\[
		F(x_{s - 1, m}(z^{*}), x_{s, m - 1}(z^{*}, w), x_{s - 1, m - 1}(z^{*})) \ne 0.
	\]
	Since $F$ is an arbitrary irreducible factor of the numerator, we have
	\[
		\partial_C \Phi(x_{s - 1, m}(z^{*}), x_{s, m - 1}(z^{*}, w), x_{s - 1, m - 1}(z^{*})) \ne 0.
	\]

	Next, we show that $\partial_w \frac{1}{x_{s, m}(z^{*})} \ne 0$.
	Since
	\[
		\partial_w \frac{1}{x_{s, m}(z, w)} = \left( \partial_C \frac{1}{\Phi(B, C, D)} \right) \Big|_{(B, C, D) = (x_{s - 1, m}(z), x_{s, m - 1}(z, w), x_{s - 1, m - 1}(z))} \times \partial_w x_{s, m - 1}(z, w),
	\]
	one can prove in the same way as above that the right-hand side does not vanish at $z = z^{*}$.
\end{proof}

\begin{proof}[Proof of Proposition~\ref{proposition:key_lemma}]
	By Lemma~\ref{lemma:trdeg_two}, we have $(t, n) \ge (t_0, n_0)$ and $z^{*} \in \mathbb{P}^1 \left( \overline{\mathbb{K}_{t, n}} \right)$.
	Since $s > t \ge t_0$, the statement follows from Lemma~\ref{lemma:key_lemma}.
\end{proof}

\begin{remark}
	It is clear by the above proof that Proposition~\ref{proposition:key_lemma} still holds even if the assumption that $z^{*}$ generates a constant singularity at $(t, n)$ is replaced with $z^{*} \in \mathbb{P}^1 \left( \overline{\mathbb{K}_{t, n}} \right)$ and $s > t_0$.
\end{remark}

\begin{proof}[Proof of Theorem~\ref{theorem:first_singularity}]
	Let $(t, n)$ be a minimal element of the set.
	It is sufficient to show that the set has no other minimal element.
	Because of the minimality, there does not exist a constant singularity in the region $H_{\le (t, n)}$ except for $x_{t, n}(z^{*})$.
	Since
	\[
		x_{t, n - 1}(z^{*}), x_{t, n - 2}(z^{*}), \cdots
	\]
	are not constant singularities, it follows from Proposition~\ref{proposition:key_lemma} that
	\[
		x_{t + 1, n - 1}(z^{*}), x_{t + 1, n - 2}(z^{*}), \cdots
	\]
	are not constant singularities, either.
	Repeating this procedure, one finds that there does not exist a constant singularity in the region
	\[
		\{ (s, m) \in H \mid s > t; \ m < n \}.
	\]
	Interchanging the $t$- and $n$- axes, one can show in the same way that there does not exist a constant singularity in the region
	\[
		\{ (s, m) \in H \mid s < t; \ m > n \}.
	\]
	Hence, $(t, n)$ is a unique minimal element of the set.
\end{proof}

\begin{proposition}\label{proposition:p1_decomposition}
	If $x_{t, n}(z^{*})$ is the first singularity of a movable pattern, then $z^{*}$ belongs to $\Delta_{t, n}$.
	That is, the partition of $\mathbb{P}^1 \left( \overline{\mathbb{K}} \right)$ in Remark~\ref{remark:partition} corresponds to where a constant singularity pattern starts.
\end{proposition}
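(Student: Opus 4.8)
The plan is to verify directly the three defining conditions for $z^{*} \in \Delta_{t, n}$: that $z^{*} \notin \mathbb{P}^1(\mathbb{C})$, that $(t, n) \ge (t_0, n_0)$ with $z^{*} \in \mathbb{P}^1 \left( \overline{\mathbb{K}_{t, n}} \right)$, and that $z^{*} \notin \mathbb{P}^1 \left( \overline{\mathbb{K}_{s, m}} \right)$ for every $(s, m) \in \left( H_{\ge (t_0, n_0)} \cap H_{\le (t, n)} \right) \setminus \{ (t, n) \}$. The first condition is immediate, since a movable pattern is by definition one for which $z^{*} \in \mathbb{P}^1 \left( \overline{\mathbb{K}} \right) \setminus \mathbb{P}^1(\mathbb{C})$.

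For the second condition I would first argue that the first singularity of a movable pattern cannot lie on the initial boundary $H_0$. Indeed, if $(t, n) = (t_0, n_0)$ then $x_{t, n}(z^{*}) = z^{*} \notin \mathbb{P}^1(\mathbb{C})$, while if $(t, n) \in H_0$ with $x_{t, n} \ne z$ then $x_{t, n}(z)$ equals the initial variable $x_{t, n}$ itself, which is independent of $z$, so $x_{t, n}(z^{*}) = x_{t, n} \notin \mathbb{P}^1(\mathbb{C})$; in either case $x_{t, n}(z^{*})$ would fail to be a constant singularity. Hence $(t, n) \in H \setminus H_0$, and since $x_{t, n}(z^{*})$ is a constant singularity, Lemma~\ref{lemma:trdeg_two} yields at once both $(t, n) \ge (t_0, n_0)$ and $z^{*} \in \mathbb{P}^1 \left( \overline{\mathbb{K}_{t, n}} \right)$.

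The heart of the matter is the third condition, which I would prove by contradiction. Suppose $z^{*} \in \mathbb{P}^1 \left( \overline{\mathbb{K}_{s, m}} \right)$ for some $(s, m)$ with $(t_0, n_0) \le (s, m) \le (t, n)$ and $(s, m) \ne (t, n)$. Because $(t, n)$ is the first, hence minimal, singularity of the pattern, every point of $H_{\le (t, n)} \setminus \{ (t, n) \}$ is non-singular. If $s < t$, then in particular all points $(s, m')$ with $m' \le n$ are non-singular, and I would feed this non-singularity together with $z^{*} \in \mathbb{P}^1 \left( \overline{\mathbb{K}_{s, m}} \right)$ into the strengthened form of Proposition~\ref{proposition:key_lemma} recorded in the remark following its proof (where the hypothesis that $z^{*}$ generates a constant singularity is weakened to $z^{*} \in \mathbb{P}^1 \left( \overline{\mathbb{K}_{s, m}} \right)$ together with the column bound against $t_0$). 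Applying it successively to the columns $s + 1, s + 2, \ldots, t$ propagates non-singularity rightward and forces $x_{t, n}(z^{*})$ to be non-singular, contradicting the choice of $(t, n)$. The remaining case $s = t$, which forces $m < n$, is treated identically after interchanging the $t$- and $n$-axes, propagating non-singularity upward through the rows $m + 1, \ldots, n$.

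The main obstacle is organizing this propagation cleanly. One must keep the base point $(s, m)$ fixed so that the single algebraicity hypothesis $z^{*} \in \mathbb{P}^1 \left( \overline{\mathbb{K}_{s, m}} \right)$ is reused at every step; one must check that each column index stays strictly greater than $s \ge t_0$ (respectively each row index greater than $m \ge n_0$), which is precisely where the assumption $(s, m) \ge (t_0, n_0)$ enters; and one must confirm that the inductive input required for each application of Proposition~\ref{proposition:key_lemma} is exactly the non-singularity conclusion furnished by the previous application. Once the two cases $s < t$ and $s = t$ are set up, the contradiction is immediate, and the three verified conditions together give $z^{*} \in \Delta_{t, n}$.
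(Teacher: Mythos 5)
Your proof is correct and follows essentially the same route as the paper: both arguments rest on the minimality of the first singularity together with the remark-strengthened form of Proposition~\ref{proposition:key_lemma} (with hypothesis $z^{*} \in \mathbb{P}^1 \left( \overline{\mathbb{K}_{s, m}} \right)$) to propagate non-singularity forward and derive a contradiction. The only organizational difference is that the paper first uses the inclusions $\mathbb{K}_{s, m} \subset \mathbb{K}_{t - 1, n}$ or $\mathbb{K}_{s, m} \subset \mathbb{K}_{t, n - 1}$ to reduce the third condition to the two adjacent fields, so that a single application of the lemma suffices (with the corner cases $t = t_0$ or $n = n_0$ treated separately), whereas you iterate the propagation directly from an arbitrary $(s, m)$, which absorbs the corner cases into your two-case split; your explicit check that the first singularity of a movable pattern cannot lie on $H_0$ also makes precise a step the paper leaves implicit.
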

\begin{proof}
	Note that $(t, n)$ belongs to $H_{\ge (t_0, n_0)} \setminus \{ (t_0, n_0) \}$ since the pattern is movable.

	First, consider the case where $t > t_0$ and $n > n_0$.
	Since
	\[
		\Delta_{t, n} = \mathbb{P}^1 \left( \overline{\mathbb{K}_{t, n}} \right) \setminus \left( \mathbb{P}^1 \left( \overline{\mathbb{K}_{t - 1, n}} \right) \cup \mathbb{P}^1 \left( \overline{\mathbb{K}_{t, n - 1}} \right) \right),
	\]
	it is sufficient to show that $z^{*}$ belongs to neither $\mathbb{P}^1 \left( \overline{\mathbb{K}_{t - 1, n}} \right)$ nor $\mathbb{P}^1 \left( \overline{\mathbb{K}_{t, n - 1}} \right)$.
	Assume that $z^{*} \in \mathbb{P}^1 \left( \overline{\mathbb{K}_{t - 1, n}} \right)$.
	Since
	\[
		x_{t - 1, n}(z^{*}), x_{t - 1, n - 1}(z^{*}), \cdots
	\]
	are not constant singularities, it follows from Proposition~\ref{proposition:key_lemma} that
	$x_{t, n}(z^{*})$ is not a constant singularity, either, which is a contradiction.
	Hence, we have $z^{*} \notin \mathbb{P}^1 \left( \overline{\mathbb{K}_{t - 1, n}} \right)$.
	One can show that $z^{*} \notin \mathbb{P}^1 \left( \overline{\mathbb{K}_{t, n - 1}} \right)$ in the same way.

	Next, consider the case where $n = n_0$ or $t = t_0$.
	Since $(t, n) \ne (t_0, n_0)$, we may assume that $t = t_0$ and $n > n_0$.
	Note that $(t, n - 1)$ belongs to $H_{\ge (t_0, n_0)} \cap H_{\le (t, n)}$ but $(t - 1, n)$ does not.
	Therefore, $\Delta_{t, n}$ in this case is written as
	\[
		\Delta_{t, n} = \mathbb{P}^1 \left( \overline{\mathbb{K}_{t, n}} \right) \setminus \mathbb{P}^1 \left( \overline{\mathbb{K}_{t, n - 1}} \right).
	\]
	One can show in the same way as above that $(t, n) \notin \mathbb{P}^1 \left( \overline{\mathbb{K}_{t, n - 1}} \right)$, which completes the proof.
\end{proof}

\begin{definition}[Spontaneous Occurrence of $x_{t, n}(z^{*}) = \alpha$]\label{definition:spontaneous_occurrence_rigorous}
	For $z^{*} \in \mathbb{P}^1 \left( \overline{\mathbb{K}} \right)$ and $\alpha \in \mathbb{P}^1$, we say that $x_{t, n}(z^{*})$ spontaneously becomes $\alpha$ if $x_{t, n}(z^{*}) = \alpha$ and $z^{*} \in \Delta_{t, n}$.
	By Proposition~\ref{proposition:p1_decomposition}, the above condition holds if and only if $x_{t, n}(z^{*}) = \alpha$ is the first singularity of the pattern.
\end{definition}

\begin{corollary}\label{corollary:not_zero_over_zero}
	If $x_{t, n}(z^{*})$ is the first constant singularity of a movable pattern, then
	\[
		\operatorname{trdeg}_{\mathbb{C}} \mathbb{C} \left( x_{t - 1, n}(z^{*}), x_{t, n - 1}(z^{*}), x_{t - 1, n - 1}(z^{*}) \right) = 2.
	\]
	In particular, the first singularity $x_{t, n}(z^{*})$ of any movable pattern can be directly calculated from $x_{t - 1, n}(z^{*})$, $x_{t, n - 1}(z^{*})$ and $x_{t - 1, n - 1}(z^{*})$ as
	\[
		x_{t, n}(z^{*}) = \Phi(x_{t - 1, n}(z^{*}), x_{t, n - 1}(z^{*}), x_{t - 1, n - 1}(z^{*})).
	\]
\end{corollary}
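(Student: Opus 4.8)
The plan is to reduce both assertions to results already established in this section. First I would locate the first singularity precisely: since the pattern is movable, $z^{*}\in\mathbb{P}^1(\overline{\mathbb{K}})\setminus\mathbb{P}^1(\mathbb{C})$, and I would show $(t,n)\in H\setminus H_0$. Indeed, if $(t,n)=(t_0,n_0)$ then $x_{t,n}(z^{*})=z^{*}\notin\mathbb{P}^1(\mathbb{C})$, while if $(t,n)\in H_0\setminus\{(t_0,n_0)\}$ then $x_{t,n}(z^{*})$ is just the initial variable $x_{t,n}$, which is transcendental over $\mathbb{C}$; in either case $x_{t,n}(z^{*})$ could not be a constant singularity, contrary to hypothesis. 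Hence $(t,n)\in H\setminus H_0$, so that $(t-1,n)$, $(t,n-1)$ and $(t-1,n-1)$ all lie in $H$ and the evolution $x_{t,n}(z)=\Phi(x_{t-1,n}(z),x_{t,n-1}(z),x_{t-1,n-1}(z))$ holds as an identity of rational functions in $z$.

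The transcendental-degree assertion then follows at once. Proposition~\ref{proposition:p1_decomposition} gives $z^{*}\in\Delta_{t,n}$, and since $x_{t,n}(z^{*})$ is a constant singularity with $(t,n)\in H\setminus H_0$, the ``in particular'' clause of Lemma~\ref{lemma:trdeg_at_least_two} yields that the transcendental degree equals exactly $2$.

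For the substitution formula I would argue that evaluation at $z=z^{*}$ commutes with $\Phi$, i.e.\ that the point $P=(x_{t-1,n}(z^{*}),x_{t,n-1}(z^{*}),x_{t-1,n-1}(z^{*}))\in\left(\mathbb{P}^1(\overline{\mathbb{K}})\right)^3$ avoids the indeterminacy locus of $\Phi$. Writing $\Phi=\Phi_1/\Phi_2$ with $\Phi_1,\Phi_2\in\mathbb{C}[B,C,D]$ coprime, the rational map $\Phi\colon(\mathbb{P}^1)^3\dashrightarrow\mathbb{P}^1$ from a smooth projective threefold to a curve has indeterminacy locus of codimension at least $2$; coprimeness is precisely what makes its affine part $\{\Phi_1=\Phi_2=0\}$ of codimension at least $2$ as well. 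Thus the indeterminacy locus is an algebraic subset of $(\mathbb{P}^1)^3$ defined over $\mathbb{C}$ of dimension at most $1$. On the other hand, the Zariski closure over $\mathbb{C}$ of $P$ is irreducible of dimension equal to the transcendental degree of its coordinates, which is $2$ by the previous step. Were $P$ to lie in the indeterminacy locus, this closure would be contained in a set of dimension at most $1$, a contradiction. Hence $\Phi$ is defined at $P$, and applying the specialization $z\mapsto z^{*}$ to the identity above (as justified in Appendix~\ref{appendix:substitution_and_derivatives}) gives the claimed formula.

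The routine parts are the location of $(t,n)$ and the citation of Lemma~\ref{lemma:trdeg_at_least_two}. The hard part will be the substitution formula, and within it the clean dimension count showing that a $\overline{\mathbb{K}}$-point of transcendental degree $2$ cannot meet a codimension-$\ge 2$ indeterminacy locus. Particular care is needed when some coordinate of $P$ equals $\infty$, which is why I would work on $(\mathbb{P}^1)^3$ rather than on affine space, and invoke the coprimeness of $\Phi_1$ and $\Phi_2$ to control the codimension of $\{\Phi_1=\Phi_2=0\}$.
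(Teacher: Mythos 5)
Your proof is correct, and its first half is exactly the paper's argument: you locate $(t,n)\in H\setminus H_0$, invoke Proposition~\ref{proposition:p1_decomposition} to get $z^{*}\in\Delta_{t,n}$, and then the ``in particular'' clause of Lemma~\ref{lemma:trdeg_at_least_two} (the paper cites Lemma~\ref{lemma:trdeg_two} alongside it, since that is where the upper bound $\le 2$ forced by a constant singularity lives) gives the transcendental degree exactly $2$. For the substitution formula the paper simply cites the purely algebraic Lemma~\ref{lemma:substitution_trdeg_codimension_one}: with $N=3$ and $\operatorname{trdeg}=2\ge N-1$, coprimality of $\Phi_1,\Phi_2$ prevents $\frac{0}{0}$, via the principal ideal $(G)$ of Lemma~\ref{lemma:function_field}. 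Your projective dimension count --- indeterminacy locus of $\Phi\colon(\mathbb{P}^1)^3\dashrightarrow\mathbb{P}^1$ has dimension at most $1$, while the Zariski closure over $\mathbb{C}$ of the point $P$ is irreducible of dimension $2$ --- is the geometric rendition of the same fact; the paper itself sketches precisely this picture in the remark following Lemma~\ref{lemma:substitution_trdeg_codimension_one}, but proves the lemma algebraically. Your version buys uniform treatment of boundary cases at the cost of importing the standard fact that rational maps from smooth projective varieties are defined in codimension one; note, though, that your worry about coordinates of $P$ equal to $\infty$ is vacuous here: since $x_{t,n}(z^{*})$ is the \emph{first} singularity, none of $x_{t-1,n}(z^{*})$, $x_{t,n-1}(z^{*})$, $x_{t-1,n-1}(z^{*})$ lies in $\mathbb{P}^1(\mathbb{C})$, and in particular none is $\infty$ (indeed each is transcendental over $\mathbb{C}$, as $\mathbb{C}$ is algebraically closed in $\overline{\mathbb{K}}$), so the affine coprimeness argument of the appendix already suffices verbatim.
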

\begin{proof}
	The first part follows from Lemmas~\ref{lemma:trdeg_two}, \ref{lemma:trdeg_at_least_two} and Proposition~\ref{proposition:p1_decomposition}.
	The second part follows from Lemma~\ref{lemma:substitution_trdeg_codimension_one}.
\end{proof}

\begin{remark}
	In order to calculate $x_{t, n}(z^{*})$, one usually calculates $x_{t, n}(z)$ first and then substitutes $z = z^{*}$, or introduces an infinitesimal parameter such as $z^{*} + \varepsilon$ first and then substitutes $\varepsilon = 0$.
	Using this corollary, however, one can calculate $x_{t, n}(z^{*})$ directly from $x_{t - 1, n}(z^{*})$, $x_{t, n - 1}(z^{*})$ and $x_{t - 1, n - 1}(z^{*})$.
	In particular, the first singularity of a pattern is not the result of $\frac{0}{0}$.
	This is an important property when finding a value that does not appear as the starting value of any movable pattern (Proposition~\ref{proposition:check_hidden_value}).
\end{remark}

\begin{proof}[Proof of Proposition~\ref{proposition:confining}]
	Since $x_{t, n}(z^{*})$ is the first singularity, it follows from Theorem~\ref{theorem:first_singularity} that the pattern has no constant singularity in the region
	\[
		\{ (t', n') \in H \mid t' > t; \ n' < n \} \cup \{ (t', n') \in H \mid t' < t; \ n' > n \}.
	\]
	Since
	\[
		x_{s + 1, m + 1}(z^{*}), x_{s + 1, m}(z^{*}), \ldots, x_{s + 1, n}(z^{*}), x_{s + 1, n - 1}(z^{*}), \cdots
	\]
	are not constant singularities, it follows from Proposition~\ref{proposition:key_lemma} that
	\[
		x_{s + 2, m + 1}(z^{*}), x_{s + 2, m}(z^{*}), \cdots
	\]
	are not constant singularities, either.
	Repeating this procedure, one finds that the pattern has no constant singularity in the region
	\[
		\{ (t', n') \in H \mid t' > s; \ n' \le m + 1 \}.
	\]
	Since the pattern has no constant singularity on the line
	\[
		\{ (t', m + 1) \in H \mid t' \in \mathbb{Z} \},
	\]
	it follows from Proposition~\ref{proposition:key_lemma} that there does not exist a constant singularity on the line
	\[
		\{ (t', m + 2) \in H \mid t' \in \mathbb{Z} \},
	\]
	either.
	Repeating this procedure, one finds that the pattern has no constant singularity in the region
	\[
		\{ (t', n') \in H \mid t' \in \mathbb{Z}; \ n' \ge m + 1 \},
	\]
	which completes the proof.
\end{proof}

\begin{lemma}\label{lemma:algebraic_independence_fresh}
	Let $(t, n) \in H_{\ge (t_0, n_0)}$ and let $z^{*} \in \mathbb{P}^1 \left( \overline{\mathbb{K}_{t, n}} \right)$.
	Then,
	\[
		x_{t + 1, n - 1}(z^{*}), x_{t + 2, n - 1}(z^{*}), x_{t + 3, n - 1}(z^{*}), \cdots \quad
	\]
	and
	\[
		x_{t - 1, n + 1}(z^{*}), x_{t - 1, n + 2}(z^{*}), x_{t - 1, n + 2}(z^{*}), \cdots
	\]
	are algebraically independent over $\mathbb{K}_{t, n}$.
\end{lemma}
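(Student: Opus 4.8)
The plan is to recast the statement as an instance of the Jacobian criterion for algebraic independence and then to isolate a single family of derivative-nonvanishings as its genuine content. Since $\mathbb{K}_{t,n}\subseteq L:=\overline{\mathbb{K}_{t,n}}$ is algebraic, algebraic independence over $\mathbb{K}_{t,n}$ coincides with algebraic independence over $L$, so I would work over $L$. The boundary variables lying outside the past light cone $H_{\le(t,n)}$ and different from $z$ --- the \emph{fresh} variables --- are independent indeterminates over $\mathbb{K}_{t,n}$ and hence form a transcendence basis of $\overline{\mathbb{K}}$ over $L$; moreover $z^{*}\in L$ by hypothesis. Therefore each $x_{t+k,n-1}(z^{*})$ and each $x_{t-1,n+j}(z^{*})$ is a rational function of the fresh variables with coefficients in $L$, which is exactly the setting of the Jacobian criterion in characteristic zero.

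Next I would choose one private variable for each listed value. For $k\ge 1$ let $w_k=x_{t+k,\mu_k}$ be the boundary variable in column $t+k$ feeding $x_{t+k,n-1}$, with $(t+k,\mu_k)\in H_0$ and $\mu_k\le n-1$ maximal, and symmetrically let $v_j$ be the boundary variable in row $n+j$ feeding $x_{t-1,n+j}$; these are distinct fresh variables. The partial derivations $\partial/\partial w_k$ and $\partial/\partial v_j$ on $\mathbb{K}$ extend uniquely to $\overline{\mathbb{K}}$ (characteristic zero) and annihilate every generator of $\mathbb{K}_{t,n}$, hence annihilate $z^{*}\in L$; so by the chain rule $\partial_{w_k}x_{s,m}(z^{*})=(\partial_{w_k}x_{s,m})(z^{*})$. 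Since $x_{t+k,n-1}$ only involves boundary data in $H_{\le(t+k,n-1)}$, it is independent of every $v_j$ (whose row exceeds $n-1$) and of every $w_{k'}$ with $k'>k$ (whose column exceeds $t+k$); the vertical family behaves symmetrically. Consequently the Jacobian of the values against $\{w_k\}\cup\{v_j\}$ is block lower-triangular, the two families decouple, and by the Jacobian criterion (reducing an infinite family to its finite subfamilies) everything is algebraically independent over $L$ as soon as the diagonal entries $\partial_{w_k}x_{t+k,n-1}(z^{*})$ and $\partial_{v_j}x_{t-1,n+j}(z^{*})$ are nonzero.

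The diagonal nonvanishing is the heart of the matter and is precisely where the $\partial$-factor condition enters, through Lemma~\ref{lemma:key_lemma}. To see that $x_{t+k,n-1}(z^{*})$ still depends on its freshest variable $w_k$ after the substitution, I would propagate dependence on $w_k$ up the column $t+k$: at the boundary point $(t+k,\mu_k)$ the dependence is trivial since $x_{t+k,\mu_k}=w_k$, and Lemma~\ref{lemma:key_lemma} carries both $\partial_{w_k}x\ne 0$ and $\partial_{w_k}(1/x)\ne 0$ one row upward at a time until $(t+k,n-1)$ is reached, the second nonvanishing being what covers the case of value $\infty$. Interchanging the two axes gives the nonvanishing for each $v_j$.

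The hard part is that every step of this propagation requires the two left-neighbours $x_{t+k-1,\cdot}(z^{*})$ to be non-constant-singularities, i.e. the adjacent column $t+k-1$ must be free of constant singularities along the rows traversed. For $k\ge 2$ this is supplied inductively together with the dependence statement, but for $k=1$ it concerns the boundary column $t$ inside the past light cone, where $z^{*}$ could a priori have produced constant singularities (and symmetrically for row $n$). Controlling this is exactly the purpose of the earlier structural results: by Lemma~\ref{lemma:trdeg_two} a constant singularity at $(s,m)$ forces $z^{*}\in\mathbb{P}^1(\overline{\mathbb{K}_{s,m}})$, by Proposition~\ref{proposition:p1_decomposition} the pattern of $z^{*}$ has a single starting point, and by Proposition~\ref{proposition:confining} it is confined; in the principal case in which the lemma is applied, where $z^{*}$ is fresh at $(t,n)$ so that $(t,n)$ is the first constant singularity and every constant singularity of $z^{*}$ lies in $H_{\ge(t,n)}$, the column $t$ (rows $\le n-1$) and the row $n$ (columns $\le t-1$) are automatically singularity-free, so Lemma~\ref{lemma:key_lemma} applies at each step and the diagonal entries are nonzero. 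Pinning down the boundary column and row in this way is the step I expect to demand the most care.
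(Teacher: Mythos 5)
Your proof is correct and, despite the Jacobian-criterion dressing, runs on the same engine as the paper's: one private boundary variable per column $t+k$ (resp.\ per row $n+\ell$), and propagation of dependence on that variable up the column via Lemma~\ref{lemma:key_lemma} --- which is precisely the content of Proposition~\ref{proposition:key_lemma}, the tool the paper invokes. The paper simply dispenses with the Jacobian matrix: since $x_{t+k,n-1}(z^{*})$ depends on $w_k$ while $\mathbb{K}_{t,n}$ together with the previously treated values involves only initial variables from columns $\le t+k-1$ (and rows $\le n-1$), each new value is transcendental over the field generated so far, and independence follows by successive adjunction (horizontal family first, then the vertical family over the enlarged field $K$). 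Your block-triangularity observation is the same dependency bookkeeping in matrix form, at the cost of having to treat values equal to $\infty$ separately, which your $\partial_w(1/x)$ clause does handle.

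Where you genuinely diverge is your last paragraph, and you are right that this is the delicate point --- in fact you are more careful than the paper. The paper's proof applies Proposition~\ref{proposition:key_lemma} at $(s,m)=(t+1,n-1)$ without verifying its hypothesis that $x_{t,n-1}(z^{*}), x_{t,n-2}(z^{*}),\cdots$ are not constant singularities, and this does not follow from $z^{*}\in\mathbb{P}^1\left(\overline{\mathbb{K}_{t,n}}\right)$ alone: for the discrete KdV equation, a $z^{*}$ with $x_{t,n-1}(z^{*})=0$ lies in $\overline{\mathbb{K}_{t,n-1}}\subset\overline{\mathbb{K}_{t,n}}$, yet then $x_{t+1,n-1}(z^{*})=\infty$, so under the paper's convention that $\infty$ counts as an element of the base field the conclusion of the lemma fails for such $z^{*}$. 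The statement is thus really meant for (and is only ever used for, in the proof of Theorem~\ref{theorem:main_theorem}) the case where $(t,n)$ carries the first singularity, i.e.\ $z^{*}\in\Delta_{t,n}$, where the needed non-singularity of column $t$ below $n$ and row $n$ left of $t$ is automatic --- by Lemma~\ref{lemma:trdeg_two} a singularity at $(t,\mu)$, $\mu<n$, would put $z^{*}$ in $\mathbb{P}^1\left(\overline{\mathbb{K}_{t,\mu}}\right)$, contradicting $z^{*}\in\Delta_{t,n}$ (equivalently, by minimality of the first singularity, Theorem~\ref{theorem:first_singularity}). So your restriction to the ``principal case'' is not a gap in your argument but a correct diagnosis of an unstated hypothesis in the paper's. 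One nitpick: Proposition~\ref{proposition:confining} is not needed for this step; confinement of the pattern plays no role, only the location of its minimum.
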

\begin{proof}
	For each $j \in \mathbb{Z}_{\ge 1}$, let $n'_j$ be the least nonnegative integer that satisfy
	\[
		(t + j, n - 1 - n'_j) \in H_0
	\]
	and let
	\[
		w_j = x_{t + j, n - 1 - n'_j}.
	\]
	Note that $w_j$ is an initial variable different from $z$ and does not belong to $\mathbb{K}_{t, n}$.
	Then, it follows from Proposition~\ref{proposition:key_lemma} that $x_{t + 1, n - 1}(z^{*})$ depends on $w_1$.
	Therefore, $x_{t + 1, n - 1}(z^{*})$ is transcendental over $\mathbb{K}_{t, n}$.
	Using Proposition~\ref{proposition:key_lemma} again, one can show that $x_{t + 2, n - 1}(z^{*})$ is transcendental over $\mathbb{K}_{t, n}(x_{t + 1, n - 1}(z^{*}))$.
	Repeating this procedure, one obtains that
	\[
		x_{t + 1, n - 1}(z^{*}), x_{t + 2, n - 1}(z^{*}), x_{t + 3, n - 1}(z^{*}), \cdots
	\]
	are algebraically independent over $\mathbb{K}_{t, n}$.

	Let
	\[
		K = \mathbb{K}_{t, n}(x_{t + 1, n - 1}(z^{*}), x_{t + 2, n - 1}(z^{*}), x_{t + 3, n - 1}(z^{*}), \cdots).
	\]
	For each $\ell \in \mathbb{Z}_{\ge 1}$, let $t'_{\ell}$ be the least nonnegative integer that satisfies
	\[
		(t - 1 - t'_{\ell}, n + \ell) \in H_0
	\]
	and let
	\[
		w'_{\ell} = x_{t - 1 - t'_{\ell}, n + \ell}.
	\]
	Then, $w'_j$ is an initial variable different from $z$ and does not belong to $K$.
	Therefore, one can show in a similar way to the above that
	\[
		x_{t - 1, n + 1}(z^{*}), x_{t - 1, n + 2}(z^{*}), x_{t - 1, n + 2}(z^{*}), \cdots
	\]
	are algebraically independent over $K$,
	which completes the proof.
\end{proof}

\begin{lemma}\label{lemma:algebraic_independence_hard}
	Let $z^{*} \in \mathbb{P}^1 \left( \overline{\mathbb{K}} \right)$ and suppose that $x_{t, n}(z^{*}) = x^{*} \in \mathbb{P}^1(\mathbb{C})$ is the first singularity of a movable pattern.
	Suppose that the equation satisfies the basic pattern condition for $x^{*}$.
	Then, $x_{t - 1, n - 1}(z^{*})$ is algebraic over $\mathbb{C}(x_{t - 1, n}(z^{*}), x_{t, n - 1}(z^{*}))$.
	In particular, $x_{t - 1, n}(z^{*})$ and $x_{t, n - 1}(z^{*})$ are algebraically independent over $\mathbb{C}$.
\end{lemma}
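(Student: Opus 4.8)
The plan is to exploit the single algebraic relation that the triple $(B^*, C^*, D^*) := (x_{t-1,n}(z^*), x_{t,n-1}(z^*), x_{t-1,n-1}(z^*))$ satisfies, namely $\Phi(B^*, C^*, D^*) = x^*$, and to show that it must be a relation genuinely involving $D$. Since the first singularity of a movable pattern sits at a point of $H \setminus H_0$ whose three predecessors are not constant singularities, and since $\mathbb{C}$ is algebraically closed, I would first record that $B^*, C^*, D^*$ all lie in $\overline{\mathbb{K}} \setminus \mathbb{C}$ and are therefore transcendental over $\mathbb{C}$. By Corollary~\ref{corollary:not_zero_over_zero} I also have the two facts I will lean on: $\operatorname{trdeg}_{\mathbb{C}} \mathbb{C}(B^*, C^*, D^*) = 2$ and $\Phi(B^*, C^*, D^*) = x^*$. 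Given the trdeg-$2$ fact, the two assertions of the lemma are equivalent, so it suffices to prove that $B^*$ and $C^*$ are algebraically independent over $\mathbb{C}$.

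The first main step is to locate a ``good'' defining relation. Writing $\Phi = \Phi_1/\Phi_2$ in lowest terms, the identity $\Phi(B^*, C^*, D^*) = x^*$ says that the (finite) triple $(B^*, C^*, D^*)$ is a zero of the polynomial $\Phi_1 - x^* \Phi_2$ (or of $\Phi_2$ when $x^* = \infty$); since this polynomial is non-constant, the triple lies on at least one irreducible factor $F$ of the relation $\Phi = x^*$. The basic pattern condition for $x^*$ (Definition~\ref{definition:basic_pattern_condition}) forces every such $F$ to be of the form $B - \alpha$ or $C - \alpha$, or else to satisfy $\partial_D F \ne 0$. The factors $B - \alpha$ and $C - \alpha$ are excluded for our point, because $B^*, C^* \notin \mathbb{C}$ makes $B^* - \alpha$ and $C^* - \alpha$ nonzero. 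Therefore the factor $F$ our triple satisfies genuinely depends on $D$: we obtain $F(B^*, C^*, D^*) = 0$ with $\partial_D F \ne 0$.

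The second main step is the algebraic independence argument, which I would run by contradiction. Suppose $B^*, C^*$ were algebraically dependent; as both are transcendental over $\mathbb{C}$, this gives $\operatorname{trdeg}_{\mathbb{C}} \mathbb{C}(B^*, C^*) = 1$, whence by the trdeg-$2$ fact $D^*$ is transcendental over $\mathbb{C}(B^*, C^*)$. Expanding $F = \sum_k c_k(B, C) D^k$, the vanishing $F(B^*, C^*, D^*) = 0$ together with the transcendence of $D^*$ over $\mathbb{C}(B^*, C^*)$ forces $c_k(B^*, C^*) = 0$ for every $k$. The kernel of the evaluation map $\mathbb{C}[B, C] \to \overline{\mathbb{K}}$, $B \mapsto B^*$, $C \mapsto C^*$, is a height-one prime of the UFD $\mathbb{C}[B, C]$, hence principal, say $(G)$ with $G$ irreducible; thus $G$ divides every $c_k$ and therefore $G \mid F$ in $\mathbb{C}[B, C, D]$. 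Since $F$ is irreducible and $G$ is a non-unit not involving $D$, this forces $F = cG$ with $c \in \mathbb{C}^{\times}$, contradicting $\partial_D F \ne 0$. Hence $B^*, C^*$ are algebraically independent, and, by the trdeg-$2$ fact, $D^*$ is algebraic over $\mathbb{C}(B^*, C^*)$.

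I expect the delicate point to be the first step: pinning down that the triple lands on a factor with $\partial_D F \ne 0$ rather than on a spurious $B - \alpha$ or $C - \alpha$. This is exactly where both hypotheses enter, since the basic pattern condition restricts the shape of the irreducible factors while the first-singularity (movable-pattern) property, via $B^*, C^* \notin \mathbb{C}$, rules out the degenerate factors. The remaining algebra is routine dimension theory, the only mild care being the height-one/principal-kernel identification, which is what converts ``$c_k(B^*, C^*) = 0$ for all $k$'' into the divisibility $G \mid F$.
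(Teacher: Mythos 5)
Your proposal is correct and takes essentially the same route as the paper's proof: Corollary~\ref{corollary:not_zero_over_zero} supplies $\operatorname{trdeg}_{\mathbb{C}} \mathbb{C}(B^*, C^*, D^*) = 2$ and the direct evaluation $\Phi(B^*, C^*, D^*) = x^*$, the basic pattern condition together with $B^*, C^* \notin \mathbb{P}^1(\mathbb{C})$ pins down an irreducible factor $F$ with $\partial_D F \ne 0$, and a dimension-theoretic argument concludes. The only difference is cosmetic: where the paper invokes Lemma~\ref{lemma:polynomial_nonvanishing} (itself resting on Lemma~\ref{lemma:function_field}), you re-derive its content inline, running the height-one principal-kernel argument by contradiction instead of citing it.
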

\begin{proof}
	By Corollary~\ref{corollary:not_zero_over_zero}, we have
	\[
		\operatorname{trdeg}_{\mathbb{C}} \mathbb{C}(x_{t - 1, n}(z^{*}), x_{t, n - 1}(z^{*}), x_{t - 1, n - 1}(z^{*})) = 2.
	\]
	Since none of $x_{t - 1, n}(z^{*})$, $x_{t, n - 1}(z^{*})$ and $x_{t - 1, n - 1}(z^{*})$ is $\infty$ and
	\[
		\Phi(x_{t - 1, n}(z^{*}), x_{t, n - 1}(z^{*}), x_{t - 1, n - 1}(z^{*})) = x^{*},
	\]
	there exists an irreducible factor $F = F(B, C, D) \in \mathbb{C}[B, C, D]$ of the relation $\Phi(B, C, D) = x^{*}$ such that
	\[
		F(x_{t - 1, n}(z^{*}), x_{t, n - 1}(z^{*}), x_{t - 1, n - 1}(z^{*})) = 0.
	\]
	According to the assumption, $x_{t - 1, n}(z^{*})$ and $x_{t, n - 1}(z^{*})$ are not constant singularities.
	Therefore, by the basic pattern condition for $x^{*}$, we have
	\[
		\partial_D F \ne 0.
	\]
	Since $\operatorname{trdeg}_{\mathbb{C}} \mathbb{C}(x_{t - 1, n}(z^{*}), x_{t, n - 1}(z^{*})) \ge 1$,
	it follows from Lemma~\ref{lemma:polynomial_nonvanishing} that $x_{t - 1, n - 1}(z^{*})$ is algebraic over $\mathbb{C}(x_{t - 1, n}(z^{*}), x_{t, n - 1}(z^{*}))$.
	The second statement follows from
	\[
		\operatorname{trdeg}_{\mathbb{C}} \mathbb{C}(x_{t - 1, n}(z^{*}), x_{t, n - 1}(z^{*}), x_{t - 1, n - 1}(z^{*})) = 2.
	\]
\end{proof}

\begin{proof}[Proof of Theorem~\ref{theorem:main_theorem}]
	By Lemmas~\ref{lemma:algebraic_independence_hard} and \ref{lemma:algebraic_independence_fresh},
	\[
		x_{t - 1, n}(z^{*}), x_{t - 1, n + 1}(z^{*}), x_{t - 1, n + 2}(z^{*}), \cdots \quad
	\]
	and
	\[
		x_{t, n - 1}(z^{*}), x_{t + 1, n - 1}(z^{*}), x_{t + 2, n - 1}(z^{*}), \cdots
	\]
	are algebraically independent over $\mathbb{C}$.
	Therefore, the statement follows from Lemma~\ref{lemma:main_lemma}.
\end{proof}

\begin{definition}[$D_{s, m}(\beta)$, $Z_{s, m}(\beta)$, $D(\gamma)$]\label{definition:degree_divisor}
	For $(s, m) \in H$ and $\beta \in \mathbb{P}^1(\mathbb{C})$, we define a divisor $D_{s, m}(\beta) \in \operatorname{Div} \mathbb{P}^1 \left( \overline{\mathbb{K}} \right)$ as
	\[
		D_{s, m}(\beta) = \operatorname{div} \left( x_{s, m} = \beta \mid \Delta_{s, m} \right),
	\]
	where the definition of the right-hand side is given in Definition~\ref{definition:divisor} and we think of
	\[
		x_{s, m} \colon \mathbb{P}^1 \left( \overline{\mathbb{K}} \right) \to \mathbb{P}^1 \left( \overline{\mathbb{K}} \right); \
		z \mapsto x_{s, m}(z)
	\]
	as a rational function in $z$.
	Let $Z_{s, m}(\beta)$ be its degree, i.e.,
	\[
		Z_{s, m}(\beta) = \deg D_{s, m}(\beta),
	\]
	which is a rigorous definition of ``the number of spontaneous occurrences of $x_{s, m} = \beta$ counted with multiplicity.''

	For $\gamma \in \mathbb{P}^1(\mathbb{C})$, we define a prime divisor $D(\gamma) \in \operatorname{Div} \mathbb{P}^1 \left( \overline{\mathbb{K}} \right)$ as
	\[
		D(\gamma) = \gamma,
	\]
	which corresponds to the fixed pattern $z = \gamma$.
\end{definition}

\begin{lemma}\label{lemma:divisor_meaning}
	Let $z^{*} \in \mathbb{P}^1 \left( \overline{\mathbb{K}} \right)$ and let $\beta \in \mathbb{P}^1(\mathbb{C})$.
	Then, $z^{*} \in \operatorname{supp} D_{t, n}(\beta)$ if and only if $x_{t, n}(z^{*}) = \beta$ and it is the first singularity of the pattern corresponding to $z = z^{*}$.
	Moreover, the coefficient of $z^{*}$ in $D_{t, n}(\beta)$ coincides with the multiplicity of the constant singularity $x_{t, n}(z^{*}) = \beta$.
\end{lemma}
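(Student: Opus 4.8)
The plan is to read off both assertions directly from the construction of $D_{t,n}(\beta)$, with Proposition~\ref{proposition:p1_decomposition} providing the only input beyond unwinding definitions. First I would combine Definition~\ref{definition:degree_divisor} with Definition~\ref{definition:divisor}: since $D_{t,n}(\beta) = \operatorname{div}(x_{t,n} = \beta \mid \Delta_{t,n})$ is by construction the restriction to $\Delta_{t,n}$ of the divisor of solutions of $x_{t,n}(z) = \beta$, its support is
\[
	\operatorname{supp} D_{t,n}(\beta) = \{ z^{*} \in \Delta_{t,n} \mid x_{t,n}(z^{*}) = \beta \},
\]
and the coefficient attached to each such $z^{*}$ is the multiplicity of the equation $x_{t,n}(z) = \beta$ at $z = z^{*}$, where $x_{t,n}$ is regarded as a rational function in $z$ over $\overline{\mathbb{K}}$.

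For the first statement, I would observe that the membership condition ``$z^{*} \in \Delta_{t,n}$ and $x_{t,n}(z^{*}) = \beta$'' is precisely the spontaneous occurrence of $x_{t,n}(z^{*}) = \beta$ in the sense of Definition~\ref{definition:spontaneous_occurrence_rigorous}. Its equivalence with ``$x_{t,n}(z^{*}) = \beta$ is the first singularity of the pattern corresponding to $z = z^{*}$'' is exactly the equivalence stated in Definition~\ref{definition:spontaneous_occurrence_rigorous} itself: one direction, that a first singularity at $(t,n)$ forces $z^{*} \in \Delta_{t,n}$, is Proposition~\ref{proposition:p1_decomposition}, while the converse uses that the sets $\Delta_{s,m}$ are pairwise disjoint (Remark~\ref{remark:partition}). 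Concretely, if $z^{*} \in \Delta_{t,n}$ with $x_{t,n}(z^{*}) = \beta \in \mathbb{P}^1(\mathbb{C})$, then $x_{t,n}(z^{*})$ is a constant singularity (its value lies in $\mathbb{P}^1(\mathbb{C})$) and the pattern is movable (since $z^{*} \in \Delta_{t,n} \subseteq \mathbb{P}^1(\overline{\mathbb{K}}) \setminus \mathbb{P}^1(\mathbb{C})$); its first singularity exists by Theorem~\ref{theorem:first_singularity}, lands in some $\Delta_{s,m}$ by Proposition~\ref{proposition:p1_decomposition}, and disjointness of the partition then pins $(s,m) = (t,n)$.

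For the ``moreover'' part, I would match the two notions of multiplicity. The coefficient of $z^{*}$ in $D_{t,n}(\beta)$ is, by the first paragraph, the multiplicity of $x_{t,n}(z) = \beta$ at $z = z^{*}$ as a rational function in $z$; this is exactly how the multiplicity of the constant singularity $x_{t,n}(z^{*}) = \beta$ was defined in Definition~\ref{definition:constant_singularity}. The identification is uniform in the two cases $\beta \ne \infty$ and $\beta = \infty$, since both the divisor and the constant-singularity multiplicity are governed by the same rational-function multiplicity at $z = z^{*}$.

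I do not expect a genuine obstacle: the lemma is a translation between the divisor bookkeeping of Definition~\ref{definition:degree_divisor} and the singularity language of Definitions~\ref{definition:constant_singularity} and~\ref{definition:spontaneous_occurrence_rigorous}. The only point demanding care, rather than difficulty, is that the ``first singularity'' characterization genuinely needs both halves of the partition statement: Proposition~\ref{proposition:p1_decomposition} gives that a first singularity at $(t,n)$ lands in $\Delta_{t,n}$, but it is the disjointness of the $\Delta_{s,m}$ in Remark~\ref{remark:partition} that excludes the possibility of $z^{*} \in \Delta_{t,n}$ producing its first singularity at a position other than $(t,n)$.
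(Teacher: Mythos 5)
Your proposal is correct and follows essentially the same route as the paper, whose entire proof is ``Clear by Remark~\ref{remark:partition} and Proposition~\ref{proposition:p1_decomposition}'': you unwind Definitions~\ref{definition:degree_divisor} and~\ref{definition:divisor}, use Proposition~\ref{proposition:p1_decomposition} for the forward direction, and the disjointness in Remark~\ref{remark:partition} (together with Theorem~\ref{theorem:first_singularity} for existence of the first singularity) for the converse. Your write-up merely makes explicit the steps the paper leaves implicit, and correctly notes that both halves of the partition statement are needed.
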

\begin{proof}
	Clear by Remark~\ref{remark:partition} and Proposition~\ref{proposition:p1_decomposition}.
\end{proof}

\begin{lemma}\label{lemma:divisor_equality}
	Under the same assumptions and notations as in Theorem~\ref{theorem:main_degree}, the following equalities of divisors hold:
	\begin{align*}
		& \operatorname{div} \left( x_{t, n} = \alpha_j \mid \Delta_{s, m}, x_{s, m} = \beta_i \right)
		= \operatorname{mult}^{\text{basic}, \beta_i}_{t - s, n - m}(\alpha_j) D_{s, m}(\beta_i), \\
		& \operatorname{div} \left( x_{t, n} = \alpha_j \mid \Delta_{s, m} \right)
		= \sum_i \operatorname{mult}^{\text{basic}, \beta_i}_{t - s, n - m}(\alpha_j) D_{s, m}(\beta_i), \\
		& \operatorname{div} \left( x_{t, n} = \alpha_j \mid \mathbb{P}^1 \left( \overline{\mathbb{K}} \right) \setminus \mathbb{P}^1(\mathbb{C}) \right)
		= \sum_{(s, m) \in H} \sum_i \operatorname{mult}^{\text{basic}, \beta_i}_{t - s, n - m}(\alpha_j) D_{s, m}(\beta_i), \\
		& \operatorname{div} \left( x_{t, n} = \alpha_j \mid \mathbb{P}^1(\mathbb{C}) \right)
		= \sum_{\ell} \operatorname{mult}^{\gamma_{\ell}}_{t, n}(\alpha_j) D(\gamma_{\ell}), \\
		& \operatorname{div} \left( x_{t, n} = \alpha_j \right)
		= \sum_i \sum_{(s, m) \in H} \operatorname{mult}^{\text{basic}, \beta_i}_{t - s, n - m}(\alpha_j) D_{s, m}(\beta_i)
		+ \sum_{\ell} \operatorname{mult}^{\gamma_{\ell}}_{t, n}(\alpha_j) D(\gamma_{\ell}).
	\end{align*}
\end{lemma}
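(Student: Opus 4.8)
The plan is to prove the five equalities in the order stated, each reducing to the previous one, with the first carrying essentially all of the geometric content and the remaining four amounting to bookkeeping over disjoint partitions of $\mathbb{P}^1\left(\overline{\mathbb{K}}\right)$. The structural fact I will use repeatedly is that the restricted-preimage divisor $\operatorname{div}\left(x_{t,n} = \alpha_j \mid T\right)$ is additive in $T$: if $T = \bigsqcup_\lambda T_\lambda$ is a disjoint union, then $\operatorname{div}\left(x_{t,n} = \alpha_j \mid T\right) = \sum_\lambda \operatorname{div}\left(x_{t,n} = \alpha_j \mid T_\lambda\right)$, which is immediate from Definition~\ref{definition:divisor}.

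First I would establish the first equality, which is the divisor-theoretic reformulation of Theorem~\ref{theorem:main_theorem}. By Lemma~\ref{lemma:divisor_meaning} the support of $D_{s,m}(\beta_i)$ consists exactly of those $z^* \in \Delta_{s,m}$ with $x_{s,m}(z^*) = \beta_i$, and the coefficient of such a $z^*$ is the multiplicity $r$ of the first singularity $x_{s,m}(z^*) = \beta_i$. Fix one such $z^*$. The basic pattern condition for $\beta_i$ holds by hypothesis, so Theorem~\ref{theorem:main_theorem} applies and identifies the pattern of $z^*$ on $H_{\ge(s,m)}$ with the $(s,m)$-translate of the basic pattern for $x^{\text{basic}}_{0,0} = \beta_i$, all multiplicities scaled by $r$. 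Consequently $x_{t,n}(z^*) = \alpha_j$ occurs with multiplicity exactly $r \cdot \operatorname{mult}^{\text{basic}, \beta_i}_{t-s, n-m}(\alpha_j)$. Since the scalar $\operatorname{mult}^{\text{basic}, \beta_i}_{t-s, n-m}(\alpha_j)$ does not depend on $z^*$, the multiplicities scale uniformly across the support, which is precisely the assertion that the left-hand divisor equals $\operatorname{mult}^{\text{basic}, \beta_i}_{t-s, n-m}(\alpha_j)\, D_{s,m}(\beta_i)$. When $\operatorname{mult}^{\text{basic}, \beta_i}_{t-s, n-m}(\alpha_j) = 0$ both sides vanish, because Theorem~\ref{theorem:main_theorem} then forces $x_{t,n}(z^*) \ne \alpha_j$ (recall that $x_{t,n}(z^*) = \alpha_j \in \mathbb{P}^1(\mathbb{C})$ would itself be a constant singularity), and the same holds when $(t,n) \not\ge (s,m)$.

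For the second equality I would partition the support of $\operatorname{div}\left(x_{t,n} = \alpha_j \mid \Delta_{s,m}\right)$ according to the starting value. Any $z^*$ in this support satisfies $z^* \in \Delta_{s,m} \subset \mathbb{P}^1\left(\overline{\mathbb{K}}\right) \setminus \mathbb{P}^1(\mathbb{C})$, so by Proposition~\ref{proposition:p1_decomposition} its pattern is movable with first singularity at $(s,m)$, giving $x_{s,m}(z^*) = \beta \in \mathbb{P}^1(\mathbb{C})$; since this pattern contains the singular value $\alpha_j$ at $(t,n)$, the hypothesis that all patterns containing $\alpha_j$ are known forces $\beta \in \{\beta_1, \ldots, \beta_I\}$. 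Thus the support partitions by the condition $x_{s,m} = \beta_i$, and additivity together with the first equality yields the second. The third equality then follows by summing the second over the disjoint decomposition $\mathbb{P}^1\left(\overline{\mathbb{K}}\right) \setminus \mathbb{P}^1(\mathbb{C}) = \bigsqcup_{(s,m) \in H} \Delta_{s,m}$ of Remark~\ref{remark:partition}; the sum is finite since $\operatorname{mult}^{\text{basic}, \beta_i}_{t-s, n-m}(\alpha_j)$ vanishes unless $(s,m) \in H_{\le(t,n)}$. The fourth equality is a definition chase: the coefficient of $\gamma \in \mathbb{P}^1(\mathbb{C})$ in $\operatorname{div}\left(x_{t,n} = \alpha_j \mid \mathbb{P}^1(\mathbb{C})\right)$ is by definition the multiplicity of $x_{t,n}(z) = \alpha_j$ at $z = \gamma$, namely $\operatorname{mult}^{\gamma}_{t,n}(\alpha_j)$, while any $\gamma$ with $x_{t,n}(\gamma) = \alpha_j$ gives a fixed pattern containing $\alpha_j$ and hence appears among $\gamma_1, \ldots, \gamma_L$, so the prime divisors $D(\gamma_\ell) = \gamma_\ell$ exhaust the support. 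Finally, adding the third and fourth equalities along the partition $\mathbb{P}^1\left(\overline{\mathbb{K}}\right) = \mathbb{P}^1(\mathbb{C}) \sqcup \left(\mathbb{P}^1\left(\overline{\mathbb{K}}\right) \setminus \mathbb{P}^1(\mathbb{C})\right)$ gives the fifth.

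The hard part will not be any single computation but the exhaustion bookkeeping: verifying that every $z^*$ contributing to a left-hand side has its starting value captured by the finite lists $\beta_1, \ldots, \beta_I$ or $\gamma_1, \ldots, \gamma_L$. This is exactly where the standing assumption that \emph{all} singularity patterns containing a value $\alpha_j$ are known becomes indispensable, and where one must be careful to invoke Proposition~\ref{proposition:p1_decomposition} so that membership in $\Delta_{s,m}$ genuinely certifies $(s,m)$ as the unique first singularity. Once the additivity of the restricted divisor over disjoint conditions is recorded, the uniform multiplicity scaling by $r$ delivered by Theorem~\ref{theorem:main_theorem} makes the first equality — and therefore the whole chain — fall into place.
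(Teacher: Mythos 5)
Your proposal is correct and takes essentially the same route as the paper's own proof: the first equality from Theorem~\ref{theorem:main_theorem} combined with Lemma~\ref{lemma:divisor_meaning}, the second by summing over $i$ using the hypothesis that all patterns containing $\alpha_j$ are known, the third via the disjoint decomposition $\mathbb{P}^1\left(\overline{\mathbb{K}}\right) \setminus \mathbb{P}^1(\mathbb{C}) = \bigsqcup_{(s, m) \in H} \Delta_{s, m}$ of Remark~\ref{remark:partition}, the fourth by unwinding the definitions of $D(\gamma_\ell)$ and $\operatorname{mult}^{\gamma_\ell}_{t, n}(\alpha_j)$, and the fifth by adding the third and fourth. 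Your write-up simply makes explicit a few steps the paper leaves implicit (the uniform scaling of multiplicities by $r$, the vanishing of both sides when $(t, n) \not\ge (s, m)$, and the use of the disjointness of the $\Delta_{s, m}$ to certify the first singularity), which is harmless.
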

\begin{proof}
	The first equality follows from Theorem~\ref{theorem:main_theorem} and Lemma~\ref{lemma:divisor_meaning}.
	Summing the first equality for $i$, we have the second equality since $x_{s, m}$ cannot become $\alpha_j$ unless $x_{t, n} = \beta_i$ for some $i$.
	Using
	\[
		\mathbb{P}^1 \left( \overline{\mathbb{K}} \right) \setminus \mathbb{P}^1(\mathbb{C}) = \bigsqcup_{(s, m) \in H} \Delta_{s, m},
	\]
	we have
	\begin{align*}
		\operatorname{div} \left( x_{t, n} = \alpha_j \mid \mathbb{P}^1 \left( \overline{\mathbb{K}} \right) \setminus \mathbb{P}^1(\mathbb{C}) \right)
		&= \sum_{(s, m) \in H} \operatorname{div} \left( x_{t, n} = \alpha_j \mid \Delta_{s, m} \right) \\
		&= \sum_{(s, m) \in H} \sum_i \operatorname{mult}^{\text{basic}, \beta_i}_{t - s, n - m}(\alpha_j) D_{s, m}(\beta_i),
	\end{align*}
	which is the third equality.
	The fourth equality immediately follows from the definition of $D(\gamma_{\ell})$ and $\operatorname{mult}^{\gamma_{\ell}}_{t, n}(\alpha_j)$.
	The last equality is the sum of the third and fourth equalities.
\end{proof}

\begin{proof}[Proof of Theorem~\ref{theorem:main_degree}]
	The theorem immediately follows from the last equality of Lemma~\ref{lemma:divisor_equality} since
	\[
		\deg_z x_{t, n}(z) = \deg (\operatorname{div} (x_{t, n} = \alpha_j)), \quad
		\deg D_{s, m}(\beta_i) = Z_{s, m}(\beta_i), \quad
		\deg D(\gamma_{\ell}) = 1.
	\]
\end{proof}

\begin{proof}[Proof of Theorem~\ref{theorem:degree_solvable}]
	If $(s, m) \in H_0$, then we have $Z_{s, m}(\beta_i) = 0$ since $Z_{s, m}(\beta_i)$ corresponds to a movable pattern.
	Note that for each $(t, n) \in H \setminus H_0$, the degree relations at $(t, n)$ consist of $J - 1$ linear equations on $Z_{s, m}(\beta_i)$ ($(s, m) \in H_{\le (t, n)} \setminus H_0$ and $i = 1, \ldots, I$).
	In this proof, we think of $Z_{s, m}(\beta_i)$ (resp.\ the degree relations) just as unknown variables (resp.\ linear equations).
	We consider whether or not the system of linear equations on these variables has a unique solution.

	First, we suppose that $I \ge J$.
	Let $(t, n) \in H \setminus H_0$ and let $r = \# (H_{\le (t, n)} \setminus H_0)$.
	Then, the number of the variables $Z_{s, m}(\beta_i)$ ($(s, m) \in H_{\le (t, n)} \setminus H_0$ and $i = 1, \ldots, I$) is $r I$.
	On the other hand, the number of the linear equations in this region is $r (J - 1)$ since we have $J - 1$ equations for each $(s, m)$.
	Therefore, the system of the linear equations in this region is under-determining and thus the solution is not unique.

	Next, we suppose that $\{ \beta_1, \ldots, \beta_I \} \subset S$ and $I < J$.
	Excluding some elements of $S$ if necessary, we may assume that $J = I + 1$.
	Since $\{ \beta_1, \ldots, \beta_{J - 1} \} \subset S = \{ \alpha_1, \ldots, \alpha_J \}$, we may also assume that
	\[
		\alpha_1 = \beta_1, \ldots, \alpha_{J - 1} = \beta_{J - 1}.
	\]
	That is,
	\begin{itemize}
		\item 
		we focus on the singular values $\alpha_1, \ldots, \alpha_{J - 1}, \alpha_J$,

		\item
		a movable pattern contains none of these singular values unless its starting value is one of $\alpha_1, \ldots, \alpha_{J - 1}$,

		\item
		the equation has no movable pattern starting with $\alpha_J$.

	\end{itemize}	
	For $j < J$, the degree relation at $(t, n)$ between the values $\alpha_j$ and $\alpha_J$ is
	\begin{multline}\label{equation:degree_relation_solvable}
		\sum^{J - 1}_{i = 1} \sum_{(s, m) \in H} \operatorname{mult}^{\text{basic}, \alpha_i}_{t - s, n - m}(\alpha_j) Z_{s, m}(\alpha_i)
		+ \sum^L_{\ell = 1} \operatorname{mult}^{\gamma_{\ell}}_{t, n}(\alpha_j) \\
		= \sum^{J - 1}_{i = 1} \sum_{(s, m) \in H} \operatorname{mult}^{\text{basic}, \alpha_i}_{t - s, n - m}(\alpha_J) Z_{s, m}(\alpha_i)
		+ \sum^L_{\ell} \operatorname{mult}^{\gamma_{\ell}}_{t, n}(\alpha_J).
	\end{multline}
	Note that
	\[
		\operatorname{mult}^{\text{basic}, \alpha_i}_{0, 0}(\alpha_j) = \delta_{i, j}
	\]
	for $i = 1, \ldots, J - 1$ and $j = 1, \ldots, J$.
	Thus, while the right-hand side of \eqref{equation:degree_relation_solvable} contains none of $Z_{t, n}(\alpha_1), \ldots, Z_{t, n}(\alpha_{J - 1})$,
	the left-hand side is
	\[
		Z_{t, n}(\alpha_j) + \left( \text{linear combination of $Z_{s, m}(\alpha_i)$ ($(s, m) \in H_{\le (t, n)} \setminus \left( H_0 \cup \{ (t, n) \} \right) $)} \right).
	\]
	Therefore, the degree relation at $(t, n)$ between $\alpha_j$ ($j < J$) and $\alpha_J$ can be thought of as defining an evolution of $Z_{t, n}(\alpha_j)$.
	Hence, the system of linear equations has a unique solution.
\end{proof}

\begin{remark}
	Since $Z_{t, n}(\beta_i)$ are defined by divisors, the degree relations as a system of linear equations always have a solution.
	This is why we only considered the uniqueness of the solution in the proof.
\end{remark}

\begin{proof}[Proof of Proposition~\ref{proposition:solitary}]
	Let $x_{t, n}(z^{*}) = x^{*} \in \mathbb{C}$ be the first singularity of a pattern.
	
	First, we show that $x_{t, n + 1}(z^{*})$ depends on at least one of the variables $x_{t - t', n + 1}$ and $x_{t - 1 - t'', n}$, where $t'$ (resp.\ $t''$) is the least nonnegative integer satisfying $(t - t', n + 1) \in H_0$ (resp.\ $(t - 1 - t', n) \in H_0$).
	If $(t, n + 1) \in H_0$ (which can occur only if $(t, n) \in H_0$), then $x_{t, n + 1}$ itself is an initial variable (i.e., $t' = 0$) and there is nothing to prove.
	From here on, we consider the case where $(t, n + 1) \in H \setminus H_0$.
	By Proposition~\ref{proposition:key_lemma}, $x_{t - 1, n + 1}(z^{*})$ depends on $x_{t - t', n + 1}$.
	On the other hand, $x_{t - 1, n}(z) \in \mathbb{K}_{t - 1, n}(z)$ depends on $x_{t - t'', n}$, and Proposition~\ref{proposition:p1_decomposition} implies that $z^{*}$ is transcendental over $\mathbb{K}_{t - 1, n}$.
	Therefore, $x_{t - 1, n}(z^{*})$ depends on $x_{t - t'', n}$.
	Since $x_{t - 1, n + 1}(z^{*})$ and $x_{t - 1, n}(z^{*})$ depend on different initial variables, they are algebraically independent and we have
	\[
		x_{t, n + 1}(z^{*}) = \Phi(x_{t - 1, n + 1}(z^{*}), x^{*}, x_{t - 1, n}(z^{*})).
	\]
	Since it does not belong to $\mathbb{P}^1(\mathbb{C})$, $\Phi(B, x^{*}, D)$ depends on either $B$ or $D$.
	Substituting $x_{t - 1, n + 1}(z^{*})$ and $x_{t - 1, n}(z^{*})$ for $B$ and $D$ respectively, we conclude that $x_{t, n + 1}(z^{*})$ depends on either $x_{t - t', n + 1}$ or $x_{t - 1 - t'', n}$.
	Interchanging the $t$- and $n$- axes, one can show that $x_{t + 1, n}(z^{*})$ depends on either $x_{t + 1, n - n'}$ or $x_{t, n - 1 - n''}$, where $n'$ (resp.\ $n''$) is the least nonnegative integer satisfying $(t + 1, n - n') \in H_0$ (resp.\ $(t, n - 1 - n'') \in H_0$).

	Next, we show that $x_{t + 1, n + 1}(z^{*})$ is not a constant singularity.	
	Since $x_{t, n + 1}(z^{*})$ and $x_{t + 1, n}(z^{*})$ depend on different initial variables, they are algebraically independent over $\mathbb{C}$.
	Therefore, using $\Phi(B, C, x^{*}) \notin \mathbb{P}^1(\mathbb{C})$, we have
	\[
		x_{t + 1, n + 1}(z^{*}) = \Phi(x_{t, n + 1}(z^{*}), x_{t + 1, n}(z^{*}), x^{*}) \notin \mathbb{P}^1(\mathbb{C}).
	\]
	That is, $x_{t + 1, n + 1}(z^{*})$ is not a constant singularity.

	Since none of $x_{t, n + 1}(z^{*})$, $x_{t + 1, n + 1}(z^{*})$ and $x_{t + 1, n}(z^{*})$ is a constant singularity, it follows from Proposition~\ref{proposition:confining} that this pattern has no constant singularity other than $x_{t, n}(z^{*}) = x^{*}$, which completes the proof.
\end{proof}

\begin{proof}[Proof of Proposition~\ref{proposition:find_hidden_value}]
	Since $\deg_z x_{t, n}(z) \ge 1$, there exists $z^{*} \in \mathbb{P}^1 \left( \overline{\mathbb{K}} \right)$ such that $x_{t, n}(z^{*}) = \alpha$.
	Consider the pattern corresponding to $z = z^{*}$.
	If it is movable, then its starting point is not $(t, n)$ since its starting value is not $\alpha$.
	If the pattern is fixed, then the starting point is not $(t, n)$ since $(t, n) \in H \setminus H_0$.
	In both cases, the pattern has at least two constant singularities, $x_{t, n}(z^{*}) = \alpha$ and the starting point, and is not solitary.
\end{proof}

\begin{proof}[Proof of Proposition~\ref{proposition:check_hidden_value}]
	First, suppose that $x_{t, n}(z^{*}) = x^{*}$ is the first constant singularity of a movable pattern.
	We show that the relation $\Phi(B, C, D) = x^{*}$ has a non-trivial factor.
	By Corollary~\ref{corollary:not_zero_over_zero}, it is impossible for $x_{t, n}(z^{*})$ to become $x^{*}$ as $\frac{0}{0}$.
	Since none of $x_{t - 1, n}(z^{*})$, $x_{t, n - 1}(z^{*})$ and $x_{t - 1, n - 1}(z^{*})$ is $\infty$, there exists an irreducible factor $F \in \mathbb{C}[B, C, D]$ of the relation
	\[
		\Phi(B, C, D) = x^{*}
	\]
	such that
	\[
		F \left( x_{t - 1, n}(z^{*}), x_{t, n - 1}(z^{*}), x_{t - 1, n - 1}(z^{*}) \right) = 0.
	\]
	Since none of $x_{t - 1, n}(z^{*})$, $x_{t, n - 1}(z^{*})$ and $x_{t - 1, n - 1}(z^{*})$ belongs to $\mathbb{C}$, we have
	\[
		F \ne 
		B - \alpha, \
		C - \alpha, \
		D - \alpha \quad
		(\alpha \in \mathbb{C}).
	\]
	Therefore, $F$ is a non-trivial factor.
	
	Next, suppose that the relation $\Phi(B, C, D) = x^{*}$ has a non-trivial factor $F = F(B, C, D) \in \mathbb{C}[B, C, D]$.
	We show that the equation has a movable pattern starting with $x^{*}$.
	It follows from the $\partial$-factor condition that $F$ depends on both $B$ and $C$ since $F$ appears in the numerator of $\partial_B \Phi$ or $\partial_B \frac{1}{\Phi}$ (and $\partial_C \Phi$ or $\partial_C \frac{1}{\Phi}$) as a factor.
	Take a lattice point $(t, n) \in H \setminus H_0$ that satisfies
	\[
		\deg_z x_{t - 1, n}(z) \ge 1, \quad
		\deg_z x_{t_0, n - 1}(z) = \deg_z x_{t_0 + 1, n - 1}(z) = \cdots = \deg_z x_{t, n - 1}(z) = 0.
	\]
	Note that such a lattice point always exists and satisfies $(t, n) \ge (t_0, n_0)$.
	For example, we can take
	\[
		(t, n) = \begin{cases}
			(t_0 + 2, n_0 + 1) & ((t_0 + 1, n_0) \in H_0), \\
			(t_0 + 1, n_0) & ((t_0 + 1, n_0) \in H \setminus H_0).
		\end{cases}
	\]
	Since $x_{t, n - 1}(z)$ and $x_{t - 1, n - 1}(z)$ are independent of $z$, we simply write $x_{t, n - 1}(z^{*}) = x_{t, n - 1}$ and $x_{t - 1, n - 1}(z^{*}) = x_{t - 1, n - 1}$.

	Let $n'$ be the least nonnegative integer satisfying $(t, n - n') \in H_0$ and let $w = x_{t, n - n'}$.
	We think of $x_{t, n - 1}$ as a rational function in $w$, i.e., $x_{t, n - 1} = x_{t, n - 1}(w)$.
	Note that $x_{t - 1, n}$ and $x_{t - 1, n - 1}$ are independent of $w$.
	
	Consider the algebraic equation
	\[
		F(B, x_{t, n - 1}(w), x_{t - 1, n - 1}) = 0
	\]
	for $B$.
	Since $F(B, C, D)$ depends on both $B$ and $C$, at least one of the solutions $B = B^{*}$ must depend on $x_{t, n - 1}(w)$.
	That is, $B^{*}$ satisfies $F(B^{*}, x_{t, n - 1}(w), x_{t - 1, n - 1}) = 0$ and
	\[
		B^{*} \in \overline{\mathbb{C}(x_{t, n - 1}(w), x_{t - 1, n - 1})} \setminus \overline{\mathbb{C}(x_{t - 1, n - 1})}.
	\]
	In particular, $B^{*}$ depends on $w$ and thus is transcendental over $\mathbb{K}_{t - 1, n}$.

	Consider the algebraic equation
	\[
		x_{t - 1, n}(z) = B^{*}
	\]
	for $z$.
	Since $x_{t - 1, n}(z) \in \mathbb{K}_{t - 1, n}(z)$, $\deg_z x_{t - 1, n}(z) \ge 1$ and $B^{*}$ is transcendental over $\mathbb{K}_{t - 1, n}$, all the solutions to this algebraic equation depend on $B^{*}$.
	Since $B^{*}$ depends on $w$, so does the solution $z^{*}$.

	Let us show that $x_{t, n}(z^{*}) = x^{*}$.
	By construction, we have
	\[
		F(x_{t - 1, n}(z^{*}), x_{t, n - 1}, x_{t - 1, n - 1})
		= F(B^{*}, x_{t, n - 1}, x_{t - 1, n - 1})
		= 0.
	\]
	Since $F$ is an irreducible factor of the relation $\Phi(B, C, D) = x^{*}$, we have
	\[
		x_{t, n}(z^{*})
		= \Phi(x_{t - 1, n}(z^{*}), x_{t, n - 1}, x_{t - 1, n - 1})
		= x^{*}.
	\]

	Let us show that the first singularity of the pattern $z = z^{*}$ is $x_{t, n}(z^{*}) = x^{*}$.
	Since
	\[
		\deg_z x_{t_0, n - 1}(z) = \deg_z x_{t_0 + 1, n - 1}(z) = \cdots = \deg_z x_{t, n - 1}(z) = 0,
	\]
	it is sufficient to show that none of
	\[
		x_{t_0, n}(z^{*}), x_{t_0 + 1, n}(z^{*}), \ldots, x_{t - 1, n}(z^{*})
	\]
	is a constant singularity.
	Let $s < t$.
	Since $x_{s, n}(z) \in \mathbb{K}_{t - 1, n}(z)$ and $z^{*}$ is transcendental over $\mathbb{K}_{t - 1, n}$, we conclude that $x_{s, n}(z^{*})$ is not a constant singularity.
	Hence, the equation has a movable pattern whose first singularity is $x_{t, n}(z^{*}) = x^{*}$.
\end{proof}

\section{Examples}\label{section:examples}

In this section, we analyze several examples to see how to compute exact degrees with our theory.

\begin{example}\label{example:polynomial}
	The simplest class of examples are polynomial equations, i.e.,
	\[
		x_{t, n} = \Phi(x_{t - 1, n}, x_{t, n - 1}, x_{t - 1, n - 1})
	\]
	where $\Phi(B, C, D)$ is a polynomial in $B$, $C$ and $D$.
	In this case, $\infty$ cannot appear in any movable pattern.
	Moreover, only the fixed pattern corresponding to $z = \infty$ contains $\infty$ as a value.
	Therefore, by letting $S = \{ \infty \}$ and calculating the multiplicity of $x_{t, n} = \infty$ for this fixed pattern, one can compute the degrees.
	This result does not rely on the form of the equation except that it is defined by a polynomial.
	Neither the $\partial$-factor condition nor the basic pattern condition is required.
	Note that computing exact degrees for polynomial mappings by Halburd's method is already mentioned in the original paper \cite{halburd}.
\end{example}

\begin{example}\label{example:polynomial_concrete}
	Let us consider the polynomial equation
	\[
		x_{t, n} = \Phi \left( x_{t - 1, n}, x_{t, n - 1}, x_{t - 1, n - 1} \right)
		= x_{t, n - 1} (x_{t - 1, n} - x_{t - 1, n - 1}).
	\]
	Let $X_{t, n}$ be the multiplicity of $x_{t, n}(\infty) = \infty$, i.e.,
	\[
		x_{t, n} \left( \varepsilon^{- 1} \right) = y_{t, n} \varepsilon^{- X_{t, n}} + O \left( \varepsilon^{- X_{t, n} + 1} \right)
	\]
	with $y_{t, n} \ne 0$.
	It follows from the discussion in Example~\ref{example:polynomial} that
	\[
		\deg_z x_{t, n}(z) = X_{t, n}.
	\]
	
	First, we show that $X_{t, n}$ satisfies the ultra-discrete version of the equation:
	\[
		X_{t, n} = X_{t, n - 1} + \max \left( X_{t - 1, n}, X_{t - 1, n - 1} \right).
	\]
	The fact that $X_{t, n}$ satisfies this ultra-discrete form is not immediate, because the original equation is not subtraction-free.
	We now explain why this holds.

	If $X_{t - 1, n} \ne X_{t - 1, n - 1}$, then the ultra-discrete equation holds for $(t, n)$ since their leading orders differ.
	Furthermore, since $x_{t, n} \left( \varepsilon^{- 1} \right)$ is a polynomial in $\varepsilon^{- 1}$, we have
	\[
		X_{t, n} \ge X_{t, n - 1}.
	\]
	Therefore, using the coefficient $y_{t, n}$ appearing in the above expansion, we have
	\[
		x_{t, n} \left( \varepsilon^{- 1} \right) =
		\begin{cases}
			y_{t, n - 1} y_{t - 1, n} \varepsilon^{- X_{t, n - 1} - X_{t - 1, n}} + (\text{lower}) & (X_{t, n} > X_{t, n - 1}), \\
			y_{t, n - 1} (y_{t - 1, n} - y_{t - 1, n - 1}) \varepsilon^{- X_{t, n - 1} - X_{t - 1, n}} + (\text{lower}) & (X_{t, n} = X_{t, n - 1}),
		\end{cases}
	\]
	where ``$(\text{lower})$'' refers to terms of lower order in the expansion with respect to $\varepsilon^{- 1}$.
	It can be shown by induction on $(t, n)$ that
	\begin{itemize}
		\item
		$y_{t, n}$ depends on the initial variables $x_{t - t', n}$ and $x_{t, n - n'}$ where $t'$ (resp.\ $n'$) is the least positive integer satisfying $(t - t', n) \in H_0$ (resp.\ $(t, n - n') \in H_0$), i.e., $y_{t, n} \in \Delta_{t, n}$ (see Definition~\ref{definition:delta}),

		\item
		$y_{t, n} =
		\begin{cases}
			y_{t, n - 1} y_{t - 1, n} & (X_{t, n} > X_{t, n - 1}), \\
			y_{t, n - 1} (y_{t - 1, n} - y_{t - 1, n - 1}) & (X_{t, n} = X_{t, n - 1}).
		\end{cases}$

	\end{itemize}
	The key point is that $y_{t - 1, n} - y_{t - 1, n - 1} \ne 0$ since $y_{t - 1, n}$ depends on the initial variable $x_{t - t', n}$ but $y_{t - 1, n - 1}$ does not.
	Therefore, $X_{t, n}$ indeed satisfies the ultra-discrete form.
	Moreover, because $X_{t - 1, n} \ge X_{t - 1, n - 1}$ for all $(t, n)$, the ultra-discrete equation simplifies to
	\[
		X_{t, n} = X_{t, n - 1} + X_{t - 1, n}.
	\]

	Next, we give the solution to the ultra-discrete equation for the following choice of $H$ and $z$:
	\[
		H = \mathbb{Z}^2_{\ge 0}, \quad
		z = x_{1, 0}.
	\]
	Since $X_{t, n} = X_{t, n - 1} + X_{t - 1, n}$, the exact solution is given by Pascal's triangle and is expressed as
	\[
		X_{t, n} = \frac{(t + n - 1) !}{(t - 1) ! \ n !} \quad
		(t, n \ge 1),
	\]
	which yields the following values:
	\[
		X_{t, n} = \deg_z x_{t, n}(z) \colon \quad
		\begin{matrix}
			{\textcolor{red} 0} & 1 & 7 & 28 & 84 & 210 & 462 & 924 \\
			{\textcolor{red} 0} & 1 & 6 & 21 & 56 & 126 & 252 & 462 \\
			{\textcolor{red} 0} & 1 & 5 & 15 & 35 & 70 & 126 & 210 \\
			{\textcolor{red} 0} & 1 & 4 & 10 & 20 & 35 & 56 & 84 \\
			{\textcolor{red} 0} & 1 & 3 & 6 & 10 & 15 & 21 & 28 \\
			{\textcolor{red} 0} & 1 & 2 & 3 & 4 & 5 & 6 & 7 \\
			{\textcolor{red} 0} & 1 & 1 & 1 & 1 & 1 & 1 & 1 \\
			{\textcolor{red} 0} & {\textcolor{red} 1} & {\textcolor{red} 0} & {\textcolor{red} 0} & {\textcolor{red} 0} & {\textcolor{red} 0} & {\textcolor{red} 0} & {\textcolor{red} 0}
		\end{matrix}.
	\]
	The degree exhibits exponential growth, as indicated by the binomial coefficient formula.
\end{example}

\begin{example}\label{example:laurent_property}
	The second simplest class of examples would be equations with the Laurent property.
	An equation (or an initial value problem for an equation) is said to have the Laurent property if every iterate can be expressed as a Laurent polynomial in the initial variables \cite{laurent_phenomenon}.
	Since the denominator of each iterate is a monomial, $\infty$ cannot appear in any movable pattern and only the fixed patterns corresponding to $z = 0, \infty$ can contain $\infty$ as a value.
	Therefore, by letting $S = \{ \infty \}$ and calculating the multiplicity of $x_{t, n} = \infty$ for these two fixed patterns, one can compute the degrees.
	As in the case of polynomial mappings, this result does not rely on the form of the equation, either.
\end{example}

\begin{example}\label{example:laurent_property_concrete}
	Let us consider the discrete Liouville equation
	\[
		x_{t, n} = \Phi \left( x_{t - 1, n}, x_{t, n - 1}, x_{t - 1, n - 1} \right)
		= \frac{x_{t - 1, n} x_{t, n - 1} + 1}{x_{t - 1, n - 1}}.
	\]
	As discussed in Example~\ref{example:laurent_property}, because the equation has the Laurent property, we can compute the degrees by calculating the fixed patterns corresponding to $z = 0, \infty$.

	Since the equation is subtraction-free, these patterns can be obtained via ultra-discretization.
	That is, the leading order of the expansion of $x_{t, n} \left( \varepsilon \right)$ (or $x_{t, n} \left( \varepsilon^{- 1} \right)$) with respect to $\varepsilon^{- 1}$, denoted by $X_{t, n}$, satisfies the ultra-discrete form
	\[
		X_{t, n} = \max \left( X_{t - 1, n} + X_{t, n - 1}, 0 \right) - X_{t - 1, n - 1}.
	\]

	Let us fix $H$ and $z$ as
	\[
		H = \mathbb{Z}^2_{\ge 0} \setminus \{ (0, 0) \}, \quad
		z = x_{1, 1}.
	\]
	Solving the ultra-discrete equation yields the following patterns:
	\[
		\begin{matrix}
			{\color{red} \vdots} & \vdots & \vdots & \vdots & \iddots \\
			{\color{red} x_{0, 3}} & \text{reg} & \infty & \infty & \ldots \\
			{\color{red} x_{0, 2}} & \text{reg} & \infty & \infty & \ldots \\
			{\color{red} x_{0, 1}} & {\color{red} x_{1, 1} = 0} & \text{reg} & \text{reg} & \ldots \\
			& {\color{red} x_{1, 0}} & {\color{red} x_{2, 0}} & {\color{red} x_{3, 0}} & {\color{red} \ldots}
		\end{matrix}
	\]
	and
	\[
		\begin{matrix}
			{\color{red} \vdots} & \vdots & \vdots & \vdots & \iddots \\
			{\color{red} x_{0, 3}} & \infty & \infty & \infty & \ldots \\
			{\color{red} x_{0, 2}} & \infty & \infty & \infty & \ldots \\
			{\color{red} x_{0, 1}} & {\color{red} x_{1, 1} = \infty} & \infty & \infty & \ldots \\
			& {\color{red} x_{1, 0}} & {\color{red} x_{2, 0}} & {\color{red} x_{3, 0}} & {\color{red} \ldots}
		\end{matrix}.
	\]
	Since the degree coincides with the sum of the multiplicities of $x_{t, n} = \infty$ in these two patterns, we have
	\[
		\deg_z x_{t, n}(z) = \begin{cases}
			0 & (\min (t, n) = 0), \\
			1 & (\min (t, n) = 1), \\
			2 & (\min (t, n) \ge 2).
		\end{cases}
	\]
	This result is in perfect agreement with the degrees previously calculated in \cite{domain}.
\end{example}

\begin{example}\label{example:nonintegrable_confining}
	Let us calculate the degrees of the equation
	\[
		x_{t, n} = - x_{t - 1, n - 1} + \frac{a}{x^k_{t - 1, n}} + \frac{b}{x^k_{t, n - 1}},
	\]
	where $a$ and $b$ are non-zero parameters and $k$ is a positive even integer.
	This equation was discovered by the present author and reported in detail with collaborators as the first-ever example of a non-integrable lattice equation that passes the singularity confinement test \cite{2dchaos,exkdv}.
	Let $z = x_{t_0, n_0}$ be one of the initial variables and let
	\[
		\Phi = \Phi(B, C, D) = - D + \frac{a}{B^k} + \frac{b}{C^k}.
	\]
	Directly calculating $\partial_B \Phi$, $\partial_B \frac{1}{\Phi}$, $\partial_C \Phi$ and $\partial_C \frac{1}{\Phi}$, one can easily show that $\Phi$ satisfies the $\partial$-factor condition.
	
	Looking at the equation carefully, we notice that it is not easy for $x_{t, n}$ to become $\infty$.
	In fact, the relation $\Phi(B, C, D) = \infty$ can be written as
	\[
		B^k C^k = 0.
	\]
	Therefore, by Proposition~\ref{proposition:check_hidden_value}, the value $\infty$ does not appear as the first singularity of any movable pattern.

	Let us check the basic pattern condition.
	Since the equation does not have a movable pattern starting with $\infty$, we do not need to check the basic pattern condition for $\infty$.
	Let $x^{*} \in \mathbb{C}$.
	The relation $\Phi(B, C, D) = x^{*}$ is written as
	\[
		- B^k C^k D + a C^k + b B^k - x^{*} B^k C^k = 0.
	\]
	The left-hand side is irreducible since it is degree $1$ with respect to $D$ and the coefficients of $D^1$ and $D^0$ share no irreducible factor.
	Therefore, the equation satisfies the basic pattern condition for $x^{*}$.

	To compute the degrees, we need to calculate all the constant singularity patterns that contain $\infty$ as a value.
	It follows from Theorem~\ref{theorem:main_theorem} that the movable pattern starting with $x_{t, n}(z^{*}) = x^{*} \in \mathbb{C}$ coincides with the $(t, n)$-translation of the basic pattern corresponding to $x^{\text{basic}}_{0, 0} = x^{*}$.
	Since
	\[
		\partial_B \Phi = - \frac{k a}{B^{k + 1}}, \quad
		\partial_C \Phi = - \frac{k b}{C^{k + 1}}, \quad
		\partial_D \Phi = - 1,
	\]
	it follows from Proposition~\ref{proposition:solitary} that any pattern whose starting value is neither $0$ nor $\infty$ is solitary.
	Therefore, it is sufficient to calculate the basic pattern corresponding to $x^{\text{basic}}_{0, 0} = 0$ and the fixed patterns corresponding to $z = 0, \infty$.

	First, we calculate the basic pattern corresponding to $x^{\text{basic}}_{0, 0} = 0$.
	Let $x^{\text{basic}}_{0, 0} = \varepsilon$.
	Then, a direct calculation leads to
	\[
		\begin{matrix}
			{\color{red} x^{\text{basic}}_{- 1, 2}} & \text{reg} & \text{reg} & \text{reg} \\
			{\color{red} x^{\text{basic}}_{- 1, 1}} & \sim \varepsilon^{- k} & \sim \varepsilon & \text{reg} \\
			{\color{red} x^{\text{basic}}_{- 1, 0}} & {\color{red} x^{\text{basic}}_{0, 0} = \varepsilon} & \sim \varepsilon^{- k} & \text{reg} \\
			& {\color{red} x^{\text{basic}}_{0, - 1}} & {\color{red} x^{\text{basic}}_{1, - 1}} & {\color{red} x^{\text{basic}}_{2, - 1}}
		\end{matrix},
	\]
	where ``$\text{reg}$'' means that it does not have a constant singularity at $\varepsilon = 0$.
	By Proposition~\ref{proposition:confining}, this pattern does not have a constant singularity outside the above region.
	Therefore, by Theorem~\ref{theorem:main_theorem}, the movable pattern starting with $x_{t, n}(z^{*}) = 0$ is
	\[
		\begin{matrix}
			x_{t, n + 1} = \infty^k & x_{t + 1, n + 1} = 0 \\
			x_{t, n} = 0 & x_{t + 1, n} = \infty^k.
		\end{matrix}
	\]
	
	Next, we calculate the fixed pattern corresponding to $z = \infty$.
	Regardless of the shape of the domain, the fixed pattern corresponding to $z = \infty$ is
	\[
		\begin{matrix}
			& & & \iddots \\
			& & \infty \\
			& \infty \\
			{\color{red} z = \infty}
		\end{matrix}.
	\]
	Let us check this calculation.
	A key idea is to show that if $x_{t, n}(z^{*}) = \infty$, then $x_{t + 1, n}(z^{*})$ does depend on the initial variable $w = x_{t + 1, n - n'}$ where $n'$ is the least nonnegative integer that satisfies $(t + 1, n - n') \in H_0$:
	\[
		\begin{matrix}
			& & & & \infty & x_{t + 1, n} \\
			& & & \infty & x_{t, n - 1} & x_{t + 1, n - 1} \\
			& & \iddots & & & \vdots \\
			& \infty & & & & \vdots \\
			{\color{red} z = \infty} & & & & & {\color{red} w}
		\end{matrix}.
	\]
	It follows from Proposition~\ref{proposition:key_lemma} that $x_{t + 1, n - 1}(z^{*})$ depends on $w$.
	Since $x_{t, n - 1}(z^{*})$ is not a constant singularity but $x_{t + 1, n - 1}(z^{*})$ depends on $w$, $x_{t, n - 1}(z^{*})$ and $x_{t + 1, n - 1}(z^{*})$ are algebraically independent over $\mathbb{C}$.
	Therefore, it follows from Lemma~\ref{lemma:substitution_trdeg_codimension_one} that $x_{t + 1, n}(z^{*})$ can be directly calculated from $x_{t, n}(z^{*})$, $x_{t + 1, n - 1}(z^{*})$ and $x_{t, n - 1}(z^{*})$ as
	\[
		x_{t + 1, n}(z^{*}) = \Phi(x_{t, n}(z^{*}), x_{t + 1, n - 1}(z^{*}), x_{t, n - 1}(z^{*})).
	\]
	A direct calculation shows that $x_{t + 1, n}(z^{*})$ depends on $w$, too.
	Interchanging the $t$- and $n$- axes, one can show that $x_{t, n + 1}(z^{*})$ depends on some initial variable $w' = x_{t - t', n + 1}$.
	Since $x_{t + 1, n}(z^{*})$ (resp.\ $x_{t, n + 1}(z^{*})$) is independent of $w'$ (resp.\ $w$), $x_{t + 1, n}(z^{*})$ and $x_{t, n + 1}(z^{*})$ are algebraically independent over $\mathbb{C}$ and can be thought of as if they were initial variables when calculating $x_{t + 1, n + 1}(z^{*})$ from $x_{t, n + 1}(z^{*})$, $x_{t + 1, n}(z^{*})$ and $x_{t, n}(z^{*})$.
	Therefore, a direct calculation leads to $x_{t + 1, n + 1}(z^{*}) = \infty$.
	Repeating this procedure, one obtains the whole pattern.
	
	On the other hand, the fixed pattern corresponding to $z = 0$ varies depending on the shape of the domain, especially on whether or not $(t_0 + 1, n_0)$ and $(t_0, n_0 + 1)$ belong to $H \setminus H_0$, respectively.
	If neither $(t_0 + 1, n_0)$ nor $(t_0, n_0 + 1)$ belongs to $H_0$, then the pattern is
	\[
		\begin{matrix}
			\infty^k & 0 \\
			{\color{red} z = 0} & \infty^k
		\end{matrix},
	\]
	which has the same shape as a movable pattern and is confining.
	If only one of $(t_0 + 1, n_0)$ and $(t_0, n_0 + 1)$ belongs to $H_0$, then the pattern is
	\[
		\begin{matrix}
			& & & \iddots \\
			& & \infty^k \\
			& \infty^k & \\
			\infty^k & & \\
			{\color{red} z = 0} & {\color{red} x_{t_0 + 1, n_0}}
		\end{matrix}
	\]
	or
	\[
		\begin{matrix}
			& & & & \iddots \\
			& & & \infty^k \\
			{\color{red} x_{t_0, n_0 + 1}} & & \infty^k \\
			{\color{red} z = 0} & \infty^k
		\end{matrix},
	\]
	respectively.
	If both $(t_0 + 1, n_0)$ and $(t_0, n_0 + 1)$ belong to $H_0$, then the pattern is solitary.
	One can calculate these patterns rigorously in a similar way to the above.

	Before computing the degrees, let us fix a domain $H$, for example, as follows:
	\[
		H = \mathbb{Z}^2_{\ge (0, 0)}, \quad
		z = x_{1, 0}.
	\]
	Then, by Theorem~\ref{theorem:main_degree}, we obtain the following degree relations for $(t, n) \in H$:
	\begin{align*}
		\deg_z x_{t, n}(z) &= Z_{t, n}(0) + Z_{t - 1, n - 1}(0) + \delta^{t, 1}_{n, 0} & (\text{$\#$ of preimages of $0$}) \\
		&= k Z_{t - 1, n}(0) + k Z_{t, n - 1}(0) + \delta_{t - 1, n} + k (\delta_{t, n} - \delta^{t, 0}_{n, 0}) & (\text{$\#$ of preimages of $\infty$}),
	\end{align*}
	where $\delta^{a, b}_{c, d} = \delta_{a, b} \delta_{c, d}$ and $Z_{t, n}(0) = 0$ for $(t, n) \in \mathbb{Z}^2 \setminus (H \setminus H_0)$.
	From the analysis of the linear equation
	\[
		Z_{t, n}(0) = k Z_{t - 1, n}(0) + k Z_{t, n - 1}(0) - Z_{t - 1, n - 1}(0) - \delta^{t, 1}_{n, 0} + \delta_{t - 1, n} + k (\delta_{t, n} - \delta^{t, 0}_{n, 0}),
	\]
	one finds that the degree growth is exponential in any direction in the first quadrant.
	If $k = 2$, for example, the following are the results of iterating the recursion over a finite range:
	\[
		Z_{t, n}(0) \colon \quad
		\begin{matrix}
			{\textcolor{red} 0}& 128 & 1536 & 10816 & 58336 & 266592 & 1086392 & 4067660 \\
			{\textcolor{red} 0} & 64 & 672 & 4208 & 20456 & 85188 & 319198 & 1107036 \\
			{\textcolor{red} 0} & 32 & 288 & 1576 & 6808 & 25542 & 87181 & 277910 \\
			{\textcolor{red} 0} & 16 & 120 & 560 & 2108 & 7016 & 21556 & 62552 \\
			{\textcolor{red} 0} & 8 & 48 & 184 & 585 & 1692 & 4608 & 12024 \\
			{\textcolor{red} 0} & 4 & 18 & 52 & 134 & 328 & 776 & 1792 \\
			{\textcolor{red} 0} & 2 & 5 & 10 & 20 & 40 & 80 & 160 \\
			{\textcolor{red} 0} & {\textcolor{red} 0} & {\textcolor{red} 0} & {\textcolor{red} 0} & {\textcolor{red} 0} & {\textcolor{red} 0} & {\textcolor{red} 0} & {\textcolor{red} 0}
		\end{matrix},
	\]
	\[
		\deg_z x_{t, n}(z) \colon \quad
		\begin{matrix}
			{\textcolor{red} 0} & 128 & 1600 & 11488 & 62544 & 287048 & 1171580 & 4386858 \\
			{\textcolor{red} 0} & 64 & 704 & 4496 & 22032 & 91996 & 344740 & 1194217 \\
			{\textcolor{red} 0} & 32 & 304 & 1696 & 7368 & 27650 & 94197 & 299466 \\
			{\textcolor{red} 0} & 16 & 128 & 608 & 2292 & 7601 & 23248 & 67160 \\
			{\textcolor{red} 0} & 8 & 52 & 202 & 637 & 1826 & 4936 & 12800 \\
			{\textcolor{red} 0} & 4 & 20 & 57 & 144 & 348 & 816 & 1872 \\
			{\textcolor{red} 0} & 2 & 5 & 10 & 20 & 40 & 80 & 160 \\
			{\textcolor{red} 0} & {\textcolor{red} 1} & {\textcolor{red} 0} & {\textcolor{red} 0} & {\textcolor{red} 0} & {\textcolor{red} 0} & {\textcolor{red} 0} & {\textcolor{red} 0}
		\end{matrix}.
	\]
	In this paper, we do not address the problem of solving the linear recurrence for $Z_{t, n}(0)$, i.e., finding its exact solution, because this question is not the main focus of the present work and lies within the domain of linear partial difference equations.
\end{example}

\begin{example}\label{example:first_example_2d_rigorous}
	Let us see the calculations in Example~\ref{example:first_example_2d} from a point of view of our theory.
	One can check that all the calculations are justified by our theory.
	For example, using our theory, even those without any knowledge of singularity analysis can rigorously calculate the fixed pattern corresponding to $z = \infty$.
	Since the discussion is the same as in Example~\ref{example:nonintegrable_confining}, we omit the details here.
	
	It should be noted, however, that there is a small difference between the $Z_{t, n}$ in Example~\ref{example:first_example_2d} and the $Z_{t, n}(0)$ defined in \textsection\ref{section:main} and \textsection\ref{section:proof}.
	The difference arises in the fixed pattern starting with $x_{1, 1}(z^{*}) = 0$, i.e., corresponding to $z = 0$.
	While $Z_{1, 1}$ contains the information on this pattern, $Z_{1, 1}(0)$ does not.
	This pattern behaves as if it were a movable pattern but is labeled fixed by the strict definition of singularity patterns (Definition~\ref{definition:constant_singularity}).
	Therefore, with the notation in Theorem~\ref{theorem:main_degree}, the degree relations are now written for $(t, n) \in H$ as
	\begin{align*}
		\deg_z x_{t, n}(z) &= Z_{t, n}(0) + Z_{t - 1, n - 1}(0) + \delta^{t, 1}_{n, 1} + \delta^{t, 2}_{n, 2} \\
		&= Z_{t - 1, n}(0) + Z_{t, n - 1}(0) + \delta^{t, 1}_{n, 2} + \delta^{t, 2}_{n, 1} + \delta_{t, n},
	\end{align*}
	where $Z_{t, n}(0) = 0$ for $(t, n) \in \mathbb{Z} \setminus (H \setminus H_0)$.
	Although these expressions look more complicated than those in Example~\ref{example:first_example_2d}, the initial values for $Z_{t, n}(0)$ are simpler in this notation.
\end{example}

\begin{example}
	Let us consider the discrete modified KdV equation \cite{levi_yamilov_mkdv}
	\[
		(1 - x_{t - 1, n - 1} x_{t - 1, n}) \left( k x_{t, n} - \frac{x_{t, n -1}}{k} \right) = (1 - x_{t, n - 1} x_{t - 1, n - 1}) \left( k x_{t - 1, n - 1} - \frac{x_{t - 1, n}}{k} \right),
	\]
	where $k \in \mathbb{C}$ satisfies $k \ne 0$ and $k^4 \ne 1$.
	Let
	\[
		\Phi(B, C, D) = \frac{D (B C - k^2) + B - C}{k^2 D (B - C) + B C - k^2}
	\]
	and let $z = x_{t_0, n_0}$ be one of the initial variables.
	Using $\Phi$, we can write the equation as
	\[
		x_{t, n} = \Phi(x_{t - 1, n}, x_{t, n - 1}, x_{t - 1, n - 1}).
	\]
	The singularity structure of this equation has been analyzed in \cite{sc_mkdv1,sc_mkdv2}, which implies that $\pm k$ and $\pm \frac{1}{k}$ are important values.
	However, for a reader who is not familiar with singularity analysis, we start with detecting singularities.

	First, we calculate the derivatives:
	\begin{align*}
		\partial_B \Phi(B, C, D) &= - \frac{(k D - 1) (k D + 1) (C - k) (C + k)}{(k^2 D (B - C) + B C - k^2)^2}, \\
		\partial_B \frac{1}{\Phi(B, C, D)} &= \frac{(k D - 1) (k D + 1) (C - k) (C + k)}{(D (B C - k^2) + B - C)^2}, \\
		\partial_C \Phi(B, C, D) &= \frac{(k D - 1) (k D + 1) (B - k) (B + k)}{(k^2 D (B - C) + B C - k^2)^2}, \\
		\partial_C \frac{1}{\Phi(B, C, D)} &= - \frac{(k D - 1) (k D + 1) (B - k) (B + k)}{(D (B C - k^2) + B - C)^2}, \\
		\partial_D \Phi(B, C, D) &= \frac{(B - k) (B + k) (C - k) (C + k)}{(k^2 D (B - C) + B C - k^2)^2}, \\
		\partial_D \frac{1}{\Phi(B, C, D)} &= - \frac{(B - k) (B + k) (C - k) (C + k)}{(D (B C - k^2) + B - C)^2},
	\end{align*}
	which implies that the equation satisfies the $\partial$-factor condition.
	Moreover, by Proposition~\ref{proposition:solitary}, one finds that any fixed pattern whose starting value is neither $\pm k$ nor $\pm \frac{1}{k}$ is solitary, which is why we focus on these four values.
	The relation $\Phi(B, C, D) = \pm \frac{1}{k}$ can be written as
	\[
		\frac{(B \pm k) (C \mp k) ( k D \mp 1)}{k} = 0,
	\]
	respectively.
	Therefore, by Proposition~\ref{proposition:check_hidden_value}, the values $\pm \frac{1}{k}$ do not appear as the starting value of any movable pattern.

	Let us check the basic pattern condition for $x^{*} \in \mathbb{P}^1(\mathbb{C}) \setminus \left\{ \pm \frac{1}{k} \right\}$.
	First, consider the case where $x^{*} \ne \infty$.
	The relation $\Phi(B, C, D) = x^{*}$ is written as
	\[
		(B C - x^{*} k^2 B + x^{*} k^2 C - k^2) D + (- x^{*} B C + B - C + x^{*} k^2) = 0.
	\]
	Let $f \in \mathbb{C}[B, C, D]$ be the left-hand side and let $f_0$ and $f_1$ be the coefficients of $D^0$ and $D^1$, respectively, i.e.,
	\begin{align*}
		f_0 &= - x^{*} B C + B - C + x^{*} k^2, \\
		f_1 &= B C - x^{*} k^2 B + x^{*} k^2 C - k^2.
	\end{align*}
	Since it is degree $1$ with respect to $D$, $f$ is irreducible unless $f_0$ and $f_1$ share an irreducible factor.
	Seeking for a contradiction, assume that $f_1$ and $f_2$ share an irreducible factor $g$.
	Since
	\[
		f_0 + x^{*} f_1 = - (k x^{*} - 1) (k x^{*} + 1) (B - C)
	\]
	and
	\[
		(k x^{*} - 1) (k x^{*} + 1) \ne 0,
	\]
	we have
	\[
		g = B - C.
	\]
	However, $f_1$ is not divisible by $B - C$, which is a contradiction.
	Therefore, $f$ is irreducible and thus the equation satisfies the basic pattern condition for $x^{*}$.
	
	Next, we consider the case $x^{*} = \infty$.
	In this case, the relation $\Phi(B, C, D) = \infty$ is
	\[
		k^2 D (B - C) + B C - k^2 = 0.
	\]
	The left-hand side is irreducible since it is degree $1$ with respect to $D$ and the coefficients of $D^1$ and $D^0$ do not share an irreducible factor.
	Therefore, the equation satisfies the basic pattern condition for $\infty$, too.

	We calculate movable patterns starting with $\pm k$.
	A direct calculation shows that the basic patterns corresponding to $x^{\text{basic}}_{0, 0} = \pm k$ are
	\[
		\begin{matrix}
			{\color{red} \bullet} & \circ & \circ & \circ \\
			{\color{red} \bullet} & \pm \frac{1}{k} & \mp k & \circ \\
			{\color{red} \bullet} & {\color{red} x^{\text{basic}}_{0, 0} = \pm k} & \mp \frac{1}{k} & \circ \\
			& {\color{red} \bullet} & {\color{red} \bullet} & {\color{red} \bullet}
		\end{matrix},
	\]
	respectively,
	where the multiplicities of these four constant singularities are all $1$.
	Therefore, the corresponding movable patterns are
	\[
		\begin{matrix}
			x_{t, n + 1} = \pm \frac{1}{k} & x_{t + 1, n + 1} = \mp k \\
			x_{t, n} = \pm k & x_{t + 1, n} = \mp \frac{1}{k}
		\end{matrix}.
	\]

	Let us calculate the fixed patterns containing $\pm k$ or $\pm \frac{1}{k}$.
	Since any fixed pattern corresponding to $z = \gamma \notin \left\{ \pm k, \pm \frac{1}{k} \right\}$ is solitary, it is sufficient to calculate the fixed patterns corresponding to $z = \pm k$, $\pm \frac{1}{k}$.
	A calculation similar to that in Example~\ref{example:nonintegrable_confining} shows that the fixed patterns corresponding to $z = \pm \frac{1}{k}$ do not depend on the shape of $H$:
	\[
		\begin{matrix}
			& & & \iddots \\
			& & \pm \frac{1}{k} \\
			& \pm \frac{1}{k} \\
			{\color{red} z = \pm \frac{1}{k}}
		\end{matrix}.
	\]
	On the other hand, the fixed patterns corresponding to $z = \pm k$ vary depending on whether $(t_0 + 1, n_0)$ and $(t_0, n_0 + 1)$ belong to $H_0$, respectively:
	\begin{itemize}
		\item 
		$\begin{matrix}
			\pm \frac{1}{k} & \mp k \\
			{\color{red} z = \pm k} & \mp \frac{1}{k}
		\end{matrix}$
		\quad
		(confining),

		\item
		$\begin{matrix}
			& & & \iddots \\
			& & \pm \frac{1}{k} \\
			& \pm \frac{1}{k} \\
			\pm \frac{1}{k} \\
			{\color{red} z = \pm k} & {\color{red} x_{t_0 + 1, n_0}}
		\end{matrix}$
		\quad
		(non-confining),

		\item
		$\begin{matrix}
			& & & & \iddots \\
			& & & \mp \frac{1}{k} \\
			{\color{red} x_{t_0, n_0 + 1}} & & \mp \frac{1}{k} \\
			{\color{red} z = \pm k} & \mp \frac{1}{k}
		\end{matrix}$
		\quad
		(non-confining),

		\item
		$\begin{matrix}
			{\color{red} x_{t_0, n_0 + 1}} & \text{reg} \\
			{\color{red} z = \pm k} & {\color{red} x_{t_0 + 1, n_0}}
		\end{matrix}$
		\quad
		(solitary).

	\end{itemize}

	Let $H$ and $z$ be the same as in Example~\ref{example:nonintegrable_confining}.
	Then, the degree relations for $(t, n) \in H$ are
	\begin{align*}
		\deg_z x_{t, n}(z) &= Z_{t, n}(k) + Z_{t - 1, n - 1}( - k ) + \delta^{t, 1}_{n, 0} \\
		&= Z_{t, n}( - k ) + Z_{t - 1, n - 1}( k ) + \delta^{t, 1}_{n, 0} \\
		&= Z_{t - 1, n}( - k ) + Z_{t, n - 1}( k ) + \delta_{t - 1, n} + (\delta_{t, n} - \delta^{t, 0}_{n, 0}) \\
		&= Z_{t - 1, n}( k ) + Z_{t, n - 1}( - k ) + \delta_{t - 1, n} + (\delta_{t, n} - \delta^{t, 0}_{n, 0}),
	\end{align*}
	where $Z_{t, n}( \pm k ) = 0$ for $(t, n) \in \mathbb{Z}^2 \setminus (H \setminus H_0)$.
	Each line represents the degree calculated as the number of preimages for $k$, $- k$, $\frac{1}{k}$ and $- \frac{1}{k}$, respectively.
	It follows from the degree relations that
	\[
		Z_{t, n}(k) = Z_{t, n}( - k )
	\]
	for all $(t, n)$.
	Therefore, $Z_{t, n}(k)$ satisfies
	\[
		Z_{t, n}(k) = Z_{t - 1, n}(k) + Z_{t, n - 1}(k) - Z_{t - 1, n - 1}(k) + \delta_{t - 1, n} + (\delta_{t, n} - \delta^{t, 0}_{n, 0}) - \delta^{t, 1}_{n, 0}
	\]
	for $(t, n) \in H$, which implies polynomial degree growth.
	Solving this recursion, one obtains $Z_{t, n}( \pm k )$ and the degrees as
	\[
		Z_{t, n}( \pm k ) = \begin{cases}
			2 n & (t > n), \\
			2 t - 1 & (0 < t \le n), \\
			0 & (t = 0),
		\end{cases} \quad
		\deg_z x_{t, n}(z) = \begin{cases}
			4 n - 2 & (t > n; \ t \ge 2), \\
			4 t - 4 & (1 < t \le n), \\
			1 & (t = 1), \\
			0 & (t = 0), \\
			0 & (t \ge 2; \ n = 0).
		\end{cases}
	\]
	The following are the results of iterating the equation over a finite range:
	\[
		Z_{t, n}( \pm k ) \colon \quad
		\begin{matrix}
			{\textcolor{red} 0} & 1 & 3 & 5 & 7 & 9 & 11 & 13 & 15 & 17 & 19 \\
			{\textcolor{red} 0} & 1 & 3 & 5 & 7 & 9 & 11 & 13 & 15 & 17 & 18 \\
			{\textcolor{red} 0} & 1 & 3 & 5 & 7 & 9 & 11 & 13 & 15 & 16 & 16 \\
			{\textcolor{red} 0} & 1 & 3 & 5 & 7 & 9 & 11 & 13 & 14 & 14 & 14 \\
			{\textcolor{red} 0} & 1 & 3 & 5 & 7 & 9 & 11 & 12 & 12 & 12 & 12 \\
			{\textcolor{red} 0} & 1 & 3 & 5 & 7 & 9 & 10 & 10 & 10 & 10 & 10 \\
			{\textcolor{red} 0} & 1 & 3 & 5 & 7 & 8 & 8 & 8 & 8 & 8 & 8 \\
			{\textcolor{red} 0} & 1 & 3 & 5 & 6 & 6 & 6 & 6 & 6 & 6 & 6 \\
			{\textcolor{red} 0} & 1 & 3 & 4 & 4 & 4 & 4 & 4 & 4 & 4 & 4 \\
			{\textcolor{red} 0} & 1 & 2 & 2 & 2 & 2 & 2 & 2 & 2 & 2 & 2 \\
			{\textcolor{red} 0} & {\textcolor{red} 0} & {\textcolor{red} 0} & {\textcolor{red} 0} & {\textcolor{red} 0} & {\textcolor{red} 0} & {\textcolor{red} 0} & {\textcolor{red} 0} & {\textcolor{red} 0} & {\textcolor{red} 0} & {\textcolor{red} 0}
		\end{matrix},
	\]
	\[
		\deg_z x_{t, n}(z) \colon \quad
		\begin{matrix}
			{\textcolor{red} 0} & 1 & 4 & 8 & 12 & 16 & 20 & 24 & 28 & 32 & 36 \\
			{\textcolor{red} 0} & 1 & 4 & 8 & 12 & 16 & 20 & 24 & 28 & 32 & 34 \\
			{\textcolor{red} 0} & 1 & 4 & 8 & 12 & 16 & 20 & 24 & 28 & 30 & 30 \\
			{\textcolor{red} 0} & 1 & 4 & 8 & 12 & 16 & 20 & 24 & 26 & 26 & 26 \\
			{\textcolor{red} 0} & 1 & 4 & 8 & 12 & 16 & 20 & 22 & 22 & 22 & 22 \\
			{\textcolor{red} 0} & 1 & 4 & 8 & 12 & 16 & 18 & 18 & 18 & 18 & 18 \\
			{\textcolor{red} 0} & 1 & 4 & 8 & 12 & 14 & 14 & 14 & 14 & 14 & 14 \\
			{\textcolor{red} 0} & 1 & 4 & 8 & 10 & 10 & 10 & 10 & 10 & 10 & 10 \\
			{\textcolor{red} 0} & 1 & 4 & 6 & 6 & 6 & 6 & 6 & 6 & 6 & 6 \\
			{\textcolor{red} 0} & 1 & 2 & 2 & 2 & 2 & 2 & 2 & 2 & 2 & 2 \\
			{\textcolor{red} 0} & {\textcolor{red} 1} & {\textcolor{red} 0} & {\textcolor{red} 0} & {\textcolor{red} 0} & {\textcolor{red} 0} & {\textcolor{red} 0} & {\textcolor{red} 0} & {\textcolor{red} 0} & {\textcolor{red} 0} & {\textcolor{red} 0}
		\end{matrix}.
	\]
\end{example}

\begin{example}
	Let us calculate the degrees of the equation
	\[
		x_{t, n} = - x_{t - 1, n - 1} + \frac{a}{x^k_{t - 1, n}} + \frac{b}{x^k_{t, n - 1}},
	\]
	where $a$ and $b$ are non-zero parameters and $k$ is a positive odd integer greater than $1$.
	Let $z = x_{t_0, n_0}$ be one of the initial variables and let
	\[
		\Phi = \Phi(B, C, D) = - D + \frac{a}{B^k} + \frac{b}{C^k}.
	\]

	Most calculations are the same as in the case of Example~\ref{example:nonintegrable_confining}.
	The only difference is the basic pattern corresponding to $x^{\text{basic}}_{0, 0} = 0$, which is not confining anymore \cite{exkdv}:
	\[
		\begin{matrix}
			{\color{red} \vdots} & & & & \iddots & \iddots \\
			{\color{red} \bullet} & & & \infty^k & 0 & \iddots \\
			{\color{red} \bullet} & & \infty^k & 0 & \infty^k \\
			{\color{red} \bullet} & \infty^k & 0 & \infty^k \\
			{\color{red} \bullet} & {\color{red} x^{\text{basic}}_{0, 0} = 0} & \infty^k \\
			& {\color{red} \bullet} & {\color{red} \bullet} & {\color{red} \bullet} & {\color{red} \bullet} & {\color{red} \cdots}
		\end{matrix}.
	\]
	Therefore, the corresponding movable patterns are
	\[
		\begin{matrix}
			& & & \iddots & \iddots \\
			& & \infty^k & 0 & \iddots \\
			& \infty^k & 0 & \infty^k \\
			\infty^k & 0 & \infty^k \\
			x_{t, n} = 0 & \infty^k
		\end{matrix}.
	\]

	For example, take the same domain $H$ as in Example~\ref{example:nonintegrable_confining}.
	Then, the degree relations for $(t, n) \in H$ are
	\begin{align*}
		\deg_z x_{t, n}(z)
		&= \sum^{+ \infty}_{\ell = 0} Z_{t - \ell, n - \ell}(0) + \delta^{t, 1}_{n, 0} \\
		&= k \sum^{+ \infty}_{\ell = 0} \left( Z_{t - 1 - \ell, n - \ell}(0) + Z_{t - \ell, n - 1 - \ell}(0) \right)  + \delta_{t - 1, n} + k (\delta_{t, n} - \delta^{t, 0}_{n, 0}),
	\end{align*}
	where $Z_{t, n}(0) = 0$ for $(t, n) \in \mathbb{Z}^2 \setminus (H \setminus H_0)$.
	Taking the difference in the $(- 1, - 1)$-direction, we have
	\[
		Z_{t, n}(0) - \delta^{t - 1, 1}_{n - 1, 0}
		= k Z_{t - 1, n}(0) + k Z_{t, n - 1}(0) + k \delta^{t - 1, 0}_{n - 1, 0}
	\]
	for $(t, n) \in H \setminus H_0$, which implies exponential degree growth.
	If $k = 3$, for example, the following are the results of iterating the recursion over a finite range:
	\[
		Z_{t, n}(0) \colon \quad
		\begin{matrix}
			{\textcolor{red} 0} & 2187 & 46656 & 566433 & 5143824 & 38854242 & 257926032 & 1554996366 & \\ 
			{\textcolor{red} 0} & 729 & 13365 & 142155 & 1148175 & 7807590 & 47121102 & 260406090 \\ 
			{\textcolor{red} 0} & 243 & 3726 & 34020 & 240570 & 1454355 & 7899444 & 39680928 \\ 
			{\textcolor{red} 0} & 81 & 999 & 7614 & 46170 & 244215 & 1178793 & 5327532 \\ 
			{\textcolor{red} 0} & 27 & 252 & 1539 & 7776 & 35235 & 148716 & 597051 \\ 
			{\textcolor{red} 0} & 9 & 57 & 261 & 1053 & 3969 & 14337 & 50301 \\ 
			{\textcolor{red} 0} & 3 & 10 & 30 & 90 & 270 & 810 & 2430 \\ 
			{\textcolor{red} 0} & {\textcolor{red} 0} & {\textcolor{red} 0} & {\textcolor{red} 0} & {\textcolor{red} 0} & {\textcolor{red} 0} & {\textcolor{red} 0} & {\textcolor{red} 0} 
		\end{matrix},
	\]
	\[
		\deg_z x_{t, n}(z) \colon \quad
		\begin{matrix}
			{\textcolor{red} 0} & 2187 & 47385 & 580041 & 5289786 & 40037463 & 265982067 & 1603619592 \\ 
			{\textcolor{red} 0} & 729 & 13608 & 145962 & 1183221 & 8056035 & 48623226 & 268557796 \\ 
			{\textcolor{red} 0} & 243 & 3807 & 35046 & 248445 & 1502124 & 8151706 & 40896039 \\ 
			{\textcolor{red} 0} & 81 & 1026 & 7875 & 47769 & 252262 & 1215111 & 5480307 \\ 
			{\textcolor{red} 0} & 27 & 261 & 1599 & 8047 & 36318 & 152775 & 611658 \\ 
			{\textcolor{red} 0} & 9 & 60 & 271 & 1083 & 4059 & 14607 & 51111 \\ 
			{\textcolor{red} 0} & 3 & 10 & 30 & 90 & 270 & 810 & 2430 \\ 
			{\textcolor{red} 0} & {\textcolor{red} 1} & {\textcolor{red} 0} & {\textcolor{red} 0} & {\textcolor{red} 0} & {\textcolor{red} 0} & {\textcolor{red} 0} & {\textcolor{red} 0} 
		\end{matrix}.
	\]
\end{example}

Let us see what kind of problem arises if an equation does not satisfy the $\partial$-factor condition or the basic pattern condition.

\begin{example}\label{example:counterexample_1}
	Recall the equation from Example~\ref{example:polynomial_concrete}:
	\[
		x_{t, n} = \Phi \left( x_{t - 1, n}, x_{t, n - 1}, x_{t - 1, n - 1} \right), \quad
		\Phi(B, C, D) = C (B - D).
	\]
	This equation does not satisfy the $\partial$-factor condition since
	\[
		\partial_C \Phi(B, C, D) = B - D.
	\]
	Since this is a polynomial equation, $\partial$-factor condition is not relevant (Example~\ref{example:polynomial}), and its degrees were computed in Example~\ref{example:polynomial_concrete}.
	Here, we see that this equation is a counterexample to Lemma~\ref{lemma:key_lemma}.

	Let $H = \mathbb{Z}^2_{\ge (0, 0)}$, $z = x_{1, 0}$, $(t, n) = (1, 2)$ and $(s, m) = (2, 2)$.
	Suppose that $z^{*} \in \mathbb{P}^1 \left( \overline{\mathbb{K}_{1, 2}} \right)$ satisfies
	\[
		x_{1, 1}(z^{*}) \ne 0, \quad
		x_{1, 2}(z^{*}) - x_{1, 1}(z^{*}) = 0.
	\]
	Let $w = x_{2, 0}$ and think of $x_{2, 1}$ and $x_{2, 2}$ as rational functions in $z$ and $w$.
	Then, a direct calculation shows that
	$\partial_w x_{2, 1}(z^{*}, w) \ne 0$ and
	$\partial_w \frac{1}{x_{2, 1}(z^{*}, w)} \ne 0$.
	Therefore, except for the $\partial$-factor condition, all the assumptions of Lemma~\ref{lemma:key_lemma} are satisfied.
	However, it follows from $x_{1, 2}(z^{*}) - x_{1, 1}(z^{*}) = 0$ that $x_{2, 2}(z^{*}, w) = 0$.
	In particular, we have
	\[
		\partial_w x_{2, 2}(z^{*}, w) = 0.
	\]
	Hence, this equation does not satisfy Lemma~\ref{lemma:key_lemma}.
	Although $z^{*}$ belongs to $\mathbb{P}^1 \left( \overline{\mathbb{K}_{1, 2}} \right)$, the first constant singularity of this pattern is $x_{2, 2}(z^{*}) = 0$.
	Therefore, Proposition~\ref{proposition:p1_decomposition} does not hold, either.
\end{example}

\begin{example}\label{example:counterexample_2}
	Recall the discrete Liouville equation
	\[
		x_{t, n} = \Phi \left( x_{t - 1, n}, x_{t, n - 1}, x_{t - 1, n - 1} \right), \quad
		\Phi(B, C, D) = \frac{B C + 1}{D}.
	\]
	As discussed in Examples~\ref{example:laurent_property} and \ref{example:laurent_property_concrete}, one can compute the degrees by focusing on the value $\infty$ because the equation has the Laurent property.
	Here, we concentrate on the value $0$, instead of $\infty$.
	While this equation satisfies the $\partial$-factor condition, it does not satisfy the basic pattern condition for $0$ since the relation $\Phi(B, C, D) = 0$ contains the factor $B C + 1$.

	Let $H$ and $z$ be the same as in the previous example and let $(t, n) \in H \setminus H_0$.
	Suppose that $z^{*} \in \mathbb{P}^1 \left( \overline{\mathbb{K}_{t, n}} \right)$ satisfies
	\[
		x_{t - 1, n}(z^{*}) x_{t, n - 1}(z^{*}) + 1 = 0.
	\]
	Since $x_{t - 1, n}(z)$ (resp.\ $x_{t, n - 1}(z)$) depends on the initial variable $x_{0, n}$ (resp.\ $x_{t, 0}$), $z^{*}$ belongs to $\Delta_{t, n}$.
	Therefore, we have $x_{t - 1, n - 1}(z^{*}) \ne 0$ and thus $x_{t, n}(z^{*}) = 0$, which is the first singularity of this pattern.
	However, Lemma~\ref{lemma:algebraic_independence_hard} does not hold anymore because of the algebraic relation $x_{t - 1, n}(z^{*}) x_{t, n - 1}(z^{*}) + 1 = 0$.
	In this case, Theorem~\ref{theorem:main_theorem} does not hold, either.
	In fact, the basic pattern corresponding to $x^{\text{basic}}_{0, 0} = 0$ is
	\[
		\begin{matrix}
			{\color{red} \vdots} & \vdots & \vdots & \vdots & \iddots \\
			{\color{red} x^{\text{basic}}_{- 1, 2}} & \text{reg} & \infty & \infty & \ldots \\
			{\color{red} x^{\text{basic}}_{- 1, 1}} & \text{reg} & \infty & \infty & \ldots \\
			{\color{red} x^{\text{basic}}_{- 1, 0}} & {\color{red} x^{\text{basic}}_{0, 0} = 0} & \text{reg} & \text{reg} & \ldots \\
			& {\color{red} x^{\text{basic}}_{0, - 1}} & {\color{red} x^{\text{basic}}_{1, - 1}} & {\color{red} x^{\text{basic}}_{2, - 1}} & {\color{red} \ldots}
		\end{matrix}.
	\]
	This pattern was already calculated in Example~\ref{example:laurent_property_concrete}, up to a shift of indices, and is not confining.
	In contrast, the movable pattern starting with $x_{t, n}(z^{*}) = 0$ is solitary, as can be shown using the Laurent property and the coprimeness.
	Therefore, our strategy to use basic patterns to calculate movable patterns is not valid anymore if an equation does not satisfy the basic pattern condition.
\end{example}

\section{Conclusions and Discussion}\label{section:conclusion}

In this paper, we proposed a new method to rigorously calculate exact degrees for lattice equations.
Our basic idea was to extend Halburd's method to lattice equations and to give a framework to make the calculations rigorous.
If an equation satisfies some conditions, we can compute the degrees from its singularity patterns by solving a system of linear recurrences.

In \textsection\ref{section:introduction_calculation}, we extended Halburd's method to lattice equations without rigorous discussion.
The calculation part was not very different from the original method, but to guarantee the obtained degrees are exact and rigorous, we had many problems to resolve.

The main part of this paper was \textsection\ref{section:main} and \textsection\ref{section:proof}.
We proposed a framework to rigorously compute degrees in \textsection\ref{section:main} and gave proofs in \textsection\ref{section:proof}.
The key idea was to think of the initial values other than $z$ just as variables, instead of generic numerical values, and to clarify on which initial variables each substituted value for $z$ depends.
Such an approach is valid because an initial value problem for a lattice equation has infinitely many initial variables.
Using this approach, we showed that if an equation satisfies some conditions, each singularity pattern has only one starting point (Theorem~\ref{theorem:first_singularity}) and that all the movable patterns are determined by the corresponding basic patterns (Theorem~\ref{theorem:main_theorem}).
We also showed that the degree relations obtained by the singularity structure are exact (Theorem~\ref{theorem:main_degree}) and gave a condition for the degree relations to determine the degrees (Theorem~\ref{theorem:degree_solvable}).
Roughly speaking, we showed that if there exists a $\mathbb{P}^1(\mathbb{C})$-value that does not appear as the starting value of any movable pattern, one can obtain the degrees by solving the degree relations.
As long as our conditions are satisfied, the degrees can be computed rigorously in a unified manner, whereas previously such computations were possible only for equations with highly favorable structural properties.
As a corollary, it holds that the degrees are determined only by the singularity structure.

In \textsection\ref{section:examples}, we calculated the degrees for several lattice equations with our theory, focusing on how to apply our main theorems.
As expected in advance, the calculation part was similar to that of \textsection\ref{section:introduction_calculation}.
Thanks to our theorems and propositions, however, the calculations in this section are all exact and rigorous.
We studied how to detect and rigorously calculate constant singularity patterns as part of the method.
This aspect of our approach may also be useful for studies on singularity confinement in lattice equations, which could benefit from a more systematic treatment.

It should be mentioned that computing the degrees for a quad equation by deriving a linear partial difference equation satisfied by the degrees has already been proposed in previous studies \cite{tran1,tran2}.
There, one needs to assume some conjecture to ensure the calculations are rigorous.
Unfortunately, our theory does not give a direct answer to their conjecture since the degrees considered in their papers are total degrees while the degrees in this paper are individual degrees.
However, there is expected to be a strong connection between the strategy used in the previous studies and our theory.

Let us consider a future extension of our theory.

First, let us consider whether our theory can be applied to other types of lattice equations, such as non-quad equations on a two-dimensional lattice or equations on a higher-dimensional lattice.
Our main idea remains effective.
In fact, many lemmas and propositions, such as Propositions~\ref{proposition:solitary}, \ref{proposition:check_hidden_value} and Lemma~\ref{lemma:trdeg_two}, are still valid in general cases, with appropriate modifications.
However, the proofs of some lemmas, such as Lemmas~\ref{lemma:trdeg_at_least_two} and \ref{lemma:key_lemma}, are not valid for general lattice equations.
It seems that how difficult it is to show these lemmas varies depending on the configuration of the stencil of a lattice equation.

Consider an equation defined on a lattice $L$ using shift vectors $0, v_1, \ldots, v_N \in L$:
\[
	x_h = \Phi(x_{h + v_1}, \ldots, x_{h + v_N}) \quad
	(h \in L),
\]
where $\Phi$ depends on all of its arguments.
We refer to the set $\mathcal{S} = \{ 0, v_1, \ldots, v_N \}$ as the stencil of the equation \cite{venderkamp2009}.
With a slight modification of the definition in \cite{venderkamp2009}, we call the convex hull of the stencil \emph{$\mathcal{S}$-polytope}.
In the case of quad equations, the $\mathcal{S}$-polytope is a unit square $\operatorname{conv} \{ (0, 0), (- 1, 0), (0, - 1), (- 1, - 1) \}$.
This polytope has two edges attached to the origin $(0, 0)$, but neither of them contains a shift vector other than its endpoints, i.e., all shift vectors on each edge are its endpoints.
This property played an essential role in the proofs of Lemmas~\ref{lemma:trdeg_at_least_two} and \ref{lemma:key_lemma}.
Therefore, to prove these lemmas for general lattice equations would require some knowledge of convex geometry.

Next, let us consider the $\partial$-factor condition and the basic pattern condition (Definitions~\ref{definition:d_factor_condition} and \ref{definition:basic_pattern_condition}).
As already seen in Example~\ref{example:counterexample_1}, if an equation does not satisfy the $\partial$-factor condition, $z^{*} \in \Delta_{t, n}$ can generate a pattern whose first constant singularity occurs not at $(t, n)$ but somewhere else.
This problem affects the definition of $Z_{t, n}(\beta)$.
Moreover, showing $\partial_C \Phi(x_{s - 1, m}(z^{*}), x_{s, m - 1}(z^{*}, w), x_{s - 1, m - 1}(z^{*})) \ne 0$ in the proof of Lemma~\ref{lemma:key_lemma} would become more difficult since the numerator of the derivative has a nontrivial factor.

On the other hand, as seen in Example~\ref{example:counterexample_2}, if an equation does not satisfy the basic pattern condition, a movable pattern can be different from the corresponding basic pattern.
Therefore, we must consider some modification of basic patterns.
Suppose that for $x^{*} \in \mathbb{P}^1(\mathbb{C})$, the relation $\Phi(B, C, D) = x^{*}$ has a nontrivial irreducible factor $F = F(B, C)$.
Then, the corresponding ``semi-basic pattern'' should be
\[
	\begin{matrix}
		{\color{red} \vdots} & & & & \iddots \\
		{\color{red} \bullet} & \circ & \circ & \circ \\
		{\color{red} \bullet} & \circ & \circ & \circ \\
		{\color{red} x^{\text{semi}}_{- 1, 0}} & x^{*} & \circ & \circ \\
		{\color{red} \bullet} & {\color{red} x^{\text{semi}}_{0, - 1}} & {\color{red} \bullet} & {\color{red} \bullet} & {\color{red} \cdots}
	\end{matrix},
\]
where $x^{\text{semi}}_{- 1, 0}$ and $x^{\text{semi}}_{0, - 1}$ are taken as $F \left( x^{\text{semi}}_{- 1, 0}, x^{\text{semi}}_{0, - 1} \right) = 0$ and $\operatorname{trdeg}_{\mathbb{C}} \mathbb{C} \left( x^{\text{semi}}_{- 1, 0}, x^{\text{semi}}_{0, - 1} \right) = 1$, which generates $x^{\text{semi}}_{0, 0} = x^{*}$, and the other initial values are just variables.
For instance, in the case of Example~\ref{example:counterexample_2}, the non-trivial factor $F(B, C) = B C + 1$ of the relation $\Phi(B, C, D) = 0$ corresponds to the semi-basic pattern $x^{\text{semi}}_{- 1, 0} x^{\text{semi}}_{0, - 1} + 1 = 0$.
Note that such an idea has already been used in singularity analysis \cite{lattice_sc_classic}.
If one uses semi-basic patterns to calculate degrees, the definition of $Z_{t, n}(\beta)$ must be modified, too, such as
$Z_{t, n}(\beta) = Z^{\text{basic}}_{t, n}(\beta) + Z^{\text{semi}}_{t, n}(\beta)$
where $Z^{\text{basic}}_{t, n}(\beta)$ (resp.\ $Z^{\text{semi}}_{t, n}(\beta)$) is the number of spontaneous occurrences of $x_{t, n}(z^{*}) = \beta$ with $\operatorname{trdeg}_{\mathbb{C}} \mathbb{C} \left( x_{t - 1, n}(z^{*}), x_{t - 1, n}(z^{*}) \right) = 2$ (resp.\ $= 1$).
However, if there exists $\beta \in \mathbb{P}^1(\mathbb{C})$ such that neither $Z^{\text{basic}}_{t, n}(\beta)$ nor $Z^{\text{semi}}_{t, n}(\beta)$ vanishes, the degree relations tend to be insufficient to determine the degrees since $Z^{\text{basic}}_{t, n}(\beta)$ and $Z^{\text{semi}}_{t, n}(\beta)$ are both considered to be unknown variables in the proof of Theorem~\ref{theorem:degree_solvable}.

All the singularities we considered in this paper were constant singularities.
However, even in the case of the discrete KdV equations, other types of singularities have been reported in the context of singularity analysis \cite{sc_kdv1,sc_kdv2}.
Then, what should a rigorous definition of a non-constant singularity be?
A key to the question already appeared in \textsection\ref{section:proof}.
A natural definition would be that $z^{*} \in \mathbb{P}^1 \left( \overline{\mathbb{K}} \right)$ generates an algebraic singularity at $(t, n)$ if $\operatorname{trdeg}_{\mathbb{C}} \mathbb{C} \left( x_{t - 1, n}(z^{*}), x_{t, n - 1}(z^{*}), x_{t - 1, n - 1}(z^{*}) \right) \le 2$.
That is, an algebraic singularity occurs when $x_{t - 1, n}(z^{*})$, $x_{t, n - 1}(z^{*})$ and $x_{t - 1, n - 1}(z^{*})$ have a non-trivial algebraic relation over $\mathbb{C}$.
An algebraic singularity is called a constant singularity in our terminology if the algebraic relation is $\Phi \left( x_{t - 1, n}(z^{*}), x_{t, n - 1}(z^{*}), x_{t - 1, n - 1}(z^{*}) \right) = x^{*}$ for some $x^{*} \in \mathbb{P}^1 (\mathbb{C})$.
Actually, as for the semi-basic pattern considered in the above example, the starting point can be thought of as the algebraic singularity $F \left( x^{\text{semi}}_{- 1, 0}, x^{\text{semi}}_{0, - 1} \right) = 0$, instead of the constant singularity $x^{\text{semi}}_{0, 0} = x^{*}$.
Therefore, algebraic singularities could be useful when considering an equation without the basic pattern condition.

Lastly, let us consider whether our theory has some applications to mappings, i.e., equations on a one-dimensional lattice.
As noted in advance, it does not seem possible at all to apply our method to mappings since our theory, especially Lemma~\ref{lemma:key_lemma}, relies on the fact that an initial value problem for a lattice equation has infinitely many initial variables.
Although some ideas, such as Theorems~\ref{theorem:main_degree} and \ref{theorem:degree_solvable}, Lemma~\ref{lemma:main_lemma} and the strategy to use transcendental degrees, seem applicable, the proof of Theorem~\ref{theorem:main_theorem} is not valid anymore.
In other words, we do not know in the first place whether a movable pattern coincides with the corresponding basic pattern.
Therefore, to calculate exact degrees for mappings in a similar way to ours, some new idea is required.

\section*{Acknowledgement}

I wish to thank Prof.~R. Willox for discussion and comments.
I also thank the anonymous referee whose comments helped improve the presentation of the paper.
This work was supported by a Grant-in-Aid for Scientific Research of Japan Society for the Promotion of Science, JSPS KAKENHI Grant Number 23K12996.

\section*{Data availability}

All numerical calculations supplementing the results of this study are included in the article.

\appendix

\section{Essential choices of domain}\label{appendix:domain}

In this appendix, we show that there are only four essential choices of domain when one considers the degrees with respect to one of the initial variables.

\begin{theorem}\label{theorem:domain_choice}
	Suppose that a domain $H \subset \mathbb{Z}^2$ (see Definition~\ref{definition:domain}) and consider a quad equation \eqref{equation:quad_equation} on $H$.
	Let $(t_0, n_0) \in H_0$, $z = x_{t_0, n_0}$ and consider $\deg_z x_{t, n}(z)$, the degree with respect to $z$.
	Note that by the second condition of Definition~\ref{definition:domain}, $(t_0 + 1, n_0 + 1)$ must belong to $H \setminus H_0$.
	Define domains $H^{(1)}, H^{(2)}, H^{(3)}, H^{(4)} \subset \mathbb{Z}^2$ as
	\begin{itemize}
		\item 
		$H^{(1)} = \mathbb{Z}^2_{\ge (t_0, n_0)}$,

		\item 
		$H^{(2)} = \mathbb{Z}^2_{\ge (t_0 - 1, n_0)}$,

		\item 
		$H^{(3)} = \mathbb{Z}^2_{\ge (t_0, n_0 - 1)}$,

		\item 
		$H^{(4)} = \mathbb{Z}^2_{\ge (t_0 - 1, n_0 - 1)} \setminus \{ (t_0 - 1, n_0 - 1) \}$

	\end{itemize}
	and let $x^{(j)}_{t, n}$ be the solution to the equation \eqref{equation:quad_equation} on the domain $H^{(j)}$.
	\begin{enumerate}
		\item
		If $(t_0 + 1, n_0)$ and $(t_0, n_0 + 1)$ both belong to $H_0$, then for any $(t, n) \in H_{\ge (t_0, n_0)}$, the degree $\deg_z x_{t, n}$ coincides with $\deg_z x^{(1)}_{t, n}$.

		\item\label{enumerate:domain_case}
		If $(t_0 + 1, n_0)$ belongs to $H_0$ but $(t_0, n_0 + 1)$ does not, then for any $(t, n) \in H_{\ge (t_0, n_0)}$, the degree $\deg_z x_{t, n}$ coincides with $\deg_z x^{(2)}_{t, n}$.

		\item
		If $(t_0, n_0 + 1)$ belongs to $H_0$ but $(t_0 + 1, n_0)$ does not, then for any $(t, n) \in H_{\ge (t_0, n_0)}$, the degree $\deg_z x_{t, n}$ coincides with $\deg_z x^{(3)}_{t, n}$.

		\item
		If neither $(t_0 + 1, n_0)$ nor $(t_0, n_0 + 1)$ belongs to $H_0$, then for any $(t, n) \in H_{\ge (t_0, n_0)}$, the degree $\deg_z x_{t, n}$ coincides with $\deg_z x^{(4)}_{t, n}$.

	\end{enumerate}
\end{theorem}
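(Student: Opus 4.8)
The plan is to show that, for $(t,n)\in H_{\ge(t_0,n_0)}$, the quantity $\deg_z x_{t,n}$ depends on the domain only through the \emph{corner type} at $(t_0,n_0)$, i.e.\ through which of $(t_0-1,n_0)$ and $(t_0,n_0-1)$ lie in $H$. First I would observe that $z=x_{t_0,n_0}$ can influence $x_{s,m}$ only when $(s,m)\ge(t_0,n_0)$, so all the $z$-dependence of $x_{t,n}$ is produced by iterating $\Phi$ over the finite rectangle $R=\{(s,m):t_0\le s\le t,\ n_0\le m\le n\}$, which is the \emph{same} set for $H$ and for each $H^{(j)}$ since both contain the full future cone $\mathbb{Z}^2_{\ge(t_0,n_0)}$. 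Depending on the four cases I would fix a lower-left \emph{input boundary} $\mathcal I$ of $R$: in case~1 the west edge $\{(t_0,m)\}$ together with the south edge $\{(s,n_0)\}$; in case~2 the exterior column $\{(t_0-1,m):n_0\le m\le n\}$ together with $\{(s,n_0):s>t_0\}$; symmetrically in case~3; and in case~4 the exterior column $\{(t_0-1,m)\}$ together with the exterior row $\{(s,n_0-1)\}$. In each case $\mathcal I$ carries the same initial data that feeds $R$ in the corresponding standard domain $H^{(j)}$, and iterating $\Phi$ over $R$ expresses $x_{t,n}$ as one fixed rational function $x_{t,n}=G(z;\vec v)$, where $G$ depends only on $\Phi$, on $R$, and on the case — hence is identical for $H$ and $H^{(j)}$ — and $\vec v$ collects the values of the remaining points of $\mathcal I$.

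Two facts about $\vec v$ drive the argument. The first is that every coordinate of $\vec v$ is independent of $z$: in cases~2 and~4 the west input sits in column $t_0-1$ and the south input in row $n_0-1$, both outside the future cone of $(t_0,n_0)$, while in the cases where an exterior neighbour is absent the corresponding edge value at $(t_0+1,n_0)$ or $(t_0,n_0+1)$ lies on $H_0$ and is a fresh initial variable, which severs the eastward/northward flow of $z$ along that edge. Thus $z$ enters $G$ only through the corner. The second fact, and the crux of the proof, is that the coordinates of $\vec v$ are algebraically independent over $\mathbb{C}$. For the standard domains $H^{(j)}$ this is immediate, since there $\mathcal I\setminus\{(t_0,n_0)\}$ consists of distinct initial variables. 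For a general $H$ I would argue by a fresh-variable induction: listing the column inputs $x_{t_0-1,m}$ by increasing $m$, the value $x_{t_0-1,m}$ acquires an initial variable from row $m$ that none of $x_{t_0-1,n_0},\dots,x_{t_0-1,m-1}$ (whose past cones lie in rows $<m$) can involve, and this fresh variable enters $x_{t_0-1,m}=\Phi(x_{t_0-2,m},x_{t_0-1,m-1},x_{t_0-2,m-1})$ only through the single argument $x_{t_0-2,m}$. Since $\Phi$ depends on $B$, one has $\partial_B\Phi\ne0$, so the chain rule yields a single nonvanishing product with nothing to cancel against, whence $x_{t_0-1,m}$ is transcendental over the field generated by the earlier inputs. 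The row inputs are handled symmetrically, the fresh variable now entering through $C$, and the column and row families are jointly independent because they draw their fresh variables from disjoint half-planes of rows.

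With these two facts in hand I would finish by a degree-preserving specialization. Because $\vec v$ is algebraically independent over $\mathbb{C}$ and lies in $\mathbb{K}$, there is a $\mathbb{C}$-embedding of the rational function field $\mathbb{C}(\vec V)$ in formal indeterminates $\vec V$ onto $\mathbb{C}(\vec v)\subset\mathbb{K}$ sending $\vec V\mapsto\vec v$, which I extend so as to fix $z$; since the degree of a univariate rational function in $z$ is unchanged under extension of the coefficient field, $\deg_z G(z;\vec v)=\deg_z G(z;\vec V)$. The right-hand side is the generic degree of $G$ in $z$, which depends only on the case, and the identical computation for $H^{(j)}$ gives $\deg_z x^{(j)}_{t,n}=\deg_z G(z;\vec V)$ as well; equating the two yields $\deg_z x_{t,n}=\deg_z x^{(j)}_{t,n}$ in each of the four cases. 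I expect the main obstacle to be the algebraic-independence step for a general $H$; the point that makes it go through \emph{without} invoking the $\partial$-factor condition is precisely that each freshly introduced initial variable enters the relevant boundary value through exactly one of the three arguments of $\Phi$, so no delicate non-cancellation analysis (as in Lemma~\ref{lemma:key_lemma}) is required.
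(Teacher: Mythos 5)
Your proposal is correct and follows essentially the same route as the paper's proof: in each of the four cases you treat the values of the $H$-solution on the initial boundary of $H^{(j)}\subset H$ as if they were initial variables, verify that all of them except $z$ are $z$-free (by the same induction along the south row and west column using the case hypothesis on $(t_0+1,n_0)$ and $(t_0,n_0+1)$), verify their algebraic independence, and conclude that the degrees agree. The only difference is one of explicitness: you spell out the fresh-variable induction for algebraic independence and the degree-preserving specialization $\vec V\mapsto\vec v$, both of which the paper compresses into the remark that the first statement ``immediately follows from the shape of the set $H^{(j)}_0$'' -- and your chain-rule nonvanishing step tacitly uses the same generic independence of stencil triples that the paper assumes throughout, so the two arguments rest on identical foundations.
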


\begin{proof}
	We check that $x_{s, m}(z)$ ($(s, m) \in H^{(j)}_0$), which are the solution on $H$, can be thought of as if they were initial variables (see Figure~\ref{figure:domain_choice}).
	It is sufficient to show the following two statements:
	\begin{itemize}
		\item 
		The set
		\[
			\{ x_{s, m}(z) \mid (s, m) \in H^{(j)}_0 \}
		\]
		is algebraically independent.

		\item
		For any $(s, m) \in H^{(j)}_0 \setminus \{ (t_0, n_0) \}$, $x_{s, m}(z)$ does not depend on $z$.

	\end{itemize}
	The first statement immediately follows from the shape of the set $H^{(j)}_0$.

	Let us check the second statement.
	From here on, we will only consider case \eqref{enumerate:domain_case} since the proof in the other cases is almost the same.
	Since $t_0 - 1 < t_0$,
	\[
		x_{t_0 - 1, n_0}(z), x_{t_0 - 1, n_0 + 1}(z), x_{t_0 - 1, n_0 + 2}(z), \cdots
	\]
	are independent of $z = x_{t_0, n_0}$.
	Since it is an initial variable, $x_{t_0 + 1, n_0}(z)$ does not depend on $z$.
	Let us consider $x_{t_0 + 2, n_0}(z)$.
	If $x_{t_0 + 2, n_0}(z)$ is an initial variable, then it clearly does not depend on $z$.
	Otherwise, $x_{t_0 + 2, n_0}(z)$ is calculated from $x_{t_0 + 1, n_0}(z)$, $x_{t_0 + 2, n_0 - 1}(z)$ and $x_{t_0 + 1, n_0 - 1}(z)$.
	However, since $x_{t_0 + 1, n_0}(z)$, $x_{t_0 + 2, n_0 - 1}(z)$ and $x_{t_0 + 1, n_0 - 1}(z)$ are all independent of $z$, $x_{t_0 + 2, n_0}(z)$ does not depend on $z$.
	Therefore, regardless of whether it is an initial variable, $x_{t_0 + 2, n_0}(z)$ does not depend on $z$.
	Repeating this procedure, one can show that $x_{t_0 + 3, n_0}(z), x_{t_0 + 4, n_0}(z), \cdots$ do not depend on $z$, either, which completes the proof.
\end{proof}

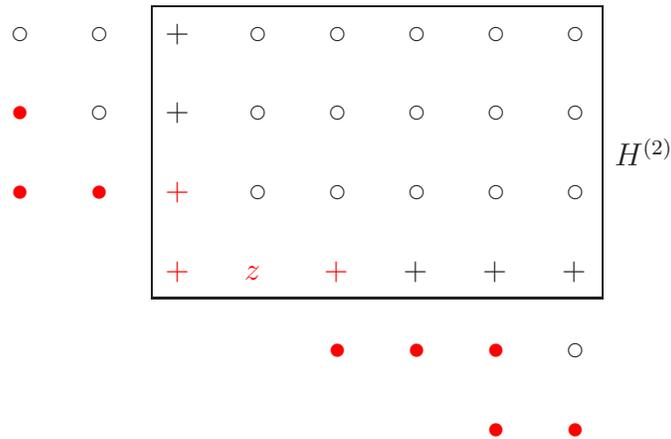
\begin{figure}
	\begin{center}
		\begin{picture}(250, 200)

			\multiput(100, 160)(30, 0){5}{\circle{5}}
			\multiput(100, 130)(30, 0){5}{\circle{5}}
			\multiput(100, 100)(30, 0){5}{\circle{5}}
			\multiput(220, 40)(30, 0){1}{\circle{5}}

			\multiput(10, 160)(30, 0){2}{\circle{5}}
			\multiput(40, 130)(30, 0){1}{\circle{5}}

			\multiput(155, 67)(30, 0){3}{$+$}
			\multiput(65, 127)(0, 30){2}{$+$}

			\put(60, 60){\framebox(170, 110)}
			\put(235, 110){$H^{(2)}$}

			\color{red}

			\put(95, 67){$z$}
			\multiput(10, 130)(30, 0){1}{\circle*{5}}
			\multiput(10, 100)(30, 0){2}{\circle*{5}}
			\multiput(65, 67)(0, 30){2}{$+$}
			\multiput(125, 67)(0, 30){1}{$+$}
			\multiput(130, 40)(30, 0){3}{\circle*{5}}
			\multiput(190, 10)(30, 0){2}{\circle*{5}}

			\color{black}

		\end{picture}
	\end{center}
	\caption{
		Situation in Theorem~\ref{theorem:domain_choice}.
		The points marked with ``$+$'' correspond to the initial boundary of $H^{(2)}$.
		For example, it follows from this figure that $x_{t_0 + 2, n_0}(z)$, which corresponds to a point marked with ``$+$'' but is not an initial variable in $H$, does not depend on $z = x_{t_0, n_0}$.
	}
	\label{figure:domain_choice}
\end{figure}

\begin{remark}
	With minor modifications, the theorem still holds even in the case of non-quad equations.
	For example, if an equation is defined on a lattice $L$ as
	\[
		x_h = \Phi(x_{h + h^{(1)}}, \ldots, x_{h + h^{(N)}})
		\quad
		(h \in L)
	\]
	with shifts $h^{(1)}, \ldots, h^{(N)} \in L$,
	then the number of essential choices of domain is at most $2^N$.
	If $z = x_{h_0}$ ($h_0 \in H_0$), the choices are whether $h_0 - h^{(j)}$ belongs to the initial boundary for each $j = 1, \ldots, N$.
	Some of the cases are excluded because of the conditions on a domain, such as the second condition in Definition~\ref{definition:domain}.
\end{remark}

\section{Substitution and derivatives}\label{appendix:substitution_and_derivatives}

In this appendix, let $K$ be an arbitrary field of characteristic $0$ and consider the rational function field $K(z, w)$.
We allow $\infty$ as the result of substitution throughout this paper.
This is why we need some unusual definitions and lemmas about substitution and derivatives.

\begin{definition}\label{definition:substitution}
	For $z^{*} \in \mathbb{P}^1 \left( \overline{K} \right)$, we define the substitution map
	\[
		K(z, w) \to \overline{K}(w) \cup \{ \infty \}; \quad
		f(z, w) \mapsto f(z^{*}, w)
	\]
	as follows.

	First, we consider the case $z^{*} \ne \infty$.
	Let $f(z, w) = \frac{p(z, w)}{q(z, w)}$ where $p, q \in K[z, w]$ are coprime.
	If $q(z^{*}, w) = 0$, i.e., $q$ is divisible by $z - z^{*}$, then we set $f(z^{*}, w) = \infty$.
	Note that in this case, $p(z^{*}, w)$ cannot become $0$.
	If $q(z^{*}, w) \ne 0$, then we define $f(z^{*}, w)$ as $f(z^{*}, w) = \frac{p(z^{*}, w)}{q(z^{*}, w)}$.

	Next, we consider the case $z^{*} = \infty$.
	In this case, we introduce $\widetilde{z} = \frac{1}{z}$ and substitute $\widetilde{z} = 0$.
\end{definition}

It should be noted that all the substitutions in this paper are codimension $1$.
In Definition~\ref{definition:substitution}, for example, we only substitute a single variable $z$.
More general settings are considered in Lemma~\ref{lemma:substitution_trdeg_codimension_one}.
Therefore, allowing $\infty$ as a value, substitutions are always well-defined.

The following properties immediately follow from Definition~\ref{definition:substitution}.

\begin{lemma}
	Let $f, g \in K(z, w)$ and let $z^{*} \in \mathbb{P}^1 \left( \overline{K} \right)$.
	\begin{enumerate}
		\item
		If $f(z^{*}, w) \ne \infty$ or $g(z^{*}, w) \ne \infty$, then
		\[
			(f + g)(z^{*}, w) = f(z^{*}, w) + g(z^{*}, w).
		\]

		\item
		If $(f(z^{*}, w), g(z^{*}, w)) \ne (0, \infty)$ or $(f(z^{*}, w), g(z^{*}, w)) \ne (\infty, 0)$, then
		\[
			(f g)(z^{*}, w) = f(z^{*}, w) g(z^{*}, w).
		\]

	\end{enumerate}
\end{lemma}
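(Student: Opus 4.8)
The plan is to reinterpret the substitution as evaluation of a one-variable rational function over the field $L := \overline{K}(w)$ at the point $z^{*}$, and then to control it by the order-of-vanishing valuation at $z^{*}$. First I would reduce to the case $z^{*} \neq \infty$: by Definition~\ref{definition:substitution} the case $z^{*} = \infty$ is defined through the change of variable $\tilde{z} = 1/z$, and since $f \mapsto f(1/\tilde{z}, w)$ is compatible with sums and products, both assertions at $z^{*} = \infty$ transform into the corresponding statements at $\tilde{z} = 0$. So it suffices to treat a finite $z^{*} \in \overline{K} \subseteq L$, and I regard $f, g$ as elements of $L(z)$, so that $f \mapsto f(z^{*}, w)$ becomes ordinary evaluation of a rational function over the field $L$ at $z^{*} \in \mathbb{P}^1(L)$.

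The key tool is the $(z - z^{*})$-adic valuation $v$ on $L(z)$, with valuation ring $\mathcal{O} = L[z]_{(z - z^{*})}$, maximal ideal $\mathfrak{m} = (z - z^{*})\mathcal{O}$, and residue field $\mathcal{O}/\mathfrak{m} = L$. I would first record the dictionary between $v$ and Definition~\ref{definition:substitution}: writing $f = (z - z^{*})^{v(f)} u$ with $u \in \mathcal{O}^{\times}$ and comparing with the coprime-fraction prescription, one checks that $f(z^{*}, w)$ equals $0$, $\infty$, or a nonzero element of $L$ according as $v(f) > 0$, $v(f) < 0$, or $v(f) = 0$; and that whenever $v(f) \geq 0$ the value $f(z^{*}, w)$ is exactly the image of $f$ under the residue homomorphism $\mathcal{O} \to L$. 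This last fact, that evaluation restricts to a ring homomorphism on $\mathcal{O}$, is what makes both claims work in the finite cases.

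With this dictionary in place, both parts reduce to a short case analysis. For (1), assume at least one of $f(z^{*}, w), g(z^{*}, w)$ is finite. If both are, i.e. $v(f), v(g) \geq 0$, then $f + g \in \mathcal{O}$ and additivity of the residue map gives the identity; if exactly one, say $g$, has $v(g) < 0$, then $v(f) \geq 0 > v(g)$ forces $v(f + g) = v(g) < 0$ by the ultrametric inequality with unequal terms, so both sides equal $\infty$. For (2), using $v(fg) = v(f) + v(g)$: if both values are finite the residue map, being multiplicative on $\mathcal{O}$, gives the result (including the subcase where one value is $0$); if exactly one value is $\infty$ and the hypothesis rules out the other being $0$, then that other value has $v = 0$, whence $v(fg) < 0$ and both sides are $\infty$; and if both are $\infty$ then $v(fg) < 0$ and again both sides are $\infty$. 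The two excluded pairs $(0, \infty)$ and $(\infty, 0)$ are precisely the $0 \cdot \infty$ indeterminacies, where $v(f) + v(g)$ is unconstrained, so no such identity can be expected there.

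The real content here is entirely bookkeeping, and the only point demanding genuine care is establishing the dictionary between the explicit coprime-fraction definition and the valuation $v$, in particular confirming that on $\mathcal{O}$ the prescription of Definition~\ref{definition:substitution} is literally the residue map and that the $\infty$-conventions match up. Once that identification is secured, the ultrametric inequality and the homomorphism property of the residue map do all the remaining work.
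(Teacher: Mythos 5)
Your proof is correct, but it is worth noting that the paper itself offers no argument at all for this lemma: it is stated with the single remark that the properties ``immediately follow from Definition~\ref{definition:substitution},'' the intended justification being a direct check on coprime fraction representations $f = p/q$. Your valuation-theoretic route is a genuine (and tidier) alternative: by passing to $L = \overline{K}(w)$ and the $(z - z^{*})$-adic valuation on $L(z)$, you replace the ad hoc case analysis with the two standard facts that the residue map $\mathcal{O} \to L$ is a ring homomorphism and that $v(f + g) = \min(v(f), v(g))$ when $v(f) \ne v(g)$, while $v(fg) = v(f) + v(g)$ always. The one step you correctly single out as needing care --- the dictionary identifying the paper's coprime-fraction evaluation with the residue map --- does go through: $p, q$ coprime in $K[z, w]$ cannot both be divisible by $z - z^{*}$ in $\overline{K}[z, w]$ (otherwise the minimal polynomial of $z^{*}$ over $K$ would be a common factor in $K[z, w]$), so $\min(v(p), v(q)) = 0$ and the sign of $v(f)$ matches the definition's trichotomy $0$/finite nonzero/$\infty$ exactly; spelling that two-line argument out would make the proof fully complete. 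Your reduction of $z^{*} = \infty$ to $\widetilde{z} = 0$ via the automorphism $z \mapsto 1/\widetilde{z}$ is also exactly how the paper's definition handles that case. Finally, you silently read the hypothesis of part (2) as excluding \emph{both} pairs $(0, \infty)$ and $(\infty, 0)$; this is the right reading (as stated, with ``or,'' the hypothesis is vacuously true, evidently a typo in the paper), and your observation that these are precisely the points where $v(f) + v(g)$ is unconstrained explains why the exclusion is necessary. What the valuation framing buys you is robustness: the same dictionary argument works verbatim for any place of $L(z)$, whereas the paper's implicit direct check must be redone case by case.
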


\begin{definition}
	We use $\partial_w$ to denote the derivative with respect to $w$.
	That is, $\partial_w \colon K(z, w) \to K(z, w)$ satisfies
	\[
		\partial_w z = 0, \quad
		\partial_w w = 1, \quad
		\partial_w a = 0 \ (a \in K)
	\]
	and the Leibniz rule.
\end{definition}

Since $K$ has characteristic $0$, it is easy to show that $\partial_w f(z, w) = 0$ if and only if $f \in K(z)$.

\begin{definition}
	For $f \in K(z, w)$ and $z^{*} \in \mathbb{P}^1 \left( \overline{K} \right)$, we define
	$\partial f(z^{*}, w) \in \overline{K}(w) \cup \{ \infty \}$ as
	\[
		\partial f(z^{*}, w) = \left( \partial_w f(z, w) \right) \big|_{z = z^{*}}.
	\]
\end{definition}

Note that the ordering of derivative and substitution can be interchanged if $f(z^{*}, w) \ne \infty$.

\begin{lemma}
	Let $f(z, w) \in K(z, w)$ and let $z^{*} \in \mathbb{P}^1 \left( \overline{K} \right)$.
	If $f(z^{*}, w)$ does not depend on $w$, then at least one of the following holds:
	\begin{enumerate}
		\item
		$f(z^{*}, w) = \infty$,

		\item
		$\partial_w f(z^{*}, w) = 0$.

	\end{enumerate}
\end{lemma}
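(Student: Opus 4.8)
The plan is to reduce the whole statement to the single observation, already recorded in the remark just after the definition of $\partial f(z^{*}, w)$, that differentiation in $w$ commutes with substitution of $z = z^{*}$ whenever the substituted value is finite. First I would dispose of the case $f(z^{*}, w) = \infty$: this is exactly the first alternative and requires no argument. So from that point on I would assume $f(z^{*}, w) \neq \infty$, i.e.\ $f(z^{*}, w) \in \overline{K}(w)$, and aim to establish the second alternative.

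In this case the commutation of derivative and substitution yields
\[
	\partial f(z^{*}, w)
	= \left( \partial_w f(z, w) \right) \big|_{z = z^{*}}
	= \partial_w \left( f(z^{*}, w) \right).
\]
Since $f(z^{*}, w)$ does not depend on $w$ by hypothesis, its $w$-derivative vanishes, whence $\partial f(z^{*}, w) = 0$, which is the second alternative. This is the entire content of the lemma once the commutation is justified, so the proof is essentially a case split followed by one line.

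It remains to justify the commutation, which is the only place where care is needed. Writing $f = p/q$ with $p, q \in K[z, w]$ coprime and $z^{*} \neq \infty$, the assumption $f(z^{*}, w) \neq \infty$ means precisely, by Definition~\ref{definition:substitution}, that $q(z^{*}, w)$ is a nonzero element of $\overline{K}[w]$. Differentiating $f = p/q$ by the quotient rule and substituting $z = z^{*}$, the denominator becomes $q(z^{*}, w)^2 \neq 0$, so no cancellation of the form $\frac{\ast}{0}$ occurs; here I would invoke that evaluation at $z = z^{*}$ is a ring homomorphism $K[z,w] \to \overline{K}[w]$ into a domain, so the reduced denominator (a divisor of $q^2$) also stays nonzero and $\partial f(z^{*}, w)$ is finite. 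The resulting expression then agrees termwise with $\partial_w\!\left( p(z^{*}, w)/q(z^{*}, w) \right)$, because substitution of $z = z^{*}$ into a polynomial commutes with $\partial_w$, the two variables being independent. For the remaining case $z^{*} = \infty$ I would pass to $\widetilde{z} = 1/z$ exactly as in Definition~\ref{definition:substitution}; since $\partial_w$ is unaffected by this change of variable in $z$, the same finite-denominator argument applies after substituting $\widetilde{z} = 0$. The main (and, really, only) obstacle is thus the bookkeeping that guarantees the denominator does not vanish after substitution, which is supplied precisely by the assumption $f(z^{*}, w) \neq \infty$; there is no deeper difficulty.
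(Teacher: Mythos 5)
Your proof is correct and follows essentially the same route as the paper: the paper's own proof is precisely the case split on $f(z^{*}, w) = \infty$ followed by the one-line computation $\partial_w f(z^{*}, w) = \partial_w \left( f(z^{*}, w) \right) = 0$, invoking the remark that derivative and substitution commute when $f(z^{*}, w) \ne \infty$. The only difference is that you additionally justify that commutation (via the quotient rule and the fact that the reduced denominator divides $q^2$, hence survives evaluation), which the paper leaves as an unproved remark; your justification is sound.
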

\begin{proof}
	Assume that $f(z^{*}, w) \ne \infty$.
	Then, we have
	\[
		\partial_w f(z^{*}, w)
		= \partial_w \left( f(z^{*}, w) \right)
		= 0.
	\]
\end{proof}

\begin{example}
	Let $z^{*} = 0$ and let
	\[
		f(z, w) = \frac{1}{z} + w.
	\]
	Then,
	\[
		f(z^{*}, w) = \infty
	\]
	does not depend on $w$ but
	\[
		\partial_w f(z^{*}, w)
		= \left( \partial_w f(z, w) \right) \big|_{z = 0}
		= 1
		\ne 0.
	\]
	Therefore, in our settings, it is impossible to determine only by $\partial_w$ whether a rational function depends on $w$.
\end{example}

We use the following lemma many times in the main part of the paper.

\begin{lemma}
	For $f(z, w) \in K(z, w) \setminus \{ 0 \}$ and $z^{*} \in \mathbb{P}^1 \left( \overline{K} \right)$, the following two conditions are equivalent:
	\begin{enumerate}
		\item
		$f(z^{*}, w)$ does not depend on $w$,

		\item
		$\partial_w f(z^{*}, w) = 0$ or $\partial_w \frac{1}{f(z^{*}, w)} = 0$.

	\end{enumerate}
	In particular, $f(z^{*}, w)$ depends on $w$ if and only if $\partial_w f(z^{*}, w)$ and $\partial_w \frac{1}{f(z^{*}, w)}$ are both non-zero.
\end{lemma}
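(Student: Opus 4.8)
The plan is to prove the two implications separately, exploiting the symmetry between $f$ and $1/f$ so that the awkward second alternative in condition (2) is dispatched by the very same argument applied to the reciprocal. First I would establish one auxiliary fact: that inversion commutes with substitution on $\mathbb{P}^1\left(\overline{K}\right)$, namely that for $f \neq 0$ one has $\left(\tfrac{1}{f}\right)(z^*,w) = \tfrac{1}{f(z^*,w)}$ under the conventions $\tfrac{1}{\infty}=0$ and $\tfrac{1}{0}=\infty$. This I would check by writing $f = p/q$ with $p,q$ coprime, observing that $1/f = q/p$ is then also in lowest terms, and comparing the three cases $q(z^*,w)=0$, $p(z^*,w)=0$, and $p(z^*,w)q(z^*,w)\neq 0$ directly against Definition~\ref{definition:substitution}. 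A consequence is that $f(z^*,w)$ is $w$-independent if and only if $\tfrac{1}{f}(z^*,w)$ is, which is what legitimizes treating the two alternatives of (2) symmetrically.

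For the direction $(1)\Rightarrow(2)$, I would assume $f(z^*,w)$ does not depend on $w$ and split on whether the value is $\infty$. If $f(z^*,w)\neq\infty$, the preceding lemma gives $\partial_w f(z^*,w)=0$, the first alternative. If $f(z^*,w)=\infty$, then by the auxiliary fact $\tfrac{1}{f}(z^*,w)=0$, which is finite and $w$-independent, so the preceding lemma applied to $1/f$ yields $\partial_w\tfrac{1}{f(z^*,w)}=0$, the second alternative.

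For $(2)\Rightarrow(1)$, suppose first $\partial_w f(z^*,w)=0$. If $f(z^*,w)=\infty$ it is the constant value $\infty$ and there is nothing to prove; otherwise substitution and $\partial_w$ commute, so $\partial_w\bigl(f(z^*,w)\bigr)=0$ in $\overline{K}(w)$, and since $\operatorname{char}K=0$ this forces $f(z^*,w)\in\overline{K}$, i.e.\ it does not depend on $w$. The case $\partial_w\tfrac{1}{f(z^*,w)}=0$ is identical with $1/f$ in place of $f$: it shows $\tfrac{1}{f}(z^*,w)$ is $w$-independent, and the auxiliary fact transports this conclusion back to $f(z^*,w)$. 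The final ``in particular'' statement is then just the contrapositive of the equivalence.

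The only step that needs genuine care is the auxiliary inversion--substitution identity, because everything downstream---especially recovering the $w$-independence of $f$ from that of $1/f$---rests on correct bookkeeping of the special values $0$ and $\infty$ under the coprime-numerator/denominator definition of substitution. This is precisely the difficulty isolated by the Example preceding the lemma, where $\partial_w f(z^*,w)\neq 0$ even though $f(z^*,w)=\infty$ is $w$-independent; symmetrizing the test with $1/f$ is exactly the device that repairs the naive derivative criterion.
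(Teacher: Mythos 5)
Your proof is correct and follows essentially the same route as the paper's: split on whether $f(z^{*}, w)$ is $\infty$, handle the finite case via commutation of $\partial_w$ with substitution (using $\operatorname{char} K = 0$), and handle the infinite case by passing to $g = \tfrac{1}{f}$. The only difference is that you isolate the inversion--substitution compatibility $\left( \tfrac{1}{f} \right)(z^{*}, w) = \tfrac{1}{f(z^{*}, w)}$ as an explicit auxiliary fact with a coprimality argument, which the paper uses implicitly without comment; making it explicit is a harmless (indeed welcome) refinement, not a different method.
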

\begin{proof}
	Note that the definition of $\partial_w \frac{1}{f(z^{*}, w)}$ is
	\[
		\partial_w \frac{1}{f(z^{*}, w)} = \left( \partial_w \frac{1}{f(z, w)} \right) \big|_{z = z^{*}}.
	\]
	First, we suppose that $f(z^{*}, w)$ does not depend on $w$, i.e., $f(z^{*}, w) \in K \cup \{ \infty \}$.
	If $f(z^{*}, w) \ne \infty$, then we have
	$\partial_w f(z^{*}, w) = 0$.
	Let us consider the case $f(z^{*}, w) = \infty$.
	Introducing $g(z, w) = \frac{1}{f(z, w)}$, we have
	$g(z^{*}, w) = 0$, which implies that $\partial_w \frac{1}{f(z^{*}, w)} = 0$.

	Next, we show the converse.
	Let $\partial_w f(z^{*}, w) = 0$.
	If $f(z^{*}, w) = \infty$, then it is independent of $w$.
	Therefore, we may assume that $f(z^{*}, w) \ne \infty$.
	In this case, usual properties of the derivative hold and thus $f(z^{*}, w)$ does not depend on $w$.
	If $\partial_w \frac{1}{f(z^{*}, w)} = 0$, one can show in the same way that $g(z^{*}, w)$ does not depend on $w$ where $g(z, w) = \frac{1}{f(z, w)}$.
\end{proof}

\section{Algebraic lemmas}\label{appendix:algebraic_lemmas}

In this appendix, we show some purely algebraic lemmas, which we use in \textsection\ref{section:proof}.

\begin{lemma}\label{lemma:expansion_lemma}
	Let $F = F(X; Y_1, \ldots, Y_N) \in \mathbb{C}(X; Y_1, \ldots, Y_n)$
	and let $x^{*} \in \mathbb{C}$.
	Suppose that the expansion of $F(x^{*} + \varepsilon; Y_1, \ldots, Y_N)$ at $\varepsilon = 0$ is
	\[
		F(x^{*} + \varepsilon; Y_1, \ldots, Y_N) = F_{\ell}(Y_1, \ldots, Y_N) \varepsilon^{\ell} + O \left( \varepsilon^{\ell + 1} \right),
	\]
	where $\ell \in \mathbb{Z}$ and $F_{\ell}(Y_1, \ldots, Y_N)$ is non-zero as an element of $\mathbb{C}(Y_1, \ldots, Y_N)$.
	Let $\mathbb{C} \subset K$ be a field extension, $x(z), y_1(z), \ldots, y_N(z) \in K(z)$ and suppose that $z^{*} \in \overline{K}$ satisfies the following conditions:
	\begin{itemize}
		\item
		$x(z^{*} + \varepsilon) = x^{*} + a \varepsilon^r + O \left( \varepsilon^{r + 1} \right)$
		for some $a \in \overline{K} \setminus \{ 0 \}$ and $r \ge 1$,

		\item
		$y_1(z^{*}), \ldots, y_N(z^{*})$ are algebraically independent over $\mathbb{C}$.

	\end{itemize}
	Then, the expansion of $F(x(z^{*} + \varepsilon); y_1(z^{*} + \varepsilon), \ldots, y_N(z^{*} + \varepsilon))$ at $\varepsilon = 0$ is
	\[
		F(x(z^{*} + \varepsilon); y_1(z^{*} + \varepsilon), \ldots, y_N(z^{*} + \varepsilon)) = b \varepsilon^{r \ell} + O \left( \varepsilon^{r \ell + 1} \right)
	\]
	for some $b \in \overline{K} \setminus \{ 0 \}$.
\end{lemma}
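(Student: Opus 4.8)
The plan is to carry out the entire computation inside the field of formal Laurent series $L = \overline{K}((\varepsilon))$, equipped with its valuation $v = \operatorname{ord}_\varepsilon$, whose valuation ring $A = \overline{K}[[\varepsilon]]$ has maximal ideal $\mathfrak{m} = \varepsilon A$ and residue field $\overline{K}$. Writing $\xi = x(z^* + \varepsilon)$ and $\theta = \xi - x^*$, the first hypothesis reads $\theta = a\varepsilon^r + O(\varepsilon^{r+1})$, so $v(\theta) = r \ge 1$ and $\theta \in \mathfrak{m}$. Since $x^* \in \mathbb{C}$ is finite, $\xi \in A$; and, as I explain at the end, the algebraic independence hypothesis forces each $y_j(z^*)$ to be finite, so the substituted boundary values $y_j(z^* + \varepsilon)$ also lie in $A$ and reduce modulo $\mathfrak{m}$ to $y_j(z^*)$. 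The goal is to compute $v$ and the leading coefficient of $F$ evaluated at these series.

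The key device is to separate the special direction $X \to x^*$ from the generic directions $Y \to y(z^*)$. First I would record the given expansion as a factorization over the field $\mathbb{C}(Y) = \mathbb{C}(Y_1, \ldots, Y_N)$: viewing $F(x^* + \eta; Y)$ as a rational function of $\eta$, write $F(x^* + \eta; Y) = \eta^{\ell}\, \dfrac{\tilde P(\eta; Y)}{\tilde Q(\eta; Y)}$, where $\tilde P, \tilde Q \in \mathbb{C}(Y)[\eta]$ are \emph{polynomials} in $\eta$ with nonzero constant terms $\tilde P(0; Y), \tilde Q(0; Y) \in \mathbb{C}(Y)$ and with $\tilde P(0;Y)/\tilde Q(0;Y) = F_\ell(Y) \ne 0$. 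Keeping $\tilde P, \tilde Q$ as $\eta$-polynomials (rather than expanding the infinite Laurent tail) makes the later substitution a finite algebraic manipulation and sidesteps any convergence bookkeeping. This factorization is an identity in $\mathbb{C}(X, Y)$.

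Next I would substitute in two stages. Because $y_1(z^*), \ldots, y_N(z^*)$ are algebraically independent over $\mathbb{C}$, the assignment $Y_j \mapsto y_j(z^*)$ is a field embedding $\mathbb{C}(Y) \hookrightarrow \overline{K}$; hence every nonzero $h \in \mathbb{C}(Y)$ has a well-defined nonzero value $h(y(z^*)) \in \overline{K}$, while the substitution $Y_j \mapsto y_j(z^* + \varepsilon)$ lands each such $h$ in $A$ with residue $h(y(z^*))$. Applying this to the coefficients of $\tilde P$ and $\tilde Q$ shows that after substituting $Y_j \mapsto y_j(z^*+\varepsilon)$ and then $\eta \mapsto \theta$, both $\tilde P(\theta; y(z^*+\varepsilon))$ and $\tilde Q(\theta; y(z^*+\varepsilon))$ agree with their $\eta^0$-terms up to a correction of valuation $\ge r > 0$; since those terms are units of $A$ (their residues $\tilde P(0; y(z^*))$, $\tilde Q(0; y(z^*))$ are nonzero), both are units of $A$. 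Therefore the quotient is a unit of valuation $0$ with residue $F_\ell(y(z^*)) \ne 0$, and multiplicativity of $v$ gives
\[
v\bigl(F(\xi; y(z^* + \varepsilon))\bigr) = v(\theta^{\ell}) + 0 = r\ell,
\]
with leading coefficient $b = a^{\ell} F_\ell(y(z^*)) \in \overline{K} \setminus \{0\}$, which is exactly the asserted expansion $b\varepsilon^{r\ell} + O(\varepsilon^{r\ell+1})$.

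The main obstacle is the legitimacy of the substitution: modulo $\mathfrak{m}$ the point $(\xi, y(z^*+\varepsilon))$ reduces to $(x^*, y(z^*))$, and $x^*$ is precisely a value at which $F(\,\cdot\,; y(z^*))$ may acquire a zero or pole of order $\ell$, so naive evaluation of $F$ as a ratio of polynomials could be $0/0$ or $c/0$. This is why I would substitute in stages: first $Y_j \mapsto y_j(z^* + \varepsilon)$ with $X$ still an indeterminate — legitimate because the $y_j(z^*+\varepsilon)$ are algebraically independent over $\mathbb{C}(X)$ (inherited from the independence of their residues $y_j(z^*)$ via reduction modulo $\mathfrak{m}$), giving a genuine rational function $F(X; y(z^* + \varepsilon)) = (X - x^*)^{\ell} U(X - x^*) \in L(X)$ whose local factor $U$ is regular and nonzero at the origin — and only then set $X = \xi$, i.e.\ $\eta = \theta \in \mathfrak{m}$, which lies within reach of the expansion of $U$ about $0$. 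The factorization carries the entire zero/pole into the harmless $\theta^{\ell}$. A final point to dispatch is the formal allowance of $\infty$ among the $y_j(z^*)$: algebraic independence over $\mathbb{C}$ makes each $y_j(z^*)$ transcendental, hence finite, so the finite analysis above is the substantive case; a symbolic $\infty$ in coordinate $j$ is removed by the chart change $Y_j \mapsto 1/Y_j$ together with a corresponding rewriting of $F$, which returns one to the present situation.
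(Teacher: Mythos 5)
Your proof is correct, and it follows the same skeleton as the paper's — expand $F$ around $X = x^*$ over $\mathbb{C}(Y)$, use the algebraic independence of $y_1(z^*), \ldots, y_N(z^*)$ to guarantee that the relevant $\mathbb{C}(Y)$-coefficients neither vanish nor blow up at the substituted point, and multiply to get $b = a^{\ell} F_{\ell}(y_1(z^*), \ldots, y_N(z^*))$ — but your execution is genuinely tighter at the one step the paper leaves implicit. The paper substitutes $x(z^*+\varepsilon) = x^* + a\varepsilon^r + O(\varepsilon^{r+1})$ directly into the $\varepsilon$-expansion of $F(x^*+\varepsilon; Y)$ and asserts the result is $F_{\ell}(y(z^*+\varepsilon))\,a^{\ell}\varepsilon^{r\ell} + O(\varepsilon^{r\ell+1})$; this silently requires that every higher Laurent coefficient $F_k(y_1(z^*+\varepsilon), \ldots, y_N(z^*+\varepsilon))$, $k > \ell$, has nonnegative order in $\varepsilon$ (true by the same genericity argument, but unstated), after which the paper only verifies $F_{\ell}(y(z^*+\varepsilon)) \sim \varepsilon^0$ by writing $F_{\ell} = p_{\ell}/q_{\ell}$ with both polynomials nonvanishing at the generic point. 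You sidestep the infinite tail entirely by the finite factorization $F(x^*+\eta; Y) = \eta^{\ell}\,\tilde P(\eta;Y)/\tilde Q(\eta;Y)$ with $\tilde P, \tilde Q \in \mathbb{C}(Y)[\eta]$ having unit constant terms, and then do unit-and-residue bookkeeping in the valuation ring $\overline{K}[[\varepsilon]]$; your two-stage substitution (first $Y_j \mapsto y_j(z^*+\varepsilon)$ with $X$ an indeterminate, then $X \mapsto \xi$) cleanly localizes the potential $0/0$ into the harmless factor $\theta^{\ell}$, which in the paper's framework is instead covered by its substitution conventions (Appendix~\ref{appendix:substitution_and_derivatives}) and the codimension-one lemma (Lemma~\ref{lemma:substitution_trdeg_codimension_one}). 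Your closing observations are also sound: the independence hypothesis forces each $y_j(z^*)$ to be finite (under the paper's convention of treating $\infty$ as algebraic), and the reduction-mod-$\mathfrak{m}$ argument correctly lifts independence of the residues to independence of the series. In short, same decomposition of the problem, but your valuation-theoretic framing buys a fully airtight justification of the tail estimate, at the cost of more machinery than the paper's two-line computation.
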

\begin{proof}
	Since
	\begin{align*}
		F(x(z^{*} + \varepsilon); y_1(z^{*} + \varepsilon), \ldots, y_N(z^{*} + \varepsilon))
		&= F \left( x^{*} + a \varepsilon^r + O \left( \varepsilon^{r + 1} \right); y_1(z^{*} + \varepsilon), \ldots, y_N(z^{*} + \varepsilon) \right) \\
		&= F_{\ell}(y_1(z^{*} + \varepsilon), \ldots, y_N(z^{*} + \varepsilon)) a^{\ell} \varepsilon^{r \ell} + O \left( \varepsilon^{r \ell + 1} \right),
	\end{align*}
	it is sufficient to show that
	\[
		F_{\ell}(y_1(z^{*} + \varepsilon), \ldots, y_N(z^{*} + \varepsilon)) \sim \varepsilon^0.
	\]
	Let $F_{\ell} = \frac{p_{\ell}}{q_{\ell}}$ where $p_{\ell}, q_{\ell} \in \mathbb{C}[Y_1, \ldots, Y_N]$ are coprime.
	Since $F_{\ell}$ is not $0$ or $\infty$, neither $p_{\ell}$ nor $q_{\ell}$ is $0$ as a polynomial.
	Since $y_1(z^{*}), \ldots, y_N(z^{*})$ are algebraically independent over $\mathbb{C}$, we have
	\[
		p_{\ell}(y_1(z^{*}), \ldots, y_N(z^{*})), \
		q_{\ell}(y_1(z^{*}), \ldots, y_N(z^{*})) \ne 0.
	\]
	Using $y_j(z^{*} + \varepsilon) = y_j(z^{*}) + O(\varepsilon)$ for $j = 1, \ldots, N$, we have
	\begin{align*}
		F_{\ell}(y_1(z^{*} + \varepsilon), \ldots, y_N(z^{*} + \varepsilon))
		&= \frac{p_{\ell}(y_1(z^{*}), \ldots, y_N(z^{*})) + O(\varepsilon)}{q_{\ell}(y_1(z^{*}), \ldots, y_N(z^{*})) + O(\varepsilon)} \\
		&= \frac{p_{\ell}(y_1(z^{*}), \ldots, y_N(z^{*}))}{q_{\ell}(y_1(z^{*}), \ldots, y_N(z^{*}))} + O(\varepsilon) \\
		&\sim \varepsilon^0.
	\end{align*}
\end{proof}
\begin{remark}
	Lemma~\ref{lemma:expansion_lemma} holds even in the case of $z^{*} = \infty$ or $x^{*} = \infty$.
	If $z^{*} = \infty$, use a new variable $\widetilde{z} = \frac{1}{z}$ instead of $z$.
	If $x^{*} = \infty$, introduce $\widetilde{x}(z) = \frac{1}{x(z)}$ and use
	\[
		\widetilde{x}(z^{*} + \varepsilon) = \frac{1}{x(z^{*} + \varepsilon)} = a \varepsilon^r + O \left( \varepsilon^{r + 1} \right)
	\]
	instead of the expansion of $x(z^{*} + \varepsilon)$.
\end{remark}

\begin{lemma}\label{lemma:function_field}
	Let $\mathbb{C} \subset K$ be a field extension and suppose that $y_1, \ldots, y_N \in K$ satisfies
	\[
		\operatorname{trdeg}_{\mathbb{C}} \mathbb{C}(y_1, \ldots, y_N) = N - 1.
	\]
	Then, there exists an irreducible polynomial $G = G(Y_1, \ldots, Y_N) \in \mathbb{C}[Y_1, \ldots, Y_N]$ such that
	\[
		\{ f \in \mathbb{C}[Y_1, \ldots, Y_N] \mid f(y_1, \ldots, y_N) = 0 \} = (G),
	\]
	where the right-hand side is the ideal of $\mathbb{C}[Y_1, \ldots, Y_N]$ generated by $G$.
\end{lemma}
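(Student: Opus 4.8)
The plan is to recognize the given set as the kernel of an evaluation homomorphism and then identify it as a height-one prime in a unique factorization domain, whose principal generator can be taken irreducible.

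First I would introduce the $\mathbb{C}$-algebra homomorphism
\[
	\phi \colon \mathbb{C}[Y_1, \ldots, Y_N] \to K, \quad Y_i \mapsto y_i,
\]
so that the set $\{ f \mid f(y_1, \ldots, y_N) = 0 \}$ is exactly $\mathfrak{p} := \ker \phi$. Since the image $\mathbb{C}[y_1, \ldots, y_N] \subset K$ is an integral domain, $\mathfrak{p}$ is prime, and it is nonzero because $\operatorname{trdeg}_{\mathbb{C}} \mathbb{C}(y_1, \ldots, y_N) = N - 1 < N$ forces a nontrivial algebraic relation among the $y_i$. The induced isomorphism $\mathbb{C}[Y_1, \ldots, Y_N]/\mathfrak{p} \cong \mathbb{C}[y_1, \ldots, y_N]$ shows that the residue domain has fraction field $\mathbb{C}(y_1, \ldots, y_N)$, whose transcendence degree over $\mathbb{C}$ is $N - 1$ by hypothesis.

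From here two things remain: that $\mathfrak{p}$ is principal, and that a generator may be taken irreducible. For the principality I would invoke the dimension theory of polynomial rings over a field: $A := \mathbb{C}[Y_1, \ldots, Y_N]$ has Krull dimension $N$, is catenary, and satisfies $\operatorname{ht} \mathfrak{p} + \dim(A/\mathfrak{p}) = N$, while $\dim(A/\mathfrak{p})$ equals the transcendence degree of the fraction field of $A/\mathfrak{p}$, namely $N - 1$. Hence $\operatorname{ht} \mathfrak{p} = 1$. Since $A$ is a UFD, every height-one prime is principal and generated by an irreducible element, giving $\mathfrak{p} = (G)$ with $G$ irreducible. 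More self-containedly, I could argue by localization and Gauss's lemma: after reordering so that $y_1, \ldots, y_{N-1}$ form a transcendence basis, the set $S = \mathbb{C}[Y_1, \ldots, Y_{N-1}] \setminus \{ 0 \}$ meets $\mathfrak{p}$ trivially, so $\mathfrak{p} A_S$ is a nonzero proper prime of the principal ideal domain $L[Y_N]$ with $L = \mathbb{C}(Y_1, \ldots, Y_{N-1})$; being a nonzero prime of a PID it is maximal and generated by one irreducible polynomial. Choosing a primitive irreducible $G \in A$ lying in $\mathfrak{p}$, Gauss's lemma keeps $G$ irreducible in $L[Y_N]$, hence $G$ generates $\mathfrak{p} A_S$, and contracting back with $\mathfrak{p} \cap S = \emptyset$ and primitivity of $G$ yields $\mathfrak{p} = (G)$.

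The main obstacle is not the inclusion $(G) \subseteq \mathfrak{p}$, which is immediate once an irreducible $G \in \mathfrak{p}$ is produced, but the reverse inclusion, equivalently the fact that $\mathfrak{p}$ has height exactly one. Both routes above settle this: the first through the dimension formula tying height, Krull dimension, and the transcendence degree of the residue field together, the second through the contraction bookkeeping in the Gauss's-lemma argument. I would present the UFD/height-one version as the main proof and note the localization argument as an elementary alternative.
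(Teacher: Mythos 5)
Your proposal is correct, and in fact more complete than the paper's own argument, which is explicitly only a sketch. The paper proceeds concretely: after reordering so that $y_1, \ldots, y_{N-1}$ form a transcendence basis, it takes the minimal polynomial $F$ of $y_N$ over $\mathbb{C}(y_1, \ldots, y_{N-1})$ and clears denominators (multiplying by the LCM of the $q_j$) to produce $G$, leaving unverified both the irreducibility of $G$ in $\mathbb{C}[Y_1, \ldots, Y_N]$ and the harder inclusion $\ker \phi \subseteq (G)$. Your secondary route --- localizing at $S = \mathbb{C}[Y_1, \ldots, Y_{N-1}] \setminus \{0\}$, identifying $\mathfrak{p} A_S$ as a nonzero prime of the PID $L[Y_N]$, and contracting back via Gauss's lemma and primitivity --- is precisely the rigorous version of that sketch: the generator of $\mathfrak{p} A_S$ is, up to a unit of $L[Y_N]$, the paper's minimal polynomial, and your content bookkeeping supplies exactly the two verifications the paper omits. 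Your primary route via dimension theory (the evaluation kernel $\mathfrak{p}$ is a nonzero prime, $\dim(A/\mathfrak{p}) = \operatorname{trdeg}_{\mathbb{C}} \mathbb{C}(y_1, \ldots, y_N) = N - 1$, hence $\operatorname{ht} \mathfrak{p} = 1$ by the dimension formula for finitely generated domains over a field, and height-one primes of a UFD are principal with irreducible generator) is genuinely different from what the paper does: it is less constructive but structurally cleaner, trading explicit minimal-polynomial manipulations for standard commutative algebra. The only point worth making explicit in your localization version is the existence of a primitive irreducible $G \in A \cap \mathfrak{p}$, which you obtain by clearing denominators from the generator of $\mathfrak{p} A_S$ and dividing by its content; note also that $\mathfrak{p} \cap S = \emptyset$ forces that generator to have positive degree in $Y_N$, so primitivity and Gauss's lemma apply. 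With that remark included, either route is a complete proof of the lemma.
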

\begin{proof}
	Since this is a basic fact in algebraic geometry, we only give a sketch of the proof.

	We may assume that $y_1, \ldots, y_{N - 1}$ are algebraically independent over $\mathbb{C}$ and $y_N$ is algebraic over $\mathbb{C}(y_1, \ldots, y_{N - 1})$.
	Let $F = F(Y) \in \mathbb{C}(y_1, \ldots, y_{N - 1})[Y]$ be the minimal polynomial of $y_N$ over $\mathbb{C}(y_1, \ldots, y_{N - 1})$.
	Express $F$ as
	\[
		F(Y) = \sum_j \frac{p_j(y_1, \ldots, y_{N - 1})}{q_j(y_1, \ldots, y_{N - 1})} Y^j,
	\]
	where $p_j, q_j \in \mathbb{C}[Y_1, \ldots, Y_{N - 1}]$ are pairwise coprime for each $j$.
	Then, the polynomial
	\[
		G = \operatorname{LCM}_{\ell} (q_{\ell}(Y_1, \ldots, Y_{N - 1})) \sum_j \frac{p_j(Y_1, \ldots, Y_{N - 1})}{q_j(Y_1, \ldots, Y_{N - 1})} Y^j_N
	\]
	satisfies the condition.
\end{proof}

\begin{lemma}\label{lemma:polynomial_nonvanishing}
	Let $\mathbb{C} \subset K$ be a field extension and let $x, y_1, \ldots, y_N \in K$.
	Suppose that an irreducible polynomial $F = F(X; Y_1, \ldots, Y_N) \in \mathbb{C}[X; Y_1, \ldots, Y_N]$ satisfies $\partial_X F \ne 0$.
	If $\operatorname{trdeg}_{\mathbb{C}} \mathbb{C}(y_1, \ldots, y_N) \ge N - 1$ and $x$ is transcendental over $\mathbb{C}(y_1, \ldots, y_N)$, then
	\[
		F(x; y_1, \ldots, y_N) \ne 0.
	\]
\end{lemma}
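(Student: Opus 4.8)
The plan is to regard $F$ as a polynomial in $X$ over the field $\mathbb{C}(y_1, \ldots, y_N)$ and to exploit the transcendence of $x$. First I would write
\[
	F(X; Y_1, \ldots, Y_N) = \sum_{j = 0}^{d} a_j(Y_1, \ldots, Y_N) X^j, \quad a_j \in \mathbb{C}[Y_1, \ldots, Y_N],
\]
where $d = \deg_X F$. Since the characteristic is $0$, the hypothesis $\partial_X F \ne 0$ forces $d \ge 1$. Because $x$ is transcendental over $\mathbb{C}(y_1, \ldots, y_N)$, the element $F(x; y_1, \ldots, y_N) = \sum_j a_j(y_1, \ldots, y_N) x^j$ vanishes if and only if every coefficient $a_j(y_1, \ldots, y_N)$ is zero. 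Thus it suffices to show that at least one $a_j(y_1, \ldots, y_N)$ is non-zero, and I would argue this by contradiction, assuming $a_j(y_1, \ldots, y_N) = 0$ for all $j$.

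The next step is a dichotomy on $\operatorname{trdeg}_{\mathbb{C}} \mathbb{C}(y_1, \ldots, y_N)$, which by hypothesis is either $N$ or $N - 1$. If it equals $N$, then $y_1, \ldots, y_N$ are algebraically independent over $\mathbb{C}$, so each vanishing $a_j(y_1, \ldots, y_N) = 0$ forces $a_j = 0$ as a polynomial; but then $F = 0$, contradicting irreducibility. If it equals $N - 1$, I would invoke Lemma~\ref{lemma:function_field} to obtain an irreducible $G = G(Y_1, \ldots, Y_N) \in \mathbb{C}[Y_1, \ldots, Y_N]$ generating the vanishing ideal of $(y_1, \ldots, y_N)$. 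Then $a_j(y_1, \ldots, y_N) = 0$ means precisely that $G \mid a_j$ for every $j$, whence $G$ divides $F$ inside $\mathbb{C}[X, Y_1, \ldots, Y_N]$.

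The contradiction then comes from comparing degrees in $X$. Since $F$ is irreducible and $G$ is a non-constant divisor, $F$ and $G$ must be associates, so $F = c G$ for some $c \in \mathbb{C} \setminus \{ 0 \}$; but $G$ has degree $0$ in $X$ while $\deg_X F = d \ge 1$, which is impossible. This closes both cases. I do not expect any serious obstacle here: the argument is elementary once the transcendence of $x$ is used to decouple the powers of $x$. The only points requiring care are the correct application of Lemma~\ref{lemma:function_field} (which presupposes transcendence degree exactly $N - 1$, matching one branch of the dichotomy) and the observation that simultaneous divisibility of all the $a_j$ by $G$ is equivalent to $G \mid F$ for a polynomial $G$ not involving $X$ — this is where the irreducibility of $F$ together with its genuine dependence on $X$ jointly produce the contradiction.
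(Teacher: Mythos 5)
Your proof is correct and takes essentially the same approach as the paper's: both write $F = \sum_j a_j(Y_1, \ldots, Y_N) X^j$, use the transcendence of $x$ over $\mathbb{C}(y_1, \ldots, y_N)$ to reduce the claim to some coefficient $a_j(y_1, \ldots, y_N)$ being non-zero, and invoke Lemma~\ref{lemma:function_field} in the transcendence-degree-$(N-1)$ case. Your contradiction step (if $G$ divided every $a_j$ then $G \mid F$, forcing $F$ and $G$ to be associates, impossible since $\deg_X G = 0 < \deg_X F$) is just the spelled-out contrapositive of the paper's one-line assertion that irreducibility of $F$ together with $\partial_X F \ne 0$ yields a coefficient not divisible by $G$.
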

\begin{proof}
	Since the statement is trivial in the case of $\operatorname{trdeg}_{\mathbb{C}} \mathbb{C}(y_1, \ldots, y_N) = N$, we may assume that the transcendental degree is $N - 1$.
	Let
	\[
		F = \sum_{\ell} F_{\ell}(Y_1, \ldots, Y_N) X^{\ell}.
	\]
	It is sufficient to show that $F_{\ell}(y_1, \ldots, y_N) \ne 0$ for some $\ell$.
	Let $G \in \mathbb{C}[Y_1, \ldots, Y_N]$ be the irreducible polynomial in Lemma~\ref{lemma:function_field}, i.e.,
	\[
		\{ f \in \mathbb{C}[Y_1, \ldots, Y_N] \mid f(y_1, \ldots, y_N) = 0 \} = (G).
	\]
	Since $F$ is irreducible and $\partial_X F \ne 0$, there exists $\ell$ such that $G$ does not divide $F_{\ell}$.
	Therefore, by the definition of $G$, we have $F_{\ell}(y_1, \ldots, y_n) \ne 0$.
\end{proof}

\begin{lemma}\label{lemma:substitution_trdeg_codimension_one}
	Let $\mathbb{C} \subset K$ be a field extension and suppose that $y_1, \ldots, y_N \in K$ satisfies
	\[
		\operatorname{trdeg}_{\mathbb{C}} \mathbb{C}(y_1, \ldots, y_N) \ge N - 1.
	\]
	If $F_1, F_2 \in \mathbb{C}[Y_1, \ldots, Y_N]$, $F_1, F_2 \ne 0$, are coprime, then at least one of $F_1(y_1, \ldots, y_N)$ and $F_2(y_1, \ldots, y_N)$ is non-zero.
	In particular, for a rational function $F \in \mathbb{C}(Y_1, \ldots, Y_N)$, the substitution $F(y_1, \ldots, y_N) \in K \cup \{ \infty \}$ is always well-defined, i.e., $F(y_1, \ldots, y_N)$ does not become $\frac{0}{0}$.
\end{lemma}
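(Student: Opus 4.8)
The plan is to reduce everything to Lemma~\ref{lemma:function_field}, distinguishing two cases according to the transcendence degree of $\mathbb{C}(y_1, \ldots, y_N)$ over $\mathbb{C}$, which by hypothesis is either $N$ or $N - 1$.

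First I would dispose of the case $\operatorname{trdeg}_{\mathbb{C}} \mathbb{C}(y_1, \ldots, y_N) = N$. Here $y_1, \ldots, y_N$ are algebraically independent over $\mathbb{C}$, so no nonzero polynomial in $\mathbb{C}[Y_1, \ldots, Y_N]$ vanishes at $(y_1, \ldots, y_N)$. Since $F_1$ and $F_2$ are both nonzero, both evaluations $F_1(y_1, \ldots, y_N)$ and $F_2(y_1, \ldots, y_N)$ are nonzero, which is even stronger than what is claimed.

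The main step is the case $\operatorname{trdeg}_{\mathbb{C}} \mathbb{C}(y_1, \ldots, y_N) = N - 1$. Here I would invoke Lemma~\ref{lemma:function_field} to obtain an irreducible polynomial $G \in \mathbb{C}[Y_1, \ldots, Y_N]$ generating the vanishing ideal:
\[
	\{ f \in \mathbb{C}[Y_1, \ldots, Y_N] \mid f(y_1, \ldots, y_N) = 0 \} = (G).
\]
Suppose, for contradiction, that both $F_1(y_1, \ldots, y_N) = 0$ and $F_2(y_1, \ldots, y_N) = 0$. Then $F_1$ and $F_2$ both lie in $(G)$, so $G$ divides each of them in the unique factorization domain $\mathbb{C}[Y_1, \ldots, Y_N]$. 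As $G$ is irreducible, it is a nonconstant common factor of $F_1$ and $F_2$, contradicting their coprimality. Hence at least one of the two evaluations is nonzero, which proves the first assertion. For the ``in particular'' clause I would then write $F = F_1 / F_2$ with $F_1, F_2 \in \mathbb{C}[Y_1, \ldots, Y_N]$ coprime (possible since the ring is a UFD) and apply what was just shown: the numerator and denominator do not vanish simultaneously at $(y_1, \ldots, y_N)$, so the value is never of the indeterminate form $\frac{0}{0}$. When $F_2(y_1, \ldots, y_N) \ne 0$ the value lies in $K$, and when $F_2(y_1, \ldots, y_N) = 0$ (forcing $F_1(y_1, \ldots, y_N) \ne 0$) the value is $\infty$.

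The only subtle point, which I would flag as the main obstacle, is confirming that the generator $G$ supplied by Lemma~\ref{lemma:function_field} is genuinely nonconstant, so that divisibility by $G$ really violates coprimality; this is guaranteed because the vanishing ideal is a proper nonzero ideal (nonzero precisely because the transcendence degree dropped below $N$, and proper because the elements $y_1, \ldots, y_N$ exist in $K$), and an irreducible generator of such an ideal is by definition nonconstant. Everything else is a routine case split together with the UFD property, and no genuine difficulty remains once Lemma~\ref{lemma:function_field} is in hand.
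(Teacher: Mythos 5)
Your proof is correct and follows essentially the same route as the paper: the same case split on whether $\operatorname{trdeg}_{\mathbb{C}} \mathbb{C}(y_1, \ldots, y_N)$ equals $N$ or $N-1$, and in the latter case the same contradiction via Lemma~\ref{lemma:function_field}, with $F_1, F_2 \in (G)$ violating coprimeness. Your additional remarks (spelling out the ``in particular'' clause and checking that $G$ is nonconstant) are details the paper leaves implicit, but the argument is the same.
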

\begin{proof}
	If $\operatorname{trdeg}_{\mathbb{C}} \mathbb{C}(y_1, \ldots, y_N) = N$, then both $F_1(y_1, \ldots, y_N)$ and $F_2(y_1, \ldots, y_N)$ are non-zero.
	Seeking a contradiction, assume that the transcendental degree is $N - 1$ but $F_1(y_1, \ldots, y_N) = F_2(y_1, \ldots, y_N) = 0$.
	Let $G$ be the irreducible polynomial in Lemma~\ref{lemma:function_field}.
	Then, $F_1$ and $F_2$ both belong to $(G)$, which contradicts the coprimeness of $F_1$ and $F_2$.
\end{proof}

\begin{remark}
	Geometrically speaking, $(y_1, \ldots, y_N) \in K^N$ ($\operatorname{trdeg} = N - 1$) corresponds to the hypersurface $\{ G = 0 \}$ of $\mathbb{C}^N$, where $G$ is defined in Lemma~\ref{lemma:function_field}.
	On the other hand, the set of indeterminacies of a rational function has at least codimension $2$.
	This is why the substitution of $(y_1, \ldots, y_N)$ is always well-defined.
\end{remark}

\section{Divisors on \texorpdfstring{$\mathbb{P}^1(K)$}{P1(K)}}\label{appendix:divisors}

For those who are not familiar with algebraic geometry, we briefly review the basic notion of divisors on $\mathbb{P}^1(K)$, where $K$ is an algebraically closed field.
In this paper, we use divisors only in order to exactly count the number of preimages with multiplicity.
Although the definition of divisors is the same as usual, some notations, such as $\operatorname{div} (f = \alpha \mid T)$, are introduced by the author.

\begin{definition}\label{definition:rational_function_multiplicity}
	Let $f \colon \mathbb{P}^1(K) \to \mathbb{P}^1(K)$ be a rational function, $P \in \mathbb{P}^1(K)$ and $\alpha \in \mathbb{P}^1(K)$.
	Let $z$ denote the inhomogeneous coordinate of $\mathbb{P}^1(K)$.
	We define the multiplicity of $f = \alpha$ at $P$, say $n_P$, as follows.
	\begin{enumerate}
		\item
		If $f(P) \ne \alpha$, then we define $n_P = 0$.

		\item
		If $f$ is the constant function $\alpha$, then we define $n_P = 0$.

		For the remaining cases, we assume that $f(P) = \alpha$ and $f$ is not a constant function.

		\item
		If neither $P$ nor $\alpha$ is $\infty$, then $f(z) - \alpha$ is expressed as
		\[
			f(z) - \alpha = (z - P)^m \frac{g(z)}{h(z)},
		\]
		where $g(P) h(P) \ne 0$ and $m > 0$.
		We define $n_P = m$.

		\item
		If $P = \infty$ and $\alpha \ne \infty$, then $f(z) - \alpha$ is expressed as
		\[
			f(z) - \alpha = \frac{g(z)}{h(z)},
		\]
		where $g(z)$ and $h(z)$ are coprime.
		We define $n_P = \deg_z h(z) - \deg_z g(z)$.

		\item
		If $P \ne \infty$ and $\alpha = \infty$, then $f(z)$ is expressed as
		\[
			f(z) = (z - P)^{- m} \frac{g(z)}{h(z)},
		\]
		where $g(P) h(P) \ne 0$ and $m > 0$.
		We define $n_P = m$.

		\item
		If both $P$ and $\alpha$ are $\infty$, then $f(z)$ is expressed as
		\[
			f(z) = \frac{g(z)}{h(z)},
		\]
		where $g(z)$ and $h(z)$ are coprime.
		We define $n_P = \deg_z g(z) - \deg_z h(z)$.

	\end{enumerate}
\end{definition}

The reason we define $n_P = 0$, rather than $\infty$, for the constant function $f = \alpha$ is that this choice ensures Proposition~\ref{proposition:degree_preimages}.

\begin{definition}
	A divisor on $\mathbb{P}^1(K)$ is a finite formal sum over the points of $\mathbb{P}^1(K)$ with $\mathbb{Z}$-coefficients.
	We use $\operatorname{Div} \mathbb{P}^1(K)$ to denote the set of divisors on $\mathbb{P}^1(K)$.
\end{definition}

\begin{definition}\label{definition:divisor}
	For a rational function $f \colon \mathbb{P}^1(K) \to \mathbb{P}^1(K)$, $\alpha \in \mathbb{P}^1(K)$ and a subset $T \subset \mathbb{P}^1(K)$, we define $\operatorname{div} (f = \alpha \mid T) \in \operatorname{Div} \mathbb{P}^1(K)$ as
	\[
		\operatorname{div} (f = \alpha \mid T) = \sum_{P \in T} n_P P,
	\]
	where $n_P$ is the multiplicity of $f = \alpha$ at $P$.
	We simply write
	\[
		\operatorname{div} (f = \alpha) = \operatorname{div} (f = \alpha \mid \mathbb{P}^1(K)).
	\]
	We sometimes use a condition that defines a set such as
	\[
		\operatorname{div} (f = \alpha \mid T, \ g \ne 0) = \operatorname{div} \left( f = \alpha \mid \{ z \in T \mid g(z) \ne 0 \} \right).
	\]
\end{definition}

\begin{lemma}
	Let $f \colon \mathbb{P}^1(K) \to \mathbb{P}^1(K)$ be a rational function and let $(T_{\lambda})_{\lambda \in \Lambda}$ be a family of disjoint subsets of $\mathbb{P}^1(K)$.
	Then, we have
	\[
		\operatorname{div} \left( f = \alpha \mid \bigsqcup_{\lambda} T_{\lambda} \right)
		= \sum_{\lambda} \operatorname{div} \left( f = \alpha \mid T_{\lambda} \right).
	\]
\end{lemma}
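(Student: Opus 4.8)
The plan is to reduce the identity to the observation that the multiplicity $n_P$ of $f = \alpha$ at a point $P$, as given in Definition~\ref{definition:rational_function_multiplicity}, is an entirely pointwise quantity: it is determined by $f$, $\alpha$, and $P$ alone, and makes no reference to any ambient subset of $\mathbb{P}^1(K)$. Consequently, in the defining expression $\operatorname{div}(f = \alpha \mid T) = \sum_{P \in T} n_P P$ of Definition~\ref{definition:divisor}, the coefficient attached to each $P$ is the same regardless of which set $T$ we restrict to; the set $T$ only governs which points are permitted to appear in the sum.

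First I would record the finiteness that makes every object in sight a genuine divisor. If $f$ is the constant function $\alpha$, then $n_P = 0$ for all $P$ by the second case of Definition~\ref{definition:rational_function_multiplicity}, so both sides are the zero divisor and there is nothing to prove. Otherwise $f$ is a nonconstant rational function, and $n_P \ne 0$ forces $f(P) = \alpha$; since a nonconstant rational function takes the value $\alpha$ at only finitely many points of $\mathbb{P}^1(K)$, the set $\{ P \mid n_P \ne 0 \}$ is finite. Hence $\operatorname{div}(f = \alpha \mid T)$ is a finite formal sum for every $T$, and on the right-hand side at most finitely many of the divisors $\operatorname{div}(f = \alpha \mid T_{\lambda})$ are nonzero. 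Thus $\sum_{\lambda} \operatorname{div}(f = \alpha \mid T_{\lambda})$ is a well-defined element of $\operatorname{Div} \mathbb{P}^1(K)$, even though the index set $\Lambda$ may be infinite.

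With finiteness in hand, the identity is simply the reindexing of a sum over a disjoint union. Because the $T_{\lambda}$ are pairwise disjoint, each $P \in \bigsqcup_{\lambda} T_{\lambda}$ lies in exactly one $T_{\lambda}$, so
\[
	\operatorname{div}\left( f = \alpha \mid \bigsqcup_{\lambda} T_{\lambda} \right)
	= \sum_{P \in \bigsqcup_{\lambda} T_{\lambda}} n_P P
	= \sum_{\lambda} \sum_{P \in T_{\lambda}} n_P P
	= \sum_{\lambda} \operatorname{div}(f = \alpha \mid T_{\lambda}).
\]
There is no genuine obstacle here beyond this bookkeeping; the only point that requires a moment's care is the finiteness argument above, which is precisely what guarantees that splitting the sum across $\Lambda$ yields a finite, well-defined divisor and raises no convergence or rearrangement issue.
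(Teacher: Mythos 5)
Your proof is correct and takes the same route as the paper, whose entire proof reads ``The statement immediately follows from the definition'': you simply spell out that the multiplicity $n_P$ is a pointwise quantity independent of the restricting set, so the identity reduces to reindexing a sum over a disjoint union. Your additional finiteness check (constant case trivial; otherwise $f^{-1}(\alpha)$ finite, so only finitely many $\operatorname{div}(f = \alpha \mid T_{\lambda})$ are nonzero) is exactly the detail implicitly absorbed into the paper's appeal to the definition, and it is handled correctly.
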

\begin{proof}
	The statement immediately follows from the definition.
\end{proof}

\begin{definition}\label{definition:divisor_degree_support}
	For a divisor $D = \sum_{P \in \mathbb{P}^1(K)} n_P P$, we define its degree and support as
	\[
		\deg D = \sum_{P \in \mathbb{P}^1(K)} n_P, \quad
		\operatorname{supp} D = \{ P \in \mathbb{P}^1(K) \mid n_P \ne 0 \}.
	\]
\end{definition}

The following, well-known proposition plays a basic role in our strategy.

\begin{proposition}\label{proposition:degree_preimages}
	For a rational function $f \colon \mathbb{P}^1(K) \to \mathbb{P}^1(K)$ and an arbitrary value $\alpha \in \mathbb{P}^1(K)$, the following equality holds:
	\[
		\deg \left( \operatorname{div} \left( f = \alpha \right) \right) = \deg f,
	\]
	where the right-hand side is the degree of $f$ as a rational function.
	In particular, the degree of a rational function coincides with the number of preimages counted with multiplicity.
\end{proposition}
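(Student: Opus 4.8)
The plan is to reduce everything to the fundamental fact that a nonzero polynomial of degree $m$ over the algebraically closed field $K$ has exactly $m$ roots counted with multiplicity, and then to account separately for the point at infinity. After disposing of the trivial case where $f$ is constant (where $\deg f = 0$ and every $n_P = 0$ by the convention in Definition~\ref{definition:rational_function_multiplicity}), I would write $f = p(z)/q(z)$ with $p, q \in K[z]$ coprime and set $d = \deg f = \max(\deg p, \deg q)$. The strategy is to compute $\deg \left( \operatorname{div}(f = \alpha) \right)$ as the sum of the contributions from the finite points of $\mathbb{P}^1(K)$ and the contribution $n_\infty$ from $P = \infty$, and to show this sum equals $d$ for every $\alpha \in \mathbb{P}^1(K)$.

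For finite $\alpha$, I would introduce $r(z) = p(z) - \alpha q(z)$, noting that $r$ and $q$ are coprime because $p$ and $q$ are: any common factor of $r$ and $q$ would divide $p$. The finite preimages of $\alpha$ are precisely the roots of $r$, so their total multiplicity is $\deg r$; meanwhile the case $P = \infty$, $\alpha \neq \infty$ of Definition~\ref{definition:rational_function_multiplicity} gives $n_\infty = \deg q - \deg r$ when this quantity is positive and $0$ otherwise. For $\alpha = \infty$, the finite preimages are the roots of $q$, contributing $\deg q$ in total, and the contribution at infinity is $n_\infty = \deg p - \deg q$ when positive.

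The core of the argument is then a short case analysis comparing $\deg p$ and $\deg q$ (and, in the borderline case $\deg p = \deg q$, testing whether $\alpha$ equals the ratio of the leading coefficients of $p$ and $q$). In each case I would verify that the finite contribution plus $n_\infty$ telescopes to $d$: when one of the two degrees strictly dominates, the point at infinity is not a preimage and the finite count already equals $d$; when $\deg p = \deg q$ and $\alpha$ is exactly the leading-coefficient ratio, the drop of $\deg r$ below $d$ is compensated precisely by a positive $n_\infty = d - \deg r$. The value $\alpha = \infty$ is handled symmetrically, using $f(\infty) = \infty$ iff $\deg p > \deg q$.

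I expect no serious obstacle, as the result is classical; the only point requiring care is the bookkeeping at $P = \infty$, where one must apply the nonstandard sign conventions of Definition~\ref{definition:rational_function_multiplicity} and confirm that $n_\infty$ is nonzero exactly when $f(\infty) = \alpha$. The coprimality of $p$ and $q$, and hence of $r$ and $q$, is what guarantees that these multiplicities are read off correctly and that no spurious cancellation inflates or deflates the count.
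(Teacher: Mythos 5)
The paper itself offers no proof of Proposition~\ref{proposition:degree_preimages}: it is stated as a well-known fact, with the introduction indicating only the underlying principle, namely that a degree-$m$ polynomial over an algebraically closed field has exactly $m$ roots with multiplicity. Your proposal is a correct and complete elementary proof along precisely that line, and it is properly adapted to the paper's nonstandard conventions: you dispose of the constant case using the $n_P = 0$ convention of Definition~\ref{definition:rational_function_multiplicity}, you observe that $r = p - \alpha q$ is coprime to $q$ (so finite multiplicities are read off as root orders of $r$, and item (4) of the definition applies with $g = r$, $h = q$), and you correctly note that $n_\infty = \deg q - \deg r$ is positive exactly when $f(\infty) = \alpha$, so that the total count is $\deg r + \max(\deg q - \deg r, 0) = \max(\deg r, \deg q)$, which equals $\max(\deg p, \deg q)$ in every case. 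The treatment of $\alpha = \infty$ via the roots of $q$ plus $n_\infty = \max(\deg p - \deg q, 0)$ is likewise correct.

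One sentence of your narrative case analysis is inaccurate, though your own formulas repair it automatically: you claim that ``when one of the two degrees strictly dominates, the point at infinity is not a preimage and the finite count already equals $d$.'' This fails for $\deg q > \deg p$ with $\alpha = 0$ (more generally, $\alpha = f(\infty) = 0$): there $r = p$, the finite count is only $\deg p < d$, and $\infty$ \emph{is} a preimage of $0$ with $n_\infty = \deg q - \deg p > 0$ making up the deficit. Since $\max(\deg r, \deg q) = \max(\deg p, \deg q)$ holds in this sub-case as well, the conclusion is unaffected; just correct the prose so the case $\alpha = 0$ (or symmetrically $\alpha = \infty$ when $\deg p > \deg q$) is not asserted to have empty fiber over infinity.
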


That is, the degree of a rational function can be calculated as the number of the preimages of any value with multiplicity.



\end{document}